\algnewcommand{\IfThenElse}[3]{
  \State \algorithmicif\ #1\ \algorithmicthen\ #2\ \algorithmicelse\ #3}
\renewcommand{\epsilon}{\varepsilon}
\newtheorem{lemma}{Lemma}
\newtheorem{theorem}{Theorem}
\newtheorem{definition}{Definition}
\newtheorem{observation}{Observation}
\newcommand{\field}[1]{\mathbb{#1}}
\newcommand{\R}{\field{R}}
\newcommand{\E}{\field{E}}
\renewcommand{\Pr}{\field{P}}
\newcommand{\Var}{\operatorname{Var}}
\newcommand{\var}{\Var}
\renewcommand{\epsilon}{\varepsilon}
\newcommand{\scD}{\mathcal{D}}
\newcommand{\scE}{\mathcal{E}}
\newcommand{\scG}{\mathcal{G}}
\newcommand{\scO}{O}
\newcommand{\scU}{\mathcal{U}}
\newcommand{\scV}{\mathcal{V}}
\newcommand{\scX}{\mathcal{X}}
\newcommand{\Va}{B}
\newcommand{\ev}{\scE}
\newcommand{\poly}{\operatorname{poly}}
\newcommand{\vol}{\operatorname{vol}}
\newcommand{\tvd}[2]{\text{tvd}(#1, #2)}
\newcommand{\norm}[2]{\| {#2} \|_\textsc{#1}}
\newcommand{\ba}{\pmb{b}}
\newcommand{\dmax}{\Delta}
\newcommand{\dmin}{\delta}
\newcommand{\gph}{g}
\newcommand{\cond}{\Phi}
\newcommand{\tmix}{t}
\renewcommand{\hat}{\widehat}
\renewcommand{\bar}{\overline}
\newcommand{\gk}{\mathcal{G}}
\newcommand{\vk}{\mathcal{V}}
\newcommand{\ek}{\mathcal{E}}
\newcommand{\subG}{\mathcal{S}}
\newcommand{\subGw}{\subG'}
\newcommand{\assign}{=}
\newcommand{\GraphSort}{\textsc{Apx-DD}}
\newcommand{\Preprocess}{\textsc{DD}}
\newcommand{\epsSampler}{\textsc{Sample}}
\newcommand{\sampleS}{\textsc{Rand-Grow}}
\newcommand{\Sample}{\textsc{Sample}}
\newcommand{\EpsilonSampleS}{\textsc{Apx}-\sampleS}
\newcommand{\computeP}{\textsc{Prob}}
\newcommand{\epsProbCompute}{\textsc{Apx}-\computeP}
\newcommand{\UniformAlgo}{\textsc{Ugs}}
\newcommand{\EpsilonAlgo}{\textsc{Apx}-\UniformAlgo}
\newcommand{\ComparisonAlgo}{\UniformAlgo\textsc{-Compare}}
\newcommand{\CutEstim}{\textsc{EstimateCuts}}
\newcommand{\MCsampler}{\textsc{Rwgs}}
\newcommand{\dg}[2]{d({#1}|{#2})}
\newcommand{\cut}{c}
\newcommand{\Cut}{\operatorname{Cut}}
\newcommand{\kok}{k^{\scO(k)}}
\newcommand{\kokm}{k^{-\scO(k)}}
\renewcommand{\lg}{\log}
\newcommand{\Ind}[1]{ \field{I}\left\{{#1}\right\} }
\newtheorem*{rep@theorem}{\rep@title}
\newcommand{\newreptheorem}[2]{%
\newenvironment{rep#1}[1]{%
 \def\rep@title{#2 \ref{##1}}%
 \begin{rep@theorem}}%
 {\end{rep@theorem}}}
\begin{document}
\newtheorem{claim}{Claim}

\makeatletter
\let\@fnsymbol\@arabic
\makeatother

\title{Efficient and Near-Optimal Algorithms for Sampl\-ing\\ Small Con\-nected Subgraphs\footnote{A short version of these results appeared in the Proceedings of ACM STOC 2021~\cite{Bressan21STOC}.}}
\author{Marco Bressan\\Dipartimento di Informatica\\Università degli Studi di Milano\\marco.bressan@unimi.it}

\maketitle

\begin{abstract}
We study the following problem: given an integer $k \ge 3$ and a simple graph $G$, sample a connected induced $k$-node subgraph of $G$ uniformly at random. This is a fundamental graph mining primitive with applications in social network analysis, bioinformatics, and more. Surprisingly, no efficient algorithm is known for uniform sampling; the only somewhat efficient algorithms available yield samples that are only approximately uniform, with running times that are unclear or suboptimal. In this work we provide: (i) a near-optimal mixing time bound for a well-known random walk technique, (ii) the first efficient algorithm for truly uniform graphlet sampling, and (iii) the first sublinear-time algorithm for $\epsilon$-uniform graphlet sampling.
\end{abstract}

\section{Introduction}
A $k$-graphlet of a graph $G$ is a connected and induced $k$-vertex subgraph of $G$. Starting with triangles and wedges, and the discovery of triadic closure in social graphs~\cite{easley_kleinberg_2010}, graphlets have become a central subject of study in social network analysis~\cite{bonato2014dimensionality,Ugander&2013}, clustering~\cite{li2017motif,Babis17clustering}, and bioinformatics~\cite{Alon&2008,Chen06genome,prvzulj2007biological}; and they have found application in the development of graph kernels~\cite{Shervashidze2009}, graph embeddings~\cite{Tu2019-gl2vec} and graph neural networks~\cite{Peng2020-GNN}. The underlying idea is that, in many cases, the distribution of $k$-graphlets (the relative number of cliques, stars, paths, and so on) holds fundamental information about the nature of a complex network~\cite{Milo824}. Understandably, these findings have sparked research on several basic graphlet mining problems such as finding, counting, listing, and sampling graphlets.

In this work we consider the two following problems. The \emph{uniform graphlet sampling problem} asks, given $G$ and $k$, to return a $k$-graphlet uniformly at random from the set $\scV_k$ of all $k$-graphlets of $G$. The $\varepsilon$-\emph{uniform graphlet sampling problem} asks, given $G,k$ and $\varepsilon > 0$, to return a $k$-graphlet from any distribution whose total variation distance from the uniform distribution over $\scV_k$ is at most $\epsilon$. Clearly, an efficient algorithm for one of these problems yields an efficient algorithm for estimating the $k$-graphlet distribution. For this reason uniform and $\epsilon$-uniform graphlet sampling have been investigated for almost a decade, both in theory and in practice~\cite{Agostini19mixing,Bhuiyan&2012,Bressan&2017,Bressan&2018b,Bressan2019-VLDB,Bressan21TKDD,Chen&2016,Han&2016,Gionis&2020mixing,Paramonov19-lifting,Saha&2015MCMC,Wang&2014}.

Unfortunately, although sampling a random $k$-vertex subgraph of $G=(V,E)$ uniformly at random is trivial, sampling a graphlet is considerably more challenging, due to the fact that a graphlet is \emph{connected}. Let $n=|V|$ and $m=|E|$. For uniform graphlet sampling, to date no algorithm is known that runs in less then $\Theta(n+m)$ time per sample. The only somewhat efficient algorithms known are for $\epsilon$-uniform graphlet sampling, and they can be divided into direct sampling algorithms (that do not have a preprocessing phase) and two-phase sampling algorithms (which have a preprocessing phase and a sampling phase). We now discuss those algorithms briefly. Here and in what follows we assume that $n+m \gg k$, so that a running time of $2^{\scO(k)} (n+m)$ or $\kok (n+m)$ is better than a running time of $\scO((n+m)^2)$. This reflects the fact that, today, real-world graphs can easily have billions of edges, but $k$ rarely exceeds $5$ or $10$.

For direct sampling algorithms, the state of the art is the so-called $k$-graphlet walk. To begin, consider the graph $\gk_k=(\scV_k,\scE_k)$ whose vertices are the $k$-graphlets of $G$, and where there is an edge between two graphlets if their intersection is a $(k-1)$-graphlet. The $k$-\emph{graphlet walk} is the lazy random walk over $\gk_k$. It is not hard to show that, if $G$ is connected, this walk is ergodic and so converges to a stationary distribution. Thus, to obtain $\epsilon$-uniform graphlets, one can run the walk until it comes $\epsilon$-close to its stationary distribution, and then use rejection sampling. This technique is extensively used thanks to its simplicity and elegance~\cite{Agostini19mixing,Bhuiyan&2012,Chen&2016,Han&2016,Gionis&2020mixing,Saha&2015MCMC,Wang&2014}; the drawback is that its running time depends on $\tmix_{\epsilon}(\gk_k)$, the $\epsilon$-mixing time of the walk, which can range anywhere from $\Theta(1)$ to $\Theta(n^{k-1})$~\cite{Bressan&2017,Bressan&2018b}. Indeed, the analysis of $\tmix_{\epsilon}(\gk_k)$ is nontrivial, and between the best lower and upper bounds there is still a multiplicative gap of $\Delta^{k-1}$~\cite{Agostini19mixing}. %In practice, too, the time taken by the walk to approach its stationary distribution varies heavily between graphs of similar size~\cite{Bressan&2017,Bressan&2018b}.

For two-phase algorithms, the state of the art is an extension of the color coding technique of~\cite{Alon&1995}, proposed in~\cite{Bressan&2017}. This extension allows one to sample uniformly from the graphlets of $G$ that are made \emph{colorful} by a random $k$-coloring of the vertices of $G$. The algorithm has a preprocessing phase taking time $2^{\scO(k)} (n+m)$ and space $2^{\scO(k)} n$, and expected sampling time $\kok \Delta$, and by increasing the space to $2^{\scO(k)} (n+m)$, one can reduce the expected sampling time to $\kok$. It is not hard to show that, by increasing the preprocessing time and space to $2^{\scO(k)} (n+m) \lg \frac{1}{\epsilon}$, one can take $\epsilon$-uniform graphlet samples in expected time $\kok \big(\lg \frac{1}{\epsilon}\big)^2$ per sample. This algorithm however looks like an overkill for sampling, which makes one suspect that a faster algorithm is possible. We also observe that the fastest algorithm known for uniform graphlet sampling consists in running the color-coding extension above from scratch for every sample, using $2^{\scO(k)} (n+m)$ time per sample.

In conclusion, (i) we do not have tight bounds for the $k$-graphlet walk, (ii) we do not have an efficient algorithm for uniform graphlet sampling, and (iii) we do not know if the existing algorithms for $\epsilon$-uniform graphlet sampling are optimal. The goal of our work is to reduce this gap.

\section{Results}
We give three contributions. First, we settle the mixing time of the $k$-graphlet walk up to multiplicative $\kok \lg^2 n$ factors. Second, we present the first efficient algorithm for uniform graphlet sampling, with a preprocessing linear in $n+m$ and an expected sampling time $\kok \lg \Delta$. Third, we give the first $\epsilon$-uniform graphlet sampling algorithm with sampling time independent of $G$, and preprocessing time $\scO(n \lg n)$, which is sublinear in $m$ as long as $m = \omega(n \lg n)$. The rest of this section overviews these results; later sections give the proofs.

%a detailed walk-through is given respectively in \mbox{Section~\ref{sec:walks}}, \mbox{Section~\ref{sec:ugs}}, and \mbox{Section~\ref{sec:epsugs}}.

\subsection{Near-optimal mixing time bounds for the k-graphlet walk}
\label{sub:walk}
Recall the graph $\gk_k$ defined above, and let $\tmix_{\epsilon}(\gk_k)$ denote its $\epsilon$-mixing time (see Section~\ref{sub:prelim} for a formal definition); similarly, let $\tmix_{\epsilon}(G)$ be the $\epsilon$-mixing time of $G$. Moreover, let $\tmix=\tmix_{\frac{1}{4}}$; it is well-known that $\tmix_{\epsilon}=\scO(\tmix \cdot \lg \frac{1}{\epsilon})$, hence bounds on $\tmix$ yield bounds on $\tmix_{\epsilon}$ for all $0 < \epsilon \le \frac{1}{4}$. Finally, let $\rho(G) = \frac{\dmax}{\dmin}$ be the ratio between the largest and the smallest degree of $G$. We prove:
\begin{theorem}
\label{thm:tgk}
For all graphs $G$ and all $k \ge 2$,
\begin{align}
 \tmix(\gk_k) \le \tmix(G) \cdot \kok \rho(G)^{k-1}\lg n
\end{align}
Moreover, for any function $\rho(n) \in \Omega(1) \cap \scO(n)$ there exists a family of arbitrarily large graphs $G$ on $n$ vertices that satisfy $\rho(G) = \Theta(\rho(n))$ and
\begin{align}
\tmix(\gk_k) \ge \tmix(G) \cdot \kokm \rho(G)^{k-1} / \,\lg n
\end{align}
\end{theorem}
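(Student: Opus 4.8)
\medskip
\noindent\textbf{Upper bound.} The plan is to pass to the spectral gap $\gamma(\cdot)$ of the lazy walk. Crudely $2|E(\gk_k)|\le k^2 n^{k+1}$, so $\log(1/\pi_{\min}(\gk_k))=\scO(k\lg n)$, and since for lazy reversible chains the relaxation time is within a constant of the mixing time, $\gamma(G)^{-1}=\scO(\tmix(G))$; hence it suffices to prove $\gamma(\gk_k)^{-1}\le\gamma(G)^{-1}\cdot\kok\,\rho(G)^{k-1}$ (bringing the logarithmic overhead of these conversions down to a single $\lg n$ takes a little care). I would establish this by the method of canonical paths / multicommodity flow, \emph{lifting} a flow on $G$ to one on $\gk_k$. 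Start from a near-optimal flow $f$ on $G$ routing $\pi_G(u)\pi_G(v)$ units between every ordered pair $u,v$, of congestion $\scO(\gamma(G)^{-1})$ (the standard correspondence between spectral gap and multicommodity flow). To route the induced demand between two graphlets $H,H'$ in $\gk_k$: choose pivots $r_H\in H$ and $r_{H'}\in H'$ as deterministic functions of $H,H'$; reshape $H$ into a canonical connected $k$-graphlet anchored at $r_H$; transport $r_H$ to $r_{H'}$ along a path carrying $f$, ``dragging'' the remaining $k-1$ vertices along so that every intermediate $k$-set stays a connected graphlet of $G$; finally reshape into $H'$. The load this deposits on an edge of $\gk_k$ lying over an edge $e$ of $G$ is then the congestion of $f$ at $e$, multiplied by $\kok$ --- the number of graphlet configurations that can straddle a fixed edge of $G$ during such a transport --- and by $\rho(G)^{k-1}$ --- the price of converting $\pi_G$-demands into $\pi_{\gk}$-demands, since a vertex of $G$ carries $\pi_G$-mass only proportional to its degree but may lie in as many as $\sim\Delta^{k-1}$ graphlets of comparable $\pi_{\gk}$-mass. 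Taking the worst edge of $\gk_k$ and invoking Sinclair's bound yields the desired inequality.

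\medskip
\noindent\textbf{Main obstacle.} The delicate part is the lifting. First, the reshaping and the dragging must keep the $k$-set connected at \emph{every} single-vertex swap; a path carrying $f$ in $G$ need not be compatible with the vertices the graphlet currently holds, so one must interleave ``transport'' and ``reshape'' moves, or route through an auxiliary spanning subtree. Second, and more crucially, the load on each edge of $\gk_k$ has to be charged to only $\kok\,\rho(G)^{k-1}$ times the congestion of the corresponding edge of $G$ --- in particular the degree-ratio loss must come out as the power $k-1$ and not something larger. That a loss of order $\rho(G)^{k-1}$ is essentially forced is exactly what the lower bound certifies.

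\medskip
\noindent\textbf{Lower bound.} If $\rho(n)=\scO(1)$ a path $G$ already suffices (then $\gk_k$ is again a path, so $\tmix(\gk_k)=\Theta(\tmix(G))$ while $\rho(G)^{k-1}=\Theta(1)$), so suppose $\rho(n)\to\infty$. Take $G$ to be a \emph{chain of cliques}: $L\ge2$ vertex-disjoint copies $C_1,\dots,C_L$ of $K_\Delta$, with $C_i$ and $C_{i+1}$ joined by a \emph{bridge} --- a path on $\Theta(k)$ fresh vertices of degree $2$ (so that no $k$-graphlet can straddle two cliques); choose $L$ and $\Delta$ so that $n\asymp L\Delta$ is arbitrarily large and $\Delta=\Theta(\rho(n))$, which is possible for every admissible $\rho$ and gives $\rho(G)=\Theta(\Delta)$, as required. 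It then suffices to show $\tmix(G)=\scO(L^2\Delta^2\,\poly(k)\,\lg n)$ and $\tmix(\gk_k)=\Omega(L^2\Delta^{k+1}/\poly(k))$, since dividing gives $\tmix(\gk_k)/\tmix(G)=\Omega(\Delta^{k-1}/(\poly(k)\,\lg n))=\Omega(\kokm\,\rho(G)^{k-1}/\lg n)$. For the first estimate, $G$ behaves as a path of $L$ cliques ``slowed'' by a $\Theta(\Delta^2)$ factor: from a uniformly random vertex of $C_i$ the walk reaches the junction vertex with probability $\Theta(\Delta^{-1})$ and then steps onto the bridge with probability $\Theta(\Delta^{-1})$, so it crosses each bridge with probability $\Theta(\Delta^{-2})$ per step, whence a canonical-path argument inside $G$ gives the bound. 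For the second estimate, I would lower-bound the relaxation time of $\gk_k$ by a Poincar\'e inequality applied to the test function sending a graphlet to the index of the clique nearest to it: this function has variance $\Theta(L^2)$ under the stationary distribution, yet Dirichlet form only $\Theta(\poly(k)\,\Delta^{-(k+1)})$ --- because it changes only on edges of $\gk_k$ lying inside a bridge, and a $k$-graphlet sitting in $C_i$ can migrate across the adjacent bridge only by shedding its (up to $k-1$) clique vertices one at a time, with the ``regress'' swaps (re-inserting a clique vertex) outnumbering the ``progress'' swaps by a factor $\Theta(\Delta)$ at each step; hence the graphlet walk crosses each bridge with probability only $\Theta(\Delta^{-2})\cdot\Theta(\Delta^{-(k-1)})=\Theta(\Delta^{-(k+1)})$ per step. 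Making both mixing-time estimates rigorous --- above all this $\Theta(\Delta^{-(k+1)})$ crossing rate, which governs both the Dirichlet form and the companion bound on $\tmix(G)$ --- is the main technical work here.
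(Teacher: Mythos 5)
Your lower bound follows essentially the same route as the paper's: cliques of degree $\Delta$ joined by low-degree connectors, a conductance/Dirichlet-form lower bound on $\tau(\gk_k)$ driven by the $\Theta(\Delta^{-(k+1)})$ bottleneck for graphlets migrating across a connector, and a finer-than-Cheeger upper bound on $\tmix(G)$ so that the vertex walk only pays $\Delta^{2}$ at the same bottleneck. (The paper uses exactly two cliques joined by a fat path of width $\dmin$ and length $2(k-1)$, bounds $\Phi(\gk_k)$ directly, and collapses $G$ onto a weighted path; your chain of $L$ cliques with degree-$2$ bridges works just as well and covers the whole admissible range of $\rho(n)$.) The quantitative claims you state for this half check out at the level of detail given.

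The upper bound, however, has a genuine gap. Your route --- lift a near-optimal all-pairs flow from $G$ to $\gk_k$ by ``reshape, drag, reshape'' and invoke Sinclair's bound --- is indeed different from the paper's (which is recursive: it relates $\tau(\gk_i)$ to $\tau(\gk_{i-1})$ through the line graph of $\gk_{i-1}$ and a reweighted $1$-subdivision, using collapsed-chain, induced-chain, and Dirichlet-form comparisons). But the step that carries the entire content of the theorem is asserted rather than proved, namely that the lifted flow loads an edge of $\gk_k$ lying over $e \in E(G)$ with only $\kok\,\rho(G)^{k-1}$ times the congestion of $f$ at $e$. Two concrete problems. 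First, your multiplicity count is wrong as stated: the number of $k$-graphlets straddling a fixed edge of $G$ can be as large as $\Delta^{k-2}$, not $\kok$; the load on an edge of $\gk_k$ over $e$ aggregates the $\pi_{\gk_k}\times\pi_{\gk_k}$ demand of \emph{all} pairs $(H,H')$ whose pivot path uses $e$ and whose dragging configuration straddles $e$ at that moment, and whether this aggregation costs $\rho(G)^{k-1}$ or $\rho(G)^{2(k-1)}$ is precisely the distinction between this theorem and the earlier bound of~\cite{Agostini19mixing}. Your remark that the lower bound ``certifies'' a loss of order $\rho(G)^{k-1}$ only shows that no construction can do better; it does not show that yours achieves it. Second, the dragging itself is unresolved: each single-vertex swap must produce \emph{adjacent} graphlets in $\gk_k$, which in this paper's definition requires the $(k-1)$-vertex intersection to be connected, and a flow-carrying path of $f$ need not admit such a sequence without detours and interleaved reshaping whose length and congestion you would also have to charge. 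Until the lifting is constructed explicitly and its congestion bounded, the upper bound is a plan, not a proof.
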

Essentially, Theorem~\ref{thm:tgk} says that the lazy walk on $\gk_k$ behaves like the lazy walk on $G$ slowed down by a factor $\rho(G)^{k-1}$. This should be compared with the upper and lower bound of~\cite{Agostini19mixing}, which are respectively $\tmix(G) \, \tilde{\scO}\big(\rho(G)^{2(k-1)}\big)$ and $\tmix(G) \,\Omega\big(\rho(G)^{k-1}\delta^{-1}\big)$. Ignoring $\kok \poly \lg n$ factors, we improve those bounds by $\rho(G)^{k-1}$ and $\delta$ respectively.

From Theorem~\ref{thm:tgk}, we obtain the best bounds known for $\epsilon$-uniform graphlet sampling based on random walks:
\begin{theorem}
\label{thm:walk_algo}
There exists a random-walk based algorithm that, for all $G$, all $k \ge 2$, and all $\epsilon > 0$, returns an $\epsilon$-uniform $k$-graphlet from $G$ in expected time $\kok \tmix(G) \, \rho(G)^{k-2}\lg \frac{n}{\epsilon}$.
\end{theorem}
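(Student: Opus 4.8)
The plan is to avoid running the walk on $\gk_k$ at all: instead I would run the $(k-1)$-graphlet walk (the lazy simple walk on $\gk_{k-1}$), draw a $(k-1)$-graphlet $S$ from a good approximation of its stationary distribution $\pi(S)\propto\deg_{\gk_{k-1}}(S)$, and then convert $S$ into a uniform $k$-graphlet by a rejection step of cost $\poly(k)$. Since Theorem~\ref{thm:tgk} applied with $k-1$ bounds $\tmix(\gk_{k-1})$ by roughly $\kok\,\tmix(G)\,\rho(G)^{k-2}\lg n$ (and, as discussed below, its proof should give the matching relaxation-time bound without the $\lg n$), this is exactly what turns the exponent $k-1$ into $k-2$.

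For the conversion I would use two facts about a $(k-1)$-graphlet $S$; write $b(S)=|N(S)\setminus S|$ for its number of external neighbours and $c(S)\in[2,k-1]$ for its number of non-cut vertices. (a) $b(S)\le\deg_{\gk_{k-1}}(S)\le(k-1)\,b(S)$: every neighbour of $S$ in $\gk_{k-1}$ is $(S-w)+u$ for a non-cut $w$ and an external neighbour $u$, giving $\le(k-1)b(S)$; conversely each external neighbour $u$ yields a distinct such neighbour for some non-cut $w$ (any $w$ that is not the unique $S$-neighbour of $u$), giving $\ge b(S)$. (b) $\sum_{S\in\scV_{k-1}}b(S)=\sum_{T\in\scV_k}c(T)\le k\,|\scV_k|$, since both sides count incidences $S\subset T$ between $\scV_{k-1}$ and $\scV_k$. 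The pipeline is then: (i) accept $S$ with probability $b(S)/\deg_{\gk_{k-1}}(S)\in[\tfrac1{k-1},1]$, which reweights $S$ to be $\propto b(S)$; (ii) attach a uniformly random external neighbour $v$ of $S$, so that $T=S+v$ is drawn $\propto c(T)$; (iii) accept $T$ with probability $2/c(T)\in[\tfrac2{k-1},1]$, after which $T$ is exactly uniform on $\scV_k$. By (a) and (b) the probability that one run of this pipeline outputs a graphlet is $|\scV_k|/|\scE_{k-1}|\ge1/\poly(k)$, so in expectation only $\poly(k)$ fresh samples $S\sim\pi$ are consumed.

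The crux is implementing every primitive in $\poly(k)$ expected time from adjacency lists alone (with $\scO(1)$-time adjacency tests, for which $\scO(m)$ preprocessing suffices) — in particular without ever listing $N(S)\setminus S$, which may have $\Theta(\dmax)$ elements. The key subroutine samples a uniform element of $N(S)\setminus S=\bigcup_{x\in S}N(x)\setminus S$: pick $x\in S$ with probability $\propto\deg(x)$, pick $u\in N(x)$ uniformly, discard if $u\in S$, else accept with probability $1/|N(u)\cap S|$; the acceptance probability is $b(S)/\sum_{x\in S}\deg(x)$, which is $\ge1/\poly(k)$ because $\sum_{x\in S}\deg(x)=\scO(k^2)+\scO(k)\,b(S)=\poly(k)\,b(S)$ (using $b(S)\ge1$). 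Using this, one step of the $(k-1)$-graphlet walk is simulated by sampling such a $u$, an independent uniform non-cut vertex $w$ of $S$, and restarting (probability $<\tfrac12$) if $u$ is not adjacent to $S-w$; likewise the coin of stage~(i) is realised \emph{exactly}, without ever computing $b(S)$ or $\deg_{\gk_{k-1}}(S)$, by drawing a uniform $\gk_{k-1}$-neighbour $(S-w)+u$ of $S$ in this way and accepting with probability $1/|\{w'\text{ non-cut in }S:\ u\text{ adjacent to }S-w'\}|$ — a short computation shows this coin lands heads with probability exactly $b(S)/\deg_{\gk_{k-1}}(S)$. To meet the $\epsilon$ guarantee I would run the walk for $\tmix_{\epsilon'}(\gk_{k-1})$ steps with $\epsilon'=\epsilon/\poly(k)$ before each pipeline run; since feeding an exactly stationary $S$ to the pipeline yields an exactly uniform $T$, the standard perturbation argument for rejection sampling from an $\epsilon'$-approximate source (per-round acceptance probability $\ge1/\poly(k)$, source within $\epsilon'$ of stationary, hence output within $\scO(\epsilon'\,\poly(k))$ of uniform in total variation) makes the output $\scO(\epsilon)$-uniform. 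Multiplying the $\poly(k)$ expected walk restarts, the $\tmix_{\epsilon'}(\gk_{k-1})=\kok\,\tmix(G)\,\rho(G)^{k-2}\lg\tfrac n\epsilon$ steps each, and the $\poly(k)$ per-step (and per-round) cost gives the claimed bound. (For $k=2$ the claim is trivial, a uniform edge being drawn directly.)

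The main obstacle is exactly the one highlighted above: keeping the per-step and per-rejection cost at $\poly(k)$. The naive simulation of a walk step — or of the stage-(i) coin — enumerates $N(S)\setminus S$ and costs $\Theta(\dmax)$, which would inflate the bound by a factor $\dmax$ and is already unaffordable on dense instances such as $K_n$, where the target running time is only $\kok\lg\tfrac n\epsilon$. Circumventing this is what forces the union-of-neighbourhoods rejection sampler together with the structural inequalities $b(S)\le\deg_{\gk_{k-1}}(S)\le(k-1)\,b(S)$ and $\sum_{x\in S}\deg(x)=\poly(k)\,b(S)$, which are what make all the overheads $\poly(k)$ rather than $\poly(k)\cdot\dmax$; verifying these carefully (including the matching lower bound $\deg_{\gk_{k-1}}(S)\ge b(S)$ and that the stage-(i) coin is unbiased) is the part I expect to need the most care. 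A lesser point: Theorem~\ref{thm:tgk} is stated for $\tmix$, so taking $\tmix_{\epsilon'}(\gk_{k-1})=\scO(\tmix(\gk_{k-1})\lg\tfrac1{\epsilon'})$ naively yields a $\lg n\cdot\lg\tfrac1\epsilon$ rather than a $\lg\tfrac n\epsilon$ factor; one should check that the proof of Theorem~\ref{thm:tgk} in fact gives the bound $\kok\,\tmix(G)\,\rho(G)^{k-2}$ on the relaxation time of $\gk_{k-1}$, which together with $\pi_{\min}\ge n^{-\scO(k)}$ produces the clean $\lg\tfrac n\epsilon$.
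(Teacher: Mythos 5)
Your proposal is correct and follows essentially the same route as the paper: run the lazy walk on $\gk_{k-1}$ (whence the exponent $k-2$), simulate each step in $\poly(k)$ expected time by local rejection sampling from the union of the current vertices' neighbourhoods, uniformize over $\scV_k$ with a $\poly(k)$-overhead rejection correction, and use the relaxation-time form of Theorem~\ref{thm:tgk} to get the clean $\lg\frac{n}{\epsilon}$ factor. The only difference is book-keeping in the final conversion: the paper turns a near-uniform \emph{edge} $\{u,v\}$ of $\gk_{k-1}$ (namely $\{X_t,X_{t+1}\}$) into $g=u\cup v$ and accepts with probability $1/|T(g)|$ with $|T(g)|\le\binom{k}{2}$, whereas you turn a near-stationary \emph{vertex} $S$ into a $k$-graphlet via the reweighting $b(S)/\deg(S)$, a uniform external extension, and the acceptance $2/c(T)$ --- two equivalent $\poly(k)$-multiplicity corrections.
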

Note that, although $\tmix(\gk_k)$ grows with $\rho(G)^{k-1}$, the bound above grows with $\rho(G)^{k-2}$. The reason is that, as noted in~\cite{Gionis&2020mixing,Wang&2014}, sampling $k$-graphlets is equivalent to sampling the edges of $\gk_{k-1}$. So, we can run the walk over $\gk_{k-1}$ rather than over $\gk_{k}$, which yields a mixing time proportional to $\rho(G)^{k-2}$ rather than $\rho(G)^{k-1}$. As a sanity check, when $k=2$ our algorithm matches the natural bound $\scO(\tmix(G))$ achieved by the simple random walk over $G$.

Regarding the techniques, our proofs are very different from those of~\cite{Agostini19mixing}. There, the authors showed a mapping between the cuts of $\gk_k$ and those of $G$; this allowed them to bound the conductance of $\gk_k$ by a function of the conductance of $G$, and then bound $\tmix_{\epsilon}(\gk_k)$ via Cheeger's inequality. However, since Cheeger's inequality can be loose by a quadratic factor, their upper bound on $\tmix_{\epsilon}(\gk_k)$ grows with $\rho(G)^{2(k-1)}$ instead of $\rho(G)^{k-1}$, see above. To avoid this, here we establish a connection between the \emph{relaxation times} of $\gk_i$ and $\gk_{i+1}$, for all $i=1,\ldots,k-1$, and thus between $\gk_1=G$ and $\gk_k$. To this end we prove a technical result on the relaxation time of the lazy walk on the \emph{line graph} of $G$ (the graph encoding the adjacencies between the edges of $G$):
\begin{lemma}
\label{lem:Lg}
Any graph $G$ satisfies $\tau(L(G)) \le 20\, \rho(G)\, \tau(G)$, where $L(G)$ is the line graph of $G$ and $\tau(\cdot)$ denotes the relaxation time of the lazy random walk.
\end{lemma}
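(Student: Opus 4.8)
The plan is to pass from mixing times to spectral gaps and then prove a Poincaré inequality on $L(G)$ by comparison with $G$. Since $\tau(\cdot)$ is the relaxation time of the lazy walk, every graph $H$ satisfies $\tau(H)=2/\gamma(H)$, where $\gamma(H):=1-\lambda_2(H)$ is the spectral gap of the (non‑lazy) simple random walk on $H$; the laziness contributes the same factor $2$ on both sides of the claim, so it suffices to show $\gamma(L(G))\ge\gamma(G)/(4\,\rho(G))$, which is in fact slightly stronger than what the lemma asks. We may assume $G$ is connected with at least two edges, since otherwise $\tau(G)=\infty$ or $L(G)$ has a single vertex and the bound is trivial. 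Writing $\pi_G,\scE_G$ and $\pi_L,\scE_L$ for the stationary distributions and Dirichlet forms of the simple walks on $G$ and on $L(G)$, the goal becomes the inequality $\Var_{\pi_L}(g)\le\big(1+\tfrac{2\rho(G)}{\gamma(G)}\big)\scE_L(g,g)$ for every $g:E(G)\to\R$; since $\gamma(L(G))=\min_g\,\scE_L(g,g)/\Var_{\pi_L}(g)$ over non‑constant $g$, this yields $\gamma(L(G))\ge\gamma(G)/(4\rho(G))$, using $\gamma(G)\le 2\le 2\rho(G)$.

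The bridge between the two graphs is the map that averages an edge‑function onto the vertices: given $g$, define $h:V(G)\to\R$ by letting $h(v)$ be the plain average of $g$ over the edges incident to $v$, and let $\bar g$ be the plain average of $g$ over $E(G)$. Two exact identities then structure the estimate. First, the edges of $L(G)$ are precisely the pairs of distinct edges of $G$ sharing a vertex, so the elementary identity $\sum_{i<j}(x_i-x_j)^2=\ell\sum_i(x_i-\bar x)^2$ for $\ell$ reals gives $\scE_L(g,g)=\tfrac{1}{2m_L}\sum_v d(v)\sum_{e\ni v}(g(e)-h(v))^2$, with $m_L:=|E(L(G))|$. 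Second, $\pi_L$ is proportional to $e=uv\mapsto d(u)+d(v)-2\le d(u)+d(v)$, so bounding the variance by the second moment around $\bar g$ and performing, vertex by vertex, the bias--variance split of $g(e)-\bar g$ into $(g(e)-h(v))+(h(v)-\bar g)$ gives
\[
\Var_{\pi_L}(g)\;\le\;\E_{\pi_L}[(g-\bar g)^2]\;\le\;\scE_L(g,g)\;+\;\frac{1}{2m_L}\sum_v d(v)^2(h(v)-\bar g)^2 .
\]
Choosing $\bar g$ to be the plain edge‑average is exactly what makes $\bar g=\E_{\pi_G}[h]$, so the residual sum is at most $\dmax\cdot 2m\cdot\Var_{\pi_G}(h)$.

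It remains to absorb $\Var_{\pi_G}(h)$ back into $\scE_L(g,g)$. The Poincaré inequality for $G$ gives $\Var_{\pi_G}(h)\le\gamma(G)^{-1}\scE_G(h,h)$, and writing $h(u)-h(v)=(g(uv)-h(v))-(g(uv)-h(u))$ together with $(a+b)^2\le 2a^2+2b^2$ gives $\scE_G(h,h)=\tfrac{1}{2m}\sum_{uv\in E}(h(u)-h(v))^2\le\tfrac{1}{m}\sum_v\sum_{e\ni v}(g(e)-h(v))^2$. Comparing this unweighted vertex sum with the degree‑weighted one appearing in $\scE_L(g,g)$, via $d(v)\ge\dmin$, bounds it by $\tfrac{2m_L}{\dmin}\scE_L(g,g)$. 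Chaining the inequalities turns the residual sum into $\tfrac{4m_L\,\rho(G)}{\gamma(G)}\scE_L(g,g)$, which substituted back gives $\Var_{\pi_L}(g)\le\big(1+\tfrac{2\rho(G)}{\gamma(G)}\big)\scE_L(g,g)$; undoing the reduction yields $\tau(L(G))\le 4\,\rho(G)\,\tau(G)\le 20\,\rho(G)\,\tau(G)$.

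Each individual step is a textbook inequality --- convexity/Cauchy--Schwarz, $(a+b)^2\le 2a^2+2b^2$, a bias--variance split, and the Poincaré inequality for $G$ --- so the only genuine obstacle is routing the degree weights so that the final constant is the \emph{ratio} $\rho(G)=\dmax/\dmin$ rather than $\dmax$. Concretely, the two factors of $\dmax$ picked up (from $\pi_L\propto d(u)+d(v)$ and from $d(v)^2\le\dmax\,d(v)$) must each be cancelled against a $\dmin$ recovered from the fact that $\scE_L$ is a \emph{degree‑weighted} sum over vertex--edge incidences; the crude version of the argument that discards these weights gives only $\tau(L(G))=\scO(\dmax\,\tau(G))$, and sharpening it to $\rho(G)$ is the whole point.
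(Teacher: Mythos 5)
Your proof is correct, and it takes a genuinely different route from the paper's. The paper never compares $L(G)$ to $G$ directly: it introduces the $1$-subdivision $\mathcal{S}$ of $G$ with lazy weights, shows $\tau(\mathcal{S})\le 4\tau(G)$ by a Dirichlet-form comparison along the map $f(x_{uv})=\tfrac{1}{2}(f(u)+f(v))$, reweights $\mathcal{S}$ into $\mathcal{S}'$ at the cost of a factor $\rho(G)$ via the direct-comparison lemma, and finally identifies the chain induced on the subdivision vertices $\{x_{uv}\}$ with a reweighted line graph $L'$, invoking the induced-chain theorem ($\gamma_A\ge\gamma$) and one more weight comparison to pass to $L(G)$; the product of the losses gives $20$. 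You instead prove a Poincar\'e inequality for $L(G)$ in one shot, using the vertex-averaging map $h(v)=\frac{1}{d(v)}\sum_{e\ni v}g(e)$ as the bridge, the exact identity $\mathcal{E}_L(g,g)=\frac{1}{2m_L}\sum_v d(v)\sum_{e\ni v}(g(e)-h(v))^2$ coming from the clique decomposition of $E(L(G))$, a bias--variance split, and the Poincar\'e inequality on $G$ applied to $h$; I checked the bookkeeping (including $\bar g=\mathbb{E}_{\pi_G}[h]$, the two places where $\Delta$ and $1/\delta$ enter, and the conversion $\tau_{\mathrm{lazy}}=2/(1-\lambda_2)$) and it is sound, yielding the sharper constant $4$ in place of $20$. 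What your argument buys is self-containedness and a better constant --- no auxiliary graphs, no induced- or collapsed-chain machinery. What the paper's route buys is reusability: the subdivision picture makes transparent that the line-graph walk is the subdivision walk observed on edge-states, and the collapsing/induced-chain toolkit is exactly what is reused in the companion Lemma~\ref{lem:tau_Gk_L} to relate $\gk_k$ to $L(\gk_{k-1})$. Either proof suffices for Theorem~\ref{thm:tgk}, since the $k^{O(k)}$ slack absorbs the constant.
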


\subsection{Uniform graphlet sampling}
We describe the first efficient algorithm for \emph{uniform graphlet sampling}:
\begin{theorem}
\label{thm:uniform} 
There exists a two-phase graphlet sampling algorithm, \UniformAlgo\ $($\textsc{u}niform \textsc{g}raphlet \textsc{s}ampler$)$, with the following guarantees:
\begin{enumerate}[leftmargin=2em]
    \item the preprocessing phase runs in time $\scO(n \,k^2 \lg k + m)$ and space $\scO(n)$
    \item the sampling phase returns $k$-graphlets independently and uniformly at random in $k^{\scO(k)} \lg \Delta$ expected time per sample.
\end{enumerate}
\end{theorem}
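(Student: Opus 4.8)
The plan is to reduce uniform sampling over $\scV_k$ to a two--level rejection scheme. Fix the total order on $V$ produced by \Preprocess, and partition $\scV_k = \bigsqcup_{v} \scV_k(v)$, where $\scV_k(v)$ is the set of $k$--graphlets whose minimum vertex is $v$; write $g_v = |\scV_k(v)|$, so that $|\scV_k| = \sum_v g_v$. Suppose \Preprocess\ also supplies, for every $v$, a number $b_v$ with $g_v \le b_v \le \kok\, g_v$ when $g_v>0$ and $b_v = 0$ otherwise, and suppose we have a randomized subroutine $\sampleS(v)$ that returns either a graphlet of $\scV_k(v)$ or $\mathsf{fail}$, together with a subroutine $\computeP(S)$ returning $\Pr[\sampleS(v)=S]$ exactly, with $\Pr[\sampleS(v)=S] \ge \kokm / b_v$ for every $S \in \scV_k(v)$. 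Then \UniformAlgo\ repeats: draw $v$ with probability $b_v/\sum_u b_u$; run $\sampleS(v)$; on $\mathsf{fail}$ restart; otherwise, having obtained $S$, accept it with probability $\alpha / (b_v\, \computeP(S))$, where $\alpha := \kokm$ is a fixed lower bound on $\min_{v,\,S\in\scV_k(v)} b_v \Pr[\sampleS(v)=S]$. A one--line computation gives $\Pr[\text{output } S] = \alpha/\sum_u b_u$ for every $S$, i.e. the output is exactly uniform, and the expected number of iterations is $\frac{\sum_u b_u}{\alpha\,|\scV_k|} \le \frac{\kok \sum_u g_u}{\kokm\, |\scV_k|} = \kok$. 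Distinct samples are independent because each invocation is fresh. So it remains (a) to implement $\sampleS,\computeP$ in $\kok\lg\dmax$ time with the stated concentration, and (b) to implement \Preprocess.

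For (a), the natural idea is to sample a $k$--graphlet at $v$ by first choosing a ``shape'' and then filling it in. Enumerate once the family $\scT$ of rooted trees on $k$ nodes, of size $\kok$. Given $v$, pick a template $T\in\scT$ and then an embedding of $T$ into $G$ mapping the root to $v$ and every other node to a strictly larger--indexed neighbour of its parent's image; the $k$ images, if distinct, induce the output graphlet $S$ (otherwise we fail). The point of this design is that $\computeP(S)$ becomes tractable: the probability of producing $S$ is a sum, over the $\kok$ rooted spanning trees of the $k$--vertex graph $S$, of products of $k-1$ per--edge factors, each of the form ``number of neighbours of some $u \in S$ with index larger than $v$'s'', obtainable by a single binary search in the sorted adjacency list of $u$ --- this is precisely where the $\lg\dmax$ factor comes from, and why \Preprocess\ must deliver sorted adjacency lists. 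The whole computation is $\kok\lg\dmax$.

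The genuine difficulty is getting the concentration $\Pr[\sampleS(v)=S]\ge\kokm/b_v$ while keeping the draw efficient, and this is the step I expect to be the main obstacle. The naive instantiation --- choose $T$ with probability proportional to its number of embeddings at $v$, then take a uniformly random forward neighbour at each step --- does \emph{not} work: a graphlet all of whose rooted spanning trees pass through a vertex of degree $\dmax$ receives probability as small as $\dmax^{-(k-1)}$, which is far below $\kokm/b_v$ when $b_v = \kok\, g_v$ is small. The remedy is to bias each growth step by (an estimate of) the number of ways the chosen template can still be completed, so that the process is $\kok$--uniform on $\scV_k(v)$ rather than wildly skewed; and it is exactly for this that \Preprocess\ must precompute, besides $b_v$, auxiliary per--vertex counts letting $\sampleS(v)$ evaluate these completion counts on the fly with $\scO(k)$ binary searches. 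The work is in proving that such weights can be both computed within the $\kok\lg\dmax$ budget and shown to keep every graphlet's probability within a $\kok$ factor of $1/b_v$.

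Finally, for (b), \Preprocess\ radix--sorts the adjacency lists and computes all forward degrees $d^+(v)=|\{u\in N(v):u>v\}|$ in $\scO(n+m)$ time, identifies and discards the vertices $v$ with $g_v=0$ (e.g. by a reverse--order union--find pass, again $\scO(n+m)$), and spends $\scO(k^2\lg k)$ further time per surviving vertex to assemble $b_v$ and the auxiliary counts, for a total of $\scO(nk^2\lg k + m)$ time and $\scO(n)$ space. The combinatorial core here is the inequality $b_v \le \kok\, g_v$: it amounts to showing that the slack between ``all (template, embedding) pairs rooted at $v$'' and ``actual graphlets with minimum vertex $v$'' --- coming both from non--injective embeddings and from a graphlet having up to $\kok$ rooted spanning trees --- is only $\kok$. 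A charging argument, mapping each counted configuration to a genuine graphlet containing $v$ and bounding the number of preimages, should do it, but making it go through for the specific, cheaply computable choice of $b_v$ is delicate.
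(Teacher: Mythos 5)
Your outer rejection scheme is exactly the paper's: partition $\scV_k$ into buckets $\Va(v)$ by minimum vertex, draw $v$ proportionally to an estimate $b_v$, sample within the bucket, compute the exact sampling probability, and accept with probability $\alpha/(b_v\, p_v(S))$; the uniformity and the $\kok$ bound on the number of trials follow just as you compute. But the proposal has a genuine gap precisely at the point you flag as ``the main obstacle'': you never supply the mechanism that makes $p_v(S) \ge \kokm/b_v$ hold, and the preprocessing you describe cannot deliver it. The missing idea is the \emph{degree-dominating order}: sort $V$ by repeatedly extracting the vertex of maximum degree, so that every $v$ has maximum degree in the suffix subgraph $G(v)$. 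With this order, no biasing of the growth steps and no template/completion-count machinery is needed at all. The sampler is simply the process that grows $S_1=\{v\}$ by repeatedly picking a uniformly random edge of $\Cut(S_i, G(v)\setminus S_i)$; because $v$ dominates all degrees in $G(v)$, every such cut has size between $\dg{v}{G(v)}/i$ and $i\cdot \dg{v}{G(v)}$ (the paper's Lemma~\ref{lem:cuti}), so every $S\in\Va(v)$ is produced with probability within $\kok$ factors of $\dg{v}{G(v)}^{-(k-1)}$, and the exact probability is computable by summing over the $\le (k-1)!$ orderings with $\scO(k)$ binary searches each.

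The same omission breaks your preprocessing claim. You set $b_v$ from the forward degree $d^+(v)$ with respect to the \emph{given} vertex indexing, and assert $g_v \le b_v \le \kok g_v$. The lower bound fails badly without the DD order: if $v$ has a single forward neighbour $u$ of degree $\dmax$, then $|\Va(v)|$ can be $\Omega(\dmax^{k-2})$ while $d^+(v)^{k-1}=1$, so $b_v$ underestimates the bucket by a polynomial in $\dmax$, the bucket-selection distribution is wrong by the same factor, and the acceptance probability $\alpha/(b_v p_v(S))$ exceeds $1$. The upper bound on $|\Va(v)|$ in terms of $\dg{v}{G(v)}^{k-1}$ (the paper's Lemma~\ref{lem:count_bound}) also needs $v$ to have maximum degree in $G(v)$. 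Everything else in your write-up --- the $\scO(n+m)$ sorting, the $\scO(k^2\lg k)$ per-vertex nonemptiness check via truncated BFS, the $\lg\dmax$ binary searches, the charging argument for the bucket bounds --- is salvageable once the order is fixed to be degree-dominating rather than arbitrary.
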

\noindent
The technique behind \UniformAlgo\ is radically different from random walks and color coding. The key idea is to ``regularize'' $G$, that is, to sort $G$ so that each vertex $v$ has maximum degree in the subgraph $G(v)$ induced by $v$ and all vertices after it (this can be done by just repeatedly removing the maximum-degree vertex from $G$). As we show, this makes each $G(v)$ behave like a regular graph, which makes it efficient to perform rejection sampling of randomly grown spanning trees. It is worth noting that several attempts have been made to sample graphlets uniformly by growing random subsets and applying rejection sampling, see for instance~\cite{Jha&2015,Paramonov19-lifting}. All those algorithms, however, have one crucial limitation: in the worst case, the rejection probability approaches $1-\dmax^{-k+1}$, in which case roughly $\dmax^{k-1}$ rejection trials are needed to draw a single graphlet. It is somewhat surprising that the fact that just sorting $G$ solves the problem has gone unnoticed until now.

\UniformAlgo\ can also be used as a graphlet \emph{counting} algorithm:
\begin{theorem} % TODO bring m out of the \lg \Delta factor?
\label{thm:counting_1}
Choose any $\epsilon_0,\epsilon_1,\delta \in (0,1)$. There exists an algorithm that runs in time
\begin{align}
 \scO(m) + \kok \left(\frac{n}{\epsilon_0^2} \lg\frac{n}{\delta} + \frac{1}{\epsilon_1^2} \lg\frac{1}{\delta}\right)\lg \dmax 
\end{align}
and, with probability $1-\delta$, returns for every distinct (up to isomorphism) connected $k$-vertex graph $H$ an additive $(\epsilon_0 N_H + \epsilon_1 N_k)$-approximation of $N_H$, where $N_H$ is the number of graphlets of $G$ isomorphic to $H$, and $N_k=|\scV_k|$ is the total number of $k$-graphlets in $G$.
\end{theorem}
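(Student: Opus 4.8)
The plan is to run \UniformAlgo\ and turn it into a counter by the usual reduction: obtain an estimate $\hat N_k$ of $N_k=|\scV_k|$, obtain estimates $\hat p_H$ of the relative frequencies $p_H=N_H/N_k$ by tallying the isomorphism types of a moderate number of uniform samples, and output $\hat N_H:=\hat p_H\hat N_k$. The preprocessing phase of \UniformAlgo\ takes time $\scO(nk^2\lg k+m)$; the $\scO(m)$ term is as stated, and $\scO(nk^2\lg k)$ is absorbed into the first term of the claimed bound, so only the sampling cost needs to be accounted for.

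For $\hat N_k$ I would use two structural facts that underlie Theorem~\ref{thm:uniform}: (i)~the preprocessing of \UniformAlgo\ computes, within the stated time, an explicit quantity $\hat B$ with $N_k\le\hat B\le\kok N_k$; and (ii)~\UniformAlgo\ generates each sample by a sequence of independent \emph{attempts}, each succeeding with the same probability $\alpha=N_k/\hat B\ge\kokm$ and, conditioned on success, returning a uniform element of $\scV_k$ (this is exactly why the expected number of attempts per sample is $\kok$). I would run $T_1=\Theta(\kok\,\epsilon_1^{-2}\lg\tfrac n\delta)$ attempts, let $\hat\alpha$ be the empirical success rate, and set $\hat N_k:=\hat B\,\hat\alpha$. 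Since $\alpha\ge\kokm$, a multiplicative Chernoff bound gives $|\hat\alpha-\alpha|\le\tfrac12\epsilon_1\alpha$, hence $|\hat N_k-N_k|\le\tfrac12\epsilon_1 N_k$, with probability $1-\delta/3$. Each attempt takes $\scO(\kok\lg\dmax)$ time, so this phase runs in $\kok\,\epsilon_1^{-2}\lg\tfrac n\delta\,\lg\dmax$ time — the second term of the bound.

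For the frequencies I would draw $T_2=\Theta(\kok\,(\epsilon_0\epsilon_1)^{-1}\lg\tfrac n\delta)$ uniform samples — obtained, with probability $1-\delta/3$, from a fixed budget of $\Theta(\kok\, T_2)$ attempts, since $\alpha\ge\kokm$ — identify the isomorphism type of each in $\kok$ time, and output $\hat N_H:=\hat p_H\hat N_k$. Using $|\hat N_H-N_H|\le\hat N_k|\hat p_H-p_H|+p_H|\hat N_k-N_k|$ together with the previous paragraph, it suffices that $|\hat p_H-p_H|\le\tfrac14\epsilon_0 p_H+\tfrac14\epsilon_1$ holds for every type $H$ simultaneously with probability $1-\delta/3$. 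Splitting on whether $p_H\ge\epsilon_1/\epsilon_0$, a multiplicative Chernoff bound in the first case and a Bernstein bound in the second each need only $T_2=\Omega((\epsilon_0\epsilon_1)^{-1}\lg\tfrac r\delta)$ samples, where $r=2^{\scO(k^2)}$ is the number of connected $k$-vertex graphs up to isomorphism; since $\lg r=\scO(k^2)$, both this and the union bound over the $r$ types are absorbed into $\kok$. A union bound over the three failure events yields total success probability $1-\delta$. Finally $(\epsilon_0\epsilon_1)^{-1}\le n\,\epsilon_0^{-2}+\epsilon_1^{-2}$ for every $n\ge1$ by AM--GM, so the sampling cost $\kok\, T_2\lg\dmax$ is dominated by the sum of the first two terms of the claimed bound.

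The only step that is not a routine Chernoff-and-union-bound argument is the extraction of $\hat N_k$: a black-box use of \UniformAlgo\ is not enough, since uniform samples determine the $p_H$ but not the absolute counts. One has to open up \UniformAlgo\ and invoke the explicitly computed over-estimate $\hat B$ and the fact that its rejections are i.i.d.\ Bernoulli$(N_k/\hat B)$ — facts that are essentially restatements of the analysis behind Theorem~\ref{thm:uniform}, so the real work is to phrase them in the form above. A minor additional point is that \UniformAlgo\ has only an \emph{expected} per-sample time; fixing the attempt budgets as above makes the running time deterministic at the cost of a $\Theta(\kok)$ overhead in the number of attempts, which is absorbed into the $\kok$ factors.
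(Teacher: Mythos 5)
Your proposal is correct, but it estimates $N_k$ by a genuinely different mechanism than the paper. The paper never touches the acceptance rate: it estimates each bucket size $|\Va(v)|$ separately with the unbiased inverse-probability estimator $1/p(S)$ (sample $S$ from the bucket via the growing process, return $1/\computeP(G,S)$, whose expectation is exactly $|\Va(v)|$ and whose range is $[0,\kok\E[X]]$ by Lemma~\ref{lem:subgraph_sample}), then sums the $n$ per-bucket estimates; this is where the dominant $\kok\,n\,\epsilon_0^{-2}\lg\frac{n}{\delta}$ term of the stated running time comes from. You instead estimate $N_k$ globally from the empirical acceptance frequency of the rejection step, using the facts that $Z$ is explicitly computed in preprocessing and that each attempt accepts with probability exactly $k^{-Ck}N_k/Z\ge\kokm$, independently across attempts — both of which are indeed immediate from the analysis of \UniformAlgo. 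This buys you a quantitatively better bound: your $N_k$-estimation cost is $\kok\epsilon_1^{-2}$ attempts rather than $\kok n\epsilon_0^{-2}$ samples, so the $n/\epsilon_0^2$ term disappears entirely (the paper's per-bucket route does yield the individual $|\Va(v)|$, which the global route does not, but the theorem does not ask for them). Your error allocation also differs — you give $\hat N_k$ an $\epsilon_1$-relative guarantee and $\hat p_H$ a mixed $\tfrac14\epsilon_0 p_H+\tfrac14\epsilon_1$ guarantee, whereas the paper gives $\hat N_k$ an $\epsilon_0$-relative and $\hat f_H$ a purely additive $\epsilon_1/2$ guarantee — but both decompositions close to $\epsilon_0 N_H+\epsilon_1 N_k$. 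One bookkeeping nit: you invoke $\lg\frac{n}{\delta}$ for $T_1$ and $T_2$, and $\epsilon_1^{-2}\lg n$ is not always dominated by the stated bound's $\frac{n}{\epsilon_0^2}\lg\frac{n}{\delta}+\frac{1}{\epsilon_1^2}\lg\frac{1}{\delta}$; but as you yourself note only $\lg\frac{1}{\delta}$ (resp.\ $\lg\frac{r}{\delta}$ with $\lg r=\scO(k^2)$ absorbed into $\kok$) is actually required by the union bound, so replacing $n$ by the correct quantity fixes this without any change to the argument.
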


\subsection{Epsilon-uniform graphlet sampling}
We present:
\begin{theorem}
\label{thm:epsuniform}
There exists a two-phase graphlet sampling algorithm, \EpsilonAlgo, that for all $\epsilon > 0$ has the following guarantees:
\begin{enumerate}[leftmargin=2em]
    \item the preprocessing phase takes time $\scO\left(\big(\frac{1}{\epsilon}\big)^{\frac{2}{(k-1)}} k^6 \, n \lg n \right)$ and space $\scO(n)$
    \item with high probability over the preprocessing phase, the sampling phase returns $k$-graphlets independently and $\epsilon$-uniformly at random in $k^{\scO(k)} \big(\frac{1}{\epsilon}\big)^{8+\frac{4}{(k-1)}} \lg \frac{1}{\epsilon}$ expected time per sample.
\end{enumerate}
\end{theorem}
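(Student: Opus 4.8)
The plan is to build on \UniformAlgo\ (Theorem~\ref{thm:uniform}), whose only features incompatible with the claimed bounds are the $\scO(m)$ term in the preprocessing --- spent reading all of $E$ both to compute the ``peeling'' order and to answer neighbourhood queries --- and the $\lg\dmax$ factor in the sampling time. I would introduce a degree threshold $\theta=\Theta\!\big((\feps)^{2/(k-1)}k^{6}\lg n\big)$ and call a vertex \emph{light} if $d(v)\le\theta$ and \emph{heavy} otherwise. Light vertices are handled essentially as in \UniformAlgo: their adjacency lists are read in full, which costs $\scO(n\theta)$ and accounts for the stated preprocessing bound, and the exact peeling order of the bounded-degree graph $G[L]$ (where $L$ is the set of light vertices) is computed in the same time. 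Heavy vertices are never scanned; they are touched only through $\scO(1)$-time ``degree'' and ``$i$-th neighbour'' queries.

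With the light part in place, the sampler follows the \UniformAlgo\ template: for each vertex $v$ let $g_v$ be the number of $k$-graphlets whose first vertex in the order is $v$, so that $N_k=\sum_v g_v$; pick a root $v$ with probability $\hat g_v/\sum_u\hat g_u$, where $\hat g_v$ is an estimate of $g_v$, and then grow a random spanning subtree of size $k$ inside $G(v)$ and apply rejection exactly as in the proof of Theorem~\ref{thm:uniform}, using random-neighbour queries whenever the growth reaches a heavy vertex. For light $v$ the count $g_v$ is computed exactly from the stored neighbourhoods; for heavy $v$ it is estimated to relative error $\epsilon$ by drawing $\scO(\theta)$ random neighbours and extending them, which is where the $\poly(k)\lg n$ and $(\feps)^{2/(k-1)}$ factors inside $\theta$ are calibrated.

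Correctness I would establish by decomposing the total variation distance into three parts. The first is the mass of graphlets that the heavy/light split or the estimates can make unreachable or mis-rooted; the key inequality is that the maximum-degree vertex witnesses $N_k\ge\binom{\dmax}{k-1}=\Omega\!\big((\dmax/k)^{k-1}\big)$, so if any heavy vertex exists then $N_k\gtrsim(\theta/k)^{k-1}$, and with the chosen $\theta$ this bad mass is at most $\epsilon N_k$ --- this is exactly what forces the exponent $2/(k-1)$. The second part is the $\scO(\epsilon)$ error from sampling the root from $\hat g_v$ rather than $g_v$. The third part is zero, because conditioned on the root the rejection step reproduces the uniform distribution on the graphlets rooted there exactly, just as in Theorem~\ref{thm:uniform}. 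Summing and rescaling $\epsilon$ by a constant yields an $\epsilon$-uniform sampler; for the running time, the acceptance probability degrades by a $\poly(\epsilon)$ factor (because the root weights are only $(1\pm\epsilon)$-accurate and because near heavy vertices the growth is driven by sampled rather than enumerated neighbours) and the degrees that matter are effectively capped at $\poly(\feps)$, which together turn the $\kok\lg\dmax$ of Theorem~\ref{thm:uniform} into $\kok(\feps)^{8+4/(k-1)}\lg\feps$.

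The main obstacle I expect is everything concerning heavy vertices, which is genuinely outside the reach of \UniformAlgo. First, one cannot afford the exact peeling order for heavy vertices (their lists sum to $\Theta(m)$ edges), so one must argue that an approximate order --- or a separate treatment of heavy-rooted graphlets --- still gives rejection acceptance $\kokm\poly(\epsilon)$. Second, a graphlet rooted at a heavy vertex $v$ need not lie inside $N(v)$: it may escape through a neighbour of $v$, so estimating $g_v$ from uniform samples of $N(v)$ requires a recursive, bucket-aware estimator together with a variance bound showing that $\scO(\theta)$ samples already give a $(1\pm\epsilon)$-relative estimate. Getting this concentration right, while simultaneously controlling that the $\epsilon$-losses from the $k$ growth steps and the $\scO(\lg n)$ degree buckets compose additively rather than multiplicatively, is the technical heart of the argument.
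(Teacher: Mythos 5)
Your light/heavy split by an absolute degree threshold $\theta$ is a genuinely different route from the paper's, but it has two gaps that I think are fatal as stated. First, the quantity that controls the rejection acceptance probability is not whether degrees are large or small in absolute terms, but whether the order is \emph{degree-dominating}: to get acceptance $\kokm$ one needs $\dg{v}{G(v)} \gtrsim \alpha\, \dg{u}{G(v)}$ for every $u \succ v$ (this is what bounds every cut met by the growth process between $\dg{v}{G(v)}/i$ and $i\,\dg{v}{G(v)}/\alpha$). Your proposal computes an exact peeling order only on $G[L]$ and leaves unspecified where heavy vertices sit and how a light root $v$ with a heavy successor $u \in G(v)$ is handled; in that situation the cut can be $\Theta(\dmax)$ while $\dg{v}{G(v)}=O(1)$, and the acceptance probability collapses, which is exactly the ``curse of rejection'' the theorem is meant to avoid. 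The paper's fix is a \emph{relative} test: for each $v$ it samples $\Theta(\eta^{-2}\lg n)$ neighbours to estimate the fraction lying after $v$ in the current order, demotes $v$ and writes off its bucket if that fraction is below $\eta$, and then proves the written-off buckets hold at most a $\beta$-fraction of all graphlets via the calibration $\eta^{k-1}\approx\beta$ (comparing $(3\eta d_v)^{k-1}$ against the $\binom{d_v}{k-1}$ stars at $v$). Your calibration ``$N_k\ge\binom{\dmax}{k-1}\gtrsim(\theta/k)^{k-1}$ so the bad mass is $\le\epsilon N_k$'' does not substitute for this: the graphlets rooted at heavy or misordered vertices can be a constant fraction of $N_k$ (e.g.\ a star), so they cannot be treated as discardable mass, and nothing in the proposal bounds what is actually lost. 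Relatedly, estimating $g_v$ for heavy $v$ to relative error $\epsilon$ with $O(\theta)$ neighbour samples is asserted but not justified, and is not needed in the paper, which only uses $\kok/\beta$-accurate proxies $\dg{v}{G(v)}^{k-1}$ and compensates in the rejection step.

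Second, your total-variation decomposition claims the third part is ``zero, because conditioned on the root the rejection step reproduces the uniform distribution \ldots exactly.'' This cannot hold: exact rejection requires the exact values $|\Cut(u,G(v)\setminus S_i)|$ both to drive the growth and to compute $p_v(S)$, and for heavy vertices these cannot be obtained in $o(\dmax)$ time. Once the cuts are replaced by sampled estimates (as you yourself note when discussing the running time), the conditional output distribution is only approximately uniform, and quantifying that error is the bulk of the paper's proof: additive cut estimates with error $\delta\,\dg{v}{G(v)}$ (whose sample complexity again relies on the relative degree-domination property to relate $d_u$ to $\dg{v}{G(v)}$), a $(1\pm\rho)$ estimate of $p_v(S)$, and a coupling with an idealized exact sampler in which the per-step disagreement probabilities are divided by the acceptance probability $\epsilon^2\kokm$ --- this division is where the $\epsilon^{-8}$ in the sampling time comes from, and it is absent from your accounting. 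Finally, two smaller discrepancies with the theorem statement: storing (or re-reading) full light adjacency lists conflicts with the $\scO(n)$ space bound and injects a $\lg n$ factor (via $\theta$) into the per-sample time, whereas the claimed sampling time depends on $G$ not at all.
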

\noindent
The remarkable fact about \EpsilonAlgo\ is that its preprocessing time grows as $n \lg n$, and is therefore independent of the edge set of $G$. This should be contrasted with the color-coding algorithm, whose preprocessing time grows as $n+m$. Moreover, our preprocessing time is polynomial in both $\frac{1}{\epsilon}$ and $k$, while that of color coding is exponential in $k$. For what concerns the expected sampling time, like the one of color coding ours is independent of $G$, but it pays an extra $\poly \frac{1}{\epsilon}$ factor. However, we did not make hard attempts to optimize those factors, and they might be improved.

While \UniformAlgo\ is rather simple, \EpsilonAlgo\ is considerably more involved. The high-level idea is, unsurprisingly, to ``approximate'' \UniformAlgo\ in both phases. However, this turns out to be a delicate issue, which requires a careful combination of graph sketching, cut size estimation, and coupling arguments. The reason is that \UniformAlgo\ relies crucially on a particular topological order of $G$, whose exact computation takes time $\Omega(m)$, and which is not clear how to approximate in time $o(m)$. In fact, it is not even clear what definition of ``approximate order'' is the right one for our purposes; in the end, the definition we use turns out to be nontrivial. %Note also that, for a certain range of the parameters (e.g., when $m$ is sufficiently small), the preprocessing time of \UniformAlgo\ is lower than that of \EpsilonAlgo, hence one may want to use \UniformAlgo\ instead of \EpsilonAlgo.

To conclude, we observe that \EpsilonAlgo\ is nearly optimal in our graph access model:
\begin{theorem}
\label{thm:eps_lb}
For any $k \ge 2$ and any $\epsilon \in [0,1]$, any $\epsilon$-uniform $k$-graphlet sampling algorithm has worst-case expected running time $\Omega(n/k)$ in the graph access model of~\cite{Kaufman04-model}.
\end{theorem}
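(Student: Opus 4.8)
The plan is to exhibit, for each $k\ge 2$ and each (large) $n$, a single graph $G_n$ on $n$ vertices that has exactly one $k$-graphlet whose vertex set is ``hidden'', and to argue that any sampler must pay $\Omega(n/k)$ just to locate it; crucially, $\epsilon$ will play no role, since a sampler must always output a valid $k$-graphlet and here there is only one to output. Take $G_n$ to be the complete graph $K_k$ together with $n-k$ isolated vertices. Then the only connected induced $k$-vertex subgraph of $G_n$ is the clique itself, so $\scV_k(G_n)=\{S^\star\}$ where $S^\star$ is its vertex set. (If one also wants robustness against a uniform-random-edge primitive, for $k\ge 3$ replace the isolated vertices by $\Theta(n/k)$ disjoint copies of $K_{k-1}$: this adds $\Theta(nk)$ ``dummy'' edges, so a random edge still lies in the clique with probability $\scO(k/n)$, and nothing below changes.) Since any $\epsilon$-uniform sampler, for any $\epsilon\in[0,1]$, must return an element of $\scV_k(G_n)$, it must return $S^\star$ with probability $1$, and in particular must learn the $k$ labels of $S^\star$.

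To lower-bound the cost of learning $S^\star$ I would invoke Yao's minimax principle, taking the hard distribution to be $G_n$ under a uniformly random vertex labeling --- equivalently, $S^\star$ a uniformly random $k$-subset of $[n]$. It then suffices to show that every \emph{deterministic} algorithm $A$ in the query model of~\cite{Kaufman04-model} has expected query count $\Omega(n/k)$ on this distribution. The heart of the argument is a short induction on the number of queries: as long as no query answer so far has involved a vertex of $S^\star$, (i) conditioned on the transcript, $S^\star$ is uniform over the $k$-subsets of the not-yet-revealed labels, and (ii) the next query reveals a vertex of $S^\star$ with probability $\scO(k/n)$. For (ii): a uniform vertex sample lands in $S^\star$ with probability exactly $k/n$; a query that involves only already-revealed (hence dummy) vertices reveals nothing new about $S^\star$ --- in particular a neighbour query at a dummy vertex returns only dummy vertices, since isolated vertices have no neighbours (and a $K_{k-1}$-vertex has neighbours only inside its own clique); and any query that supplies a fresh label hits $S^\star$ with probability at most $k/n$ per fresh label, because a fresh label is uniform over the unrevealed ones. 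A union bound over the first $q$ queries gives $\Pr[\,A\text{ has touched }S^\star\text{ within }q\text{ queries}\,]\le\scO(qk/n)$.

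Finally I would combine this with the fact that $A$ must output $S^\star$ with probability $1$. If $A$ never touches $S^\star$, its output (a deterministic function of the transcript) equals $S^\star$ only if it correctly guesses a uniform $k$-subset of at least $n-\scO(q)$ unrevealed labels, which for $q=\scO(n/k)$ has probability at most $\binom{n/2}{k}^{-1}$, i.e.\ negligible. So suppose, for contradiction, that $A$'s expected query count is $o(n/k)$; then for a suitably small constant $\eta>0$, Markov's inequality gives $\Pr[Q\ge \eta n/k]\le \tfrac1{10}$, and conditioned on $Q<\eta n/k$ the probability that $A$ touches $S^\star$ is $\scO(\eta)\le\tfrac1{10}$. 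Hence $\Pr[A\text{ outputs }S^\star]\le \tfrac1{10}+\tfrac1{10}+o(1)<1$, contradicting that a valid $\epsilon$-uniform sampler outputs the unique graphlet with probability $1$. Therefore $A$ makes $\Omega(n/k)$ queries in expectation; since each query costs at least one time step, its expected running time is $\Omega(n/k)$, and by Yao's principle this transfers to a worst-case lower bound for every randomized algorithm.

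I expect the main obstacle to be making the induction of the second paragraph fully rigorous: one must pin down, in the random-labeling model, the precise sense in which the algorithm's knowledge state is a distribution over ``which $k$-subset is $S^\star$'', show that this distribution stays uniform on the unrevealed labels until $S^\star$ is touched, and check --- query primitive by query primitive --- that neighbour exploration never leaks information across components and that adaptively chosen fresh labels are no more useful than blind samples. This is the standard ``you can only learn about a planted structure by stumbling onto it'' phenomenon, but setting up the coupling between $A$'s transcript and the hidden set $S^\star$ cleanly (so that degree, neighbour, and adjacency queries are all handled at once) takes some care.
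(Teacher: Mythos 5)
Your proposal is correct and is essentially the paper's own argument: the paper also plants a unique $k$-graphlet (a $k$-path rather than a clique) among $n-k$ isolated vertices and observes that any algorithm must examine $\Omega(n/k)$ vertices in expectation before finding it. Your write-up merely makes explicit the Yao's-principle/random-labeling formalization that the paper compresses into the phrase ``in the worst case.''
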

\begin{proof}
Let $G$ contain a $k$-path plus $n-k$ isolated vertices. In the worst case any algorithm must examine $\Omega(n/k)$ vertices in expectation before finding the only $k$-graphlet of $G$.
\end{proof}

\medskip
The table below summarizes our upper bounds and the state of the art.% For the bounds of~\cite{Bressan&2017} see the extended version of this abstract.
\begin{table*}[h!]
\centering\small
\caption{Our upper bounds (shaded) compared to existing work.}%\vspace*{5pt}
\renewcommand{\arraystretch}{1.5}
\scalebox{.96}{
\begin{tabular}{lllll} \toprule
& preprocessing time & preprocessing space & time per sample & output \\
\midrule
\text{\cite{Bressan&2017}} & -- & -- & $2^{\scO(k)}(n+m) + \kok$ & uniform \\
\rowcolor{lightgray!50}
\UniformAlgo\ & $\scO(n k^2 \lg k + m)$ & $\scO(n)$ & $k^{\scO(k)} \lg \dmax$ & uniform \\
%\midrule
\text{\cite{Gionis&2020mixing}}  & $\scO(n)$ & $\scO(n)$ & $\kok \big(\Delta \lg \frac{n}{\epsilon}\big)^{k-3}$  & $\epsilon$-uniform \\
\text{\cite{Bressan&2017}}  & $2^{\scO(k)}(n+m) \lg \frac{1}{\epsilon}+\kok \frac{1}{\epsilon^2}\lg \frac{1}{\epsilon}$ & $2^{\scO(k)}n \lg \frac{1}{\epsilon}$ & $\kok \Delta (\lg \frac{1}{\epsilon})^2$ & $\epsilon$-uniform \\
\text{\cite{Bressan&2017}} & $2^{\scO(k)}(n+m) \lg \frac{1}{\epsilon}+ \kok \frac{1}{\epsilon^2}\lg \frac{1}{\epsilon}$& $2^{\scO(k)}m \lg \frac{1}{\epsilon}$ & $\kok (\lg \frac{1}{\epsilon})^2$ & $\epsilon$-uniform \\
\rowcolor{lightgray!50}
\EpsilonAlgo\ & $ \scO\big(\big(\frac{1}{\epsilon}\big)^{\frac{2}{(k-1)}} k^6 n \lg n \big)$ & $\scO(n)$  & $\kok \big(\frac{1}{\epsilon}\big)^{8+\frac{4}{(k-1)}} \lg \frac{1}{\epsilon}$ &  $\epsilon$-uniform\\
\rowcolor{lightgray!50}
\MCsampler\  & -- & -- & $\kok \tmix(G) (\frac{\dmax}{\delta}\big)^{k-2}\lg \frac{n}{\epsilon}$  & $\epsilon$-uniform 
\end{tabular}
}
%}
\label{tab:summary}
\end{table*}

%\section{Manuscript Organization of the manuscript}

\section{Related work}
The $k$-graphlet walk algorithm was introduced by~\cite{Bhuiyan&2012} without formal running time bounds. The first bounds on $\tmix_{\epsilon}(\gk_k)$ were given by~\cite{Bressan&2017}, while the first bounds tying $\tmix_{\epsilon}(\gk_k)$ to $\tmix_{\epsilon}(G)$ were given by~\cite{Agostini19mixing}. Recently,~\cite{Gionis&2020mixing} developed a graphlet sampling random walk with running time $\kok (\Delta \lg \frac{n}{\epsilon})^{k-3}$. Their approach is similar to ours as they build the $k$-graphlet walk recursively from the $(k-1)$-graphlet walk. However, they assume one can sample edges uniformly at random from $G$ in time $\scO(1)$, which requires a $\scO(n)$-time preprocessing, or an additional factor of $\tmix_{\epsilon}(G)$ to sample edges via random walks. Moreover, their running bound grows like $\Delta^{k}$, while ours grows as $\big(\frac{\dmax}{\dmin}\big)^{k}$.

The color coding extension for estimating graphlet  counts was introduced by~\cite{Bressan&2017}. This extension does not allow to $\epsilon$-uniform graphlet sampling directly; however, it can be obtained by making several independent runs, for a total preprocessing time of $2^{\scO(k)}(n+m) \lg \frac{1}{\epsilon}+ \kok \frac{1}{\epsilon^2}\lg \frac{1}{\epsilon}$, a preprocessing space of $2^{\scO(k)}m \lg \frac{1}{\epsilon}$, and an expected sampling time of $\kok (\lg \frac{1}{\epsilon})^2$. See Appendix~\ref{apx:cc} for a complete proof. As said, one can also obtain uniform samples by running the entire algorithm of~\cite{Bressan&2017} from scratch, in $2^{\scO(k)} \scO(n+m)$ time per sample.

Rejection sampling is at the heart of several graphlet sampling algorithms, such as path sampling~\cite{Jha&2015} and lifting~\cite{Paramonov19-lifting}. These algorithms start by drawing a random vertex from $G$ and, then, repeatedly selecting random edges in the cut. This technique alone seems destined to fail: in the worst case, the rejection probability must be as large as $\simeq 1-\dmax^{-k+1}$, resulting in a vacuous $\scO(\dmax^{k-1})$ running time bound. The main idea behind our algorithms is to make such a rejection sampling efficient by sorting $G$ so to virtually ``bucket'' the graphlets, so that within every single bucket the sampling probabilities are roughly balanced.

There is also intense work on sampling and counting copies of \emph{a specific pattern} $H$ in sublinear time, including edges, triangles, cliques, and other patterns~\cite{Eden17triangles,Assadi18-counting,Eden20cliques,Eden21-multiedge,Eden21-decomp,Eden18_edges}. However, ``sublinear'' there is meant in the maximum possible number of copies of $H$, which can be as large as $\Theta\big(m^{\frac{k}{2}}\big)$. It is also unclear how those techniques can be applied to uniform graphlet sampling.

\section{Preliminaries and notation}
\label{sub:prelim}
Given $G=(V,E)$, we assume $V=\{1,\ldots,n\}$. We denote the degree of $v \in V$ by $d_v$. We assume the graph access model of~\cite{Kaufman04-model}, where these queries take constant time:
\begin{itemize}\itemsep0pt
    \item neighbor query: given $v \in V$ and $i \in \mathbb{N}$, return the $i$-th neighbor of $v$ in $G$, or $-1$ if $d_v < i$
    \item pair query: given $u,v \in V$, tell if $\{u,v\} \in E$
    \item degree query: given $v \in V$, return $d_v$
\end{itemize}
For any $U \subseteq V$ and $U' \subseteq V \setminus U$, the cut between $U$ and $U'$ is $\Cut(U, U') = E \cap (U \times U')$. The line graph $L(G)=(V',E')$ of a graph $G=(V,E)$ is defined by $V'={v_e : e \in E}$, and $\{v_e,v_{e'}\} \in E'$ if and only if $|e \cap e'|=1$. For $u,v \in V(G)$, we write $u \sim v$ for $\{u,v \} \in E(G)$.

A $k$-graphlet $\gph=(V(\gph),E(\gph))$ is a $k$-vertex subgraph of $G$ that is connected and induced. With a slight abuse of notation, we may use $g$ in place of $V(\gph)$, and $g \cap g'$ in place of $G[V(g) \cap V(g')]$. We denote by $\scV_k$ the set of all $k$-graphlets of $G$. 
%Unless otherwise stated, by ``graphlet'' we mean ``graphlet copy'' (that is, a subgraph of $G$).
The $k$-graphlet graph of $G$ is $\gk_k=(\vk_k,\ek_k)$, where $\{g,g'\}\in \scE_k$ if and only if $g \cap g' \in \vk_{k-1}$. We note that some works define $g$ and $g'$ to be adjacent if $|V(g) \cap V(g')|=k-1$, but our proofs do not work in that case (and so the mixing time of those walks may not respect our bounds).

In this paper, ``$X$ holds with high probability for $Y=\Theta(Z)$'' means that for any fixed $a > 0$ we can make  $\Pr X > 1-n^{-a}$ by choosing $Y \in \Theta(Z)$ sufficiently large. Similarly, ``X has probability $\poly(x)$'' means that for any fixed $a > 0$ we can make $\Pr X < x^{a}$ by adjusting the constants in our algorithms.

\section{Near-optimal mixing time bounds for the k-graphlet walk}
\label{sec:walks}
In this section we prove the results of Section~\ref{sub:walk}. Towards this end, we need to recall some additional preliminary results on Markov Chains, random walks, and mixing.

\subsection{Preliminaries}\label{sub:tgk_prelim}
We denote by $X=\{X_t\}_{t \ge 0}$ a generic Markov chain over a finite state space $\scV$. We denote by $P$ the transition matrix of the chain, and $\pi_t$ the distribution of $X_t$. We always assume that the chain is ergodic, and denote by $\pi=\lim_{t \rightarrow \infty} \pi_t$ its unique limit distribution. We also let $\pi^* = \min_{x \in \scV} \pi(x)$ be the smallest stationary probability of any state. The $\epsilon$-mixing time of $X$ is $\tmix_{\epsilon}(X) = \min\{t_0 : \forall X_0 \in \scX : \forall t \geq t_0 : \tvd{\pi_{t}}{\pi} \le \epsilon \}$. When we write $\tmix(X)$, we mean $\tmix_{\frac{1}{4}}(X)$. Here $\tvd{\sigma}{\pi} = \max_{A \subseteq \mathcal{\scV}} \{ \sigma(A) - \pi(A) \}$ is the variation distance between the distributions $\sigma$ and $\pi$; if $\tvd{\sigma}{\pi} \le \epsilon$ and $\pi$ is uniform, then we say $\sigma$ is $\epsilon$-uniform. 

A graph with non-negative edge weights is denoted by $\scG=(\scV,\scE,w)$ where $w: \scE \rightarrow \R^+_0$. For every $u \in \scV$ we let $w(u)=\sum_{e \in \scE : u \in e} w(e)$. Any such $\scG$ induces a \emph{lazy random walk} as follows. Let $P_0$ be the matrix given by $P_0(u,v) = \frac{w(u,v)}{w(u)}$. Now let $P=\frac{1}{2}(P_0+I)$ where $I$ is the identity matrix. This can be seen as adding a loop of weight $w(u)$ at each vertex of the graph. Note that $P_0$ and $P$ are both stochastic. The lazy random walk over $\scG$ is Markov chain with state space $\scV$ and transition matrix $P$. By standard Markov chain theory, if $\scG$ is connected then the lazy random walk is ergodic, and converges to the limit distribution $\pi$ given by $\pi(u)= \frac{w(u)}{\sum_{v \in V} w(v)}$. It is well-known that the chain is time-reversible with respect to $\pi$, that is, $\pi(x)P(x,y)=\pi(y)P(y,x)$ for all $x,y \in \scV$; and that every time-reversible chain on a finite state space $\scV$ can be seen as a random walk over a graph $G=(\scV,\scE,w)$ where $w(x,y)=\pi(x)P(x,y)$. Thus, we will often write $\gk$ in place of $X$, in which case $X$ is understood to be the lazy chain over $\gk$. The quantity $Q(x,y)=\pi(x)P(x,y)$ is called transition rate between $x$ and $y$.
%The \emph{relaxation time} of the Markov chain, denoted by $\tau(X)$, is defined as $\tau(X)=\nicefrac{1}{\gamma}$, where $\gamma$ is the spectral gap of the transition matrix of the chain. 
%In our results about the random walks, we use several basic relationships between $\tmix_{\epsilon}(X)$ and $\tau(X)$. In particular we use $(\tau(X) - 1) \lg(\nicefrac{1}{2\epsilon}) \le t_{\epsilon}(X) \le \tau(X) \lg(\nicefrac{1}{\epsilon \pi^*})$, where $\pi^* = \min_{u \in \scV} \pi(u)$.
%More background on Markov chains is given in Appendix~\ref{apx:markov}.

The volume of $U \subseteq \scV$ is $\vol(U) = \sum_{u \in U} w(u)$. The cut of $U \subseteq \scV$ is $\Cut(U) = \{e=\{u,u'\} \in \scE : u \in U, u' \in \scV \setminus U\}$, and its weight is $c(U) = \sum_{e \in \Cut(U)} w(e)$. The conductance of $U \subseteq \scV$ is $\cond(U) = c(U)/\vol(U)$. The conductance of $\scG$ is  $\cond(\scG) = \min\{\cond(U) \,:\, U \subset \scV, \vol(U) \le \frac{1}{2}\vol(\scV)\}$.

\subsubsection{Spectral gaps and relaxation times}
\begin{definition}
\label{def:spectral_gap}
Let $P$ be the transition matrix of $X$, and let $\lambda_{*} = \max\big\{ |\lambda| \,:\, \lambda \text{ is an eigenvalue of } P, \, \lambda \ne 1 \big\}$.
%\begin{align}
%\lambda_{*} = \max\Big\{ |\lambda| \,:\, \lambda \text{ is an eigenvalue of } P, \, \lambda \ne 1 \Big\}
%\end{align}
The spectral gap of $X$ is $\gamma = 1-\lambda_*$.
The relaxation time of $X$ is $\tau(X) = \frac{1}{\gamma}$.
\end{definition}
Classic mixing time theory (see e.g.\cite{Levin&2009}) gives the following relationships:
\begin{align}
&\frac{1}{4\Phi} \le \tau(X),\tmix_{\epsilon}(X) \le \frac{2}{\Phi^2} \lg\frac{1}{\epsilon \pi^*} \label{eq:tau_fact_1}
%\\
%\label{eqn:gamma_t}
%&\frac{1}{2\Phi} \le \tau(X) \le \frac{2}{\Phi^2}
\\
&(\tau(X) - 1) \lg \frac{1}{2\epsilon}  \le \tmix_{\epsilon}(X) \le \tau(X) \lg\frac{1}{\epsilon \pi^*} \label{eq:tau_fact_2}
\end{align}
One can show that the last inequality implies $\tau(X) \le c \, \tmix(X)$ for some (small) constant $c \ge 1$.

\subsubsection{Dirichlet forms}
For any function $f : \scV \rightarrow \R$ let $\var_{\pi} f = \E_{\pi}(f - \E_{\pi}{f})^2$.
\begin{definition}[Dirichlet form; see \cite{Levin&2009}, \S 13.2.1]
\label{def:dirichlet}
%Let $P$ be a reversible transition matrix on a state space $\scV$ with stationary distribution $\pi$.
Let $f : \scV \to \mathbb{R}$ be any function.
Then the Dirichlet form associated to $P,\pi,f$ is:
\begin{align}
\label{eqn:dirichlet}
    \mathcal{E}_{P,\pi}(f) = \frac{1}{2} \sum_{x,y \in \scV} \big(f(x)-f(y)\big)^2 Q(x,y)
\end{align}
\end{definition}
\noindent The Dirichlet form characterises the spectral gap as follows:
\begin{lemma}[see~\cite{Levin&2009}, Lemma 13.12]
\label{lem:gamma_extremal}
The spectral gap satisfies:
\begin{align}
	\gamma = \min_{\substack{f \in \R^{V} \\ \var_{\pi}(f) \ne 0}} \frac{\mathcal{E}_{P,\pi}(f)}{\var_{\pi}(f)}
	\label{eqn:gamma_extremal}
\end{align}
\end{lemma}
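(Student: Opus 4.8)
The plan is to recognize this as the Courant--Fischer (Rayleigh quotient) characterization of the second eigenvalue, transported to the inner product space $\ell^2(\pi)$ in which the reversible operator $P$ becomes self-adjoint. Write $\langle f, g\rangle_\pi = \sum_{x \in \scV} \pi(x) f(x) g(x)$ for $f,g \in \R^{\scV}$. The first step is to observe that reversibility, $\pi(x)P(x,y) = \pi(y)P(y,x)$, says exactly that the transition rates $Q(x,y) = \pi(x)P(x,y)$ form a symmetric array, and hence that $P$ is self-adjoint with respect to $\langle\cdot,\cdot\rangle_\pi$: indeed $\langle Pf,g\rangle_\pi = \sum_{x,y} Q(x,y) f(y) g(x) = \langle f, Pg\rangle_\pi$. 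By the spectral theorem there is then an orthonormal basis $f_1,\dots,f_N$ of $(\R^{\scV},\langle\cdot,\cdot\rangle_\pi)$ of eigenfunctions of $P$, with real eigenvalues $1=\lambda_1 \ge \lambda_2 \ge \cdots \ge \lambda_N$; ergodicity forces $\lambda_1$ to be simple with eigenfunction the constant function $\mathbf 1$ (and $\langle\mathbf 1,\mathbf 1\rangle_\pi=1$).

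Next I would record the two quadratic-form identities that make everything diagonal. First, a one-line computation --- expand $\langle (I-P)f,f\rangle_\pi$, use that rows of $P$ sum to $1$ and that $Q$ is symmetric to symmetrize $\sum_x \pi(x)f(x)^2 = \tfrac12\sum_{x,y}Q(x,y)(f(x)^2+f(y)^2)$ --- gives $\mathcal{E}_{P,\pi}(f) = \langle (I-P)f,f\rangle_\pi$, matching Definition~\ref{def:dirichlet}. Second, writing $f = \sum_i a_i f_i$, orthonormality gives $\E_\pi f = a_1$, $\var_\pi f = \sum_{i\ge 2} a_i^2$, and $\langle (I-P)f,f\rangle_\pi = \sum_{i\ge 1}(1-\lambda_i)a_i^2 = \sum_{i \ge 2}(1-\lambda_i)a_i^2$ (the $i=1$ term drops since $\lambda_1=1$). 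Hence for any $f$ with $\var_\pi f \ne 0$,
$$\frac{\mathcal{E}_{P,\pi}(f)}{\var_\pi f} = \frac{\sum_{i\ge 2}(1-\lambda_i)a_i^2}{\sum_{i\ge 2}a_i^2},$$
which is a convex combination of the values $\{1-\lambda_i\}_{i\ge 2}$ (nonnegative weights $a_i^2$ summing to $1$ after normalization); it is therefore always $\ge 1-\lambda_2$, with equality attained at $f=f_2$. So the minimum in the statement equals $1-\lambda_2$.

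It remains to identify $1-\lambda_2$ with the spectral gap $\gamma = 1-\lambda_*$ of Definition~\ref{def:spectral_gap}. In general $\lambda_* = \max(\lambda_2,|\lambda_N|) \ge \lambda_2$, so only $1-\lambda_2 \ge \gamma$ would follow; but here $P = \tfrac12(I+P_0)$ is a \emph{lazy} walk, so every eigenvalue of $P$ has the form $\tfrac12(1+\mu)$ with $\mu \in [-1,1]$ an eigenvalue of the stochastic matrix $P_0$, hence lies in $[0,1]$. Thus $\lambda_N \ge 0$, so $\lambda_* = \lambda_2$ and $\gamma = 1-\lambda_2$, which is what we computed. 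There is no genuine obstacle in this argument --- it is classical --- but two points need care: the spectral theorem must be applied in the $\pi$-weighted inner product (since $P$ is typically not symmetric in the standard one), and laziness (equivalently, nonnegativity of the spectrum) must be invoked to equate the ``absolute'' spectral gap used in this paper with $1-\lambda_2$.
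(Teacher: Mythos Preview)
The paper does not supply its own proof of this lemma; it simply cites \cite{Levin&2009}, Lemma~13.12. Your argument is correct and is exactly the standard textbook proof: diagonalize the reversible $P$ in $\ell^2(\pi)$, identify $\mathcal{E}_{P,\pi}(f)=\langle (I-P)f,f\rangle_\pi$, and read off the Rayleigh-quotient minimum as $1-\lambda_2$. You also rightly flag the one point that genuinely needs checking in this paper's conventions --- Definition~\ref{def:spectral_gap} sets $\gamma = 1-\lambda_*$ with $\lambda_*=\max\{|\lambda|:\lambda\ne 1\}$, which is the \emph{absolute} spectral gap, whereas the variational formula naturally yields $1-\lambda_2$; your observation that laziness ($P=\tfrac12(I+P_0)$) forces $\lambda_N\ge 0$ and hence $\lambda_*=\lambda_2$ is precisely what closes this gap.
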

Next, we recall some results relating the spectral gaps of different chains.
\subsubsection{Direct comparison}
\begin{lemma}[\cite{Levin&2009}, Lemma 13.18]
\label{lem:dirichlet}
Let P and $\Tilde{P}$ be reversible transition matrices with stationary distributions $\pi$ and $\tilde{\pi}$, respectively.
If $\mathcal{E}_{\Tilde{P},\Tilde{\pi}}(f) \le \alpha\, \mathcal{E}_{P,\pi}(f)$ for all functions $f$, then
\begin{align}
    \tilde{\gamma} \le \left(\max_{x \in \scV} \frac{\pi(x)}{\tilde{\pi}(x)} \right) \alpha \gamma
\end{align}
\end{lemma}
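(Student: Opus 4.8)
The plan is to derive this entirely from the variational (Rayleigh-quotient) characterisation of the spectral gap in Lemma~\ref{lem:gamma_extremal}, applied to \emph{both} chains. Write $R(f) = \frac{\mathcal{E}_{P,\pi}(f)}{\var_{\pi}(f)}$ and $\tilde R(f) = \frac{\mathcal{E}_{\tilde P,\tilde\pi}(f)}{\var_{\tilde\pi}(f)}$ for the two Rayleigh quotients; by Lemma~\ref{lem:gamma_extremal}, $\gamma = \min_f R(f)$ and $\tilde\gamma = \min_f \tilde R(f)$, the minima ranging over non-constant $f$ and being attained (e.g.\ after normalising to $\E_\pi f = 0$ and $\var_\pi f = 1$, a compact set). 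Let $g$ be a minimiser of $R$, so $\gamma = R(g)$ and $g$ is non-constant. Since $\tilde\gamma$ is a \emph{minimum}, feeding $g$ into $\tilde R$ can only increase the quotient, so $\tilde\gamma \le \tilde R(g)$ provided $\var_{\tilde\pi}(g) \ne 0$. The hypothesis $\mathcal{E}_{\tilde P,\tilde\pi}(g) \le \alpha\,\mathcal{E}_{P,\pi}(g)$ then gives
\[
\tilde\gamma \;\le\; \frac{\mathcal{E}_{\tilde P,\tilde\pi}(g)}{\var_{\tilde\pi}(g)} \;\le\; \frac{\alpha\,\mathcal{E}_{P,\pi}(g)}{\var_{\tilde\pi}(g)} \;=\; \alpha\,\gamma\cdot\frac{\var_{\pi}(g)}{\var_{\tilde\pi}(g)},
\]
so it remains to prove the variance comparison $\var_{\pi}(g) \le \big(\max_{x\in\scV} \pi(x)/\tilde\pi(x)\big)\,\var_{\tilde\pi}(g)$.

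For that step I would use the elementary identity $\var_{\sigma}(h) = \min_{c\in\R}\E_{\sigma}(h-c)^2$, valid for any probability measure $\sigma$ and attained at $c = \E_{\sigma}h$. Applying it to $\sigma = \pi$ with the \emph{suboptimal} choice $c = \E_{\tilde\pi}(g)$ yields
\[
\var_{\pi}(g) \;\le\; \sum_{x\in\scV}\pi(x)\big(g(x)-\E_{\tilde\pi}g\big)^2 \;=\; \sum_{x\in\scV}\frac{\pi(x)}{\tilde\pi(x)}\,\tilde\pi(x)\big(g(x)-\E_{\tilde\pi}g\big)^2 \;\le\; \Big(\max_{x\in\scV}\frac{\pi(x)}{\tilde\pi(x)}\Big)\var_{\tilde\pi}(g),
\]
where the last equality uses that $\sum_{x}\tilde\pi(x)(g(x)-\E_{\tilde\pi}g)^2$ is exactly $\var_{\tilde\pi}(g)$ (here the mean $\E_{\tilde\pi}g$ is the \emph{optimal} centring). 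Chaining this with the previous display proves the stated bound.

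Finally I would dispose of the degenerate case $\var_{\tilde\pi}(g)=0$ used implicitly above. If some $\tilde\pi(x)=0$ the inequality is vacuous, so assume $\tilde\pi$ has full support (which in any case follows from the standing assumption that the chains are ergodic). Then $\var_{\tilde\pi}(g)=0$ would force $g$ constant on $\scV$, contradicting that the minimiser $g$ lies in the feasible set $\{f : \var_\pi(f)\ne 0\}$ of the $\gamma$-problem; alternatively one simply runs the whole argument with near-minimisers $g_j$ satisfying $R(g_j)\to\gamma$, avoiding any appeal to attainment. I do not expect a genuine obstacle: the only point requiring care is the \emph{direction} of the variance comparison — one must bound $\var_\pi(g)$ using the $\tilde\pi$-mean of $g$ as the test constant, and it is precisely this asymmetry that makes the factor $\max_x \pi(x)/\tilde\pi(x)$, rather than its reciprocal, the correct one when upper-bounding $\tilde\gamma$.
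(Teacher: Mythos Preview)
Your argument is correct and is essentially the standard proof (the one in~\cite{Levin&2009}, Lemma~13.18): pick the minimiser of the untilded Rayleigh quotient, compare numerators via the hypothesis and denominators via the $\min_c$-characterisation of the variance, centring at $\E_{\tilde\pi}g$. The paper itself does not prove this lemma; it is quoted from~\cite{Levin&2009} as a black box, so there is nothing further to compare.
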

\begin{lemma}[\cite{AldousFill}, Lemma 3.29]
\label{lem:direct}
Consider a graph $\scG=(\scV,\scE)$ possibly with loops. Let $w$ and $w'$ be two weightings of $\scE$ and let $\gamma$ and $\gamma'$ be the spectral gaps of the corresponding random walks.
Then:
\begin{align}
\gamma' \ge \gamma \cdot \frac{\min_{e \in \scE} (w(e) / w'(e))}{\max_{v \in \scV} (w(v)/w'(v))}
\end{align}
\end{lemma}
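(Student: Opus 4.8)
The plan is to deduce the bound from the Dirichlet-form comparison of Lemma~\ref{lem:dirichlet}, after writing the Dirichlet forms and stationary distributions of the two lazy walks explicitly in terms of the weightings. Fix a weighting $w$ of $\scE$ and put $W=\sum_{v\in\scV}w(v)$. The induced lazy walk has stationary distribution $\pi(x)=w(x)/W$, and on an edge $\{x,y\}$ its transition rate is $Q(x,y)=\pi(x)P(x,y)=w(\{x,y\})/(2W)$, while any loop contributes the vanishing term $(f(x)-f(x))^2=0$. Hence by Definition~\ref{def:dirichlet},
\[
 \mathcal{E}_{P,\pi}(f)=\frac{1}{2W}\sum_{\{x,y\}\in\scE}w(\{x,y\})\big(f(x)-f(y)\big)^2 ,
\]
and likewise for $w'$ with $W'=\sum_v w'(v)$ in place of $W$. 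Note that loops can still affect the vertex weights $w(v)$, hence $\pi$, which is exactly why the edge ratios and the vertex ratios will enter the final bound on opposite sides.

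Next I would compare the two forms edge by edge. With $a=\max_{e\in\scE}w(e)/w'(e)$ one has $w(e)\le a\,w'(e)$ for every edge, hence $\mathcal{E}_{P,\pi}(f)\le \tfrac{aW'}{W}\,\mathcal{E}_{P',\pi'}(f)$ for all $f$, where $P',\pi'$ denote the $w'$-walk. Both walks are reversible, so this is precisely the hypothesis of Lemma~\ref{lem:dirichlet}, applied with the $w$-walk in the role of $\tilde P,\tilde\pi$, the $w'$-walk in the role of $P,\pi$, and $\alpha=aW'/W$. The lemma then gives $\gamma\le\big(\max_x\pi'(x)/\pi(x)\big)\,\alpha\,\gamma'$, and since $\max_x\pi'(x)/\pi(x)=\tfrac{W}{W'}\max_{v\in\scV}w'(v)/w(v)$, the product of these two factors cancels the normalisations $W/W'$ and leaves
\[
 \gamma\le\Big(\max_{e\in\scE}\tfrac{w(e)}{w'(e)}\Big)\Big(\max_{v\in\scV}\tfrac{w'(v)}{w(v)}\Big)\gamma' .
\]
Rewriting $1/\max_e(w(e)/w'(e))=\min_e(w'(e)/w(e))$ and likewise for the vertex term and dividing, one arrives at a bound of exactly the shape asserted by Lemma~\ref{lem:direct}; as a sanity check, $w'=w$ makes the ratio equal to $1$, giving $\gamma'\ge\gamma$, as it must.

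The only genuinely delicate point is the bookkeeping. One must (i) carry the normalisations $W,W'$ — which sit both in the transition rates and in the stationary distributions — and verify they cancel rather than compound, and (ii) assign the two chains to the correct roles ($P$ versus $\tilde P$) in Lemma~\ref{lem:dirichlet} so that the minimum over edges and the maximum over vertices land on the sides claimed. Beyond that the argument is a routine invocation of the variational machinery: if one prefers not to use Lemma~\ref{lem:dirichlet}, the same estimate falls out of Lemma~\ref{lem:gamma_extremal} by substituting the optimiser $f^\star$ of the Rayleigh quotient for $\gamma'$ into the quotient for $\gamma$, comparing the two Dirichlet forms via the edge ratios and the two variances via the vertex ratios, where the identity $\var_\pi(f)=\min_{a\in\R}\sum_x\pi(x)(f(x)-a)^2$ is needed to handle the fact that $f^\star$ need not be centred under both stationary distributions.
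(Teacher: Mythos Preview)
The paper does not supply its own proof of this lemma; it is quoted from Aldous--Fill. Your route via Lemma~\ref{lem:dirichlet} (equivalently, the variational formula of Lemma~\ref{lem:gamma_extremal}) is the standard argument and is substantively correct.

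There is one genuine slip, however, and it is precisely in the bookkeeping you flagged as ``delicate''. From
\[
\gamma \;\le\; \Big(\max_{e}\tfrac{w(e)}{w'(e)}\Big)\Big(\max_{v}\tfrac{w'(v)}{w(v)}\Big)\,\gamma'
\]
one obtains
\[
\gamma' \;\ge\; \gamma \cdot \frac{\min_{e} w'(e)/w(e)}{\max_{v} w'(v)/w(v)},
\]
with $w'/w$ in both the edge and the vertex term. This is \emph{not} the inequality displayed in the paper, which has $w/w'$ in both places, so your claim that you ``arrive at a bound of exactly the shape asserted'' does not hold. In fact the inequality you derived is the correct one, and the paper's statement is mis-transcribed: on the path $a\text{--}b\text{--}c\text{--}d$ with $w\equiv 1$ and $w'$ equal to $w$ except $w'(bc)=\epsilon$, one has $\min_e w/w'=1$ and $\max_v w/w'\le 2$, so the paper's bound would force $\gamma'\ge\gamma/2$; yet $\gamma'\to 0$ as $\epsilon\to 0$ while $\gamma$ is a fixed positive constant. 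Every invocation of Lemma~\ref{lem:direct} in the paper (Claims~\ref{claim:tau_Gk}, \ref{claim:tmix_G_Gsub}, \ref{claim:L_subG}, and the argument in Section~\ref{apx:tgk_lb}) uses two-sided constant bounds on the weight ratios, so the corrected inequality yields the same conclusions.
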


\subsubsection{Collapsed chains}
(See~\cite{AldousFill}, \S 2.7.3).
\begin{definition}
\label{def:collapsed}
Let $A \subset \scV$ and let $A^C = \scV \setminus A$ (note that $A^C \ne \emptyset$).
The collapsed chain $X^*$ has state space $A \cup \{a\}$ where $a$ is a new state representing $A^C$, and transition matrix given by:
\begin{alignat}{2}
P^*(u,v) &= P(u,v) && \qquad u,v \in A
\\
P^*(u,a) &= \sum_{v \in A^C} P(u,v) && \qquad u \in A
\\
P^*(a,v) &= \frac{1}{\pi(A^C)} \sum_{u \in A^C} \pi(u) P(u,v) && \qquad v \in A
\\
P^*(a,a) &= \frac{1}{\pi(A^C)} \sum_{u \in A^C}\sum_{v \in A^C} \pi(u) P(u,v)
\end{alignat}
\end{definition}
\begin{lemma}[\cite{AldousFill}, Corollary 3.27]
\label{lem:collapsed}
The collapsed chain $X^*$ satisfies $\gamma(X^*) \ge \gamma(X)$.
\end{lemma}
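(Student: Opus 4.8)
The plan is to apply the variational (Dirichlet-form) characterization of the spectral gap from Lemma~\ref{lem:gamma_extremal} to the collapsed chain $X^*$ and to compare the resulting Rayleigh quotient, term by term, with that of $X$ through an elementary lifting of functions. First I would record the stationary distribution and reversibility of $X^*$. Setting $\hat\pi(u)=\pi(u)$ for $u\in A$ and $\hat\pi(a)=\pi(A^C)$, a short computation from the transition matrix in Definition~\ref{def:collapsed}, using reversibility $\pi(x)P(x,y)=\pi(y)P(y,x)$ of $X$, shows that $\hat\pi$ is stationary and $X^*$ is reversible with respect to it; for instance $\hat\pi(u)P^*(u,a)=\pi(u)\sum_{v\in A^C}P(u,v)=\sum_{v\in A^C}\pi(v)P(v,u)=\hat\pi(a)P^*(a,u)$. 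Since $X$ is ergodic, irreducibility and aperiodicity are inherited by $X^*$ (the self-loops persist and $\Cut(A,A^C)\neq\emptyset$), and $\hat\pi$ is strictly positive, so Lemma~\ref{lem:gamma_extremal} is applicable to both $X$ and $X^*$.

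The key device is the lifting map $g\mapsto f$: given $g:A\cup\{a\}\to\R$, define $f:\scV\to\R$ by $f(u)=g(u)$ for $u\in A$ and $f(u)=g(a)$ for every $u\in A^C$. I would then establish two identities. First, $\var_\pi(f)=\var_{\hat\pi}(g)$: indeed $\E_\pi[f]=\sum_{u\in A}\pi(u)g(u)+\sum_{u\in A^C}\pi(u)g(a)=\E_{\hat\pi}[g]$, and the same grouping gives $\E_\pi[f^2]=\E_{\hat\pi}[g^2]$, so the two variances coincide. Second, $\mathcal{E}_{P,\pi}(f)=\mathcal{E}_{P^*,\hat\pi}(g)$.

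For the second identity the crucial point is that $f$ is constant on $A^C$. In $\tfrac12\sum_{x,y}(f(x)-f(y))^2\pi(x)P(x,y)$, every ordered pair with both $x,y\in A^C$ contributes $0$; every pair with $x,y\in A$ contributes exactly $(g(x)-g(y))^2\pi(x)P(x,y)=(g(x)-g(y))^2\hat\pi(x)P^*(x,y)$; and for fixed $x\in A$, summing the mixed terms over $y\in A^C$ gives $(g(x)-g(a))^2\,\pi(x)\!\sum_{y\in A^C}P(x,y)=(g(x)-g(a))^2\,\hat\pi(x)P^*(x,a)$, which is precisely the transition-rate term between $x$ and the collapsed state $a$ (the symmetric $(a,x)$ direction being handled by reversibility and the factor $\tfrac12$). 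Re-summing these pieces reconstructs $\mathcal{E}_{P^*,\hat\pi}(g)$ verbatim. Combining the two identities, for any $g$ with $\var_{\hat\pi}(g)\neq0$ the lift $f$ satisfies $\var_\pi(f)\neq0$ and $\mathcal{E}_{P^*,\hat\pi}(g)/\var_{\hat\pi}(g)=\mathcal{E}_{P,\pi}(f)/\var_\pi(f)\ge\gamma(X)$ by Lemma~\ref{lem:gamma_extremal} applied to $X$; taking the minimum over $g$ and applying Lemma~\ref{lem:gamma_extremal} to $X^*$ gives $\gamma(X^*)\ge\gamma(X)$.

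There is no genuine obstacle here: the argument is entirely bookkeeping. The step requiring the most care is the Dirichlet-form identity, where one must keep straight the factor $\tfrac12$ and the ordered-versus-unordered pair counting, and verify that the mixed terms aggregate exactly into the single collapsed transition rate $\hat\pi(x)P^*(x,a)$ with no leftover. In fact only the inequality $\mathcal{E}_{P^*,\hat\pi}(g)\le\mathcal{E}_{P,\pi}(f)$ is needed to conclude, which affords a little slack, though as shown equality holds; an alternative, if one prefers to avoid the explicit computation, is to invoke Lemma~\ref{lem:dirichlet} with $\tilde P=P^*$, $\tilde\pi=\hat\pi$ and $\alpha=1$ once the relation $\mathcal{E}_{P^*,\hat\pi}(g)\le\mathcal{E}_{P,\pi}(f)$ and $\max_x \pi(x)/\hat\pi(x)=1$ are in hand.
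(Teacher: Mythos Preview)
The paper does not give its own proof of this lemma; it is simply quoted from Aldous--Fill (Corollary~3.27). Your argument is correct and is in fact the standard proof of the contraction principle: lift $g$ to a function $f$ that is constant on $A^C$, observe that both the variance and the Dirichlet form are preserved under this lifting (the internal $A^C\times A^C$ terms vanish because $f$ is constant there, and the mixed terms aggregate exactly into the collapsed transition rates), and conclude via the variational characterization of the gap.

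One small caveat: your closing remark that one could alternatively invoke Lemma~\ref{lem:dirichlet} does not quite work as stated, since that lemma compares two chains on the \emph{same} state space with a common test function $f$, whereas here $X$ lives on $\scV$ and $X^*$ on $A\cup\{a\}$. The lifting $g\mapsto f$ is precisely what bridges this mismatch, so you still need the identities you established; once you have them, the direct argument via Lemma~\ref{lem:gamma_extremal} is already complete and there is nothing gained by routing through Lemma~\ref{lem:dirichlet}.
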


\subsubsection{Induced chains}
%(See also~\cite{Levin&2009}, \S 13.4 and Theorem 13.20).
\begin{definition}[\cite{Levin&2009}, \S 13.4]
\label{def:induced}
%Given a reversible chain $\{X_t\}_{t \ge 0}$ on a state space $\scV$, consider an arbitrary nonempty
Let $\emptyset \ne A \subseteq \scV$ and $\tau^+_A = \min\{t \ge 1 : X_t \in A\}$. The induced chain on $A$ is the chain with state space $A$ and transition probabilities:
\begin{align}
P_A(x,y) = P(X_{\tau_A^+} = y \,|\, X_0=x) \qquad \forall x,y \in A
\end{align}
\end{definition} 
\begin{lemma}[\cite{Levin&2009}, Theorem 13.20]
\label{lem:induced}
%Consider a reversible chain on $\scV$ with stationary distribution $\pi$ and spectral gap $\gamma$.
Let $\emptyset \ne A \subseteq \scV$, and let $\gamma_A$ be the spectral gap for the chain induced on $A$. Then $\gamma_A \ge \gamma$.
\end{lemma}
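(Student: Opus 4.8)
The plan is to deduce Lemma~\ref{lem:induced} from the Dirichlet-form characterization of the spectral gap (Lemma~\ref{lem:gamma_extremal}), comparing test functions on $A$ with their \emph{harmonic extensions} to all of $\scV$. Write $P$ for the (reversible, ergodic) transition matrix of $X$, $\pi$ for its stationary distribution, and $P_A$ for the induced chain on $A$; we may assume $A \subsetneq \scV$, since for $A=\scV$ one has $P_A=P$ and $\gamma_A=\gamma$. A standard fact is that $P_A$ is reversible with respect to $\pi_A := \pi|_A/\pi(A)$, so Lemma~\ref{lem:gamma_extremal} applies to it as well. I would fix an arbitrary $g:A\to\R$ with $\var_{\pi_A}(g)\ne 0$ and let $f:\scV\to\R$ be its harmonic extension: $f|_A = g$, and $f(x)=\E_x[g(X_{\tau_A})]$ for $x\notin A$, where $\tau_A=\min\{t\ge 0:X_t\in A\}$. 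By ergodicity this $f$ is well defined and satisfies $(Pf)(x)=f(x)$ for every $x\notin A$. The whole argument then reduces to the two inequalities
\[
\mathcal{E}_{P,\pi}(f)\;\le\;\pi(A)\,\mathcal{E}_{P_A,\pi_A}(g) \qquad\text{and}\qquad \var_\pi(f)\;\ge\;\pi(A)\,\var_{\pi_A}(g):
\]
combining them with $\mathcal{E}_{P,\pi}(f)\ge\gamma\,\var_\pi(f)$ (Lemma~\ref{lem:gamma_extremal} for $X$) and cancelling the factor $\pi(A)>0$ gives $\mathcal{E}_{P_A,\pi_A}(g)\ge\gamma\,\var_{\pi_A}(g)$, and since $g$ was arbitrary, Lemma~\ref{lem:gamma_extremal} for $P_A$ gives $\gamma_A\ge\gamma$.

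For the energy inequality I would begin from the identity $\mathcal{E}_{P,\pi}(h)=\langle h,(I-P)h\rangle_\pi$, which holds for every $h$ using only stationarity of $\pi$. Since $f$ is harmonic off $A$, all terms with $x\notin A$ drop out and $\mathcal{E}_{P,\pi}(f)=\sum_{x\in A}\pi(x)\,g(x)\big(g(x)-(Pf)(x)\big)$. The crux is the identity $(Pf)(x)=(P_Ag)(x)$ for every $x\in A$: writing $(Pf)(x)=\sum_z P(x,z)f(z)$ and splitting on whether $z\in A$, the strong Markov property shows that from a first step into $z\notin A$ the chain's evolution until it hits $A$ is a copy of the chain started at $z$, so $f(z)=\E\big[g(X_{\tau_A^+})\mid X_1=z\big]$ and hence $\sum_z P(x,z)f(z)=\E_x[g(X_{\tau_A^+})]=(P_Ag)(x)$. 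Substituting, $\mathcal{E}_{P,\pi}(f)=\sum_{x\in A}\pi(x)\,g(x)\big(g(x)-(P_Ag)(x)\big)=\pi(A)\langle g,(I-P_A)g\rangle_{\pi_A}=\pi(A)\,\mathcal{E}_{P_A,\pi_A}(g)$ --- in fact with equality, i.e.\ the Dirichlet form of the induced chain is exactly the trace of $\mathcal{E}_{P,\pi}$ onto $A$ and is realized by the harmonic extension.

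For the variance inequality I would use $\var_\mu(h)=\min_{c\in\R}\E_\mu[(h-c)^2]$. For any $c$, since $f|_A=g$ and the remaining summands are nonnegative, $\E_\pi[(f-c)^2]=\sum_{x\in\scV}\pi(x)(f(x)-c)^2\ge\sum_{x\in A}\pi(x)(g(x)-c)^2=\pi(A)\,\E_{\pi_A}[(g-c)^2]$; minimizing the left side over $c$ then gives $\var_\pi(f)\ge\pi(A)\,\var_{\pi_A}(g)$. Also $\var_{\pi_A}(g)\ne 0$ forces $g$, hence $f$, to be non-constant, so $\var_\pi(f)\ne 0$ and the appeal to Lemma~\ref{lem:gamma_extremal} for $f$ is legitimate.

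I expect the only genuinely delicate step to be the identity $(Pf)(x)=(P_Ag)(x)$ on $A$ (equivalently, that the harmonic extension realizes the trace Dirichlet form). Making it precise needs a clean use of the strong Markov property together with care about the difference between $\tau_A=\min\{t\ge 0\}$, used to define $f$ off $A$, and $\tau_A^+=\min\{t\ge 1\}$, used to define $P_A$: the two hitting times coincide as soon as the chain is outside $A$, which is exactly what lets the one-step decomposition close. Everything else --- the identity $\mathcal{E}_{P,\pi}(h)=\langle h,(I-P)h\rangle_\pi$, the reversibility of $P_A$ with respect to $\pi_A$, and the variance computation --- is routine.
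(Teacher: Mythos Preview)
The paper does not prove this lemma; it simply cites it from \cite{Levin&2009}, Theorem~13.20, and uses it as a black box. Your proof is correct and is, in fact, essentially the argument given in that reference: extend $g$ harmonically off $A$, use that the harmonic extension realizes the trace Dirichlet form (so $\mathcal{E}_{P,\pi}(f)=\pi(A)\,\mathcal{E}_{P_A,\pi_A}(g)$), and compare variances via $\var_\mu(h)=\min_c \E_\mu[(h-c)^2]$. One small wording nit: when you write ``minimizing the left side over $c$,'' the clean way to phrase it is that for every $c$ one has $\E_\pi[(f-c)^2]\ge \pi(A)\,\E_{\pi_A}[(g-c)^2]\ge \pi(A)\,\var_{\pi_A}(g)$, and then taking the minimum over $c$ on the left gives $\var_\pi(f)\ge \pi(A)\,\var_{\pi_A}(g)$; as written it could be misread as minimizing both sides simultaneously, which would not be valid since the minimizers may differ.
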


\subsection{Proof of the upper bound of Theorem~\ref{thm:tgk}}
\label{apx:tgk}
This section proves the upper bound of Theorem~\ref{thm:tgk}. First, Since $\tmix(\gk_k) \le \tau(\gk_k) \ln\frac{4}{\pi^*}$ and $\pi^* \ge \kokm n^k$, we have $\tmix(\gk_k) \le \scO(\tau(\gk_k) \lg n)$. Now consider the following inequality:
\begin{align}
\tau(\gk_k) \le \poly(k)\rho(G) \tau(\gk_{k-1})
\label{eq:tau_gk_poly}
\end{align}
Applying~\eqref{eq:tau_gk_poly} to $\tau(\gk_k), \tau(\gk_{k-1}), \ldots, \tau(\gk_2)$, and since $\gk_1=G$ and $\tau(G) = \scO(\tmix(G))$, we obtain:
\begin{align}
 \tmix(\gk_k) \le \kok \rho(G)^{k-1} \tmix(G) \lg n
\end{align}
which is precisely the upper bound of Theorem~\ref{thm:tgk}.
Thus, we only need to prove~\eqref{eq:tau_gk_poly}. The main obstacle in proving that inequality is in relating the spectral gaps of two very different walks --- one over $G$ and one over $\gk_k$. We overcome this obstacle by proving the following result:
\begin{lemma}
\label{lem:tau_Gk_L}
$\tau(\gk_k) \le \poly(k) \tau(L(\gk_{k-1}))$.
\end{lemma}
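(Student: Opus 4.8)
The plan is to exhibit a tight connection between the $k$-graphlet graph $\gk_k$ and the line graph of the $(k-1)$-graphlet graph $\gk_{k-1}$, and then invoke Lemma~\ref{lem:direct} (or Lemma~\ref{lem:dirichlet}) to transfer the relaxation-time bound. The starting observation is that a $k$-graphlet $g$ of $G$, when restricted to how it decomposes into $(k-1)$-graphlets, corresponds to a set of vertices of $\gk_{k-1}$: namely the $(k-1)$-graphlets $h \subseteq g$ obtained by deleting a single vertex of $g$. The crucial structural fact I would establish first is that these form a \emph{connected} subgraph of $\gk_{k-1}$ (any two $(k-1)$-subgraphlets of the same $k$-graphlet differ by a vertex swap and one can interpolate), and in fact they induce at least a spanning tree's worth of edges — so $g$ naturally maps to an edge, or a small connected edge-set, of $\gk_{k-1}$, i.e. to vertices of $L(\gk_{k-1})$. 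Conversely, each edge $\{h,h'\}$ of $\gk_{k-1}$ (so $h \cap h' \in \vk_{k-2}$, and $h \cup h'$ has $k$ vertices) gives a candidate $k$-vertex set $h \cup h'$; when this is connected and induced it is a $k$-graphlet. I would make this precise as a many-to-one correspondence between $V(L(\gk_{k-1})) = \ek_{k-1}$ and $\vk_k$, with each $k$-graphlet being the image of between $1$ and $\poly(k)$ edges of $\gk_{k-1}$ (a $k$-graphlet on $k$ vertices has at most $\binom{k}{2}$ ways of being split, hence $\poly(k)$ preimages; and at least one).

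Second, I would set up the comparison of the two lazy walks. The lazy walk on $L(\gk_{k-1})$ lives on the state space $\ek_{k-1}$; collapsing together all edges of $\gk_{k-1}$ that map to the same $k$-graphlet (using Lemma~\ref{lem:collapsed}, which only helps us, or an explicit quotient/lumping argument) should yield a chain on $\vk_k$ that is, up to bounded reweighting of vertices and edges, the lazy walk on $\gk_k$. Concretely: stationary weights on $L(\gk_{k-1})$ are proportional to degrees in $L(\gk_{k-1})$, and when we sum over the $\poly(k)$ preimages of a fixed $k$-graphlet $g$ we should get something within a $\poly(k)$ factor of the degree of $g$ in $\gk_k$; similarly adjacency of $g, g'$ in $\gk_k$ (sharing a $(k-1)$-graphlet) is witnessed by adjacent edges in $\gk_{k-1}$, so transitions line up. The ratios $\min_e (w(e)/w'(e))$ and $\max_v (w(v)/w'(v))$ appearing in Lemma~\ref{lem:direct} will each be $\poly(k)$ because every quantity involved is a count of sub/super-graphlets of a $k$-vertex object, hence bounded by a function of $k$ alone, with no dependence on $G$. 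That $G$-independence is exactly why only a $\poly(k)$ factor appears here (the $\rho(G)$ factor enters later, in Lemma~\ref{lem:Lg}, not in this lemma).

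The step I expect to be the main obstacle is making the "collapse/lumping" rigorous while controlling the reweighting constants. Two subtleties: (1) an edge $\{h,h'\}$ of $\gk_{k-1}$ need not correspond to any $k$-graphlet at all — it may be that $|V(h) \cup V(h')| = k-1$ (i.e. $h,h'$ share $k-2$ vertices but together span only $k-1$), or that $h \cup h'$ spans $k$ vertices but is disconnected as an induced subgraph; such edges form "junk" states of $L(\gk_{k-1})$ that are not quotiented onto $\gk_k$, and I must argue they do not distort the relaxation time. The clean way is probably to first pass to the \emph{induced} chain on the good edges (Lemma~\ref{lem:induced}, which again only increases the spectral gap) or to argue that restricting attention to good edges is exactly a collapse. (2) Different $k$-graphlets have different numbers of preimage edges, so the lumped chain is not literally the lazy walk on $\gk_k$ but a $\poly(k)$-perturbation of it; here Lemma~\ref{lem:direct} applied on the common vertex set $\vk_k$ (with possible loops absorbing the slack) does the job, but I need to verify the hypotheses — same underlying edge set possibly with loops, and bounded weight ratios — hold after lumping.

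Once these pieces are in place the conclusion is immediate: $\gamma(\gk_k) \ge \poly(k)^{-1}\, \gamma\big(\text{lumped chain}\big) \ge \poly(k)^{-1}\, \gamma(L(\gk_{k-1}))$ (using Lemmas~\ref{lem:collapsed}/\ref{lem:induced} to pass from $L(\gk_{k-1})$ down to the lumped chain and Lemma~\ref{lem:direct}/\ref{lem:dirichlet} for the reweighting), which rearranges to $\tau(\gk_k) \le \poly(k)\, \tau(L(\gk_{k-1}))$ as claimed. Combining this with Lemma~\ref{lem:Lg} applied to $G' = \gk_{k-1}$ — noting $\rho(\gk_{k-1}) \le \poly(k)\,\rho(G)$, which itself needs the same kind of "counting sub/supergraphlets of a $k$-vertex object" estimate — will then yield the recursion~\eqref{eq:tau_gk_poly}, though that last combination is carried out after this lemma, not within its proof.
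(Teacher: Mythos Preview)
Your proposal is essentially the paper's proof: partition $V(L(\gk_{k-1}))=\ek_{k-1}$ into classes $H(g)=\{\{u,v\}\in\ek_{k-1}: u\cup v=g\}$ indexed by $g\in\vk_k$, iteratively collapse these classes using Lemma~\ref{lem:collapsed} to obtain a chain $L_N$ with $\tau(L_N)\le\tau(L(\gk_{k-1}))$, and then verify that $V(L_N)=V(\gk_k)$, $E(L_N)=E(\gk_k)$, and all edge- and vertex-weight ratios lie in $[1,\poly(k)]$, so Lemma~\ref{lem:direct} yields $\tau(\gk_k)\le\poly(k)\,\tau(L_N)$.

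One clarification: your subtlety (1) is a non-issue. Under the paper's definition, $\{h,h'\}\in\ek_{k-1}$ requires $h\cap h'\in\vk_{k-2}$, so $|h\cap h'|=k-2$ exactly; since $h\ne h'$ and $|h|=|h'|=k-1$, inclusion--exclusion forces $|h\cup h'|=k$, and since $h,h'$ are connected with nonempty intersection (for $k\ge 3$), $h\cup h'$ is connected. Hence \emph{every} edge of $\gk_{k-1}$ maps to a genuine $k$-graphlet, the map $\ek_{k-1}\to\vk_k$ is surjective, and there are no ``junk'' states to remove via Lemma~\ref{lem:induced}. Your scenario ``$|V(h)\cup V(h')|=k-1$'' is arithmetically impossible for distinct $h,h'$, and the disconnected-union scenario cannot occur either. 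With this observation the argument streamlines to exactly the two claims the paper proves.
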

\noindent Together with Lemma~\ref{lem:Lg} applied to $\gk_{k-1}$, this result yields precisely~\eqref{eq:tau_gk_poly}. Thus, we shall prove Lemma~\ref{lem:tau_Gk_L} and Lemma~\ref{lem:Lg}, in this order.

\subsubsection{Proof of Lemma~\ref{lem:tau_Gk_L}}
\label{apx:lem:tau_Gk_L}
From $L(\gk_{k-1})$ we will construct a weighted graph $L_N$ such that $\tau(L_N) \le \tau(L(\gk_{k-1}))$, and then we will prove that $\tau(\gk_k) \le \poly(k)\tau(L_N)$. Combining these two inequalities gives the claim.

For any $g \in V(\gk_k)$ let $H(g) = \{ x_{uv} \in V(L(\gk_{k-1})) \,:\, g=u \cup v\}$. Note that $\{H(g)\}_{g \in V(\gk_k)}$ is a partition of $V(L)$ into equivalence classes. Now let $V(\gk_k)=\{g_1,\ldots,g_N\}$, and let $L_0=L(\gk_{k-1})$. For each $i=1,\ldots,N$ we define $L_i$ by taking $L_{i-1}$ and identifying $H(g_i)$. Formally, we let $L_i=(V(L_i), E(L_i), w_i)$, where $V(L_i)=V(L_{i-1}) \setminus H(g_i) \cup \{a_i\}$ with $a_i$ being a new state representing $H(g_i)$, and:
\begin{alignat}{2}
w_i(x,x') &= w_{i-1}(x,y)  && \quad x \ne a_i, x' \ne a_i
\\
w_i(x,a_i) &= \sum_{x' \in a_i} w_{i-1}(x,x')  && \quad x \ne a_i
\\
w_i(a_i,a_i) &= \sum_{x \in a_i} \sum_{x' \in a_i} w_{i-1}(x,x')
\end{alignat}
Now we prove two claims from which the thesis immediately follows.
\begin{claim}
\label{claim:tau_Gk}
$\tau(\gk_k) \le \poly(k) \tau(L_N)$.
\end{claim}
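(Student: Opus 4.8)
The plan is to compare the random walk on $\gk_k$ with the random walk on $L_N$ via the Dirichlet form machinery (Lemma~\ref{lem:dirichlet}), after observing that the two chains live on essentially the same state space. The key point is that $V(L_N)$ consists of the collapsed classes $a_1,\ldots,a_N$, one per graphlet $g_i \in V(\gk_k)$, so there is a natural bijection $g_i \leftrightarrow a_i$ between the state spaces of $\gk_k$ and $L_N$. First I would make this bijection explicit and transport any function $f : V(\gk_k) \to \R$ to a function $\hat f : V(L_N) \to \R$ by setting $\hat f(a_i) = f(g_i)$. Then I would compare the two Dirichlet forms $\mathcal{E}_{\gk_k}(f)$ and $\mathcal{E}_{L_N}(\hat f)$, and the two stationary distributions, term by term.

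The heart of the argument is to understand the edge weights of $L_N$ between the collapsed vertices $a_i$ and $a_j$. An edge of $L(\gk_{k-1})$ joins two $(k-1)$-graphlets $u, u'$ that share a $(k-2)$-graphlet; such an edge lands in the block $H(g_i)$ if $u \cup u' = g_i$. So $w_N(a_i, a_j)$ is the total original weight of edges $\{x_{uu'}\}$ with one endpoint's pair union equal to $g_i$ and the other's equal to $g_j$ — roughly speaking, the number of ways to write $g_i$ and $g_j$ as unions of adjacent $(k-1)$-graphlets that are ``compatible'' in the line graph. I would show that $w_N(a_i,a_j) > 0$ iff $\{g_i, g_j\} \in E(\gk_k)$ (two $k$-graphlets adjacent in $\gk_k$ share a $(k-1)$-graphlet $u$, which together with the appropriate $(k-1)$-graphlet in each of $g_i, g_j$ gives a valid line-graph edge inside the right blocks), and that the ratio between $w_N(a_i,a_j)$ and the transition rate $Q_{\gk_k}(g_i,g_j)$ is bounded above and below by $\poly(k)$ factors — because the number of $(k-1)$-graphlets inside a fixed $k$-graphlet, and the number of $(k-2)$-graphlets inside a fixed $(k-1)$-graphlet, are both at most $k^{\scO(1)}$, being bounded by $\binom{k}{k-1} = k$ and $\binom{k-1}{k-2} = k-1$ respectively times the (at most $2^k$, but really $\poly(k)$ after accounting for connectivity) choices. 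Likewise the vertex weights (volumes) of $a_i$ in $L_N$ and of $g_i$ in $\gk_k$ agree up to $\poly(k)$. Feeding these comparisons into Lemma~\ref{lem:dirichlet} (applied in both directions, or once with $\alpha = \poly(k)$ together with the $\max_x \pi(x)/\tilde\pi(x) = \poly(k)$ bound) yields $\gamma(L_N) \ge \poly(k)^{-1}\gamma(\gk_k)$, i.e. $\tau(\gk_k) \le \poly(k)\,\tau(L_N)$.

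I expect the main obstacle to be the combinatorial bookkeeping of the edge weights $w_N(a_i,a_j)$: one must verify carefully that collapsing each block $H(g_i)$ — which is exactly a sequence of vertex-identifications of the type in Definition~\ref{def:collapsed} but performed on the \emph{graph} (unnormalized) level rather than the chain level — does not create or destroy adjacencies in a way that breaks the $\poly(k)$ comparison, and in particular that every edge $\{g_i,g_j\}$ of $\gk_k$ is witnessed by at least one surviving line-graph edge (so no weight drops to zero). A secondary subtlety is that $L_N$ carries self-loops $w_N(a_i,a_i)$ coming from line-graph edges \emph{within} a block (two $(k-1)$-graphlets $u \ne u'$ with $u \cup u' = g_i$ and $|u \cap u'| = k-2$); these loops only affect the volumes $w_N(a_i)$, and since the number of such internal pairs is again $\poly(k)$, the loops are harmless — but this needs to be stated. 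Once the weight comparison is in place, the spectral-gap conclusion is immediate from Lemma~\ref{lem:dirichlet}, and the fact that $\tau(L_N) \le \tau(L(\gk_{k-1}))$ (established separately via iterated application of Lemma~\ref{lem:collapsed}, since each $L_i$ is the collapse of $L_{i-1}$) closes the loop to give Lemma~\ref{lem:tau_Gk_L}.
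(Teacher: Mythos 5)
Your proposal is correct and follows essentially the same route as the paper: identify $V(L_N)$ with $V(\gk_k)$, show the edge sets coincide (including loops), bound the edge- and vertex-weight ratios by $\poly(k)$, and conclude by direct comparison of the two walks (the paper invokes Lemma~\ref{lem:direct}, which is just the specialization to two weightings of one graph of the Dirichlet-form comparison you describe). The step you flag as the main obstacle --- that every edge $\{g',g''\}$ of $\gk_k$ is witnessed by at least one surviving line-graph edge between the blocks $H(g')$ and $H(g'')$ --- is exactly where the paper spends its effort, via a spanning-tree construction producing $v,z$ with $u\cup v=g'$, $u\cup z=g''$, and $u\cap v$, $u\cap z$ connected, where $u=g'\cap g''$.
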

\begin{proof}
We show that the walk on $L_N$ is the lazy walk on $\gk_k$ up to a reweighting of the edges by multiplicative factors in $[1,\poly(k)]$. By Lemma~\ref{lem:direct} this implies the thesis. In particular we show that, if $\gk_k$ is taken in its lazy version (with loops accounting for half of the vertex weight), then (1) $V(L_N)=V(\gk_k)$, (2) $E(L_N)=E(\gk_k)$, (3) $1 \le \frac{w_{N}}{w_{\gk_k}} \le \poly(k)$. We denote the generic state $a_i \in L_N$ simply as $g$, meaning that $a_i$ represents $H(g)$.

\paragraph{(1) $V(L_N)=V(\gk_k)$.}
Let $g \in V(L_N)$. By construction, $g = u \cup v$ for some $\{u,v\} \in E(\gk_{k-1})$. Hence $g$ has $k$ vertices and is connected, so it is a $k$-graphlet, and $g \in V(\gk_k)$.
Conversely, let $g \in V(\gk_k)$ and let $T$ be a spanning tree of $g$ (which must exist since $g$ is connected by definition). Let $a,b$ be two distinct leaves of $T$ and let $g'=g \setminus \{a\}$ and $g''=g \setminus \{b\}$. Then $g',g''$ are connected and have $k-1$ vertices, so they are in $V(\gk_{k-1})$. Moreover $|g' \cap g''|=k-2$, so $\{g',g''\} \in E(\gk_{k-1})$. Thus $\{g',g''\} \in L(\gk_{k-1})$ and consequently $g \in V(L_N)$.
Therefore $V(L_N)=V(\gk_{k-1})$.

\paragraph{(2) $E(L_N)=E(\gk_k)$.}
First, both $L_N$ and the lazy version of $\gk_k$ have a loop at each vertex ($L_N$ inherits from $L_0$ a positive self-transition probability at each vertex). Now let $\{g',g''\} \in E(L_N)$ be a non-loop edge. By construction of $L_N$ we have $g'=u \cup v$ and $g''= u \cup z$, with $\{u,v\},\{u,z\} \in E(\gk_{k-1})$ and $u,v,z\in V(\gk_{k-1})$ distinct. This implies $g' \cap g''=u$ and so $\{g',g''\} \in E(\gk_k)$. It follows that $E(L_N) \subseteq E(\gk_k)$.
Now let $\{g',g''\} \in E(\gk_k)$ be a non-loop edge. Let $u = g' \cap g''$; note that by hypothesis $u$ is connected and $|u|=k-1$, so $u \in V(\gk_{k-1})$. Now let $\{a'\} = g' \setminus  g''$ and let $b'$ be any neighbor of $a$ in $u$. Choose any spanning tree $T'$ of $u$ rooted at $b'$, and let $c' \ne b'$ be any leaf of $T'$ (such a leaf exists since $|g| \ge 3$ and thus $|u| \ge 2$). We define $v = g' \setminus \{c'\}$. Note that by construction (1) $v$ is connected and has size $k-1$, (2) $u \cap v$ is connected and has size $k-2$, and (3) $u \cup v = g'$. Therefore $v \in V(\gk_{k-1})$ and $\{u,v\} \in E(\gk_{k-1})$. A symmetric construction using $g''$ and $u$ yields $z$ such that $z \in V(\gk_{k-1})$ and $\{u,z\} \in E(\gk_{k-1})$ and $u \cup z = g''$. Now, by construction, $\{u,z\}$ and $\{u,v\}$ give two adjacent states $x_{uv},x_{uz} \in V(L_0)$. But $u \cup v = g'$ and $u \cup z = g''$, so $x_{uv} \in H(g)$ and $x_{uz} \in H(g'')$. This implies that $\{g',g''\} \in E(L_N)$.
So $E(\gk_k) \subseteq E(L_N)$ and we conclude that $E(\gk_k) = E(L_N)$.

\paragraph{(3) $1 \le \frac{w_{N}}{w_{\gk_k}} \le \poly(k)$.}
First, let us consider non-loop edges. Let $\{a,a'\} \in E(L_N)$ with $a \ne a'$, and let $g,g'$ be the corresponding elements of $\gk_{k}$; note that $g \ne g'$. 
Observe that $w_N(a,a')=|\Cut(H(g),H(g'))|$, where the cut is taken in $L_0=L(\gk_{k-1})$.
Clearly $|\Cut(H(g),H(g'))| \ge 1$ and $w_{\gk_k}(g,g')=1$, therefore $1 \le \frac{w_{N}(a,a')}{w_{\gk_k}(g,g')}$.
For the other side, note that there are at most ${k \choose 2}$ distinct pairs of $(k-1)$-graphlets $u,v \in \gk_{k-1}$ such that $u \cup v = g$.
Thus, $H(g) \le {k \choose 2}$.
The same holds for $g'$. Therefore, $|\Cut(H(g),H(g'))| \le {k \choose 2}^2$. It follows that $\frac{w_{N}(a,a')}{w_{\gk_k}(g,g')} \le {k \choose 2}^2$.

A similar argument holds for the loops.
First, recall that $w_{\gk_k}(g) = d_g$ by the lazy weighting.
Consider then any non-loop edge $\{g, g'\} \in \gk_k$.
Note that $\{g, g'\}$ determines $u = g \cap g' \in V(\gk_{k-1})$ univocally.
Moreover, there exist some $v,z \in \gk_{k-1}$ such that $u\cup v = g$ and $u \cup z = g'$ and that $\{x_{uv},x_{uz}\}$ is an edge in $L_0$; and note that there are at most $k$ distinct $v$ and at most $k$ distinct $z$ satisfying these properties.
Therefore, every $\{g, g'\}$ can be mapped to a set of between $1$ to $k^2$ edges in $L_0$, such that every edge in the set is in the cut between $H(g)$ and $H(g')$.
Furthermore, note that different $g'$ are mapped to disjoint sets, since any edge $\{x_{uv},x_{uz}\}$ identifies univocally $g=u \cup v$ and $g' = u\cup z$.
It follows that the cut of $H(g)$ is at least $d_g$ and at most $k^2 d_g$.
Since the cut has at least one edge, and $H(g)$ has at most ${k \choose 2}^2$ internal edges, then the total weight of $H(g)$ is between $1$ and $\poly(k)$ times the cut.
This is also $w_N(a)$, the weight of the state $a$ representing $g$ in $L_N$.
The claim follows by noting that by construction $w_N(a) \le w_N(a,a) \le 2 w_N(a)$.
\end{proof}
\begin{claim}\label{claim:tau_LN}
$\tau(L_N) \le \tau(L(\gk_{k-1}))$.
\end{claim}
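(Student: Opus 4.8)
The plan is to recognize the single step $L_{i-1}\rightsquigarrow L_i$ as an instance of the collapsed‑chain construction of Definition~\ref{def:collapsed}, and then to apply Lemma~\ref{lem:collapsed} once for each $i=1,\dots,N$. First I would record the structural facts that make the iteration legitimate. Since $\{H(g)\}_{g\in V(\gk_k)}$ is a partition of $V(L_0)$, after identifying $H(g_1),\dots,H(g_{i-1})$ the set $H(g_i)$ still sits intact inside $V(L_{i-1})$, so the step $L_{i-1}\rightsquigarrow L_i$ really does collapse a subset of the current state space. Moreover each $H(g_i)$ is nonempty: exactly the spanning‑tree argument used above to show $V(\gk_k)\subseteq V(L_N)$ produces, for the $k$‑graphlet $g_i$, two $(k-1)$‑graphlets $u,v$ with $u\cup v=g_i$ and $\{u,v\}\in E(\gk_{k-1})$, hence $x_{uv}\in H(g_i)$. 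Finally $V(L_{i-1})\setminus H(g_i)\ne\emptyset$ whenever there is more than one class (for $i\ge 2$ the already‑merged vertices $a_1,\dots,a_{i-1}$ witness this, and the case $N=1$ is trivial), so the collapsed chain is well defined at every step.

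The heart of the proof is the identity: for each $i$, the lazy random walk on $L_i$ \emph{equals} the collapsed chain of the lazy random walk on $L_{i-1}$ taken with $A^C=H(g_i)$. This I would verify by a direct comparison of transition weights. For states $u,v\notin H(g_i)$ nothing changes — identifying $H(g_i)$ leaves all edges among these states untouched and preserves each of their total incident weights — which matches $P^*(u,v)=P(u,v)$. The weight of the edge from an outside state $u$ to the new vertex $a_i$ is the sum of the weights of all edges from $u$ into $H(g_i)$, matching $P^*(u,a)=\sum_{v\in H(g_i)}P(u,v)$; and the weights out of $a_i$ are the $\pi$‑averages over $H(g_i)$, which is precisely the form of $P^*(a,\cdot)$. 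The one delicate point, and the part I expect to require care, is the self‑loop at $a_i$: after identification it must carry both the laziness loop that the walk on $L_i$ adds at $a_i$ \emph{and} the weight $\sum_{x,x'\in H(g_i)}w_{i-1}(x,x')$ of the edges internal to $H(g_i)$, and one has to check by a short computation that the total coincides with $P^*(a,a)=\tfrac{1}{\pi(H(g_i))}\sum_{x,x'\in H(g_i)}\pi(x)P(x,x')$, which likewise folds together the internal transitions and the laziness of the states of $H(g_i)$. A one‑ or two‑state example makes clear that these two bookkeepings agree, but it is easy to be off by the internal weight if one is careless, so this is the step to write out in full.

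With the identity established, Lemma~\ref{lem:collapsed} gives $\gamma(L_i)\ge\gamma(L_{i-1})$ for every $i=1,\dots,N$, where $\gamma(L_i)$ denotes the spectral gap of the lazy walk on $L_i$. Chaining these $N$ inequalities yields $\gamma(L_N)\ge\gamma(L_0)=\gamma(L(\gk_{k-1}))$, and therefore $\tau(L_N)=1/\gamma(L_N)\le 1/\gamma(L(\gk_{k-1}))=\tau(L(\gk_{k-1}))$, which is the claim. (Ergodicity of each lazy walk, needed to speak of $\gamma$ and $\tau$, is inherited from that of the walk on $L(\gk_{k-1})$, since identifying vertices of a connected graph leaves it connected.)
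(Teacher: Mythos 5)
Your proposal is correct and follows exactly the paper's argument: each step $L_{i-1}\rightsquigarrow L_i$ is the collapsed chain of Definition~\ref{def:collapsed} with $A^C=H(g_i)$, so Lemma~\ref{lem:collapsed} gives $\gamma(L_i)\ge\gamma(L_{i-1})$, and chaining over $i=1,\dots,N$ yields the claim. The extra bookkeeping you flag (nonemptiness of the classes, the self-loop weight at $a_i$ matching $P^*(a,a)$) is a sound verification of details the paper leaves implicit, but it does not change the route.
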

\begin{proof}
The walk on $L_i$ is the walk $L_{i-1}$ collapsed respect to $A^C=H(g_i)$, see Definition~\ref{def:collapsed}. Therefore by Lemma~\ref{lem:collapsed} the spectral gaps of the two walks satisfy $\gamma(L_i) \ge \gamma(L_{i-1})$, and the relaxation times satisfy $\tau(L_i)\le \tau(L_{i-1})$. Thus $\tau(L_N) \le \tau(L_0) = \tau(L(\gk_{k-1}))$. 
\end{proof}

By combining Claim~\ref{claim:tau_Gk} and Claim~\ref{claim:tau_LN}, we obtain $\tau(\gk_k) \le \poly(k) \tau(L_N) \le \poly(k) \tau(L(\gk_{k-1}))$, proving Lemma~\ref{apx:lem:tau_Gk_L}.

\subsubsection{Proof of Lemma~\ref{lem:Lg}}
\label{apx:lem:Lg}
To avoid notational ambiguity, we restate Lemma~\ref{lem:Lg} with $\gk$ in place of $G$:
\begin{replemma}{lem:Lg}
Any graph $\gk$ satisfies $\tau(L(\gk)) \le 20\, \rho(\gk)\, \tau(\gk)$, where $L(\gk)$ is the line graph of $\gk$ and $\tau(\cdot)$ denotes the relaxation time of the lazy random walk.
\end{replemma}
We build an auxiliary weighted graph $\subGw$, as follows. Let $\mathcal{S}$ be the $1$-subdivision of $\gk$ (the graph obtained by replacing each $\{u,v\} \in E(\gk)$ with the path $\{u, x_{uv}\},\{x_{uv},v\}$ where $x_{uv}$ is a new vertex representing $\{u,v\}$).
We make $\subG$ lazy by adding loops and assigning the following weights:
\begin{alignat}{2}
\label{eqn:cxy}
w_{\subG}(u,u) &= d_u && \qquad u \in V(\gk) \\
w_{\subG}(u,x_{uv}) &= 1 && \qquad \{u,v\} \in E(\gk) \\
w_{\subG}(x_{uv},x_{uv}) &= 2 && \qquad \{u,v\} \in E(\gk)
\end{alignat}
The graph $\subGw$ is the same as $\subG$ but with the following weights:
\begin{alignat}{2}
\label{eqn:cxy2}
w_{\subGw}(u,u) &= d_u^2 && \qquad u \in V(\gk) \\
w_{\subGw}(u,x_{uv}) &= d_u && \qquad \{u,v\} \in E(\gk) \\
w_{\subGw}(x_{uv},x_{uv}) &= d_u+d_v && \qquad \{u,v\} \in E(\gk)
\end{alignat}
The reader may refer to Figure~\ref{fig:walks} below.

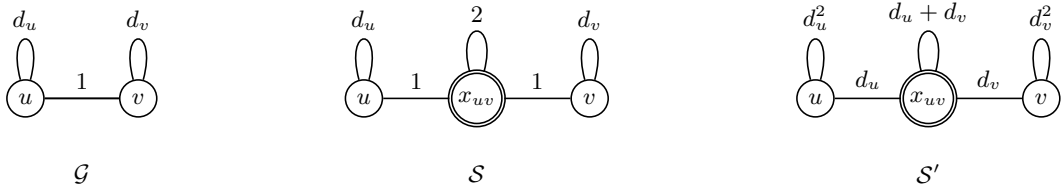
\begin{figure}[h]
\centering
\begin{tikzpicture}[
scale=1.5,semithick
]
\small
\tikzstyle{trans}=[rectangle,solid,draw=black,minimum size=14pt,inner sep=2pt]
\tikzstyle{vertex}=[circle,solid,draw=black,minimum size=14pt,inner sep=2pt]
\tikzset{every loop/.style={}}
\node[vertex] (u) at (0,0) {$u$};
\node[vertex] (v) at ($(u)+(1,0)$) {$v$};
\draw (u) edge [loop above] node[draw=none] {$d_u$} (u);
\draw (v) edge [loop above] node[draw=none] {$d_v$} (v);
\draw (u) -- (v);
\draw (u) edge node [above] {$1$} (v);
\node (txt) at ($0.5*(u)+0.5*(v)$) {};
\draw node [below of=txt] {$\gk$};

\node[vertex] (u1) at ($(v)+(2,0)$) {$u$};
\node[vertex,double] (uv) at ($(u1)+(1,0)$) {$x_{uv}$};
\node[vertex] (v1) at ($(uv)+(1,0)$) {$v$};
\draw (u1) edge [loop above] node[draw=none] {$d_u$} (u1);
\draw (v1) edge [loop above] node[draw=none] {$d_v$} (v1);
\draw (uv) edge [loop above] node[draw=none] {$2$} (uv);
\draw (u1) edge node [above] {$1$} (uv);
\draw (uv) edge node [above] {$1$} (v1);
\draw node [below of=uv] (txt1) {$\subG$};

\node[vertex] (u2) at ($(v1)+(2,0)$) {$u$};
\node[vertex,double] (uv2) at ($(u2)+(1,0)$) {$x_{uv}$};
\node[vertex] (v2) at ($(uv2)+(1,0)$) {$v$};
\draw (u2) edge [loop above] node[draw=none] {$d_u^2$} (u2);
\draw (v2) edge [loop above] node[draw=none] {$d_v^2$} (v2);
\draw (uv2) edge [loop above] node[draw=none] {$d_u+d_v$} (uv2);
\draw (u2) edge node [above] {$d_u$} (uv2);
\draw (uv2) edge node [above] {$d_v$} (v2);
\draw node [below of=uv2] (txt2) {$\subGw$};
\end{tikzpicture}
\caption{Left: a pair of $(k-1)$-graphlets $u,v$ forming an edge in $\gk_{k-1}$.
Middle: how $\{u,v\}$ appears in $\subG$, the $1$-subdivision of $G$.
Right: the reweighting given by $\subGw$.
}
\label{fig:walks}
\end{figure}

Now we prove two claims which, combined, yield the thesis.
%In the rest of the subsection we prove Lemma~\ref{lem:tmix_G_Gsub} and Lemma~\ref{lem:L_subG}.
\begin{claim}
\label{claim:tmix_G_Gsub}
$\tau(\subGw) \le 4 \rho(\gk) \tau(\gk)$.
\end{claim}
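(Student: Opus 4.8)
The claim asserts $\tau(\subGw) \le 4\rho(\gk)\,\tau(\gk)$, i.e.\ the reweighted subdivision $\subGw$ relaxes only a factor $4\rho(\gk)$ slower than $\gk$ itself. The natural route is to compare $\subGw$ with $\gk$ via the machinery already assembled: first pass from $\subGw$ to a chain on $V(\gk)$ by \emph{inducing} on the original vertices (Definition~\ref{def:induced}, Lemma~\ref{lem:induced}), and then compare that induced chain with the lazy walk on $\gk$ using Lemma~\ref{lem:dirichlet} or Lemma~\ref{lem:direct}. The key point is that inducing on $V(\gk)\subseteq V(\subGw)$ can only \emph{increase} the spectral gap, i.e.\ $\gamma\big((\subGw)_{V(\gk)}\big)\ge\gamma(\subGw)$, so equivalently $\tau(\subGw)\le\tau\big((\subGw)_{V(\gk)}\big)$; it then suffices to bound the relaxation time of the induced chain by $4\rho(\gk)\,\tau(\gk)$.

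\textbf{Key steps.} First I would write down the transition rates of $(\subGw)_{V(\gk)}$ explicitly. A walk on $\subGw$ started at $u\in V(\gk)$, before returning to $V(\gk)$, either stays at $u$ (the loop of weight $d_u^2$) or steps to some subdivision vertex $x_{uv}$ (weight $d_u$ each, total $\sum_{v\sim u} d_u$); from $x_{uv}$ it either loops (weight $d_u+d_v$) or moves to $u$ (weight $d_u$) or to $v$ (weight $d_v$). A short first-step/absorption computation shows the induced chain on $V(\gk)$ moves from $u$ to $v$ with a rate proportional to $\tfrac{d_u d_v}{d_u+d_v}$ (up to the laziness bookkeeping), i.e.\ up to a global constant it is the walk on $\gk$ with each edge $\{u,v\}$ reweighted from $w=1$ to $w'(u,v)=\tfrac{d_u d_v}{d_u+d_v}$, and each vertex reweighted accordingly. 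Second, I would invoke Lemma~\ref{lem:direct} with $\gk$ (weight $1$ on every edge, degree $d_u$ at every vertex) against this reweighting. The edge-ratio factor $\min_e w(e)/w'(e) = \min_{u\sim v}\tfrac{d_u+d_v}{d_u d_v}$ and the vertex-ratio factor $\max_v w(v)/w'(v)$ are each controlled by $\rho(\gk)$: since $\tfrac12\min(d_u,d_v)\le \tfrac{d_u d_v}{d_u+d_v}\le\min(d_u,d_v)$, every such ratio lies between $1$ and $2\rho(\gk)$, and combining the two factors gives $\gamma\big((\subGw)_{V(\gk)}\big)\ge \tfrac{1}{4\rho(\gk)}\gamma(\gk)$. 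Chaining with Lemma~\ref{lem:induced} yields $\gamma(\subGw)\le\gamma\big((\subGw)_{V(\gk)}\big)$... wait, that is the wrong direction; instead the correct chain is $\gamma(\subGw)\ge$ nothing useful, so I would rather run the comparison directly via Lemma~\ref{lem:dirichlet}: bound the Dirichlet form $\mathcal{E}_{\subGw}(f)$ for $f$ extended from $V(\gk)$ to $V(\subGw)$ by harmonic extension onto the subdivision vertices, against $\mathcal{E}_{\gk}(f)$, picking up a factor $\scO(\rho(\gk))$, and then Lemma~\ref{lem:dirichlet} with the stationary-ratio factor (again $\scO(\rho(\gk))$, since $\pi_{\subGw}(u)\propto d_u^2+\sum_{v\sim u}d_u$ versus $\pi_{\gk}(u)\propto d_u$) gives $\gamma(\subGw)\ge \tfrac{1}{4\rho(\gk)}\gamma(\gk)$, i.e.\ the claim.

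\textbf{Main obstacle.} The delicate part is getting the \emph{constant} $4$ (rather than some larger $\poly$ factor) and, relatedly, correctly accounting for the laziness weights across the subdivision: a step from $u$ to $v$ in $\gk$ costs two steps in $\subGw$ (through $x_{uv}$), and the self-loop weights $d_u^2$ at $u$ and $d_u+d_v$ at $x_{uv}$ were tuned precisely so that the Dirichlet-form comparison loses only a factor $\rho(\gk)$ and the $\pi$-ratio loses only another factor $\rho(\gk)$. So the real work is the explicit computation of $\mathcal{E}_{\subGw}(f)$ for the harmonic extension $\hat f$ of $f$, showing $\hat f(x_{uv})$ is the $(d_u,d_v)$-weighted average of $f(u),f(v)$, and checking that $(\hat f(u)-\hat f(x_{uv}))^2 w_{\subGw}(u,x_{uv}) + (\hat f(x_{uv})-\hat f(v))^2 w_{\subGw}(x_{uv},v)$ equals $\tfrac{d_u d_v}{d_u+d_v}(f(u)-f(v))^2$, so that summing over edges gives $\mathcal{E}_{\subGw}(\hat f)=\tfrac12\sum_{u\sim v}\tfrac{d_u d_v}{d_u+d_v}(f(u)-f(v))^2 \le \mathcal{E}_{\gk}(f)$; combined with $\max_u \pi_{\gk}(u)/\pi_{\subGw}(u)\le \tfrac{2m\,d_u}{2\cdot\text{(total weight)}}\cdot\tfrac{\text{(total weight in }\subGw)}{d_u^2+d_u\deg}$, a careful bookkeeping of the normalizations, which I expect to pin the factor down to $4\rho(\gk)$. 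Everything else — ergodicity of $\subGw$, reversibility, and the inequalities quoted — is immediate from the preliminaries.
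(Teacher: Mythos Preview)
Your proposal has a genuine directional gap. To prove $\tau(\subGw)\le 4\rho(\gk)\tau(\gk)$, equivalently $\gamma(\subGw)\ge\frac{1}{4\rho(\gk)}\gamma(\gk)$, you must control the Rayleigh quotient of $\subGw$ at \emph{every} function on $V(\subGw)$ (or at its minimizer), not at functions manufactured by extending from $V(\gk)$. Your harmonic-extension computation establishes $\mathcal{E}_{\subGw}(\hat f)\le\mathcal{E}_{\gk}(f)$, but feeding this into Lemma~\ref{lem:dirichlet} (which in any case requires a common state space) would yield an \emph{upper} bound on $\gamma(\subGw)$, i.e.\ a lower bound on $\tau(\subGw)$---the wrong direction. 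You already noticed the same reversal with the induced-chain route; the extension route suffers from it too. The pretty identity $\mathcal{E}_{\subGw}(\hat f)=\tfrac12\sum_{u\sim v}\tfrac{d_ud_v}{d_u+d_v}(f(u)-f(v))^2$ is correct but, for this reason, does not help.

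The paper avoids the issue by splitting the comparison in two. First it compares $\subGw$ to $\subG$ (the \emph{unweighted} lazy subdivision): these live on the same state space, and every edge/vertex weight ratio lies between $\dmin$ and $\dmax$, so Lemma~\ref{lem:direct} gives $\tau(\subGw)\le\rho(\gk)\,\tau(\subG)$ directly. Second it shows $\tau(\subG)\le 4\,\tau(\gk)$ by taking the minimizer $f_{\subG}$ on $\subG$, \emph{restricting} it to $f_{\gk}$ on $V(\gk)$, and bounding $\mathcal{E}_{\gk}(f_{\gk})/\var_{\pi_{\gk}}(f_{\gk})$ from above by $4\gamma(\subG)$; since $\gamma(\gk)$ is an infimum this yields $\gamma(\gk)\le 4\gamma(\subG)$. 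The restriction (rather than extension) is what makes the inequality point the right way, and passing through $\subG$ first is what makes the subdivision values the simple average $\tfrac12(f(u)+f(v))$, so the numerator and denominator computations close with the constant $4$. The missing idea in your plan is precisely this intermediate step through $\subG$.
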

\begin{proof}%[Proof of Lemma~\ref{lem:tmix_G_Gsub}]
Let $\dmax,\dmin$ be the maximum and minimum degrees of $\gk$.
First, note that
\begin{align}
\min_{\{x,y\} \in E(\subG)} \frac{w_{\subG}(x,y)}{w_{\subGw}(x,y)} \ge \frac{1}{\dmax} \quad \text{and} \quad \max_{x \in V(\subG)} \frac{w_{\subG}(x)}{w_{\subGw}(x)} \le \frac{1}{\delta}
\end{align}
By Lemma~\ref{lem:direct} this implies that $\gamma(\subGw) \ge \rho(\gk)^{-1}\,\gamma(\subG)$, or equivalently $\tau(\subGw) \le  \rho(\gk)\, \tau(\subG)$.
Thus, we need only to show that $\tau(\subG) \le 4 \tau(\gk)$, or equivalently, $\gamma(\gk) \le 4\gamma(\subG)$.
We do so by comparing the numerators and denominators of~\eqref{eqn:gamma_extremal} in Lemma~\ref{lem:dirichlet} for $\subG$ and $\gk$.

%To simplify the notation, we write $uv$ for $x_{uv}$.
Consider the walk on $\subG$ and let $\pi_{\subG}$ be its stationary distribution.
Let $f_{\subG}$ be the choice of $f$ that attains the minimum in~\eqref{eqn:gamma_extremal} under $\pi=\pi_{\subG}$.
We will show that there exists $f_{\gk} \in \R^{V(\gk)}$ such that:
\begin{align}
\frac{\mathcal{E}_{P_{\gk},\pi_{\gk}}(f_{\gk})}{\var_{\pi_{\gk}}(f_{\gk})} \le 4\, \frac{\mathcal{E}_{P_{\subG},\pi_{\subG}}(f_{\subG})}{\var_{\pi_{\subG}}(f_{\subG})}
\label{eqn:dirich_comp}
\end{align}
By Lemma~\ref{lem:dirichlet} this implies our claim, since the left-hand side of~\eqref{eqn:dirich_comp} bounds $\gamma(\gk)$ from above and the right-hand side equals $4\gamma(\subG)$.
%Before continuing we make a few observations that one can readily check.
Now, first, note that $\pi_{\subG}(u) = \frac{2}{3}\pi_{\gk}(u)$ for all $u \in V(\gk)$ (the weight of $u$ is the same in $\gk$ and $\subG$, but the total sum of weights in $\subG$ is $\frac{3}{2}$ that of $\gk$).
Similar calculations show that for all $\{u,v\} \in E(\gk)$ we have $\pi_{\subG}(x_{uv}) = \frac{4}{3 d_u} \pi_{\gk}(u)$, where $d_u$ is the degree of $u$ in $\gk$.
Third, observe that since $f_{\subG}$ attains the minimum in~\eqref{eqn:gamma_extremal} then  $f_{\subG}(x_{uv}) = \frac{f_{\subG}(u)+f_{\subG}(v)}{2}$ for all $\{u,v\} \in E(\gk)$.
Finally, let $f_{\gk}$ be the restriction of $f_{\subG}$ to $V(\gk)$.

First, we compare the numerator of~\eqref{eqn:dirich_comp} for $\subG$ and for $\gk$. To begin, note that:
\begin{align}
\mathcal{E}_{P_{\subG},\pi_{\subG}}(f_{\subG})
&= \sum_{\{u,v\} \in E(\gk)} \!\!\!\!\left( (f_{\subG}(u)-f_{\subG}(x_{uv}))^2 \, Q_{\subG}(u,x_{uv}) + (f_{\subG}(v)-f_{\subG}(x_{uv}))^2 \, Q_{\subG}(u,x_{uv}) \right)
\end{align}
Observe that $Q_{\subG}(u,x_{uv}) = Q_{\subG}(v,x_{uv}) = \pi_{\subG}(u) \frac{1}{2d_u}$, and as noted above, $f_{\subG}(x_{uv}) = \frac{f_{\subG}(u)+f_{\subG}(v)}{2}$, thus $(f_{\subG}(u)-f_{\subG}(x_{uv}))=(f_{\subG}(v)-f_{\subG}(x_{uv}))=\frac{1}{2}(f_{\subG}(u)-f_{\subG}(v))$. Recalling that $\pi_{\subG}(u) = \frac{2}{3}\pi_{\gk}(u)$,
\begin{align}
\mathcal{E}_{P_{\subG},\pi_{\subG}}(f_{\subG}) &= \frac{1}{2} \sum_{\{u,v\} \in E(\gk)}  \left(f_{\subG}(u)-f_{\subG}(v)\right)^2 \, \pi_{\subG}(u) \frac{1}{2d_u} 
\\ &= \frac{1}{3} \sum_{\{u,v\} \in E(\gk)}  \left(f_{\subG}(u)-f_{\subG}(v)\right)^2 \, \pi_{\gk}(u) \frac{1}{2d_u}
\label{eqn:dirich_expr_1}
\end{align}
On the other hand, since by construction $f_{\gk}(u)=f_{\subG}(u)$ and since $Q_{\gk}(u,v) = \pi_{\gk}(u)\frac{1}{2d_u}$:
\begin{align}
\mathcal{E}_{P_{\gk},\pi_{\gk}}(f_{\gk})
&= \sum_{\{u,v\} \in E(\gk)} \left(f_{\gk}(u)-f_{\gk}(v)\right)^2 \, Q_{\gk}(u,v)
\\ &= \sum_{\{u,v\} \in E(\gk)} \left(f_{\subG}(u)-f_{\subG}(v)\right)^2 \, \pi_{\gk}(u)\frac{1}{2d_u}
\label{eqn:dirich_expr_2}
\end{align}
Comparing~\eqref{eqn:dirich_expr_1} and~\eqref{eqn:dirich_expr_2} shows that $\mathcal{E}_{P_{\gk},\pi_{\gk}}(f_{\gk})=3\,\mathcal{E}_{P_{\subG},\pi_{\subG}}(f_{\subG})$.

Next, we compare the denominator of~\eqref{eqn:dirich_comp} for $\subG$ and for $\gk$.
First, we have:
\begin{align}
\var_{\pi_{\subG}}(f_{\subG}) = \sum_{u \in V(\gk)} \pi_{\subG}(u)f_{\subG}(u)^2 + \sum_{\{u,v\} \in E(\gk)} \pi_{\subG}(x_{uv})f_{\subG}(x_{uv})^2
\end{align}
Since $\pi_{\subG}(u) = \frac{2}{3}\pi_{\gk}(u)$ and $f_{\subG}(u)=f_{\gk}(u)$, the first term equals $\frac{2}{3}\var_{\pi_{\gk}}(f_{\gk})$. Now we show that the second term is bounded by $\frac{2}{3}\var_{\pi_{\gk}}(f_{\gk})$. Recalling again that $f_{\subG}(x_{uv}) = \frac{f_{\subG}(u)+f_{\subG}(v)}{2}$:
\begin{align}
\sum_{\{u,v\} \in E(\gk)} \pi_{\subG}(x_{uv})f_{\subG}(x_{uv})^2
&= \sum_{\{u,v\} \in E(\gk)} \pi_{\subG}(x_{uv}) \left(\frac{f_{\subG}(u)+f_{\subG}(v)}{2}\right)^2
\\&\le \sum_{\{u,v\} \in E(\gk)} \pi_{\subG}(x_{uv}) \frac{1}{2}\left(f_{\subG}(u)^2+f_{\subG}(v)^2\right) \label{eq:explain2}
\\& = \sum_{u \in V(\gk)} \sum_{v : \{u,v\} \in E(\gk)} \frac{1}{2} \frac{4}{3 d_u} \pi_{\gk}(u) f_{\subG}(u)^2 \label{eq:explain1}
\\& = \frac{2}{3} \sum_{u \in V(\gk)} \pi_{\gk}(u) f_{\subG}(u)^2
\end{align}
where~\eqref{eq:explain2} holds by convexity, and~\eqref{eq:explain1} holds since every $u \in V(\gk)$ is charged with $\frac{1}{2}\pi_{\subG}(x_{uv})f_{\subG}(u)^2$ by every $\{u,v\} \in E(\gk)$, and since $\pi_{\subG}(x_{uv}) = \frac{4}{3 d_u} \pi_{\gk}(u)$. Therefore $\var_{\pi_{\subG}}(f_{\subG}) \le \frac{4}{3} \var_{\pi_{\gk}}(f_{\gk})$, so $\var_{\pi_{\gk}}(f_{\gk}) \ge \frac{3}{4}\var_{\pi_{\subG}}(f_{\subG})$.

By combining our two bounds, we obtain:
\begin{align}
\frac{\mathcal{E}_{P_{\gk},\pi_{\gk}}(f_{\gk})}{\var_{\pi_{\gk}}(f_{\gk})} \le
\frac{3\,\mathcal{E}_{P_{\subG},\pi_{\subG}}(f_{\subG})}{\frac{3}{4}\,\var_{\pi_{\subG}}(f_{\subG})}
=
4\, \frac{\mathcal{E}_{P_{\subG},\pi_{\subG}}(f_{\subG})}{\var_{\pi_{\subG}}(f_{\subG})}
\end{align}
which shows that $\gamma(\gk) \le 4\gamma(\subG)$, completing the proof.
\end{proof}

\begin{claim}
\label{claim:L_subG}
$\tau(L(\gk)) \le 5 \,\tau(\subGw)$.
\end{claim}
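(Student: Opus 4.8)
The plan is to recognize that, up to a reweighting of its edges and loops by bounded multiplicative constants, the lazy walk on $L(\gk)$ is the chain obtained from the walk on $\subGw$ by ``watching'' it only when it visits the subdivision vertices $A=\{x_{uv}:\{u,v\}\in E(\gk)\}$. Once this is established, the bound follows by feeding the induced-chain lemma (Lemma~\ref{lem:induced}) and the direct comparison lemma (Lemma~\ref{lem:direct}) with the appropriate weightings.

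First I would compute the chain $P_A$ induced by the walk on $\subGw$ on the set $A$. The crucial structural observation is that in $\subGw$ every non-loop neighbour of a vertex $u\in V(\gk)$ already lies in $A$ (it is some $x_{uw}$ with $w\sim u$); hence an excursion of the $\subGw$-walk that leaves $x_{uv}$ towards $u$ consists of a geometric number of self-loops at $u$ followed by a single jump to a uniformly random $x_{uw}$, $w\sim u$, and symmetrically for excursions towards $v$. Plugging in the weights of $\subGw$ (so that the walk is lazy at every vertex and $\pi_{\subGw}(x_{uv})\propto d_u+d_v$), this yields, for every $\{u,v\}\in E(\gk)$,
\[
P_A(x_{uv},x_e)=\frac{1}{2(d_u+d_v)}\ \text{ whenever }|e\cap\{u,v\}|=1,\qquad P_A(x_{uv},x_{uv})=\frac12+\frac{1}{d_u+d_v}.
\]
In particular the off-diagonal support of $P_A$ is exactly the edge set of $L(\gk)$, and since $P_A$ is the induced chain of a reversible walk it is reversible with $\pi_A(x_{uv})\propto d_u+d_v$.

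Next I would translate this into weighted graphs. From the formulas above, $P_A$ is the random walk on the weighted graph $\scG_A$ on vertex set $A$ with all non-loop edges of weight $\tfrac12$ and a loop of weight $\tfrac{d_u+d_v}{2}+1$ at $x_{uv}$; the lazy walk on $L(\gk)$, rescaled so that non-loop edges also have weight $\tfrac12$, is the walk on the same graph-with-loops but with loop weights $\tfrac{d_u+d_v-2}{2}$ at $x_{uv}$ (using $\deg_{L(\gk)}(x_{uv})=d_u+d_v-2$). Thus $\scG_A$ and the lazy $L(\gk)$ differ only in their loop weights. For a connected $\gk$ with at least two edges every edge satisfies $d_u+d_v\ge 3$ (the single-edge case is trivial), so all relevant ratios are finite and bounded. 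Lemma~\ref{lem:induced} gives $\gamma(\scG_A)=\gamma_A\ge\gamma(\subGw)$, and applying Lemma~\ref{lem:direct} to $\scG_A$ versus the lazy $L(\gk)$: the non-loop edge ratios equal $1$ and the loop ratios $\tfrac{d_u+d_v+2}{d_u+d_v-2}$ are $\ge1$, so the minimum edge ratio is $1$; and the vertex-weight ratios $\tfrac{d_u+d_v}{d_u+d_v-2}$ are at most $3$. Hence $\gamma(L(\gk))\ge\tfrac13\gamma(\scG_A)\ge\tfrac13\gamma(\subGw)$, i.e. $\tau(L(\gk))\le 3\,\tau(\subGw)\le 5\,\tau(\subGw)$.

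The hard part will be the first step: correctly deriving $P_A$ requires carefully tracking the excursions of the $\subGw$-walk together with the weights they carry, and verifying that the induced chain has precisely the adjacency structure of $L(\gk)$ — in particular that no neighbour of $x_{uv}$ is reached through both endpoints $u$ and $v$ (which would produce a spurious multi-edge) and that every excursion returning to $x_{uv}$ is folded into its self-loop probability. Once $P_A$ is pinned down, the remaining steps are routine bookkeeping with Lemmas~\ref{lem:induced} and~\ref{lem:direct}, the only minor subtlety being the degree-two edge case.
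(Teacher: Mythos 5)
Your proposal follows exactly the paper's argument: induce the walk on $\subGw$ onto the subdivision vertices $A$, compute the induced transition probabilities (your formulas for $P_A(x_{uv},\cdot)$ match the paper's computation of $Y=X[A]$), identify the result as $L(\gk)$ reweighted only in its loops, and finish with Lemmas~\ref{lem:induced} and~\ref{lem:direct}, handling the single-edge case separately. The only (immaterial) difference is that your application of the comparison lemma tracks the vertex-weight ratios $\tfrac{d_u+d_v}{d_u+d_v-2}\le 3$ and so yields the constant $3$ in place of the paper's $5$.
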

\begin{proof}%[Proof of Lemma~\ref{lem:L_subG}]
Let $X=\{X_t\}_{t \ge 0}$ be the walk on $\subGw$, and let $Y=X[A]$ be the chain induced by $X$ on the subset of states $A = \{x_{uv} \,:\, \{u,v\} \in \ek \}$ (Definition~\ref{def:induced}). Since by Lemma~\ref{lem:induced} $\tau(Y) \le \tau(\subGw)$, we need only to prove that $\tau(L(\gk)) \le 5 \,\tau(Y)$. To this end we show that $Y$ is the random walk on the graph $L'$ obtained by weighting $L(\gk)$ as follows (see Figure~\ref{fig:L_reweighted} below):
\begin{alignat}{2}
\label{eqn:cL}
w_{L'}(x_{uv},x_{uz}) &= 1 && \qquad v \ne z
\\
w_{L'}(x_{uv},x_{uv}) &= d_u+d_{v}+2 && \qquad %x = \{h,h'\} \in V(L)
\end{alignat}

\begin{figure}[h]
\centering
%\fbox{
\begin{tikzpicture}[
scale=1.5,semithick
]
\small
\tikzstyle{trans}=[rectangle,solid,draw=black,minimum size=14pt,inner sep=2pt]
\tikzstyle{vertex}=[circle,solid,draw=black,minimum size=14pt,inner sep=2pt]
\tikzset{every loop/.style={}}
\node[vertex,double] (uz) at ($(u)+(-1,-.5)$) {$x_{uz}$};
\node[vertex] (u2) at (0,0) {$u$};
\node[vertex,double] (uv2) at ($(u2)+(1,0)$) {$x_{uv}$};
\node[vertex] (v2) at ($(uv2)+(1,0)$) {$v$};
\draw (u) edge node [above] {$d_u$} (uz);
\draw (u2) edge [loop above] node[draw=none] {$d_u^2$} (u2);
\draw (v2) edge [loop above] node[draw=none] {$d_v^2$} (v2);
\draw (uv2) edge [loop above] node[draw=none] {$d_u+d_v$} (uv2);
\draw (uz) edge [loop above] node[draw=none] {$d_u+d_z$} (uz);
\draw (u2) edge node [above] {$d_u$} (uv2);
\draw (uv2) edge node [above] {$d_v$} (v2);
\draw (v2) edge node [above] {$d_v$} ($(v2)+0.7*(1,-.5)$);
\draw node [below of=uv2] (txt2) {$\subGw$};

\node (u2) at ($(5,0)$) {};
\node[vertex,double] (uv) at ($(u2)$) {$x_{uv}$};
\node[vertex,double] (uz) at ($(u2)+(2,0)$) {$x_{uz}$};
\draw (uv) edge [loop above] node[draw=none] {$d_u+d_v+2$} (uv);
\draw (uz) edge [loop above] node[draw=none] {$d_u+d_z+2$} (uz);
\draw (uz) edge node [above] {$1$} (uv);
\node (txt2) at ($0.5*(uv)+0.5*(uz)$) {};
\draw node [below of=txt2] {$L'$};
\end{tikzpicture}
%}
\caption{Left: the graph $\subGw$ described above.
Right: the reweighted line graph $L'$ obtained by weighting every loop $\{x_{uv},x_{uv}\}$ of $L(\gk)$ with $(d_u+d_v+2)$ instead of $(d_u+d_v-2)$. The random walk over $L'$ is exactly the random walk over $\subGw$ observed only on the set of states $\{ x_{uv} : \{u,v\} \in E(\gk)\}$.
}
\label{fig:L_reweighted}
\end{figure}
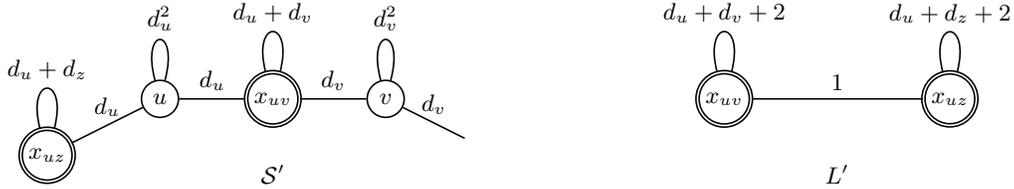

To prove the claim we compute the transition probabilities of $Y$ from $x_{uv}$. First, if $Y_t=x_{uv}$ then we can assume $X_s=x_{uv}$ for some $s=s(t)$. From $x_{uv}$, the possible transitions are $Y_{t+1}=x_{uv}$ and $Y_{t+1}=x_{uz}$ for some $z \ne v$. The transition $Y_{t+1}=x_{uv}$ happens if and only if one of these three disjoint events occurs:
\begin{enumerate}\itemsep0pt
\item $X_{s+1}=x_{uv}$
\item $X_{s+1}=\ldots=X_{s'-1}=u$ and $X_{s'}=x_{uv}$ for some $s' \ge s+2$
\item the same as (2) but with $v$ in place of $u$.
\end{enumerate}
The probability of (1) is $\frac{1}{2}$ by construction of the loop weights. The probability of (2) is the product of $\Pr(X_{s+1}=u \,|\, X_s=x_{uv}) = \frac{d_u}{2(d_u+d_v)}$ and $\Pr(X_{s'}=x_{uv} \,|\, X_{s'-1}=u)=\frac{1}{d_u}$, since $X$ leaves $u$ with probability $1$, in which case it moves to $x_{uv}$ with probability $\frac{1}{d_u}$. Thus, the probability of (2) is $\frac{1}{2(d_u+d_v)}$, and by symmetry the same is for (3). Therefore:
\begin{align}
\Pr(Y_{t+1}=x_{uv} \,|\, Y_t=x_{uv}) = \frac{1}{2} + \frac{1}{d_u+d_v} = \frac{d_u+d_v+2}{2(d_u+d_v)}
\end{align}
The transition $Y_{t+1}=x_{uz}$ is the same as event (2) above, only with $X_{s'}=x_{uz}$ instead of $X_{s'} = x_{uv}$. But conditioned on $X_{s'-1}=u$ the two events have the same probability, therefore:
\begin{align}
\Pr(Y_{t+1}=x_{uz} \,|\, Y_t=x_{uv}) = \frac{1}{2(d_u+d_v)} 
\end{align}
Thus the probabilities are proportional to $1$ and $d_u+d_v+2$, as $w_{L'}$ says.

We can now conclude the proof of the claim. If $d_u+d_v = 2$, then $|E(\gk)|=1$, so $L(\gk)$ is the singleton graph and $\tau(L(\gk))=0$, and $\tau(L(\gk)) \le 5\, \tau(\subGw)$ holds trivially. Suppose instead $d_u+d_v \ge 3$. Then $\frac{d_u+d_v+2}{d_u+d_v-2} \le \frac{3+2}{3-2} = 5$. Therefore $w_S \le w_{L'} \le 5\, w_{\subGw}$, and Lemma~\ref{lem:direct} yields $\tau(L(\gk)) \le 5\, \tau(\subGw)$. The proof is complete.
\end{proof}

By combining claims~\ref{claim:tmix_G_Gsub} and~\ref{claim:L_subG} we obtain $\tau(L(\gk)) \le 5 \,\tau(\subGw) \le 20 \rho(\gk) \tau(\gk)$, proving Lemma~\ref{lem:Lg}.

\subsection{Proof of the lower bounds of Theorem~\ref{thm:tgk}}
\label{apx:tgk_lb}
We ignore factors depending only on $k$, which are easily proven to be in $\kok$. Consider a graph $G$ formed by two disjoint cliques of order $\dmax$, connected by a ``fat path'' (the Cartesian product of a path and a clique) of length $2(k-1)$ and width $\dmin$, see Figure~\ref{fig:G}. The total number of vertices is $n = 2\dmax + 2(k-1)\dmin) = \Theta(\dmax)$, and we choose $\dmax$ and $\dmin$ so that $\rho(G) = \frac{\dmax}{\dmin} \in \Theta(\rho(n))$.
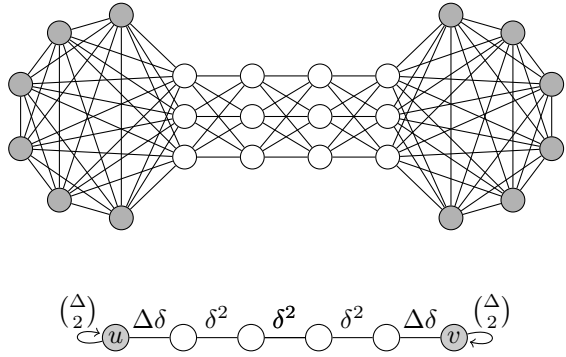
\begin{figure}[h]
\centering
\pgfmathsetmacro\maxdeg{8}
\pgfmathsetmacro\mindeg{3}
\pgfmathsetmacro\k{3}

\def\pwid{\mindeg} % fat path width
\def\xspacing{1}
\def\yspacing{.6}
\def\starSize{6}
\def\starAngle{37}
\pgfmathtruncatemacro\plen{\k-1} % fat path half-length 
\pgfmathsetmacro\starRadius{\xspacing*1.5}
\pgfmathsetmacro\voffx{\xspacing*(1+2*\plen)}
\pgfmathsetmacro\yoff{-\yspacing*(\pwid+1)/2}

\begin{tikzpicture}[scale=.9,-,draw=black, node distance=\layersep]
\tikzstyle{every node}=[circle,draw,fill=white!25,minimum size=9pt,inner sep=0pt]

	\coordinate (u) at (0,0);
	\coordinate (v) at ($(u)+(\voffx,0)$);
	
	% fat path
	\foreach \i in {1,...,\plen} {
		\foreach \z in {1,...,\pwid} {
			\node (L_\i_\z) at ($(u)+(\i*\xspacing, \yoff+\z*\yspacing)$) {};
			\node (R_\i_\z) at ($(v)+(-\i*\xspacing, \yoff+\z*\yspacing)$) {};
		}
	}

	% all the edges
    \pgfmathtruncatemacro{\im}{\plen-1}
	\foreach \i in {1,...,\im} {
		\pgfmathtruncatemacro{\h}{\i + 1}
		\foreach \z in {1,...,\pwid} {
			\foreach \w in {1,...,\pwid} { % the complete bipartite subgraphs
				\path (L_\i_\z) edge (L_\h_\w);
				\path (R_\i_\z) edge (R_\h_\w);
			}
		}
	}

	\foreach \w in {1,...,\pwid}
		\foreach \z in {1,...,\pwid} {
			\path (L_\plen_\w) edge (R_\plen_\z);
		}
    
    % the cliques
    \pgfmathsetmacro\initAngle{180-\starAngle*(\starSize+1)/2}
    \foreach \i in {1,...,\starSize} {
	    	\node[fill=gray!60] (a\i) at ($(u)+(\initAngle+\i*\starAngle:\starRadius*\xspacing)$) {};
	    	\node[fill=gray!60] (b\i) at ($(v)-(\initAngle+\i*\starAngle:\starRadius*\xspacing)$) {};
	    	\foreach \j in {1,...,\pwid} {
	    		\path (a\i) edge (L_1_\j);	
	    		\path (b\i) edge (R_1_\j);	
	    	}
    }
	% edges of the big cliques
    \foreach \i in {2,...,\starSize} {
	    	\pgfmathsetmacro\jmax{\i-1}
	    	\foreach \j in {1,...,\jmax} {
	    		\path (a\i) edge (a\j);
	    		\path (b\i) edge (b\j);
	    	}
    }

\end{tikzpicture}
\\[20pt]
\pgfmathsetmacro\maxdeg{8}
\pgfmathsetmacro\mindeg{3}
\pgfmathsetmacro\k{3}

\def\pwid{\mindeg} % fat path width
\def\xspacing{1}
\def\yspacing{.6}
\def\starSize{6}
\def\starAngle{37}
\pgfmathtruncatemacro\plen{\k-1} % fat path half-length 
\pgfmathsetmacro\starRadius{\xspacing*1.5}
\pgfmathsetmacro\voffx{\xspacing*(1+2*\plen)}
\pgfmathsetmacro\yoff{-\yspacing*(\pwid+1)/2}

\begin{tikzpicture}[scale=.9,-,draw=black, node distance=\layersep]
\tikzstyle{graph}=[circle,draw,fill=white!25,minimum size=10pt,inner sep=0pt]

	\node[graph,fill=gray!40] (u) at (0,0) {$u$};
	\node[graph,fill=gray!40] (v) at ($(u)+(\voffx,0)$) {$v$};
	
	% path
	\foreach \i in {1,...,\plen} {
		\node[graph] (u\i) at ($(u)+(\i*\xspacing, 0)$) {};
		\node[graph] (v\i) at ($(v)+(-\i*\xspacing, 0)$) {};
	}

	% all the edges
    \pgfmathtruncatemacro{\im}{\plen-1}
	\foreach \i in {1,...,\im} {
		\pgfmathtruncatemacro{\h}{\i+1}
		\path (u\i) edge node[above] {$\delta^2$} (u\h);
		\path (v\i) edge node[above] {$\delta^2$} (v\h);
	}

    \pgfmathsetmacro\initAngle{180-\starAngle*(\starSize+1)/2}
    \pgfmathtruncatemacro\l{\starSize/2}
	\node (phantomu) at ($(u)+(-2.1,0)$) {};
	\node (phantomv) at ($(v)+(2,0)$) {};

	\path (u\plen) edge node[above] {$\delta^2$} (v\plen);
	\path (u\plen) edge node[above] {$\delta^2$} (v\plen);

	\path (u) edge[loop left] node[above] {$\binom{\Delta}{2}$} (u);
	\path (v) edge[loop right] node[above] {$\binom{\Delta}{2}$} (v);

	\path (u) edge node[above] {$\Delta\delta$} (u1);
	\path (v) edge node[above] {$\Delta\delta$} (v1);

\end{tikzpicture}
\caption{Above: the graph $G$ for $k=3$, $\dmin=6$, $\dmax=3$; clique vertices in gray. Below: $G$ collapsed in a weighted path $P$.}
\label{fig:G}
\end{figure}

We start by bounding $\tmix(\gk_k)$ from below with a conductance argument. Let $C_u$ be the left clique of $G$, and for $i=1,\ldots,k-1$, let $L_i$ be the vertices of $G$ at distance $i$ from $C_u$. Let $U$ be the set of all $k$-graphlets of $G$ containing at least $\frac{k}{2}+1$ vertices from $C_u \cup L_1 \cup \ldots \cup L_{k-1}$, and $\bar{U} = \vk_k \setminus U$. Consider the cut between $U$ and $\bar U$ in $\gk_k$. Observe that $\vol(U) \le \vol(\bar U)$, which implies $\Phi(\gk_k) \le \frac{\cut(U)}{\vol(U)}$. Now, $U$ contains at least $\binom{\dmax}{k} = \Omega(\dmax^{k})$ graphlets, each of which has $\Omega(\dmax)$ neighbors in $\gk_k$. Hence $\vol(U) = \Omega(\dmax^{k+1})$. On the other hand, consider any $\{g,g'\} \in \Cut(U,\bar U)$. We claim that $g \cup g'$ is spanned by a tree on $k+1$ vertices that does not intersect the cliques of $G$. Indeed, suppose by contradiction that $g \cap C_u \ne \emptyset$. Since $g \cup g'$ has diameter at most $k$, we deduce that $g' \setminus (C_u \cup L_1 \cup \ldots \cup L_k)$ has size at most $1$. This contradicts the fact that $g' \in \bar U$, which would require $|g' \setminus (C_u \cup L_1 \cup \ldots \cup L_k)| \ge \frac{k}{2}$, which is strictly larger than $1$ since $k \ge 3$. A symmetric argument proves that $g \cup g'$ does not intersect the right clique of $G$. Hence, $g \cup g'$ is spanned by a tree on $k+1$ vertices of the fat path, and the number of such trees is $\scO(\dmin^{k+1})$. Therefore, $\cut(U) = \scO(\dmin^{k+1})$. We conclude that:
\begin{align}
\tau(\gk_k) \ge \frac{1}{2 \Phi(\gk_k)} = \Omega\left(\rho(n)^{k+1}\right) \label{eq:tau_gk_proof}
\end{align}

Now we show that $\rho(n)^2 = \Omega(\tau(G))$. Let $P$ be the weighted path graph obtained from $G$ by identifying the vertices in each clique, and the vertices in every layer of the path (see the figure again). Let $X=\{X_t\}_{t \ge 0}$ be the random walk over $G$, and for all $t \ge 0$ let $Y_t$ be the vertex of $V(P)$ corresponding to $X_t$. Note that $Y=\{Y_t\}_{t \ge 0}$ is the random walk over $P$, and that it is coupled to $X$. Now observe that, for any $t \ge 1$, if $Y_t$ is at total variation distance $\epsilon$ from the stationary distribution $\pi_Y$ of $Y$, then $X_t$ is at total variation distance $\epsilon$ from the stationary distribution $\pi_X$ of $X$. Therefore, $\scO(\tmix(G))=\scO(\tmix(P))$, which implies $\tau(G)=\scO(\tmix(P))$. In turn, $P$ is a path of constant length whose edge weights are in the range $[\dmin^2,\dmax^2]$. By Lemma~\ref{lem:direct} this implies that $\tmix(P)$ is within $\scO(\rho(n)^2)$ times the mixing time of the walk on the unweighted version of $P$, which is constant. Therefore, $\tau(G) = \scO(\rho(n)^2)$, i.e., $\rho(n)^2 = \Omega(\tau(G))$.

Combining~\eqref{eq:tau_gk_proof} with the fact that $\rho(n)^2 = \Omega(\tau(G))$ and that $\tmix(\gk_k) = \Omega(\tau(\gk_k))$ yields the lower bound of Theorem~\ref{thm:tgk}.

\subsection{Proof of Theorem~\ref{thm:walk_algo}}
First we prove two ancillary facts, and then, Theorem~\ref{thm:walk_algo}.

\iffalse
\begin{lemma}
\label{lem:edge_mix}
Let $\{X_t\}_{t \ge 0}$ be the lazy random walk over $\gk$ and let $Y_t=\{X_t,X_{t+1}\}$. Let $\pi_t$ be the distribution of $X_t$ over $V(\gk)$ and $\sigma_t$ the distribution of $Y_t$ over $E(\gk)$, and let $\pi$ be the stationary distribution of $X$. If $\pi_t$ is $\epsilon$-close to $\pi$, then $\sigma_t$ is $\epsilon$-uniform.
\end{lemma}
\begin{proof}
Let $X^*$ be a random variable with distribution $\pi$, and let $Y^*=\{X^*,Z\}$ where $Z$ is a neighbor of $X^*$ chosen uniformly at random. Clearly $Y^*$ is uniform over $E(\gk)$. Since $\tvd{\pi_t}{\pi} \le \epsilon$, we can couple $X_t$ and $X^*$ so that $\Pr(X_t \ne X^*) \le \epsilon$. If $X_t=X^*$, we can then couple $Y_t$ and $Y_{\pi}$ by letting $Y_t=Y$. This shows that $\Pr(Y^* \ne Y_t) \le \epsilon$.
\end{proof}
\fi

\begin{lemma}
\label{lem:Tg}
%Consider the graphs $\gk_k=(\vk_k,\ek_k)$ and $\gk_{k-1}=(\vk_{k-1},\ek_{k-1})$.
For every $g \in V(\gk_k)$ let $T(g) = \{ \{u,v\} \in E(\gk_{k-1}) \,:\, u \cup v = g\}$.
Then $|T(g)| \le {k \choose 2}$, and given $g$ we can compute $|T(g)|$ in time $\scO(\poly(k))$.
\end{lemma}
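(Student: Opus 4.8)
The plan is to put $T(g)$ in explicit bijection with a set of vertex pairs of $g$, count those pairs, and note that the tests defining the bijection are all local connectivity checks on a graph with at most $k$ vertices.

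First I would observe the following structural fact. Suppose $\{u,v\} \in T(g)$. Since $u,v \in \vk_{k-1}$ we have $|V(u)|=|V(v)|=k-1$, and since $u \cup v = g$ with $g \in \vk_k$ we have $|V(u) \cup V(v)| = k$; hence $|V(u) \cap V(v)| = k-2$, and there are distinct vertices $a,b \in V(g)$ with $V(u) = V(g) \setminus \{a\}$ and $V(v) = V(g) \setminus \{b\}$, where $a \ne b$ because $u \ne v$ ($\gk_{k-1}$ has no loops). Conversely, from $\{u,v\}$ the pair is recovered as $\{a,b\}$ with $a$ the unique vertex of $V(g) \setminus V(u)$ and $b$ the unique vertex of $V(g) \setminus V(v)$. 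So $\{u,v\} \mapsto \{a,b\}$ is a bijection from $T(g)$ onto the set $\Pi(g)$ of unordered pairs $\{a,b\}$ of distinct vertices of $g$ satisfying (i) $g \setminus \{a\}$ is connected, (ii) $g \setminus \{b\}$ is connected, and (iii) $g \setminus \{a,b\}$ is connected. Indeed (i)--(ii) are exactly $u,v \in \vk_{k-1}$, while (iii) is exactly $u \cap v \in \vk_{k-2}$: we already know $u \cap v = g \setminus \{a,b\}$ has precisely $k-2$ vertices, so the only remaining requirement is that $G[V(g)\setminus\{a,b\}]$ be connected. Since $\Pi(g)$ consists of unordered pairs drawn from the $k$ vertices of $g$, we conclude $|T(g)| = |\Pi(g)| \le {k \choose 2}$.

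For the running time, given $g$ as its set of $k$ vertices, I would first recover the edge set of the induced subgraph $G[V(g)]$ using one pair query for each of the ${k \choose 2}$ vertex pairs, in total time $\scO(k^2)$. Then I would enumerate $\Pi(g)$ directly: for each vertex $a$, decide whether $g \setminus \{a\}$ is connected by a single graph search on a graph with at most $k$ vertices and ${k \choose 2}$ edges, in time $\scO(k^2)$; and then, for each pair $\{a,b\}$ for which both $g \setminus \{a\}$ and $g \setminus \{b\}$ are connected, decide whether $g \setminus \{a,b\}$ is connected by another $\scO(k^2)$-time search. Summing over the $\scO(k^2)$ pairs yields $|T(g)| = |\Pi(g)|$ in time $\scO(k^4) = \poly(k)$, as claimed.

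There is no genuine obstacle here; the only step that warrants care is checking that $\{u,v\}\mapsto\{a,b\}$ is a true bijection (so that $|T(g)|$ equals $|\Pi(g)|$, not merely bounded by it), and in particular that every pair of $\Pi(g)$ gives back a bona fide edge of $\gk_{k-1}$ with union $g$ --- which follows once one notes that for any two distinct $a,b \in V(g)$ the set $V(g)\setminus\{a,b\}$ has exactly $k-2$ vertices, so condition (iii) is precisely the condition $u \cap v \in \vk_{k-2}$.
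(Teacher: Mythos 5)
Your proposal is correct and follows essentially the same route as the paper: identify each $\{u,v\}\in T(g)$ with a pair of vertices $\{a,b\}$ of $g$, characterize membership by the connectivity of $g\setminus\{a\}$, $g\setminus\{b\}$, and $g\setminus\{a,b\}$, and enumerate all ${k\choose 2}$ pairs with $\poly(k)$-time connectivity checks. You are merely more explicit than the paper about the map being a genuine bijection and about the constant-degree polynomial cost, which is fine.
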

\begin{proof}
Every $\{u,v\} \in T(g)$ satisfies: (i) $u = g \setminus \{x\}$ and $v = g \setminus \{y\}$ for some $x,y \in g$, and (ii) $u$, $v$, and $u \cap v$ are connected. Thus given $g$ we can just enumerate all ${k \choose 2}$ pairs of vertices in $g$ and count which ones have $u$, $v$, and $u \cap v$ connected. % in which case $\{u,v\} \in \ek_{k-1}$.
This gives the bound on $|T(g)|$ too.
\end{proof}

\begin{lemma}
\label{lem:step_cost}
Any single step of the lazy walk over $\gk_k$ can be simulated in $\poly(k)$ expected time.
\end{lemma}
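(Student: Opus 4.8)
The plan is to simulate one step of the lazy walk at a graphlet $g\in\vk_k$ by flipping a fair coin: on heads the walk stays at $g$, and on tails it must move to a neighbour of $g$ in $\gk_k$ chosen uniformly at random. All the work lies in the second case, since the degree $d_g$ of $g$ in $\gk_k$ can be as large as $\scO(n)$ and the neighbours cannot be enumerated. First I would record a compact description of these neighbours. If $g'$ is a neighbour of $g$ then $g$ and $g'$ are distinct $k$-graphlets whose intersection is a $(k-1)$-graphlet, hence $|g\cap g'|=k-1$ and $g'=(g\setminus\{y\})\cup\{w\}$ for a unique vertex $y\in g$ and a unique vertex $w\notin g$; conversely, a pair $(y,w)$ gives a neighbour of $g$ exactly when $g\setminus\{y\}$ is connected (so that $g\cap g'$ is a $(k-1)$-graphlet) and $w$ has a neighbour in $g\setminus\{y\}$ (so that $g'$ is connected), and the correspondence $g'\leftrightarrow(y,w)$ is a bijection. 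So it suffices to draw such a valid pair uniformly at random.

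Let $Y_g=\{y\in g:g\setminus\{y\}\text{ is connected}\}$ and $\ell=|Y_g|$; a spanning tree of $g$ has at least two leaves, so $2\le\ell\le k$, and $Y_g$ is computable in $\poly(k)$ time by $\le k$ connectivity checks on $\le k$ vertices using pair queries. Let $C$ be the external neighbourhood of $g$, i.e.\ the set of vertices not in $g$ having a neighbour in $g$, and for $w\in C$ let $\mathrm{in}(w)=\{x\in g:x\sim w\}$ and $\nu(w)=|\{y\in Y_g:\mathrm{in}(w)\ne\{y\}\}|$. One checks that $\ell-1\le\nu(w)\le\ell$, that the number of valid pairs equals $d_g=\sum_{w\in C}\nu(w)$, and that $|C|\le d_g\le\ell\,|C|\le k\,|C|$. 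I would then sample a valid pair by the following rejection loop. (i) Pick $x\in g$ with probability proportional to $|\{w\notin g:w\sim x\}|=d_x-|\{w\in g:w\sim x\}|$, obtained with one degree query and $\scO(k)$ pair queries; if all these sizes are $0$ then $d_g=0$ and the walk stays at $g$. (ii) Sample $w$ uniformly from $\{w'\notin g:w'\sim x\}$. (iii) Compute $\mathrm{in}(w)$ with $\scO(k)$ pair queries and pick $y$ uniformly from $\{y\in Y_g:\mathrm{in}(w)\ne\{y\}\}$, which is nonempty since $\nu(w)\ge\ell-1\ge1$. (iv) Accept $(y,w)$ and output $g'=(g\setminus\{y\})\cup\{w\}$ with probability $\nu(w)/(\ell\,|\mathrm{in}(w)|)$; otherwise restart from (i).

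For correctness, put $S=\sum_{x\in g}|\{w\notin g:w\sim x\}|=\sum_{w\in C}|\mathrm{in}(w)|$. The probability of proposing a fixed triple $(x,w,y)$ with $x\in\mathrm{in}(w)$ is $\frac{1}{S\,\nu(w)}$, so the probability of proposing and accepting it is $\frac{1}{S\,\ell\,|\mathrm{in}(w)|}$; summing over the $|\mathrm{in}(w)|$ choices of $x$ shows that every valid pair $(y,w)$ is accepted with the same probability $\frac{1}{S\,\ell}$, hence the output is uniform over the neighbours of $g$. For the running time, the overall acceptance probability is $d_g/(S\,\ell)\ge\frac{(\ell-1)\,|C|}{k\,|C|\,\ell}\ge\frac{1}{2k}$ because $\ell\ge2$, so the loop runs $\scO(k)$ times in expectation; each iteration costs $\poly(k)$ plus step (ii), which takes $\poly(k)$ expected time since we may enumerate the neighbours of $x$ when $d_x\le2k$ and otherwise sample a uniform neighbour of $x$ and reject it if it lies in $g$, which succeeds with probability at least $\frac12$. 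Thus one step of the lazy walk over $\gk_k$ is simulated in $\poly(k)$ expected time. The part I expect to be most delicate is choosing the two nested rejection corrections so that the marginal over neighbours is \emph{exactly} uniform --- in particular the factor $1/|\mathrm{in}(w)|$, which cancels the over-counting of external vertices reachable from several vertices of $g$ --- while keeping the acceptance probability at least $1/\poly(k)$, which is precisely what the bounds $|C|\le d_g\le k\,|C|$ and $\ell\ge2$ guarantee.
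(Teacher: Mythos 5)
Your proof is correct and follows essentially the same strategy as the paper: both simulate the non-lazy move by rejection sampling over (cut-edge, removed-vertex) pairs, with a correction factor computable via $\poly(k)$ pair queries and an acceptance probability bounded below by $1/\poly(k)$. The only cosmetic difference is the order of the two random choices --- the paper first draws the vertex $x$ to remove (with probability proportional to the outside cut of $g\setminus\{x\}$), then a cut edge, and corrects by the multiplicity $|\{y \in V(g)\setminus x : y' \sim y\}|$, whereas you draw the cut edge first and then the vertex to remove.
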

\begin{proof}
To decide whether to follow the loop we just toss a fair coin. Let us now see how to transition to a neighbouring graphlet uniformly at random. Let $g \in V(\gk_k)$ be the current vertex of the walk and let $N(g)$ be the set of neighbors of $g$ in $\gk_k$.
%Note that $N(g)$ can be partitioned as
%\begin{align}
%    N(g)=\dot\bigcup_{x \in V(g)} N_x(g)
%\end{align}
%where $N_x(g) = \{g' \sim g : x \notin V(g')\}$.
%For each $x \in V(g)$, if $g \setminus x$, then $N_x(g)$ is empty by definition. Otherwise, 
For every $y \in V(g)$, consider the following cut in $G$:
\begin{align}
    C(g,y) = \Cut(y, V(G) \setminus V(g))
\end{align}
Clearly, for every edge $\{y,y'\} \in C(g,y)$, the graphlet $g \cup y' \setminus x$ is adjacent to $g$ in $\gk_k$, provided that $g \setminus x$ is connected. Moreover, $|C(g,y)|$ can be computed in time $\scO(k)$, as the difference between $d_y$ and the number of neighbors of $y$ in $G$.
%\begin{align}
    %C(g,x) = \Cut(V(g) \setminus x, V(G) \setminus V(g))
%\end{align}
%It is easy to see that, for every $\{u,y\} \in C(g,x)$, the graphlet $g \cup y \setminus x$ is adjacent to $g$ in $\gk_k$. Moreover, $c(g,x)=|C(g,x)|$ can be computed in time $\scO(k)$.
%Finally, we let $\cut(g) = \sum_{x \in g} \cut(g \setminus x)$.

Now, for every $x \in V(g)$ let $c(x) = \sum_{y \in V(g) \setminus x} |C(g,y)|$ if $g \setminus x$ is connected, and $c(x) = 0$ otherwise. Finally, let $c = \sum_{x \in V(g)} c(x)$. We draw $g'$ at random as follows. First, we draw $x \in V(g)$ at random with probability $c(x)/c$. Then, we draw $y \in V(g) \setminus x$ at random with probability $|C(g,y)|/c(x)$. Finally, we select an edge $\{y,y'\}$ uniformly at random in $C(g,y)$. To do this we just sample $y'$ uniformly at random from the neighbors of $y$ in $G$ until hitting on $V(G) \setminus V(g)$. This requires at most $k$ trials in expectation, since $y$ has at most $k-1$ neighbors in $V(g)$, and has at least one neighbor in $V(G) \setminus V(g)$, otherwise $|C(g,y)|/c(x)=0$ and we wouldn't have chosen $y$.
%\begin{align}
    %\frac{c(g, x, y)}{c(g)}
%\end{align}
%Then, we select $\{u,y\} \in C(g, x, y)$ uniformly at random. To do this we just sample $y'$ uniformly at random from the neighbors of $y$ until $y' \notin V(g)$. This requires at most $k$ trials in expectation, since $y$ has at most $k-1$ neighbors in $V(g)$ and at least one neighbor in $V(G) \setminus V(g)$.

Now consider any $g' \sim g$. Note that $g'$ is identified by the pair $(x,y')$ where $\{x\} = V(g) \setminus V(g')$ and $\{y'\} = V(g') \setminus V(g)$. The probability that the random process above generates $g'$ is:
%if and only if it draws a pair $(x,y)$ and then draws $y'$ from the neighbors of $y$ in $V(G) \setminus V(g)$. The probability of this event is:
\begin{align}
\frac{c(x)}{c} \cdot \sum_{\substack{y \in V(g) \setminus x \\ y' \sim y}} \frac{|C(g,y)|}{c(x)} \cdot \frac{1}{|C(g,y)|}
=
\frac{1}{c} \left|\{y \in V(g) \setminus x : y' \sim y\} \right|,
\end{align}
that is, equal for all $g'$ up to a multiplicative factor between $1$ and $k$. However, once we have drawn $(x,y')$ we can compute $|\{y \in V(g) \setminus x : y' \sim y\} |$ in time $\poly(k)$ and apply rejection sampling to make the output distribution uniform. The expected number of rejection trials is in $\poly(k)$ as well, and so is the expected running time of the entire process.
\end{proof}

We can now prove Theorem~\ref{thm:walk_algo}. Consider $\gk_{k-1}$. By construction, $\{u,v\} \in E(\gk_{k-1})$ if and only if $g = u \cup v \in V(\gk_{k-1})$. Recall from Lemma~\ref{lem:Tg} the set $T(g)$, and that $|T(g)| \le k^2$. Hence, if we draw from a $\scO\big(\frac{\epsilon}{k^2}\big)$-uniform distribution over $E(\gk_{k-1})$, and accept the sampled edge $\{u,v\}$ with probability $\frac{1}{T(g)}$ where $g=u \cup v$, the distribution of accepted graphlets will be $\epsilon$-uniform. Let then $X=\{X_t\}_{t \ge 0}$ be the lazy random walk over $\gk_{k-1}$, and for all $t \ge 0$ let $Y_t = X_t \cup X_{t+1}$. Then, $Y_t$ is $\scO\big(\frac{\epsilon}{k^2}\big)$-uniform over $E(\gk_{k-1})$ if $X_t$ is $\scO\big(\frac{\epsilon}{k^2}\big)$-uniform distribution over $V(\gk_{k-1})$. This holds since the distributions $\pi_t$ of $X_t$ and $\sigma_t$ of $Y_t$ satisfy $\sigma_t = M \pi_t$, where $M$ is a stochastic matrix. Since for stochastic matrices $\norm{1}{M} \le 1$, we have $\norm{1}{\sigma_t - \sigma} \le \norm{1}{M(\pi_t-\pi)} \le \norm{1}{M} \norm{1}{\pi_t-\pi}$ where $\pi$ and $\sigma$ are the stationary distributions of $Y_t$ and $X_t$. 
Thus we just need to run the walk over $\gk_{k-1}$ for $\tmix_{\epsilon_k}(\gk_{k-1})$ steps where $\epsilon_k = \Theta\big(\frac{\epsilon}{k^2}\big)$. From the proof of Theorem~\ref{thm:tgk} one can immediately see that $\tmix_{\epsilon_k}(\gk_{k-1}) = \kok \scO\left(\tmix_{\epsilon}(G) \, \rho(G)^{k-2}\lg \frac{n}{\epsilon} \right)$. (The $\frac{1}{k^2}$ factor in $\epsilon_k$ is absorbed by $k^{\scO(k)}$). Finally, by Lemma~\ref{lem:step_cost}, each step takes $\scO(\poly(k))$ time in expectation. This completes the proof.

\section{Uniform graphlet sampling}
\label{sec:ugs}
This section presents our uniform graphlet sampling algorithm, \UniformAlgo. The key idea of the algorithm is to make rejection sampling efficient. To understand how, let us first describe why rejection sampling is usually \emph{not} efficient. Suppose we have a generic random process that draws graphlets from $\vk_k$. For each graphlet $g \in \vk_k$ let $p(g)$ be the probability that the process yields $g$, and let $p^* = \min_{g \in \vk_k} p(g)$. In rejection sampling, when we draw $g$, we randomly \emph{accept} it with probability $p^* p(g)^{-1}$. In this way, the probability that $g$ is returned, which equals the probability that $g$ is both sampled \emph{and} accepted, is $p(g)\, p^* p(g)^{-1} = p^*$, which is independent of $g$. This makes the distribution of returned graphlets uniform regardless of $p$. The key problem is that $p^*$ may be very small --- which happens, for instance, if the random process samples graphlets by growing a random spanning tree around a high-degree vertex of $G$. In this case we can have $p^* = \scO(\Delta^{-(k-1)})$, so we may need $\Omega(\Delta^{k-1})$ trials before accepting a graphlet. Unfortunately, all known graphlet sampling algorithms based on rejection sampling suffer from this ``curse of rejection'', and indeed they may need time $\Omega(\Delta^{k-1})$ for sampling just one uniform graphlet in the worst case.

The main idea of \UniformAlgo\ is to circumvent the obstacle by sorting $G$. By doing this, we will virtually partition $\vk_k$ into $n$ buckets $\Va(1),\ldots,\Va(n)$, one for each node of $G$, in such a way that for each $v \in V(G)$ we will know $|\Va(v)|$ with good accuracy. This will constitute our preprocessing phase. In the sampling phase, we will pick $v$ with probability proportional to our estimate of $|\Va(v)|$, and we will sample almost-uniformly from $\Va(v)$. To this end, we note that sampling from $\Va(v)$ amounts to sampling a $k$-graphlet from the subgraph $G(v)$ of $G$ induced by $v$ and all nodes after $v$ in the order. This can be done efficiently since, as we will see, for our purposes $G(v)$ behaves like a regular graph. Moreover, we will be able to compute efficiently all the probabilities involved in this process. This will allow us to reject the sampled subgraph efficiently and with the correct probability, guaranteeing a truly uniform distribution. 

\subsection{A toy example: regular graphs}
Let us build the intuition with a toy example. Suppose that $G$ is $d$-regular. For simplicity suppose that $G$ is connected, too. To begin, we let $p(v)=\frac{1}{n}$ for all $v \in V(G)$, and choose $v$ according to $p$, i.e., uniformly at random. Note that $p(v)$ is roughly proportional to the number of $k$-graphlets containing $v$, which is easily seen to be between $\kokm d^{k-1}$ and $\kok d^{k-1}$, for all $v$. Once we have chosen $v$, we sample a graphlet containing $v$, by running the following random growing process. Set $S_1=\{v\}$, and for $i=2,\ldots,k$, build $S_i$ from $S_{i-1}$ by choosing a random edge in the cut between $S_{i-1}$ and the rest of $G$, and adding to $S_{i-1}$ the other endpoint of the edge. Denote by $p_v(g)$ the probability that $g$ is obtained when the random growing process starts at $v$, and by $p(g)=\sum_{v \in g} p(v) p_v(g)$ the probability that $g$ is obtained. It is easy to show that for any $g \in \vk_k$ we have:
\begin{align}
   \frac{1}{n} \kokm d^{-(k-1)} \le p(g) \le \frac{1}{n} \kok d^{-(k-1)} 
\end{align}
Now we design the rejection step. First, observe that by setting $p^* = \frac{1}{n} k^{-C k} d^{-(k-1)}$ with $C$ large enough, for all $g \in \vk_k$ we will have $p(g) \ge p^*$ and therefore $p^* p(g)^{-1} \le 1$. Moreover, in time $\kok$ we can easily compute $p(g)$ for any given $g$ (this is shown below). In summary, once we have sampled $g$, we can efficiently compute $p_{acc}(g) = p^* p(g)^{-1} \le 1$. Then, we accept $g$ with probability $p_{acc}(g)$. The probability that $g$ is sampled \emph{and} accepted is $p(g) p_{acc}(g) = p^*$, which is independent of $g$. Therefore the distribution of the returned $k$-graphlets is uniform over $\vk_k$. Moreover, by the inequalities above we have $p_{acc}(g) \ge \kokm$, hence we will terminate after $\kok$ rejection trials in expectation. Thus, when $G$ is $d$-regular, we have an efficient uniform graphlet sampling algorithm.

\subsection{The preprocessing phase}
Let $G$ be an arbitrary graph. Our goal is to ``regularize'' $G$, in a certain sense, so that we can apply the scheme of the toy example above. Let us start by introducing some notation. Given an order $\prec$ over $V$, we denote by $G(v)=G[\{u \succeq v\}]$ the subgraph of $G$ induced by $v$ and all nodes after it in the order, and for all $u \in G(v)$, we denote by $d(u|G(v))$ the degree of $u$ in $G(v)$. Before moving to the algorithm, we introduce a definition that is central to the rest of the work.
\begin{definition}
$\prec$ is an $\alpha$-degree-dominating order ($\alpha$-DD order) of $G$ if for all $v$ and all $u \succ v$ we have $d(v|G(v)) \ge \alpha\, d(u|G(v))$.
\end{definition}

Our algorithm starts by computing a $1$-dominating order for $G$, which guarantees that $v$ has the largest degree in $G(v)$. Such an order can be easily computed in time $\scO(n+m)$ by repeatedly removing from $G$ the node of maximum degree~\cite{Matula83-Ordering}. (Later on, we will need to compute approximate $\alpha$-DD orders for $\alpha < 1$ in time roughly $\scO(n \lg n)$, which is not as easy). After computing our $1$-DD order $\prec$, in time $\scO(n+m)$ we also sort the adjacency lists of $G$ accordingly, via bucket sort. This will be used to find efficiently the edges of $G(v)$ via binary search. 

Next, we virtually partition graphlets into \emph{buckets}.
\begin{definition}
The bucket $\Va(v)$ is the set of graphlets whose smallest node according to $\prec$ is $v$.
\end{definition}
\noindent
Clearly, the buckets $\Va(v)$ form a partition of $\vk_k$. Similarly to $d^{k-1}$ in the toy example above, here $d(v|G(v))^{k-1}$ gives a rough estimate of the number of graphlets in $\Va(v)$. Indeed, if $\Va(v) \ne \emptyset$, then we can easily show that:
\begin{align}
    \kokm |\Va(v)| \le d(v|G(v))^{k-1} \le \kok |\Va(v)|
\end{align}
It is easy to see that the $d(v|G(v))$ are known after computing $\prec$. Hence, we will use $d(v|G(v))^{k-1}$ as a proxy for $|\Va(v)|$. In time $\scO(n)$ we compute:
\begin{align}
 Z &= \sum_{v \in V : \Va(v) \ne \emptyset} d(v|G(v))^{k-1}
 \\ p(v) &= \Ind{\Va(v) \ne \emptyset} \cdot \frac{d(v|G(v))^{k-1}}{Z}, \quad \forall v \in V(G)
\end{align}
This defines a distribution $p$ over the buckets that we will use in the sampling phase. Note that, to compute $Z$ and $p$, we must detect whether $\Va(v) = \emptyset$ for each $v$. To this end, we use a BFS from $v$ that explores $G(v)$ and stops as soon as $k$ nodes are found. We can show that this takes time $\scO(k^2 \lg k)$ by listing edges from the end of the adjacency lists. This makes our overall preprocessing time grow to $\scO(n k^2 \lg k +m)$, as claimed in Theorem~\ref{thm:uniform}. This concludes our preprocessing phase. See Algorithm~\ref{alg:preprocess} for the pseudocode.
\begin{algorithm}[h!]
\caption{\Preprocess($G$)} %\Comment{(see Lemma~\ref{lem:dd} for the analysis)}}
\label{alg:preprocess}
\begin{algorithmic}[1]
\State compute $\prec$ using a bucketing algorithm  \Comment{$\scO(n+m)$}
\State sort the adjacency lists of $G$ according to $\prec$  \Comment{$\scO(n+m)$} 
\State compute $\dg{v}{G(v)}$ for all $v \in V(G)$  \Comment{$\scO(n+m)$}  
\For{each $v \in V$}
\State check if $\Va(v) \ne \emptyset$ with a BFS  \Comment{$\scO(k^2 \lg k)$}
\State let $b_v \assign \Ind{\Va(v) \ne \emptyset} \cdot \dg{v}{G(v)}^{k-1}$
\EndFor
\State \Return $\prec$ and $\{b_v\}_{v \in V}$
\end{algorithmic}
\end{algorithm}

\begin{lemma}
\label{lem:dd}
\Preprocess$(G)$ can be implemented to run in time $\scO(n k^2 \lg k + m)$. The output order $\prec$ is a $1$-DD order for $G$ and thus satisfies $\dg{v}{G(v)} \ge \dg{u}{G(v)}$ for all $u \succ v$. The output estimates $b_v > 0$ satisfy $\frac{b_v}{|\Va(v)|} \in \big[k^{-\scO(k)},k^{\scO(k)}\big]$.
\end{lemma}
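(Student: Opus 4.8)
The plan is to establish the three claims of Lemma~\ref{lem:dd} in turn: that $\prec$ is a $1$-DD order, that \Preprocess\ runs in time $\scO(nk^2\lg k + m)$, and that the estimates $b_v$ approximate $|\Va(v)|$ within a $\kok$ factor.

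For the order, I would observe that the bucketing algorithm of line~1 --- repeatedly extract a vertex of maximum degree from the current graph, as in~\cite{Matula83-Ordering} --- yields exactly a $1$-DD order: when $v$ is removed, the current graph is $G[\{u \succeq v\}] = G(v)$, and $v$ has maximum degree in it, so $\dg{v}{G(v)} \ge \dg{u}{G(v)}$ for all $u \succ v$. Implemented with a degree-bucketed priority queue this takes $\scO(n+m)$, and it also produces all the values $\dg{v}{G(v)}$ (the degree of $v$ at its removal time) for free; sorting the adjacency lists by $\prec$-rank (line~2) takes another $\scO(n+m)$ by bucket sort. So these steps cost $\scO(n+m)$ in total.

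The only nontrivial part of the running time is the $\scO(k^2\lg k)$ bound for the test ``$\Va(v)\ne\emptyset$'' of line~5. I would first show that this test is correctly answered by running a BFS from $v$ inside $G(v)$ and checking whether it discovers at least $k$ vertices: if $g\in\Va(v)$ then $g$ is a connected subgraph of $G(v)$ containing $v$, so all $k$ of its vertices are reached; conversely, if the BFS discovers a set $D$ of exactly $k$ vertices then $D$ is connected in $G(v)$ and has minimum $v$, so $G(v)[D]\in\Va(v)$. It therefore suffices to run BFS from $v$ with a cap of $k$ discovered vertices. For the cost I would argue: (i) at most $k$ vertices are ever discovered, hence at most $k$ are processed; (ii) when processing a vertex $u$ we read its adjacency list from the end, stopping at the first neighbour $\prec v$, so we only ever touch neighbours lying in $G(v)$ (plus one); and (iii) among those neighbours at most $k$ can be already discovered, and as soon as $k$ vertices are discovered in total we halt, so only $\scO(k)$ neighbours of $u$ are examined. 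Keeping the discovered set in a balanced search tree makes each membership test and insertion cost $\scO(\lg k)$, giving $\scO(k\lg k)$ per processed vertex and $\scO(k^2\lg k)$ per call, hence $\scO(nk^2\lg k)$ over all $v$; adding the earlier $\scO(n+m)$ gives the claimed bound.

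Finally, for the estimates, fix $v$ with $\Va(v)\ne\emptyset$ and write $d=\dg{v}{G(v)}$ (note that $\Va(v)\ne\emptyset$ forces $d\ge 1$, so indeed $b_v=d^{k-1}>0$); I must show $\kokm|\Va(v)|\le d^{k-1}\le\kok|\Va(v)|$. For the upper bound on $|\Va(v)|$, note that since $\prec$ is $1$-DD the maximum degree of $G(v)$ equals $d$, so when growing a tree from $v$ inside $G(v)$ one vertex at a time, the $j$-th added vertex has at most $jd$ choices (the cut of the current $j$-vertex set has at most $jd$ edges); thus there are at most $(k-1)!\,d^{k-1}\le\kok\,d^{k-1}$ growth sequences, and since every $g\in\Va(v)$ is realised by at least one such sequence (order a spanning tree of $G(v)[g]$ rooted at $v$), we get $|\Va(v)|\le\kok\,d^{k-1}$. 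For the lower bound, if $d\ge k-1$ then each $(k-1)$-subset $S$ of the $G(v)$-neighbourhood of $v$ gives a distinct graphlet $\{v\}\cup S\in\Va(v)$ (it contains the star centred at $v$, hence is connected, and $v$ is its minimum), whence $|\Va(v)|\ge\binom{d}{k-1}\ge(d/(k-1))^{k-1}$ and so $d^{k-1}\le(k-1)^{k-1}|\Va(v)|\le\kok|\Va(v)|$; and if $d<k-1$ then $d^{k-1}<(k-1)^{k-1}\le\kok$ while $|\Va(v)|\ge 1$, so again $d^{k-1}\le\kok|\Va(v)|$. Combining the two bounds proves the claim. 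The step I expect to be the main obstacle is the $\scO(k^2\lg k)$ analysis of the emptiness test: one must argue that reading adjacency lists ``from the end'' together with the $k$-vertex cap keeps the BFS from ever touching more than $\scO(k)$ edges per processed vertex, independently of $\Delta$ and $n$, and that the $k$-vertex cutoff via spanning trees is precisely what ties the test to $\Va(v)\ne\emptyset$.
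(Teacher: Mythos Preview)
Your proof is correct and follows essentially the same approach as the paper's. The only cosmetic difference is that the paper defers the bound $\frac{b_v}{|\Va(v)|}\in[\kokm,\kok]$ to an appendix lemma (Lemma~\ref{lem:count_bound}), whereas you prove it inline via the same star-counting and growth-sequence arguments; your BFS analysis (reading adjacency lists from the end, stopping at the first neighbour $\prec v$, and bounding examined neighbours by $O(k)$ via the $k$-discovery cap) matches the paper's almost verbatim.
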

\begin{proof}
Consider the first two lines of \Preprocess$(G)$. Computing $\prec$ takes time $\scO(n+m)$ by a standard bucketing technique~\cite{Matula83-Ordering}. Sorting the adjacency lists of $G$ according to $\prec$ takes time $\scO(n+m)$ via bucket sort. With one final $\scO(n+m)$-time pass we compute, for each $v$, the position $i_v$ of $v$ in its own sorted adjacency list, from which we compute $\dg{v}{G(v)} = d_v-i_v$ in constant time for each $v$.

Now consider the main loop. Clearly, $\Va(v) \ne \emptyset$ if and only if $k$ nodes are reachable from $v$ in $G(v)$. Thus, we perform a BFS in $G(v)$, starting from $v$, and stopping as soon as $k$ pushes have been made on the queue (counting $v$ as well). To keep track of which nodes have been pushed we can use a dictionary; as we need to hold at most $k$ entries, every insertion and lookup will take time $\scO(\lg k)$. After popping a generic node $u$ from the queue, we proceed as follows. We take every neighbor $z$ of $u$ in reverse order (that is, according to $\succ$). If $z \succ v$ and $z$ has not been pushed, then we push it. As soon as we encounter $z \prec u$, we stop and pop a new node from the queue. Suppose that, after popping $u$, we examine $\ell$ of its neighbors. Then, at least $\ell+1$ nodes must have been pushed so far, since $u$ itself was pushed, and every neighbor examined was certainly pushed (before, or when examined). Thus, for every node $u$ we examine at most $k-1$ neighbors (since we stop the whole algorithm as soon as $k$ nodes are pushed). Since we push at most $k$ nodes in total, we also pop at most $k$ nodes in total. Therefore, we examine a total of $\scO(k^2)$ nodes. Thus, we spend a total time $\scO(k^2 \lg k)$. Summarizing, we obtain a total time bound of $\scO(n+m)+\scO(n k^2 \lg k) = \scO(n k^2 \lg k + m)$.

The claim on $b_v$ follows by Lemma~\ref{lem:count_bound}, since in $G(v)$ node $v$ has maximum degree $\dg{v}{G(v)}$.
\end{proof}

\subsection{The sampling phase}
The sampling phase starts by drawing a node $v$ from the distribution $p$. Using the alias method~\cite{Vose91-alias}, each such random draw takes time $\scO(1)$ after a $\scO(n)$-time-and-space preprocessing (which we do in the preprocessing phase). Once we have drawn $v$, we draw a graphlet from $\Va(v)$ using what we call the \emph{random growing process} at $v$. This is the same process used in the toy example above, but restricted to the subgraph $G(v)$. 
\begin{definition}
The \emph{random growing process} at $v$ is defined as follows: $S_1=\{v\}$, and for each $i=1,\ldots,k-1$, $S_{i+1}=S_i \cup \{u_i,u_i'\}$, where $\{u_i,u_i'\}$ is uniform random over $\Cut(S_i, G(v)\setminus S_i)$.
\end{definition}

Now we make two key observations. First, the random growing process at $v$ returns a roughly-uniform random graphlet of $\Va(v)$, and can be implemented efficiently thanks to the sorted adjacency lists of $G$. Second, the probability that the random growing process returns a specific graphlet $g$ can be computed efficiently, thanks again to the sorted adjacency lists. These two facts are proven below; before, however, we need a technical result about the size of the cuts in $G(v)$.
\begin{lemma}
\label{lem:cuti}
Let $V(G)$ be sorted according to a $1$-DD order. Consider any sequence of sets $S_1,\ldots,S_{k-1}$ such that $S_1=\{v\}$, that $G(v)[S_i]$ is connected for all $i$, and that $S_i = S_{i-1} \cup \{s_i\}$ for some $s_i \in G(v)$. Then for all $i=1,\ldots,k-1$:
\begin{align}
\label{eq:cuti}
    i^{-1} \le \frac{|\Cut(S_i, G(v)\setminus S_i)|}{d(v|G(v))} \le i 
\end{align}
\end{lemma}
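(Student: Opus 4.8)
The plan is to prove the two inequalities in~\eqref{eq:cuti} separately, exploiting the defining property of a $1$-DD order, namely that $d(v|G(v)) = \dg{v}{G(v)}$ is the \emph{maximum} degree in $G(v)$.

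For the upper bound, observe that every edge in $\Cut(S_i, G(v) \setminus S_i)$ has an endpoint in $S_i$, so $|\Cut(S_i, G(v)\setminus S_i)| \le \sum_{u \in S_i} d(u|G(v))$. Since $S_i$ has exactly $i$ vertices (it grows by one vertex per step from the singleton $\{v\}$), and since each $d(u|G(v)) \le d(v|G(v))$ by the $1$-DD property, the right-hand side is at most $i \cdot d(v|G(v))$, which gives the upper bound.

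For the lower bound, the key point is that $v \in S_i$ and that $G(v)[S_i]$ is connected. I would argue that $S_i$ ``blocks'' at most $i-1$ of the $d(v|G(v))$ edges incident to $v$ from leaving $S_i$: more precisely, since $|S_i \setminus \{v\}| = i-1$, at most $i-1$ neighbors of $v$ lie inside $S_i$, so at least $d(v|G(v)) - (i-1)$ of the edges incident to $v$ cross the cut $\Cut(S_i, G(v)\setminus S_i)$. Hence $|\Cut(S_i, G(v)\setminus S_i)| \ge d(v|G(v)) - (i-1) \ge d(v|G(v))/i$, where the last step uses $d(v|G(v)) - (i-1) \ge d(v|G(v))/i$, which rearranges to $(i-1)\big(d(v|G(v)) - 1\big) \ge 0$ and is therefore valid whenever $d(v|G(v)) \ge 1$ (and if $d(v|G(v)) = 0$ then $G(v)$ is a single vertex, $k=1$, and the statement is vacuous). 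This completes both directions.

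The only mildly delicate point — and the one I would be most careful about — is the lower bound's bookkeeping: one must be sure that $v$ itself is always in $S_i$ (which follows from $S_1 = \{v\}$ and the monotone growth $S_i = S_{i-1} \cup \{s_i\}$), and that the edges incident to $v$ and crossing the cut are genuinely counted in $\Cut(S_i, G(v)\setminus S_i)$, i.e.\ that these are edges of $G(v)$, not of $G$. Since $d(v|G(v))$ is by definition the degree of $v$ within $G(v)$, every such edge has both endpoints in $G(v)$, so a neighbor of $v$ in $G(v)$ that is not in $S_i$ lies in $G(v) \setminus S_i$, and the corresponding edge is in the cut; there is no subtlety beyond that. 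Neither the connectivity of $G(v)[S_i]$ nor the exact identity of the $s_i$ is needed for this argument — connectivity is only there to guarantee the random growing process can actually realize such a sequence — so the proof is essentially a two-line degree count in each direction.
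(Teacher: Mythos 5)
Your upper bound is exactly the paper's argument and is fine. The lower bound, however, has a genuine gap, and it sits precisely at the step you flagged as routine algebra. Write $d = d(v|G(v))$. The inequality $d - (i-1) \ge d/i$ does \emph{not} rearrange to $(i-1)(d-1)\ge 0$: multiplying through by $i$ gives $id - i(i-1) \ge d$, i.e.\ $(i-1)(d-i)\ge 0$, which requires $d \ge i$ (for $i\ge 2$). When $d < i$ the inequality is simply false --- e.g.\ $d=1$, $i=2$ gives $d-(i-1)=0$ but $d/i=\tfrac12$ --- and in that regime your counting bound $|\Cut(S_i, G(v)\setminus S_i)| \ge d-(i-1)$ is zero or negative, hence vacuous. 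So your proof only covers the case $d \ge i$.

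The paper closes exactly this case split: when $d \le i$ one has $d/i \le 1$, and the lower bound follows from $|\Cut(S_i,G(v)\setminus S_i)| \ge 1$, which the paper extracts from the connectivity hypothesis (the sequence extends to a connected $S_k$, so each $S_i$ has at least one edge leaving it into $G(v)\setminus S_i$); when $d \ge i$ it runs your degree count. This also shows that your closing remark --- that connectivity of the $G(v)[S_i]$ is not needed for the bound --- is wrong: without some guarantee that the cut is nonempty, the lower bound fails for small $d$ (and indeed fails outright if $S_i$ exhausts the component of $v$ in $G(v)$). To repair your proof, add the case $d < i$ and justify $|\Cut(S_i,G(v)\setminus S_i)|\ge 1$ there; the rest stands.
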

\begin{proof}
Let $c_i=|\Cut(S_i, G(v)\setminus S_i)|$ for short. For the lower bound, note that $\cut_i \ge 1$ for all $i=1,\ldots,k-1$ since $G[S_k]$ is connected. Moreover $\cut_1=\dg{v}{G(v)}$ since $S_1=\{v\}$. Now, if $\cut_1 \le i$ then $\frac{\cut_1}{i} \le 1$ and therefore $\cut_i \ge \frac{\cut_1}{i}$. If instead $\cut_1 \ge i$, since the degree of $v$ in $S_i$ is at most $i-1$, then the cut of $S_i$ still contains at least $\cut_1-|S_i|+1 \ge \cut_1-(i-1)$ edges. Therefore:
\begin{align}
 \cut_i \ge \cut_1-(i-1) \ge \cut_1 - (i-1)\frac{\cut_1}{i} = \frac{\cut_1}{i} = \frac{1}{i}\cdot\dg{v}{G(v)}
\end{align}
For the upper bound, note that:
\begin{align}
 \cut_i \le \sum_{u \in S_i} \dg{u}{G(v)} \le \sum_{u \in S_i} \dg{v}{G(v)} = i \cdot\dg{v}{G(v)}
\end{align}
where we used the fact that $v$ is the maximum-degree node of $G(v)$.
\end{proof}

\iffalse
\begin{lemma}
\label{lem:cuti2}
Assume that $V(G)$ is sorted by an $(\alpha,\beta)$-DD order $(\prec,\ba)$ and fix any $v \in V(G)$ such that $b_v>0$.
Consider any sequence of sets $S_1,\ldots,S_{k-1}$ such that $S_1=\{v\}$, that $G(v)[S_i]$ is connected for all $i$, and that $S_i = S_{i-1} \cup \{s_i\}$ for some $s_i \in G(v)$.
Then, for all $i=1,\ldots,k-1$, we have:
\begin{align}
\label{eq:cuti2}
    \frac{1}{i} \cdot d(v|G(v)) \le |\Cut(S_i, G(v)\setminus S_i)| \le \frac{i}{\alpha} \cdot d(v|G(v))
\end{align}
\end{lemma}
\begin{proof}
The lower bound holds by the same proof of Lemma~\ref{lem:cuti}.
For the upper bound, each $u \in S_i \setminus \{v\}$ satisfies $u \succ v$.
Thus, since $b_v > 0$, by Definition~\ref{def:ab_order} we have $\dg{u}{G(v)} \le \frac{1}{\alpha} \dg{v}{G(v)}$.
This holds for $u=v$ as well since $\alpha \le 1$.
Therefore:
\begin{align}
 \cut_i \le \sum_{u \in S_i} \dg{u}{G(v)} \le \sum_{u \in S_i} \frac{1}{\alpha} \dg{v}{G(v)} \le \frac{i}{\alpha} \cdot\dg{v}{G(v)}
\end{align}
concluding the proof.
\end{proof}
\fi

Algorithms \sampleS$(G, v)$ and \computeP$(G, S)$ below gives the pseudocode of the random growing process and of the algorithm for computing the probability that the process returns a particular graphlet. Lemma~\ref{lem:subgraph_sample} shows that \sampleS$(G, v)$ can be implemented efficiently and that it returns a graphlet that is roughly uniform. Lemma~\ref{lem:prob_compute} shows that \computeP$(G, S)$ is correct and efficient.

\begin{algorithm}
\caption{\sampleS$(G, v)$}
\label{alg:sample_S}
\begin{algorithmic}[1]
\State $S_1 \assign \{v\}$
\For{$i = 1,\ldots,k-1$}
\For{$u \in S_{i}$}
\State $\cut_i(u) \assign \dg{u}{G(v)} - $(degree of $u$ in $G[S_i]$) %\Comment{cut between $u$ and $G(v) \setminus S_i$}
\EndFor
\State draw $u$ with probability $\frac{\cut_i(u)}{\sum_{z \in S_{i}}\cut_i(z)}$ \label{line:u}
\State draw $u'$ u.a.r.\ from the neighbors of $u$ in $G(v) \setminus S_i$
\label{line:u1}
\State $S_{i+1} \assign S_{i} \cup \{u'\}$ 
\EndFor
\State \Return $S_k$
\end{algorithmic}
\end{algorithm}

\begin{algorithm}[h!]
\caption{\computeP$(G, S=\{v,u_2,\ldots,u_k\})$}
\label{alg:compute_P}
\begin{algorithmic}[1]
\State $p \assign 0$
\For{each permutation $\sigma=(\sigma_2,\ldots,\sigma_{k})$ of $u_2,\ldots,u_k$} \label{line:for_sigma}
\State $p_{\sigma} \assign 1$
\For{each $i = 1,\ldots,k-1$}
\State $S_i \assign \{v,\sigma_2,\ldots,\sigma_i\}$
\State $n_i \assign $ number of neighbors of $\sigma_{i+1}$ in $S_i$
\State $\cut_i(u) \assign \dg{u}{G(v)} - $(degree of $u$ in $G[S_i]$) %\Comment{cut between $u$ and $G(v) \setminus S_i$}
\State $p_{\sigma} \assign p_{\sigma} \cdot \frac{n_i}{\cut_i}$
\EndFor
\State $p \assign p + p_{\sigma}$
\EndFor
\State \Return $p$
\end{algorithmic}
\end{algorithm}

\begin{lemma}
\label{lem:subgraph_sample}
Suppose $G$ is sorted according to a $1$-DD order and choose any $v$ such that $\Va(v) \ne \emptyset$. Then \sampleS$(G, v)$ runs in time $\scO(k^3 \lg \Delta)$. Moreover, for any $g=G[S] \in \Va(v)$, the probability $p(S)$ that \sampleS$(G, v)$ returns $S$ is between $\frac{1}{(k-1)!} \dg{v}{G(v)}^{-(k-1)}$ and $(k-1)!^3 \dg{v}{G(v)}^{-(k-1)}$.
\end{lemma}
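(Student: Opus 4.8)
The plan is to handle the two assertions—the $\scO(k^3 \lg \Delta)$ running time and the sandwich bound on $p(S)$—separately, with the probability bound being the substantive part.

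For the probability bound, the key observation is that a run of \sampleS$(G,v)$ that outputs $S$ is determined by the order $\sigma=(\sigma_2,\dots,\sigma_k)$ in which the vertices of $S\setminus\{v\}$ are inserted, and that only orders for which every prefix $\{v,\sigma_2,\dots,\sigma_i\}$ induces a connected subgraph of $G(v)$ can actually occur; call these \emph{valid} orders. First I would check that, conditioned on the state being $S_i$ at the start of iteration $i$, the vertex added is $\sigma_{i+1}$ with probability exactly $n_i/\cut_i$, where $n_i=|\{u\in S_i : u\sim\sigma_{i+1}\}|$ and $\cut_i=|\Cut(S_i,G(v)\setminus S_i)|$: indeed we draw $u\in S_i$ with probability $\cut_i(u)/\cut_i$ and then a uniform neighbor of $u$ in $G(v)\setminus S_i$, so the chance of hitting $\sigma_{i+1}$ is $\sum_{u\sim\sigma_{i+1}} \tfrac{\cut_i(u)}{\cut_i}\cdot\tfrac{1}{\cut_i(u)}=\tfrac{n_i}{\cut_i}$. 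This is precisely what \computeP\ computes, and it shows $p(S)=\sum_{\sigma}\prod_{i=1}^{k-1} n_i^{\sigma}/\cut_i^{\sigma}$ over valid $\sigma$; for an invalid $\sigma$ the first prefix that fails to be connected forces the corresponding $n$ to $0$, so summing over all permutations of $u_2,\dots,u_k$ gives the same value, matching \computeP.

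The rest of the probability argument is a sandwich. Since $g=G[S]\in\Va(v)$, every vertex of $S$ lies in $G(v)$ and every valid prefix $S_i$ is connected, so Lemma~\ref{lem:cuti} yields $i^{-1}\dg{v}{G(v)}\le \cut_i\le i\,\dg{v}{G(v)}$; moreover $1\le n_i\le |S_i|=i$, and the number of valid orders lies between $1$ (one exists because $g$ is connected, e.g.\ a BFS order of a spanning tree of $g$ rooted at $v$) and $(k-1)!$. Plugging these in, the lower bound is $p(S)\ge\prod_{i=1}^{k-1}\tfrac{1}{i\,\dg{v}{G(v)}}=\tfrac{1}{(k-1)!}\dg{v}{G(v)}^{-(k-1)}$, and the upper bound is $p(S)\le (k-1)!\prod_{i=1}^{k-1}\tfrac{i}{\dg{v}{G(v)}/i}=(k-1)!^3\,\dg{v}{G(v)}^{-(k-1)}$, as claimed.

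For the running time, the enabler is the sorted adjacency lists from \Preprocess: since $G(v)$ is the set of vertices $\succeq v$ and each list is sorted by $\prec$, the $G(v)$-neighbors of any $u$ form a suffix of $u$'s list, located by binary search in $\scO(\lg d_u)=\scO(\lg\Delta)$ time. Thus in each of the $k-1$ iterations, forming all $\cut_i(u)$ costs $\scO(k^2)$ (each needs $\dg{u}{G(v)}$, known from preprocessing, minus the degree of $u$ in $G[S_i]$, obtained with $\scO(k)$ pair queries), drawing $u$ costs $\scO(k)$, and drawing $u'$ costs $\scO(\lg\Delta+k)$—either by rejection-sampling uniform indices in the $G(v)$-suffix of $u$'s list (checking membership in $S_i$ via a dictionary; the success fraction $\cut_i(u)/\dg{u}{G(v)}$ is at least $1/2$ once $\dg{u}{G(v)}\ge 2k$ and at least $1/(k-1)$ otherwise, so $\scO(k)$ expected trials suffice), or, when $\dg{u}{G(v)}<2k$, by scanning the suffix outright. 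Summing over iterations gives $\scO(k^3+k\lg\Delta)=\scO(k^3\lg\Delta)$.

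The main obstacle is the bookkeeping around the multiplicity of growth orders: one must argue cleanly that the product $\prod n_i/\cut_i$ is the probability of a fixed run and coincides with \computeP, that invalid permutations contribute zero so the sum over permutations is legitimate, and that at least one valid order exists for the lower bound. Once these are in place, the estimates follow from Lemma~\ref{lem:cuti} together with the crude bounds $1\le n_i\le i$ and $1\le\#\{\text{valid orders}\}\le(k-1)!$, and the running time is straightforward given the sorted lists.
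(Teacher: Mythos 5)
Your proposal is correct and follows essentially the same route as the paper: the same decomposition of $p(S)$ over insertion orders with per-step probability $n_i/\cut_i$, the same use of Lemma~\ref{lem:cuti} together with $1\le n_i\le i$ and the count of at most $(k-1)!$ (and at least one) valid orders, and the same reliance on sorted adjacency lists and binary search for the running time. The only (immaterial) divergence is in drawing $u'$: the paper indexes directly into the at most $k$ known sublists of $u$'s adjacency list forming the cut, which is deterministic $\scO(k)$ time, whereas your rejection-sampling variant gives the bound only in expectation (and note that $\dg{u}{G(v)}$ is not precomputed for $u\ne v$ — it comes from the binary search you describe, not from the preprocessing table).
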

\begin{proof}
\emph{Running time.}
Consider one iteration of the main loop. For every $u \in S_i$, computing $\cut_i(u)$ takes time $\scO(k \lg \Delta)$. Indeed, in time $\scO(\lg \Delta)$ we locate the position of $v$ in the adjacency list of $u$, which subtracted from $d_u$ yields $\dg{u}{G(v)}$. Then, we compute the number of neighbors of $u$ in $S_i$ in time $\scO(k)$ using edge queries.
Thus, the cycle over $u \in S_i$ takes $\scO(k^2 \lg \Delta)$ in total.
Drawing $u$ takes time $\scO(k)$.
Finally, drawing $u'$ takes $\scO(k)$ as well. 
To see this, note that if $u$ had no neighbors in $S_i$, then we could just draw a node uniformly at random from the last $\dg{u}{G(v)}$ elements of the adjacency list of $u$.
However, $u$ has neighbors in $S_i$.
But we still know the (at most $k$) disjoint sublists of the adjacency lists containing the neighbors in the cut.
Thus we can draw a uniform integer $j \in [\cut_i(u)]$ and select the $j$-th neighbor of $u$ in $G(v) \setminus S_i$ in time $\scO(k)$.
This proves that one iteration of the main loop of \sampleS\ takes time $\scO(k^2 \lg \Delta)$.
Thus, \sampleS\ runs in time $\scO(k^3 \lg \Delta)$.

\emph{Probability.} Consider any $S$ such that $g=G[S] \in \Va(v)$. Thus $S$ is a $k$-node subset such that $v \in S$ and that $G[S]$ is connected. We compute an upper bound and a lower bound on the probability $p(S)$ that the algorithm returns $S$.

Clearly, there are at most $(k-1)!$ sequences of nodes that \sampleS$(G,v)$ can choose to produce $S$. Fix any such sequence, $v,u_2,\ldots,u_k$, and let $S_i=\{v,\ldots,u_i\}$. Let $\cut_i(u)=|\Cut(u, G(v) \setminus S_i)|$, and let $\cut_i = \sum_{u \in S_i} \cut_i(u) = |\Cut(S_i, G(v)\setminus S_i|$. By construction, $S_{i+1}$ is obtained by adding to $S_i$ the endpoint $u_{i+1}$ of an edge chosen uniformly at random in $\Cut(S_i,G(v) \setminus S_i)$. Thus, for any $u' \in G(v) \setminus S_i$, we have:
\begin{align}
 \Pr(u_{i+1}=u') = \frac{\dg{u'}{S_i \cup u'}}{\cut_i}  \le \frac{i^2}{\dg{v}{G(v)}}
\end{align}
where in the inequality we used the facts that $\dg{u'}{S_i \cup u'} \le i$ is the number of neighbors of $u'$ in $S_i$, and that $\cut_i \ge \frac{\dg{v}{G(v)}}{i}$ by Lemma~\ref{lem:cuti}.
Thus the probability that \sampleS$(G, v)$ draws the particular sequence $v,u_2,\ldots,u_k$ is at most $\prod_{i=1}^{k-1}\frac{i^2}{\dg{v}{G(v)}} = (k-1)!^2 \dg{v}{G(v)}^{-(k-1)}$.
Since there are at most $(k-1)!$ sequences, $p(S) \le (k-1)!^3 \dg{v}{G(v)}^{-(k-1)}$.

On the other hand, since $G[S]$ is connected, then there is at least one sequence $v,u_2,\ldots,u_k$ such that $\cut_i \ge 1$ for all $i=1,\ldots,k-1$, which therefore satisfies:
\begin{align}
\label{eq:LB_prob_Si}
 \Pr(u_{i+1}=u') = \frac{\dg{u'}{S_i \cup u'}}{\cut_i}  \ge \frac{1}{i \, \dg{v}{G(v)}}
\end{align}
where we used the facts that $\dg{u'}{S_i \cup u'} \ge 1$, since $u'$ is a neighbor of some $u \in S_i$, and that $\cut_i \le i \cdot \dg{v}{G(v)}$, by Lemma~\ref{lem:cuti}. So the probability that \sampleS$(G, v)$ draws this particular sequence is at least $\prod_{i=1}^{k-1} \frac{1}{i \dg{v}{G(v)}} = \frac{1}{(k-1)!} \dg{v}{G(v)}^{-(k-1)}$, which is a lower bound on $p(S)$.
\end{proof}

\begin{lemma}
\label{lem:prob_compute}
\computeP$(G, S=\{v,u_2,\ldots,u_k\})$ runs in time $\poly(k) \scO(k! \lg \Delta)$ and outputs the probability $p(S)$ that \sampleS$(G, v)$ returns $S$.
\end{lemma}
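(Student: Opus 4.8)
The plan is to show that \computeP$(G,S)$ exactly sums the contributions $p_\sigma$ over all orderings $\sigma$ of the non-root vertices of $S$, and that each such contribution equals the probability that \sampleS$(G,v)$ generates $S$ via that particular ordering. First I would fix the correctness claim. Recall from the proof of Lemma~\ref{lem:subgraph_sample} that \sampleS$(G,v)$ produces $S$ if and only if it draws some sequence $v,\sigma_2,\ldots,\sigma_k$ with $G(v)[\{v,\sigma_2,\ldots,\sigma_i\}]$ connected at every prefix, and that for each such admissible sequence the probability is $\prod_{i=1}^{k-1}\frac{\dg{\sigma_{i+1}}{S_i\cup\sigma_{i+1}}}{\cut_i} = \prod_{i=1}^{k-1}\frac{n_i}{\cut_i}$, where $n_i$ is the number of neighbors of $\sigma_{i+1}$ in $S_i$ and $\cut_i = \sum_{u\in S_i}\cut_i(u)$ is the size of the cut of $S_i$ in $G(v)$. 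I would observe that if the prefix is ever disconnected then $\sigma_{i+1}$ has no neighbor in $S_i$, so $n_i=0$ and $p_\sigma=0$; thus summing $p_\sigma$ over \emph{all} permutations (not just admissible ones) still yields exactly $p(S)$, which is what the algorithm does. The key check is that the quantities $\cut_i(u)$ and $n_i$ computed inside \computeP\ coincide with those arising in \sampleS: both use $\cut_i(u) = \dg{u}{G(v)} - (\text{degree of }u\text{ in }G[S_i])$, which is $|\Cut(u,G(v)\setminus S_i)|$ since every neighbor of $u$ in $G(v)$ is either inside $S_i$ or in the cut.

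Next I would bound the running time. The outer loop of line~\ref{line:for_sigma} runs over $(k-1)!$ permutations. Inside, for each of the $k-1$ values of $i$ we compute $n_i$ and $\cut_i$. Computing $n_i$ takes $\scO(k)$ pair queries. Computing $\cut_i(u)$ for a single $u$ requires locating $v$ in the (sorted) adjacency list of $u$ to obtain $\dg{u}{G(v)}$, which costs $\scO(\lg\Delta)$ by binary search on the sorted adjacency lists produced in preprocessing, plus $\scO(k)$ pair queries for the internal degree; summing over the $\le k$ vertices $u\in S_i$ gives $\scO(k\lg\Delta)$ per value of $i$, hence $\scO(k^2\lg\Delta)$ per permutation and $\scO(k!\,k^2\lg\Delta) = \poly(k)\,\scO(k!\lg\Delta)$ overall, matching the claim.

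I do not expect a serious obstacle here; the statement is essentially a bookkeeping lemma. The one point that needs a little care is the claim that \computeP\ genuinely returns $p(S)$ and not merely a quantity proportional to it: one must confirm that the per-step factor is the \emph{exact} transition probability $n_i/\cut_i$ of \sampleS\ (drawing a uniform random edge of the cut and taking its outer endpoint $\sigma_{i+1}$ hits $\sigma_{i+1}$ with probability equal to the number of cut-edges incident to $\sigma_{i+1}$, namely $n_i$, divided by $\cut_i$), and that distinct admissible orderings $\sigma$ correspond to disjoint events in the sample space of \sampleS, so that their probabilities add. Both are immediate from the definition of the random growing process, so the proof is short.
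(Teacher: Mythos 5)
Your proposal is correct and follows exactly the route the paper intends: the paper's own proof is just the remark that it is ``essentially the same as Lemma~\ref{lem:subgraph_sample}'', and what you have written is precisely the spelled-out version of that — the per-ordering probability $\prod_i n_i/\cut_i$ from the proof of Lemma~\ref{lem:subgraph_sample}, the observation that non-admissible orderings contribute $n_i=0$, disjointness of the orderings as events, and the $(k-1)!\cdot\scO(k^2\lg\Delta)$ cost accounting. No gaps.
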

\begin{proof}
The proof is essentially the same of Lemma~\ref{lem:subgraph_sample}.
\end{proof}

%\subsubsection{The Rejection Step}
We can now complete the sampling phase by performing a rejection step. After drawing $v$ with probability $p(v)$, we draw a random graphlet $g$ from $\Va(v)$ by invoking \sampleS$(G, v)$, and we compute $p_v(g)$ by invoking \computeP$(G,g)$. By construction, the overall probability that we have drawn $g$ is $p(g) = p(v) \cdot p_v(g)$. By the definition of $p(v)$ and by Lemma~\ref{lem:subgraph_sample}:
\begin{align}
\label{eq:check_1}
    \kokm \frac{1}{Z}  \le p(v) \cdot p_v(g) \le \kok \frac{1}{Z}
\end{align}
We therefore set the acceptance probability to:
\begin{align}
\label{eq:check_2}
p_{acc}(g) =\frac{k^{-C k}}{p(v) \cdot p_v(g) \cdot Z}    
\end{align}
This makes the probability that $g$ is sampled \emph{and} accepted equal to:
\begin{align}
    p(g) \cdot p_{acc}(g) = p(v) \cdot p_v(g) \cdot \frac{k^{-C k}}{p(v) \cdot p_v(g) \cdot Z} = \frac{k^{-Ck}}{Z}
\end{align}
which is independent of $g$ and thus constant over $\vk_k$. For $C$ large enough,~\eqref{eq:check_1} and~\eqref{eq:check_2} imply $p_{acc} \in[\kokm,1]$. Therefore, $p_{acc}(g)$ is a valid probability, and moreover, we will accept a graphlet after $\kok$ trials in expectation. As by Lemma~\ref{lem:subgraph_sample} and Lemma~\ref{lem:prob_compute} the running time of a single trial is $\poly(k) \lg \dmax$, the total expected time per sample is $\kok \lg \dmax$, as claimed in Theorem~\ref{thm:uniform}.

To wrap up, Algorithm~\ref{alg:uniform} gives the main body of \UniformAlgo.

\begin{algorithm}[h!]
\caption{\UniformAlgo$(G)$}
\label{alg:uniform}
\begin{algorithmic}[1]
\State $(\prec, \{b_v\}_{v \in V}) \assign$ \Preprocess($G$)
\State let $Z \assign \sum_{v \in V} b_v$
\State let $p(v) \assign \frac{b_v}{Z}$ for each $v \in V$
\State let $\beta_k(G) \assign \frac{1}{k! \,Z}$ 
\State
\Function{\textsc{sample}}{\,}
\While{true}
\State draw $v$ from the distribution $p$
\State $S \assign$ \sampleS$(G, v)$
\State $p(S) \assign$ \computeP$(G, S)$
\State with probability $\frac{\beta_k(G)}{p(v)\, p(S)}$ \Return $S$
\EndWhile
\EndFunction
\end{algorithmic}
\end{algorithm}

\section{Epsilon-uniform graphlet sampling}
\label{sec:epsugs}
This section describes our $\epsilon$-uniform graphlet sampling algorithm, \EpsilonAlgo. At a high level, \EpsilonAlgo\ is an adaptation of \UniformAlgo. To begin, we observe that \UniformAlgo\ relies on the following key ingredients. First, the vertices of $G$ are sorted according to a $1$-DD order $\prec$, which ensures that each subgraph $G(v)$ behaves like a regular graph for what concerns sampling (Lemma~\ref{lem:subgraph_sample}). Second, the edges of $G$ are sorted according to $\prec$ as well, which makes it possible to compute the size of the cuts $|\Cut(u, G(v)\setminus S_i)|$ in time proportional to $\lg \dmax$ (Lemma~\ref{lem:subgraph_sample} and~\ref{lem:prob_compute}). Unfortunately, both ingredients require a $\Theta(m)$-time preprocessing. To reduce the preprocessing time to $\scO(n \lg n)$, we introduce:
\begin{enumerate}[itemsep=2pt,leftmargin=2em]
    \item A preprocessing routine that computes w.h.p.\ an approximate $\alpha$-DD order, together with good bucket size estimates. By ``approximate'' we also mean that some buckets might be erroneously deemed empty, but we guarantee that those buckets contain a fraction $\le \epsilon$ of all graphlets.
    \item A sampling routine that emulates the one of \UniformAlgo, but replaces the exact cut sizes with additive approximations. These approximations are good enough that, with good probability, \EpsilonAlgo\ behaves as \UniformAlgo, including the rejection step.
\end{enumerate}
Achieving these guarantees is not just a matter of sampling and concentration bounds. For instance, to obtain an $\alpha$-DD order, we cannot just sub-sample the edges of $G$ and compute the $1$-DD order on the resulting subgraph: the sorting process would introduce correlations, destroying concentration. Similarly, we cannot just compute a multiplicative estimate of $|\Cut(u, G(v)\setminus S_i)|$: without sorted lists this would require $\Omega(\dmax)$ queries, as we might have $d_u = \dmax$ and $|\Cut(u, G(v)\setminus S_i)|=1$. Similar obstacles arise in estimating $p_v(g)$.

\subsection{Approximating a degree-dominating order}
We introduce our notion of approximate degree-dominating order. In what follows, $\ba =(b_v)_{v \in V}$ is a vector of bucket size estimates.
\begin{definition}
\label{def:ab_order}
A pair $(\prec,\ba)$ where $\ba = (b_v)_{v \in V}$ is an $(\alpha,\beta)$-DD order for $G$ if:
\begin{enumerate}[itemsep=2pt,topsep=4pt,parsep=2pt,partopsep=2pt]
\item $\sum_{v : b_v > 0} |\Va(v)| \ge (1-\beta) \sum_{v} |\Va(v)|$
\item $b_v > 0$ $\,\Longrightarrow\,$ $\kokm\beta \le \frac{b_v}{|\Va(v)|} \le \kok \frac{1}{\beta}$ 
\item $b_v > 0$ $\,\Longrightarrow\,$ $\dg{v}{G(v)} \ge \alpha \, d_v \ge \alpha \, \dg{u}{G(v)}$ for all $u \succ v$
\item $v \prec u$ $\,\Longrightarrow\,$ $d_v \ge 3k \alpha\, d_u$
\end{enumerate}%\vspace*{2pt}
\end{definition}
\noindent Let us elaborate on this. The first property says that the buckets that are deemed nonempty hold a fraction $1-\beta$ of all graphlets. The second property says that every bucket that is deemed nonempty comes with a good estimate of its size. The third property says that $\prec$ is an $\alpha$-DD order if restricted to the buckets that are deemed nonempty, and gives an additional guarantee on $d_v$. The fourth property will be used later on. The idea is that, if we look only at buckets that are deemed nonempty, we will have guarantees similar to a $1$-DD order; but bear in mind that here the edges of $G$ will not be sorted, and this will complicate things significantly.

The algorithm below, \GraphSort$(G, \beta)$, computes efficiently an ($\alpha, \beta$)-DD order with $\alpha=\beta^{\frac{1}{k-1}}\frac{1}{6k^3}$. This will be enough for our purposes.
In the remainder we prove Lemma~\ref{lem:approx_order} and Lemma~\ref{lem:approx_order_time}, which give the guarantees of \GraphSort$(G, \beta)$.
%Lemma~\ref{lem:approx_order_simple} follows immediately, as well as the running time bound on the preprocessing phase of \EpsilonAlgo\ for all $k \ge 3$.
For technical reasons, instead of $\alpha$ the proofs and the algorithm use $\eta = \alpha k = \epsilon^{\frac{1}{k-1}} \frac{1}{6 k^2}$.
The intuition of the algorithm is the following. We start at round $t=0$ with $\prec_0$ being the order of $V(G)$ by nonincreasing degree; this corresponds to the optimistic guess that $d(v|G(v))=d_v$ for all $v$. Then, we take every $v \in V(G)$ in the order of $\prec_0$, and we check if $d(v|G(v))$ is indeed close of $d_v$. To this end we sample $\frac{1}{\eta^2} \log n$ random neighbors of $v$, for some appropriate $\eta$ and check how many are after $v$ in $\prec$. If that fraction is at least $\eta$, then we let $b_v=(d_v)^{k-1}$ and set $\prec_{t+1}=\prec_t$ unchanged. Otherwise, we let $b_v = 0$ and update $\prec_{t+1}$ from $\prec_t$ by pushing $v$ to its ``correct'' position. This is enough for vertices of sufficiently high degree, but not for those of small degree. Indeed, for those vertices $B(v)$ might be empty even though $d(v|G(v))$ is close to $d_v$, just because $d(v|G(v))$ is small in an absolute sense. Hence, for vertices of small degree we check whether $B(v) \ne \emptyset$ explicitly.

\label{apx:eps_code}
\begin{algorithm}[h!]
\caption{\GraphSort($G, \beta)$}
\label{alg:sketch}
\begin{algorithmic}[1]
\State let $\eta \assign \beta^{\frac{1}{k-1}} \frac{1}{6k^2}$ and $h \assign \Theta(\eta^{-2}\log{n})$
\State init $s_v \assign d_v$ for all $v \in V$ \State init ${\prec}$ so that $u \prec v \iff (s_u > s_v) \vee ((s_u = s_v) \wedge (u > v))$ \label{line:init_prec}
\For{each $v$ in $V$ in nonincreasing order of degree} 
\State sample $h$ neighbors $x_1,\ldots, x_h$ of $v$ u.a.r.\
\State let $X \assign \sum_{j=1}^h \Ind{x_j \succ v}$
\If{$X \ge 2 \eta h$}
\State let $b_v \assign (d_v)^{k-1}$
\Else
\State let $b_v \assign 0$ and $s_v \assign 3 \eta\, d_v$
\State update $\prec$ so that $u \prec v \iff (s_u > s_v) \vee ((s_u = s_v) \wedge (u > v))$ \label{line:update_prec}
\EndIf
\EndFor
\For{each $v : d_v \le k/\eta$} 
\If{$\Va(v) \ne \emptyset$} %\Comment{check via BFS}
\State compute $\dg{v}{G(v)}$ and let $b_v \assign \dg{v}{G(v)}^{k-1}$
\Else
\State let $b_v \assign 0$
\EndIf
\EndFor
\State \Return  $\prec$ and $\{b_v\}_{v \in V}$
\end{algorithmic}
\end{algorithm}

\begin{lemma}
\label{lem:approx_order}
With high probability \GraphSort$(G, \beta)$ returns an $(\alpha,\beta)$-DD order for $G$ with $\alpha=\beta^{\frac{1}{k-1}}\frac{1}{6k^3}$.
\end{lemma}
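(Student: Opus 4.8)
The goal is to verify, with high probability, each of the four clauses of Definition~\ref{def:ab_order} for the pair $(\prec,\ba)$ returned by \GraphSort$(G,\beta)$, writing $\eta=\alpha k=\beta^{\frac{1}{k-1}}\frac{1}{6k^2}$ throughout and fixing at the outset a tie-break rule (say larger id first) consistent with the processing order. I would first record two \emph{deterministic} facts about the order $\prec$, both obtained by unwinding the update on line~\ref{line:update_prec}. \textbf{(i) Monotonicity.} The key $s_v$ of a vertex only decreases: it is $d_v$ until $v$ is processed and drops to $3\eta d_v$ exactly when $v$ fails its test. Since $\prec$ sorts by nonincreasing $s$, a vertex only moves later in $\prec$; once $v$ is processed its order relative to every already-processed vertex is frozen; and when $v$ is processed every not-yet-processed vertex has degree $\le d_v$ and hence lies, and stays, after $v$. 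Therefore the final set $\{u\sim v:u\succ v\}$ \emph{equals} the set counted by the test when $v$ passes and is \emph{contained} in it when $v$ fails. \textbf{(ii) Key claim.} If $u$ fails its test, $v\prec u$, and $w\in N(u)$ with $w\succeq v$, then $w$ is already after $u$ at $u$'s test time: $s_w^{\mathrm{fin}}\le s_v^{\mathrm{fin}}\le d_u$, and by (i) the key of $w$ at $u$'s test time is $\ge s_w^{\mathrm{fin}}$ (and equals $d_w\le d_u$ if $w$ is processed after $u$), so $w$ lies after $u$ there.

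The probabilistic core is a single Chernoff bound: for each $v$, $X=\sum_{j=1}^h\Ind{x_j\succ v}$ is a sum of $h=\Theta(\eta^{-2}\log n)$ i.i.d.\ Bernoulli$(p_v)$ variables, where $p_v$ is the true fraction of $v$'s neighbors after $v$ in the order at $v$'s test time. With a large enough constant in $h$ and a union bound over all $v$, w.h.p.\ simultaneously: $v$ passes $\Rightarrow p_v\ge\eta$ and $v$ fails $\Rightarrow p_v<3\eta$. Combined with (i), w.h.p.\ every passing $v$ has $\dg{v}{G(v)}\ge\eta d_v$ in the \emph{final} order, and every failing $v$ has $\dg{v}{G(v)}<3\eta d_v$.

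From here the easy clauses follow. \emph{Clause 4} is deterministic: for $v\prec u$, a four-way split on whether $v$ and $u$ passed (key $d\cdot$) or failed (key $3\eta d\cdot$), using $s_v\ge s_u$, gives $d_v\ge 3\eta d_u=3k\alpha\, d_u$ in every case. \emph{Clause 3}: if $b_v>0$ then $v$ passed --- whence $\dg{v}{G(v)}\ge\eta d_v\ge\alpha d_v$ by the core bound --- or $d_v\le k/\eta$ and $\Va(v)\ne\emptyset$ --- whence $\dg{v}{G(v)}\ge1\ge\alpha d_v$; and for $d_v\ge\dg{u}{G(v)}$ with $u\succ v$, either $u$ was processed after $v$ (so $d_u\le d_v$) or before $v$, in which case $u$ must have failed and $s_u^{\mathrm{fin}}=3\eta d_u\le s_v^{\mathrm{fin}}\le d_v$, so by the key claim $\dg{u}{G(v)}$ is at most the number of $u$'s neighbors after $u$ at $u$'s test, which is $<3\eta d_u\le d_v$ (a few degenerate equal-key cases, including the case where $v$ itself failed and hence $d_u=d_v$, are immediate). \emph{Clause 2}: $b_v>0$ and Clause 3 make $v$ an $\alpha$-approximate maximum-degree vertex of $G(v)$, so Lemma~\ref{lem:count_bound}-type bounds give $\kokm\dg{v}{G(v)}^{k-1}\le|\Va(v)|\le\frac{\kok}{\beta}\dg{v}{G(v)}^{k-1}$ (the upper bound using $\max_u\dg{u}{G(v)}\le\alpha^{-1}\dg{v}{G(v)}$ and $\alpha^{-(k-1)}=\kok/\beta$); since $b_v$ is $d_v^{k-1}$ with $\eta d_v\le\dg{v}{G(v)}\le d_v$ (passing case) or exactly $\dg{v}{G(v)}^{k-1}$ (small-degree case), dividing yields $b_v/|\Va(v)|\in[\kokm\beta,\kok/\beta]$.

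The hard clause is \emph{Clause 1}: the buckets deemed empty miss only a $\beta$-fraction of the graphlets. If $\Va(v)\ne\emptyset$ but $b_v=0$, then $v$ is \emph{bad} --- it failed its test and $d_v>k/\eta$, since every low-degree vertex is re-checked exactly in the second loop --- so $\dg{v}{G(v)}<3\eta d_v$, and by Clause 4 every vertex of $G(v)$ has degree $O(d_v)$ there, giving a tree-growth bound $|\Va(v)|\le\poly(k)\,\dg{v}{G(v)}\,d_v^{k-2}\le\poly(k)\,\eta\,d_v^{k-1}$. The crux is to sum this over bad $v$ and beat $\beta|\vk_k|=\kok\eta^{k-1}|\vk_k|$, which I expect needs more than a size estimate: the argument should exploit that a bad $v$ has $\ge(1-3\eta)d_v$ neighbors appearing \emph{before} it in $\prec$ --- each a high-degree, hence graphlet-rich, vertex sitting in a nonempty earlier bucket --- and charge each $g\in\Va(v)$ to distinct graphlets in those earlier buckets (for instance by swapping the minimum vertex $v$ of $g$ for one of its earlier neighbors and dropping a spanning-tree leaf, producing many graphlets with a strictly smaller minimum), while keeping the multiplicity of this map $\poly(k)/\eta$ rather than $\dmax$. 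Bounding that multiplicity is the main obstacle; once it is in place, the value $\eta=\beta^{\frac{1}{k-1}}/(6k^2)$ is exactly what makes the counts close. A final union bound over the $\scO(n)$ Chernoff events completes the proof.
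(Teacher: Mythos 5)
Your treatment of the key monotonicity, the Hoeffding test, and properties (2)--(4) of Definition~\ref{def:ab_order} (your Clauses 2--4) matches the paper's argument in substance. The gap is in the coverage property (your Clause 1), and it is a real one, which you half-acknowledge. Two things go wrong. First, your upper bound $|\Va(v)|\le\poly(k)\,\dg{v}{G(v)}\,d_v^{k-2}\le\poly(k)\,\eta\,d_v^{k-1}$ for a bad $v$ extracts only \emph{one} factor of $\eta$, because after the first step you bound the remaining degrees by $O(d_v)$ using your Clause 4. Since $\beta=\Theta\big((k^2\eta)^{k-1}\big)$, a single factor of $\eta=\beta^{1/(k-1)}/(6k^2)$ can never beat $\beta$, and this is exactly why you are then driven to a charging map into earlier buckets whose multiplicity bound you explicitly cannot establish. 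Second, that charging map is unnecessary: the same key monotonicity you prove in your fact (i) yields (Observation~\ref{obs:domin} in the paper) that for a failed $v$, \emph{every} $u\succ v$ satisfies $\dg{u}{G(v)}\le s_{n+1}(v)=3\eta\,d_v$, i.e.\ the whole subgraph $G(v)$ has maximum degree at most $3\eta\,d_v$, so Lemma~\ref{lem:count_bound} gives $|\Va(v)|\le(k-1)!\,(3\eta\,d_v)^{k-1}$ --- all $k-1$ factors of $\eta$.

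With that bound, the paper closes Clause 1 by a global count rather than a bucket-to-bucket injection. Each bad $v$ has $d_v>k/\eta$, hence is the center of at least $\binom{d_v}{k-1}\ge d_v^{k-1}/(k-1)^{k-1}$ stars, each of which is a $k$-graphlet of $G$ contributing to $\sum_u|\Va(u)|$; since a star contains at most $k$ vertices, dividing by $k$ for overcounting gives $\sum_{v\,\mathrm{bad}}d_v^{k-1}\le k(k-1)^{k-1}\sum_u|\Va(u)|$. Combining with the upper bound above, the fraction of graphlets in bad buckets is at most $(3\eta)^{k-1}k(k-1)^{2(k-1)}\le\beta$ by the choice of $\eta$. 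So the missing ingredient is not a multiplicity-bounded swap-and-drop map but the uniform degree bound on all of $G(v)$; once you have it, the rest is elementary and no relation between bad buckets and earlier buckets is needed.
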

\begin{lemma}
\label{lem:approx_order_time}
\GraphSort$(G, \beta)$ can be implemented to run in time $\scO\Big(\beta^{-\frac{2}{k-1}} k^6 \, n \lg n\Big)$.
\end{lemma}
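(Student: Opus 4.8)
The plan is to show that $\prec$ can be maintained \emph{implicitly}, and that the adjacency lists of $G$ are never sorted (sorting them would already cost $\Omega(m)$). Throughout the execution we store only the array $(s_v)_{v \in V}$: since by construction $u \prec w$ holds exactly when $(s_u > s_w) \vee (s_u = s_w \wedge u > w)$, any query ``$x \prec y$?'' is answered in $\scO(1)$ time from $s_x, s_y, x, y$, and both lines that ``update $\prec$'' amount to a single assignment $s_v \leftarrow 3\eta d_v$. The order $\prec$ is materialised only once, at the end, by sorting $V$ by the key $(s_v, v)$, which costs $\scO(n \lg n)$. The running time then splits into: (i) sorting $V$ by degree to drive the first loop, $\scO(n \lg n)$ (or $\scO(n)$ with a counting sort, since degrees lie in $\{0,\dots,n-1\}$); (ii) the first loop; (iii) the second loop; (iv) the final sort, $\scO(n \lg n)$.

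For the first loop, each pass over a vertex $v$ draws $h$ uniform random neighbors of $v$ in $\scO(h)$ time (a uniform index in $[d_v]$ followed by a neighbor query, each $\scO(1)$), evaluates $X = \sum_{j=1}^h \Ind{x_j \succ v}$ with $h$ comparisons of cost $\scO(1)$ each, and possibly sets $s_v \leftarrow 3\eta d_v$; computing the integer $(d_v)^{k-1}$ costs $\poly(k)$ and is absorbed, exactly as in Lemma~\ref{lem:dd}. Hence one pass costs $\scO(h) = \scO(\eta^{-2}\lg n)$ and the loop costs $\scO(n\,\eta^{-2}\lg n)$.

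The delicate part is the second loop, where for each $v$ with $d_v \le k/\eta$ we must decide whether $\Va(v)=\emptyset$ \emph{without} sorted adjacency lists, so the $\scO(k^2 \lg k)$ BFS of Lemma~\ref{lem:dd} does not apply directly. Instead we run a BFS from $v$ inside $G(v)$ that halts as soon as $k$ vertices have been pushed: when a vertex $u$ is popped we scan \emph{all} $d_u$ of its neighbors, retain those $z$ with $z \succeq v$ (an $\scO(1)$ test on $(s_z,z)$ versus $(s_v,v)$), and use a size-$\le k$ dictionary (cost $\scO(\lg k)$ per operation) to avoid re-pushing. The crucial point is that after the first loop $s_u \in \{d_u,\,3\eta d_u\}$ for every $u$, and since $\eta \le \tfrac13$ this forces $s_u \ge 3\eta d_u$; therefore every $u$ the BFS can reach, which satisfies $u \succeq v$ and hence $3\eta d_u \le s_u \le s_v \le d_v \le k/\eta$, obeys $d_u = \scO(k\,\eta^{-2})$ --- and this bound is \emph{deterministic}, independent of the high-probability event of Lemma~\ref{lem:approx_order}. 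Consequently the BFS pops at most $k$ vertices and scans $\scO(k\,\eta^{-2})$ neighbors at each, running in $\scO(k^2\,\eta^{-2}\lg k)$ time per $v$; computing $\dg{v}{G(v)}$ when $\Va(v)\ne\emptyset$ costs a further $\scO(d_v)=\scO(k/\eta)$, which is dominated. Over all $v$ the loop costs $\scO(n\,k^2\,\eta^{-2}\lg k)$.

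Adding the four parts gives $\scO(n\lg n) + \scO(n\,\eta^{-2}\lg n) + \scO(n\,k^2\,\eta^{-2}\lg k) = \scO(n\,k^2\,\eta^{-2}\lg n)$, using $\lg k \le \lg n$ and $k^2\eta^{-2}\ge 1$. Substituting $\eta = \beta^{1/(k-1)}/(6k^2)$, so that $\eta^{-2} = 36\,k^4\,\beta^{-2/(k-1)}$, yields $\scO\!\big(k^6\,\beta^{-2/(k-1)}\,n\lg n\big)$, which is the claimed bound. The main obstacle is exactly step (iii): without sorted adjacency lists the per-vertex BFS cannot be made $\poly(k)$ as in Lemma~\ref{lem:dd}, and the whole argument hinges on the deterministic structural bound $d_u = \scO(k\eta^{-2})$ for all $u \succeq v$ with $d_v \le k/\eta$, which follows purely from how the $s_v$ are defined. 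A secondary point is to keep $\prec$ implicit, so that order comparisons cost $\scO(1)$; this is what prevents the first loop from paying an extra $\lg n$ factor (e.g.\ from a balanced search tree), which would otherwise give $\lg^2 n$ in place of $\lg n$.
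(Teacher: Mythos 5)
Your proof is correct and follows essentially the same route as the paper's: implicit maintenance of $\prec$ via the array $(s_v)$ so that comparisons cost $\scO(1)$, an $\scO(\eta^{-2}\lg n)$ bound per iteration of the first loop, and the key deterministic observation that every vertex $u \succeq v$ reachable by the second-loop BFS satisfies $3\eta d_u \le s_u \le s_v \le d_v \le k/\eta$ and hence $d_u = \scO(k\eta^{-2})$, giving $\scO(k^2\eta^{-2}\lg k)$ per BFS. The bookkeeping and the final substitution $\eta = \beta^{1/(k-1)}/(6k^2)$ match the paper's argument.
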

To carry out the proofs, we need some notation and a few observations about \GraphSort$(G, \beta)$. We denote by:
\begin{itemize}\itemsep0pt
\item $t=1,\ldots,n$ the generic round of the first loop
\item $\prec_t$ the order $\prec$ at the very beginning of round $t$
\item $G_t(v)=G[\{z:z \succeq_{t} v\}]$ the subgraph induced by $v$ and the vertices after it at time $t$
\item $\dg{u}{G_t(v)}$ the degree of $u$ in $G_t(v)$; obviously $\dg{u}{G_t(v)} \le d_u$ for all $t$
\item $t_v$ the round where $v$ is processed
\item $s_t(v)$ the value of $s_v$ at the beginning of round $t$; note that $s_t(v) \ge s_{t+1}(v)$, that $s_{t_v}(v) = d_v$, and that $s_{n+1}(v) = s_{t_{v}+1}(v) \in \{d_v, 3 \eta d_v\}$
\item $X_j = \Ind{x_j \succ v}$ and $X=\sum_{j=1}^h X_j$, in a generic round
\end{itemize}
We denote the returned order by $\prec_n$ (formally it would be $\prec_{n+1}$ but clearly this equals $\prec_n$), and by $G_n(\cdot)$ the subgraphs induced in $G$ under such an order.
By $b_v$ we always mean the value of $b_v$ at return time, unless otherwise specified.

\begin{observation}
\label{ob:order}
For any $t$, if $u \succ_t v$ then $s_t(u) \le s_t(v)$.
\end{observation}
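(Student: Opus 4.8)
The plan is to establish Observation~\ref{ob:order} as a corollary of a stronger invariant, proved by induction on the round number $t$: at the beginning of round $t$, the order $\prec_t$ is \emph{exactly} the one induced by the current values $s_t(\cdot)$ through the same tie-breaking rule used in lines~\ref{line:init_prec} and~\ref{line:update_prec}, namely
\[
 u \prec_t v \iff (s_t(u) > s_t(v)) \vee ((s_t(u) = s_t(v)) \wedge (u > v)).
\]
Granting this, Observation~\ref{ob:order} is immediate: if $u \succ_t v$ and $u \ne v$, then neither disjunct on the right-hand side can hold with $u$ and $v$ swapped, which forces $s_t(u) \le s_t(v)$ (equality being possible only when the order is decided by the index tie-break, which is precisely why the conclusion is the non-strict inequality).

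I would prove the invariant as follows. For the base case $t=1$, line~\ref{line:init_prec} defines $\prec_1$ by exactly the displayed rule applied to $s_1(\cdot) = (d_v)_{v \in V}$, so the invariant holds. For the inductive step, assume the invariant at the start of round $t$, in which some vertex $v$ is processed (which vertex is processed, i.e.\ the fact that the first loop iterates in order of original degree, plays no role here). If $X \ge 2\eta h$, the algorithm changes neither $\prec$ nor any $s_w$, so $\prec_{t+1} = \prec_t$ and $s_{t+1}(\cdot) = s_t(\cdot)$, and the invariant carries over verbatim. If $X < 2\eta h$, the only value that changes is $s_v$, which is lowered from $d_v$ to $3\eta d_v \le d_v$ (using $3\eta < 1$, since $\eta = \beta^{1/(k-1)}/(6k^2)$ and $\beta \le 1$), after which line~\ref{line:update_prec} re-installs the order induced by the updated vector $s_{t+1}(\cdot)$; hence $\prec_{t+1}$ satisfies the invariant by construction. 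Finally, the second loop of \GraphSort\ modifies only the estimates $b_v$ and never touches $\prec$ or any $s_w$, so the returned order --- denoted $\prec_n$ (formally $\prec_{n+1}$) --- inherits the invariant as well.

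I do not expect a genuine obstacle here: the statement is an invariant preserved by every line of \GraphSort\ that could possibly affect $\prec$, so the induction is essentially bookkeeping. The only point needing a little care is the tie-breaking rule, handled above; as a sanity check one may also note that each $s_v$ is modified at most once over the whole run, and only ever decreases, so the sequence $\prec_1, \prec_2, \ldots$ is a coherent chain of refinements and the returned order is well defined.
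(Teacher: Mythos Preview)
Your proposal is correct and follows essentially the same approach as the paper. The paper's proof is a single line: it simply states that ``by definition'' $u \succ_t v$ is equivalent to $(s_t(u) < s_t(v)) \vee ((s_t(u) = s_t(v)) \wedge (u < v))$, whence $s_t(u) \le s_t(v)$; in other words, the paper treats your invariant as the definition of $\prec_t$, since lines~\ref{line:init_prec} and~\ref{line:update_prec} maintain it by construction, whereas you spell out the (straightforward) induction explicitly.
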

\begin{proof}
By definition $u \succ_t v$ if and only if $(s_t(u) < s_t(v)) \vee ((s_t(u) = s_t(v)) \wedge (u < v))$.
Therefore in particular $s_t(u) \le s_t(v)$.
\end{proof}

\begin{observation}
\label{obs:Gn_Gu}
If $u \prec_{t_u} v$ then $G_{n}(v) \subseteq G_{t_u}(u)$ and $\dg{u}{G_{n}(v)} \le \dg{u}{G_{t_u}(u)}$.
\end{observation}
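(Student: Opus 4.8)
The plan is to reduce the statement to a containment of vertex sets. Since $G_n(v)$ and $G_{t_u}(u)$ are both induced subgraphs of $G$, it suffices to prove $\{z : z \succeq_n v\} \subseteq \{z : z \succeq_{t_u} u\}$; the degree inequality then follows for free, because $u \in G_{t_u}(u)$, so every neighbour of $u$ that survives in the smaller graph $G_n(v)$ also survives in $G_{t_u}(u)$ (reading $\dg{u}{G_n(v)}$ as $0$ in the degenerate case $u \notin G_n(v)$, which can occur but is harmless).

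I would prove the containment by contraposition: fix $z$ with $z \prec_{t_u} u$, and show $z \prec_n v$. First, combining $z \prec_{t_u} u$ with the hypothesis $u \prec_{t_u} v$ gives $z \prec_{t_u} v$ by transitivity of $\prec_{t_u}$.

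The key step, and the one I expect to be the main obstacle, is to argue that $z$ has already been processed by the first loop of \GraphSort\ before round $t_u$. The point is that $z \prec_{t_u} u$ together with Observation~\ref{ob:order} gives $s_{t_u}(z) \ge s_{t_u}(u) = d_u$, where the equality holds because $u$ is untouched before round $t_u$. If $z$ were still unprocessed at round $t_u$ we would have $s_{t_u}(z) = d_z$, hence $d_z \ge d_u$; but the loop visits vertices in non-increasing degree order (i.e.\ in the order $\prec_0$), so $z$ would be processed no later than $u$ — impossible, since round $t_u$ processes $u \ne z$. Therefore $t_z < t_u$, and consequently $s_z$ is frozen from round $t_u$ on: $s_t(z) = s_{n+1}(z)$ for all $t \ge t_u$.

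Finally I would invoke monotonicity: every vertex's $s$-value is non-increasing across rounds, while $s(z)$ is constant from round $t_u$ onward. Hence a strict relation $z \prec_{t_u} w$ is preserved for all later rounds, for any vertex $w$ (the only subtlety is the tie-break case $s_{t_u}(z) = s_{t_u}(w)$, where one checks that either $w$'s $s$-value strictly drops or the fixed ID tie-break still favours $z$). Taking $w = v$ and reading off round $n$ turns $z \prec_{t_u} v$ into $z \prec_n v$, contradicting $z \succeq_n v$. Everything outside the "already processed, hence frozen" step is routine.
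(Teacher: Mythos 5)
Your proposal is correct and follows essentially the same route as the paper's proof: reduce both claims to the vertex-set containment $\{z : z \prec_{t_u} u\} \subseteq \{z : z \prec_n v\}$, deduce $z \prec_{t_u} v$ from $u \prec_{t_u} v$, argue that $z \prec_{t_u} u$ forces $t_z < t_u$ (your $s$-value argument is just the contrapositive of the paper's one-line justification), and conclude that the already-processed $z$ can never be moved past $v$ in later rounds. Your extra care with the tie-break cases and the degenerate case $u \notin G_n(v)$ is harmless and slightly more explicit than the paper.
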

\begin{proof}
By definition, $G_{n}(v) \subseteq G_{t_u}(u)$ means $\{z:z \succeq_n v\} \subseteq \{z:z \succeq_{t_u} u\}$, which is equivalent to $\{z:z \prec_{t_u} u\} \subseteq \{z:z \prec_n v\}$.
%; the degree inequality follows by monotonicity of $\dg{u}{\cdot}$.
Consider then any $z : z \prec_{t_u} u$. This implies $z \prec_{t_u} v$ (since $u \prec_{t_u} v$) and $t_z < t_u$ (since $t_z > t_u$ would imply $z \succ_{t_u} u$). But $z$ cannot be moved past $v$ in any round $t' > t$. Thus $z \prec_n v$. Therefore $\{z:z \prec_{t_u} u\} \subseteq \{z:z \prec_{n} v\}$, as desired. The second claim follows by the monotonicity of $\dg{u}{\cdot}$.
\end{proof}

\begin{observation}
\label{ob:degrees}
For all $v$ and all $t \ge t_v$ we have $\dg{v}{G_{t_v}(v)} \ge \dg{v}{G_{t}(v)}$, with equality if $b_v > 0$.
\end{observation}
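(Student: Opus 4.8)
The plan is to recast the statement in terms of a single set. Fix $v$, and for each round $t$ let $N_t = \{\,w \in V : w \sim v \text{ and } w \succeq_t v\,\}$ be the set of neighbours of $v$ that lie in $G_t(v)$. Since $v \in G_t(v)$ for every $t$, we have $\dg{v}{G_t(v)} = |N_t|$, so it suffices to prove that $N_t \subseteq N_{t_v}$ for all $t \ge t_v$, and that in addition $N_{t_v} \subseteq N_t$ whenever $b_v > 0$. Throughout I would use the two facts already recorded before the observation --- $s_t(v)$ is non-increasing in $t$, and $s_{t_v}(v) = d_v$ (so $s_t(v) = d_v$ for $t \le t_v$ and $s_t(v) \le d_v$ always) --- together with the fact that the first loop processes vertices in non-increasing degree order with ties broken exactly as in the definition of $\prec$, so that a vertex $u$ with $t_u < t_{u'}$ precedes $u'$ in the initial order $\prec_1$.

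For $N_t \subseteq N_{t_v}$ I would prove the contrapositive: if $w \sim v$ and $w \prec_{t_v} v$, then $w \prec_t v$ for every $t \ge t_v$. By Observation~\ref{ob:order}, $w \prec_{t_v} v$ gives $s_{t_v}(w) \ge s_{t_v}(v) = d_v$. If $s_{t_v}(w) > d_v$, then (whether or not $w$ has been moved) $d_w > d_v$, because $s_{t_v}(w) \in \{d_w, 3\eta d_w\}$ and $3\eta < 1$; hence $w$ is processed strictly before $v$, its $s$-value is already final, and $s_t(w) = s_{t_v}(w) > d_v \ge s_t(v)$ for all $t \ge t_v$, so $w \prec_t v$. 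If instead $s_{t_v}(w) = d_v$, then the index tie-break forces $w > v$, and moreover either $w$ has already been moved (so $t_w < t_v$ and $w$ is frozen with $s_t(w) = d_v$), or $w$ is unmoved with $d_w = d_v$, in which case $w \prec_1 v$, hence again $t_w < t_v$, which (since $w$ is still unmoved at round $t_v$) forces $b_w > 0$ and $s_t(w) = d_v$ for all $t$. In both cases $s_t(w) = d_v$ for all $t \ge t_v$: when $s_t(v) = d_v$ the tie-break $w > v$ keeps $w \prec_t v$, and when $s_t(v) < d_v$ we have $s_t(w) = d_v > s_t(v)$, again $w \prec_t v$. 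This exhausts the cases, so $N_t \subseteq N_{t_v}$ and $\dg{v}{G_{t_v}(v)} \ge \dg{v}{G_t(v)}$.

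For the equality when $b_v > 0$, note that $v$ is never moved, so $s_t(v) = d_v$ for all $t$, and I would show $N_{t_v} \subseteq N_t$. Take $w \sim v$ with $w \succeq_{t_v} v$, so by Observation~\ref{ob:order} either $s_{t_v}(w) < d_v$, or $s_{t_v}(w) = d_v$ with $w < v$. Since $s_t(w) \le s_{t_v}(w)$, in the first case $s_t(w) < d_v = s_t(v)$, so $w \succ_t v$; in the second case $s_t(w) \le d_v = s_t(v)$, and whether $s_t(w) < d_v$ or $s_t(w) = d_v$ (now using $w < v$) we again get $w \succ_t v$. Hence $w \in N_t$, so $N_{t_v} = N_t$ and $\dg{v}{G_{t_v}(v)} = \dg{v}{G_t(v)}$.

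I expect the inequalities on the $s$-values to be immediate bookkeeping; the only delicate point is the degenerate case $d_w = d_v$, where the argument genuinely needs the processing order and the tie-break in $\prec$ to coincide, so that a neighbour of $v$ of the same degree as $v$ cannot sit before $v$ at round $t_v$ and then be pushed past $v$ at some later round. Everything else follows from the monotonicity of $s_t(\cdot)$ and the fact that each vertex is moved at most once, during its own processing round.
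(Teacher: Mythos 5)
Your proof is correct and follows essentially the same route as the paper's: show that any vertex preceding $v$ at round $t_v$ keeps preceding it (via monotonicity of the $s$-values and the fact that each vertex is moved at most once), giving $G_t(v)\subseteq G_{t_v}(v)$, and note that when $b_v>0$ the value $s_v$ never changes so the containment is an equality. Your version is just more explicit about the tie case $d_w=d_v$ and about the assumption that the processing order's tie-break matches $\prec_1$ — an assumption the paper also uses, implicitly, in asserting $z\prec_{t_v}v\Rightarrow t_z<t_v$.
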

\begin{proof}
Consider any $z:z \prec_{t_v} v$. Note that $t_z < t_v$, hence $z \notin G_{t_v}(v)$ by definition of $G_{t_v}(v)$. Moreover $z$ will never be moved past $v$ in any round $t \ge t_v$, so $z \notin G_{t}(v)$ as well. Therefore $G_{t_v}(v) \supseteq G_{t}(v)$ for all $t \ge t_v$. Now the claim follows by monotonicity of $\dg{v}{\cdot}$, and by noting that if $b_v > 0$, then $v$ is not moved at round $t_v$ and thus $G_{t}(v) = G_{t_v}(v)$ for all $t \ge t_v$.
\end{proof}

\begin{observation}
\label{ob:hoeff}
In any round, conditioned on past events, w.h.p.\ $|X - \E X \big| \le \eta h$.
\end{observation}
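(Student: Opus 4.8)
The plan is to recognize the statement as a routine concentration bound and prove it via Hoeffding's inequality. First I would fix an arbitrary round $t$ of the first loop of \GraphSort, let $v$ be the vertex processed in round $t$, and condition on the entire history of the algorithm up to the point just before the $h$ neighbors $x_1,\ldots,x_h$ of $v$ are sampled. This history determines the current order $\prec_t$, hence it determines which of the $d_v$ neighbors $u$ of $v$ satisfy $u \succ_t v$; write $p = \dg{v}{G_t(v)}/d_v$ for the fraction of such neighbors. Since the coins used to draw $x_1,\ldots,x_h$ are fresh and independent of the past, conditionally on the history the indicators $X_j = \Ind{x_j \succ v}$ are i.i.d.\ Bernoulli with mean $p$, so $X = \sum_{j=1}^h X_j$ is a sum of $h$ independent $\{0,1\}$-valued random variables with (conditional) mean $\E X = hp$.

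Next I would apply Hoeffding's inequality: for every $s > 0$,
\[
\Pr\big(|X - \E X| \ge s\big) \le 2\exp\!\left(-\frac{2 s^2}{h}\right),
\]
and taking $s = \eta h$ gives $\Pr(|X - \E X| \ge \eta h) \le 2\exp(-2\eta^2 h)$. Finally, since $h = \Theta(\eta^{-2}\log n)$ we have $2\eta^2 h = \Theta(\log n)$, so choosing the hidden constant in $h$ large enough makes this probability at most $n^{-a}$ for any prescribed constant $a>0$, matching the paper's definition of ``with high probability''; and because the bound holds for every fixed realization of the past, it holds conditionally on past events, as claimed. If one wants the event to hold in all $n$ rounds simultaneously, a union bound over rounds costs only an additive increase in the constant hidden in $h$, preserving w.h.p.

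I do not expect a real obstacle here. The only points that need a little care are: (i) making explicit that round $t$ uses randomness independent of everything before it, so that conditioning on the past genuinely leaves an i.i.d.\ sample to which Hoeffding applies; and (ii) tuning the constant in $h = \Theta(\eta^{-2}\log n)$ to whatever high-probability exponent is needed downstream. If the $x_j$ turn out to be sampled without replacement rather than with replacement, the same conclusion follows from Hoeffding's inequality for sampling without replacement, which yields an at-least-as-strong tail bound, so nothing changes.
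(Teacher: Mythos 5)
Your proposal is correct and matches the paper's own proof: both condition on the history so that the $X_j$ become i.i.d.\ Bernoulli indicators, apply Hoeffding with deviation $\eta h$, and use $h = \Theta(\eta^{-2}\lg n)$ to make the failure probability $n^{-\Theta(1)}$ with an arbitrarily large exponent. The extra remarks about union bounds and sampling without replacement are fine but not needed.
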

\begin{proof}
Consider round $t_v$.
Conditioned on past events, the $X_j$ are independent binary random variables.
Therefore by Hoeffding's inequality:
\begin{align}
\Pr(|X - \E X| > h \eta) < 2 e^{-2h\eta^2} = e^{-\Theta(\lg{n})} = n^{-\Theta(1)}
\end{align}
where $h = \Theta(\eta^{-2}\lg{n})$ and thus the $\Theta(1)$ at the exponent can be chosen arbitrarily large.
\end{proof}

\begin{observation}
\label{ob:s_u}
With high probability, $\dg{v}{G_{t}(v)} \le s_{n+1}(v)$ for every $v$ anytime in any round $t$.
\end{observation}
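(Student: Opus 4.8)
The plan is to split on the returned value $s_{n+1}(v) \in \{d_v,\, 3\eta\, d_v\}$. If $s_{n+1}(v) = d_v$ there is nothing to prove: in any round $t$ the vertex set $G_t(v)$ induces a subgraph of $G$, so $\dg{v}{G_t(v)} \le d_v = s_{n+1}(v)$.

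So assume $s_{n+1}(v) = 3\eta\, d_v$. By inspection of \GraphSort\ this happens exactly when the sampling test failed at the round $t_v$ where $v$ is processed, i.e.\ $X < 2\eta h$, where $X = \sum_{j=1}^h \Ind{x_j \succ v}$ counts how many of the $h$ uniform random neighbors of $v$ lie after $v$ in the order $\prec_{t_v}$ in force at that round. The processing order (by nonincreasing degree) is deterministic, so conditioning on the history before round $t_v$ fixes $\prec_{t_v}$, hence fixes $\dg{v}{G_{t_v}(v)}$, and makes $X$ a sum of $h$ i.i.d.\ Bernoulli$\big(\dg{v}{G_{t_v}(v)}/d_v\big)$ variables. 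Observation~\ref{ob:hoeff} then gives, with high probability, $\E X \le X + \eta h < 3\eta h$, which rearranges to $\dg{v}{G_{t_v}(v)} < 3\eta\, d_v = s_{n+1}(v)$.

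It remains to pass from round $t_v$ to an arbitrary round $t$. For $t \ge t_v$, Observation~\ref{ob:degrees} gives $\dg{v}{G_t(v)} \le \dg{v}{G_{t_v}(v)}$ outright. For $t < t_v$ I would instead prove the stronger containment $G_t(v) \subseteq G_{t_v}(v)$, equivalently $\{z : z \prec_{t_v} v\} \subseteq \{z : z \prec_t v\}$: since $v$ is untouched before round $t_v$ we have $s_t(v) = s_{t_v}(v) = d_v$, and since the $s$-values are non-increasing in $t$ we have $s_t(z) \ge s_{t_v}(z)$ for every $z$; comparing these through the definition of $\prec$ (including its index tie-break) shows $z \prec_{t_v} v \Rightarrow z \prec_t v$. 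In either case $\dg{v}{G_t(v)} \le \dg{v}{G_{t_v}(v)} < s_{n+1}(v)$. Finally, a union bound over the (at most $n$) vertices whose test is invoked — the constant hidden in $h = \Theta(\eta^{-2}\lg n)$ makes each failure probability below $n^{-a}$ for any fixed $a$ — yields the bound simultaneously for every $v$ and every round $t$, with high probability.

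The only delicate point is the $t < t_v$ sub-case, where Observation~\ref{ob:degrees} is not yet available; the remedy is the containment $G_t(v) \subseteq G_{t_v}(v)$, which rests on monotonicity of the $s$-values and on the fact that $v$'s own $s$-value stays equal to $d_v$ until round $t_v$.
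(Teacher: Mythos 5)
Your proof is correct and follows essentially the same route as the paper's: split on whether $s_{n+1}(v)=d_v$ or $3\eta d_v$, use Observation~\ref{ob:hoeff} at round $t_v$ to get $\dg{v}{G_{t_v}(v)} \le 3\eta d_v$ w.h.p., and reduce an arbitrary round $t$ to round $t_v$ via monotonicity. Your explicit treatment of the $t < t_v$ case (via the containment $G_t(v) \subseteq G_{t_v}(v)$) is a small but welcome refinement, since the paper invokes Observation~\ref{ob:degrees}, which is only stated for $t \ge t_v$.
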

\begin{proof}

If $s_{n+1}(v)=d_v$ then clearly $s_{n+1}(v) \ge \dg{v}{G_{t}(v)}$. Suppose instead that $s_{n+1}(v) = 3 \eta d_v$. By Observation~\ref{ob:degrees}, $\dg{v}{G_{t}(v)} \le \dg{v}{G_{t_v}(v)}$. So, we only need to show that with high probability $\dg{v}{G_{t_v}(v)} \le 3 \eta d_v$. Consider the random variable $X=\sum_{j=1}^h X_j$ at round $t_v$, and note that $\E X_j = \frac{\dg{v}{G_{t_v}(v)}}{d_v}$ for all $j$. Therefore, if $\dg{v}{G_{t_v}(v)} > 3 \eta d_v$, then $\E X > 3 \eta h$. Now, the algorithm updates $s_v$ only if $X < 2 \eta h$. This implies the event $X < \E X - \eta h$, which by Observation~\ref{ob:hoeff} fails with high probability. Thus with high probability $\dg{v}{G_{t_v}(v)} \le 3 \eta d_v$.
\end{proof}

\begin{observation}
\label{obs:eta_dv}
If round $t_v$ of the first loop sets $b_v > 0$ then w.h.p.\ $\dg{v}{G_{t_v}(v)} \ge \eta d_v$, else w.h.p.\ $\dg{v}{G_{t_v}(v)} \le 3 \eta d_v$.
If the second loop sets $b_v > 0$ then $\dg{v}{G_{t_v}(v)} \ge \frac{\eta}{k} d_v$ deterministically.
\end{observation}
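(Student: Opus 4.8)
The plan is to handle the three assertions in turn: the first two follow from the Hoeffding estimate already isolated in Observation~\ref{ob:hoeff}, and the third is a short combinatorial argument. The first thing I would pin down is the conditional mean of the test statistic $X$ used in round $t_v$. At that round the order in force is $\prec_{t_v}$, so a uniformly sampled neighbour $x_j$ of $v$ satisfies $x_j \succ v$ exactly when $x_j \in G_{t_v}(v)$; hence, conditioning on the outcomes of the earlier rounds, $\E X_j = \dg{v}{G_{t_v}(v)}/d_v$ and therefore $\E X = h \cdot \dg{v}{G_{t_v}(v)}/d_v$. This identity is what ties the randomized test to the quantity we want to bound, and the one place to be careful is that the relevant order here is $\prec_{t_v}$, not the final $\prec_n$.

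For the two cases coming from the first loop, I would argue as follows. If round $t_v$ sets $b_v > 0$, then by the algorithm $X \ge 2\eta h$; by Observation~\ref{ob:hoeff}, with high probability $\E X \ge X - \eta h \ge \eta h$, and substituting the identity above gives $\dg{v}{G_{t_v}(v)} \ge \eta d_v$. If instead round $t_v$ sets $b_v = 0$, then $X < 2\eta h$, so by Observation~\ref{ob:hoeff} with high probability $\E X \le X + \eta h < 3\eta h$, i.e.\ $\dg{v}{G_{t_v}(v)} \le 3\eta d_v$; this is in fact exactly the inequality already established inside the proof of Observation~\ref{ob:s_u}, so I would simply point to it rather than repeat the computation.

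For the remaining case, recall that the second loop only processes vertices with $d_v \le k/\eta$, and it sets $b_v > 0$ only when $\Va(v) \ne \emptyset$. If $\Va(v) \ne \emptyset$, choose a $k$-graphlet whose $\prec_n$-smallest vertex is $v$: since $k \ge 3$ it is a connected subgraph of $G_n(v)$ on at least two vertices containing $v$, so $\dg{v}{G_n(v)} \ge 1$. Applying Observation~\ref{ob:degrees} with $t = n$ gives $\dg{v}{G_{t_v}(v)} \ge \dg{v}{G_n(v)} \ge 1$, and since $d_v \le k/\eta$ we have $1 \ge \frac{\eta}{k} d_v$; hence $\dg{v}{G_{t_v}(v)} \ge \frac{\eta}{k} d_v$ with no appeal to randomness.

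I do not expect any real obstacle here: once the conditional mean of $X$ is correctly identified in terms of $\dg{v}{G_{t_v}(v)}$, the two probabilistic claims are immediate from Observation~\ref{ob:hoeff}, and the deterministic claim is a one-line consequence of connectivity of a graphlet together with the degree-monotonicity recorded in Observation~\ref{ob:degrees}. The only point requiring vigilance is the order bookkeeping — tying $\E X$ to $G_{t_v}(v)$ rather than to the final order $G_n(v)$, and invoking Observation~\ref{ob:degrees} precisely when the two need to be reconciled.
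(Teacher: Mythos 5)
Your proof is correct and follows essentially the same route as the paper: the first two claims are the same Hoeffding argument via Observation~\ref{ob:hoeff} (the paper phrases it contrapositively, you phrase it directly, which is immaterial), and the deterministic claim for the second loop is the identical combination of $d_v \le k/\eta$, $\dg{v}{G_n(v)} \ge 1$, and Observation~\ref{ob:degrees}. No gaps.
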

\begin{proof}
The first claim has the same proof of Observation~\ref{ob:s_u}: if $\dg{v}{G_{t_v}(v)} < \eta d_v$ then $\E X < \eta h$, so $b_v > 0$ implies $X \ge 2 \eta h$ and thus $X > \E X + \eta h$.
Similarly, if $\dg{v}{G_{t_v}(v)} > 3 \eta d_v$ then $\E X > 3 \eta h$, so letting $b_v = 0$ implies $X < 2 \eta h$ which means $X < \E X - \eta h$.
Both events fail with high probability by Observation~\ref{ob:hoeff}.
For the second claim, note that the second loop sets $b_v > 0$ only if $d_v \le \frac{k}{\eta}$, which implies $\frac{\eta}{k} d_v \le 1$, and if $\Va(v) \ne \emptyset$, which implies $\dg{v}{G_{n}(v)} \ge 1$.
Thus, $\dg{v}{G_{n}(v)} \ge  \frac{\eta}{k} d_v$.
Observation~\ref{ob:degrees} gives $\dg{v}{G_{t_v}(v)} \ge \dg{v}{G_{n}(v)}$, concluding the proof.
\end{proof}

\begin{observation}
\label{obs:domin}
With high probability, for all $v$, for all $u \succ_n v$ we have $\dg{u}{G_n(v)} \le s_{n+1}(v)$.
\end{observation}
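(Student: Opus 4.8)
The plan is to fix a vertex $v$ and a vertex $u \succ_n v$ and to establish $\dg{u}{G_n(v)} \le s_{n+1}(v)$ by a short case analysis on where $u$ sits relative to $v$ at the moment $u$ is processed by the first loop of \GraphSort. Write $t_u$ for the round in which $u$ is processed. Throughout I would use two elementary facts: $s_{t_u}(u)=d_u$, since a vertex's $s$-value is never touched before the round in which it is processed, and $\dg{u}{G_n(v)}\le d_u$ trivially. The structural workhorse for the case in which $u$ precedes $v$ will be Observation~\ref{obs:Gn_Gu}, which says that once $u$ has been processed with $v$ lying after it, the vertex set of $G_n(v)$ never escapes that of $G_{t_u}(u)$.

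\emph{Case $u \prec_{t_u} v$.} Here I would chain three earlier results. Observation~\ref{obs:Gn_Gu} gives $\dg{u}{G_n(v)} \le \dg{u}{G_{t_u}(u)}$; Observation~\ref{ob:s_u}, applied to $u$ at round $t_u$, gives $\dg{u}{G_{t_u}(u)} \le s_{n+1}(u)$; and since $u \succ_n v$, Observation~\ref{ob:order} at round $n+1$ gives $s_{n+1}(u) \le s_{n+1}(v)$. Combining the three inequalities settles this case. (Note that, given $u\succ_n v$, this case can only occur when $u$ was pushed past $v$ at round $t_u$, but the argument does not need to know that.)

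\emph{Case $u \succ_{t_u} v$.} I would first argue that $v$ must already have been processed, i.e.\ $t_v < t_u$. Suppose not, so $t_v > t_u$. Then at the start of round $t_u$ neither $u$ nor $v$ has been altered, so their relative order equals that of $\prec_0$ and $s_{t_u}(v)=d_v$. Since the first loop enumerates $V$ in the order $\prec_0$ (nonincreasing degree), $t_u<t_v$ forces $d_u\ge d_v$; on the other hand Observation~\ref{ob:order} at round $t_u$ forces $d_u=s_{t_u}(u)\le s_{t_u}(v)=d_v$, so $d_u=d_v$; but then the index tie-break in the definition of $\prec_0$ makes ``$u$ processed before $v$'' and ``$u \succ_{t_u} v$'' mutually exclusive, a contradiction. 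Hence $t_v<t_u$, so $v$ is frozen from round $t_v$ on and $s_{t_u}(v)=s_{n+1}(v)$. Now Observation~\ref{ob:order} at round $t_u$ gives $d_u=s_{t_u}(u)\le s_{t_u}(v)=s_{n+1}(v)$, and combining with $\dg{u}{G_n(v)}\le d_u$ finishes the case.

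The only probabilistic ingredient is the single appeal to Observation~\ref{ob:s_u}, which is itself a ``with high probability, for all $v$ and all rounds'' statement, so the result holds with high probability with no extra union bound; and since the second loop never moves vertices, $\prec_n$ and $s_{n+1}$ depend only on the first loop, so the argument above already ranges over all $v$. I expect the only real friction to be bookkeeping: confirming that the ``$\prec_{t_u}$'' appearing in Observation~\ref{obs:Gn_Gu} is precisely the pre-move order at the start of round $t_u$, so that its hypothesis matches the $u\prec_{t_u}v$ we invoke, and verifying the tie-break in the second case, which relies on the first loop scanning $V$ in exactly the order $\prec_0$. There is no hard inequality anywhere; the substantive content is Observation~\ref{obs:Gn_Gu}, which we may assume.
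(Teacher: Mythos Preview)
Your proof is correct and follows essentially the same route as the paper: the same case split on whether $u \prec_{t_u} v$ or $u \succ_{t_u} v$, and in each case the same chain of Observations~\ref{ob:order}, \ref{obs:Gn_Gu}, and~\ref{ob:s_u}. Your argument for $t_v < t_u$ in the second case is more explicit than the paper's (which simply asserts it), and both versions implicitly assume the first loop's processing order breaks degree ties consistently with $\prec_0$.
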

\begin{proof}
Consider the beginning of round $t_u$. Suppose that $v \prec_{t_u} u$, which implies $t_v < t_u$. Then:
\begin{align}
    s_{n+1}(v) &= s_{t_v+1}(v)
    \\ &= s_{t_u}(v) && \text{since $t_v < t_u$}
    \\ &\ge s_{t_u}(u) && \text{Observation~\ref{ob:order}, using $u \succ_{t_u} v$}
    \\ &= d_u && \text{by construction}
    \\ &\ge \dg{u}{G_n(v)}
\end{align}
Suppose instead $u \prec_{t_u} v$. Then, with high probability:
\begin{align}
    s_{n+1}(v) &\ge s_{n+1}(u) && \text{Observation~\ref{ob:order}, using $u \succ_n v$}
    \\&\ge \dg{u}{G_{t_u}(u)} && \text{Observation~\ref{ob:s_u}, with $t=t_u$}
    \\&\ge \dg{u}{G_{n}(v)} && \text{Observation~\ref{obs:Gn_Gu}}
\end{align}
In any case, $\dg{u}{G_{n}(v)} \le s_{n+1}(v)$.
\end{proof}

\begin{proof}[Proof of Lemma~\ref{lem:approx_order}]
For technical reasons we prove the four properties of $(\prec,\ba)$, see Definition~\ref{def:ab_order}, in a different order.
Moreover, we substitute $\alpha = \frac{\eta}{k}$.
This yields the four properties: 
\begin{enumerate}\itemsep2pt
\item if $v \prec u$, then $d_v \ge 3 \eta \, d_u$
\item if $b_v > 0$, then $\dg{v}{G(v)} \ge \frac{\eta}{k} d_v \ge \frac{\eta}{k} \cdot \dg{u}{G(v)}$ for all $u \succ v$
\item if $b_v > 0$ then $\kokm \beta \le \frac{b_v}{|\Va(v)|} \le \frac{\kok}{\beta}$
\item $\sum_{v : b_v > 0} |\Va(v)| \ge (1-\beta) \sum_{v \in V} |\Va(v)|$
\end{enumerate}

\paragraph{Proof of (1)}
Simply note that $d_v \ge s_{n+1}(v) \ge s_{n+1}(u) \ge 3 \eta d_u$, where the middle inequality holds by Observation~\ref{ob:order} since $u \succ_{n} v$.

\paragraph{Proof of (2)}
Consider any $u \succ_{n} v$ with $b_v > 0$.
Then with high probability:
\begin{align}
\dg{v}{G_{n}(v)} &= \dg{v}{G_{t_v}(v)} && \text{Observation~\ref{ob:degrees}, using $b_v > 0$ and $t=n$}
\\
&\ge \frac{\eta}{k} d_v &&\text{Observation~\ref{obs:eta_dv}, using $b_v > 0$}
\\
&= \frac{\eta}{k} s_{n+1}(v) && \text{by the algorithm, since $b_v > 0$}
\\
&\ge \frac{\eta}{k} \dg{u}{G_n(v)} && \text{Observation~\ref{obs:domin}, since $u \succ_n v$}
\end{align}

\paragraph{Proof of (3)}
First, we show that if $b_v > 0$ then w.h.p.\ $|\Va(v)| \ge 1$. If $v$ is processed by the second loop, then $b_v > 0$ if and only if $|\Va(v)| \ge 1$. Otherwise, we know $d_v > \frac{k}{\eta}$, and w.h.p.:
\begin{align}
d(v|G_n(v)) &= d(v|G_{t_v}(v)) && \text{Observation~\ref{ob:degrees}, using $b_v > 0$}
\\&\ge \eta d_v && \text{Observation~\ref{obs:eta_dv}, using $b_v > 0$}
\\&> k && \text{as } d_v > \frac{k}{\eta}
\end{align}
So w.h.p.\ $d(v|G_n(v)) > k$, in which case $G_n(v)$ contains a $k$-star centered in $v$, implying $|\Va(v)|\ge 1$.

Thus we continue under the assumption $|\Va(v)| \ge 1$.
To ease the notation define $d_v^*=d(v|G_n(v))$ and $\Delta_v^*=\max_{u \in G(v)}d(u|G_n(v))$.
Lemma~\ref{lem:count_bound} applied to $G_n(v)$ yields:
\begin{align}
 k^{-\scO(k)}\, (d_v^*)^{k-1} \le |\Va(v)| \le k^{\scO(k)}\, (\Delta_v^*)^{k-1}
\end{align}
We now show that w.h.p.:
\begin{align}
\beta k^{-\scO(k)} (\Delta_v^*)^{k-1} \le b_v \le \frac{1}{\beta} k^{\scO(k)} (d_v^*)^{k-1}
\end{align}
which implies our claim.
For the upper bound, note that by construction $b_v \le (d_v)^{k-1}$ and that, by point (2) of this lemma, w.h.p.\ $d_v \le \frac{k}{\eta} d^*_v$.
Substituting $\eta$ we obtain:
\begin{align}
b_v \le (d_v)^{k-1} \le (d_v^*)^{k-1}\bigg(\frac{k}{\eta}\bigg)^{k-1} = \frac{1}{\beta} k^{\scO(k)} (d_v^*)^{k-1}
\end{align}
For the lower bound, note that since $b_v>0$ then $b_v \ge (d_v^*)^{k-1}$.
Indeed, if $b_v > 0$, then either $b_v=(d_v)^{k-1} \ge (d_v^*)^{k-1}$ from the first loop, or $b_v=(d_v^*)^{k-1}$ from the second loop (since the value $d(v|G(v))$ in the second loop equals $d(v|G_n(v))$, that is, $d_v^*$).
Now, point (2) of this lemma gives $d_v^* \ge \frac{\eta}{k} \cdot \dg{u}{G_n(v)}$ for all $u \succ_n v$.
Thus $\Delta^*_v = \max_{u \in G(v)} \dg{u}{G_n(v)} \le \frac{k}{\eta} d_v^*$.
Therefore:
\begin{align}
(\Delta_v^*)^{k-1} \le \bigg(\frac{k}{\eta} d_v^*\bigg)^{k-1} = \beta \kok (d_v^*)^{k-1} \le \beta \kok b_v
\end{align}

\paragraph{Proof of (4)}
We prove the equivalent claim:
\begin{align}
\sum_{v : b_v = 0} |\Va(v)| \le \beta \sum_{v \in V} |\Va(v)|
\end{align}
Consider any $v$ with $b_v=0$ and $|\Va(v)| > 0$. These are the only vertices contributing to the left-hand summation. First, we note that $d_v > \frac{k}{\eta}$. Indeed, if $|\Va(v)| > 0$ and $d_v \le \frac{k}{\eta}$, then the second loop of \GraphSort\ processes $v$ and sets $b_v = (\dg{v}{G_n(v)})^{k-1}$, which is positive since $|\Va(v)| > 0$ implies $\dg{v}{G(v)} > 0$. Thus, we can assume that $b_v=0$, $|\Va(v)| \ge 1$, $d_v > \frac{k}{\eta}$, and $v$ is not processed in the second loop. Since $d_v > \frac{k}{\eta} > k-1$, then $G$ contains at least ${d_v \choose k-1} \ge 1$ stars centered around $v$. Each such star contributes $1$ to $\sum_{v \in V} |\Va(v)|$. Since $\frac{(d_v)^{k-1}}{(k-1)^{k-1}} \le {d_v \choose k-1}$ whenever ${d_v \choose k-1} \ge 1$, we obtain:
\begin{align}
\label{eq:Bv_bound_1}
\sum_{v:b_v=0} \frac{(d_v)^{k-1}}{(k-1)^{k-1}} \le \sum_{v:b_v=0}{d_v \choose k-1} \le k \sum_{v : b_v=0} |\Va(v)| \le k \sum_{v} |\Va(v)|
\end{align}
where the factor $k$ arises from each star being counted up to $k$ times by the left-hand side (once for each vertex in the star).

On the other hand, by Observation~\ref{ob:s_u} and Observation~\ref{obs:domin}, w.h.p.\ $\dg{v}{G_n(v)} \le s_{n+1}(v)$ and $\dg{u}{G_n(v)} \le s_{n+1}(v)$ for all $u \succ_n v$.
Hence, the maximum degree of $G_v$ is w.h.p.\ at most $s_{n+1}(v)$.
But $s_{n+1}(v) = 3 \eta d_v$, since $b_v=0$ is set in the first loop.
Thus, the maximum degree of $G_v$ is at most $3 \eta (d_v)$.
By Lemma~\ref{lem:count_bound}, then,
\begin{align}
\label{eq:Bv_bound_2}
\sum_{v:b_v=0}|\Va(v)| \le \sum_{v:b_v=0}(k-1)! (3 \eta\,d_v)^{k-1}    
\end{align}
By coupling~\eqref{eq:Bv_bound_1} and~\eqref{eq:Bv_bound_2} and substituting $\eta=\beta^{\frac{1}{k-1}}\frac{1}{6k^2}$, we obtain:
\begin{align}
\frac{\sum_{v:b_v=0} |\Va(v)|}{\sum_{v} |\Va(v)|}
&\le 
\frac{\sum_{v:b_v=0} (k-1)! (3 \eta\,d_v)^{k-1}}
{\frac{1}{k}\sum_{v:b_v=0}  \frac{(d_v)^{k-1}}{(k-1)^{k-1}}}
&& \text{by ~\eqref{eq:Bv_bound_1} and~\eqref{eq:Bv_bound_2}}
\\&=
\frac{ (k-1)! (3 \eta)^{k-1} \sum_{v:b_v=0} (d_v)^{k-1}}
{\frac{1}{k(k-1)^{k-1}}\sum_{v:b_v=0}  (d_v)^{k-1}}
\\ &< (3\eta)^{k-1} k (k-1)^{2(k-1)} 
\\&= \left(3 \beta^{\frac{1}{k-1}}\frac{1}{6k^2}\right)^{k-1}  k (k-1)^{2(k-1)} 
\\&= \beta \frac{k (k-1)^{2(k-1)} }{2^{k-1}k^{2(k-1)}}
\end{align}
which for all $k \ge 2$ is bounded from above by $\beta$.
The proof is complete.
\end{proof}

\begin{proof}[Proof of Lemma~\ref{lem:approx_order_time}]
First of all we observe that, until return time, \GraphSort\ never needs to compute $\prec$ explicitly. Indeed, $\prec$ is used only to check whether $u \prec v$ for two generic vertices $u,v \in G$. This however boils down to evaluating $(s_u > s_v) \vee ((s_u = s_v) \wedge (u > v))$, which takes time $\scO(1)$. Therefore we only need to keep the values $s_v$ updated in an array; the updates of $\prec$ at lines~\ref{line:init_prec} and~\ref{line:update_prec} are implicit.

Now let us bound the running time.
The initialization of \GraphSort\ is dominated by sorting $V$ in order of degree, which takes time $\scO(n)$ via bucket sort. In the first loop, at each iteration we draw $\scO(h)=\scO(\eta^{-2} \lg n)$ samples, each of which takes time $\scO(1)$ via neighbor queries. Evaluating $x_j \succ v$ takes time $\scO(1)$, and computing $X$ takes time $\scO(h)=\scO(\eta^{-2} \lg n)$. Updating $s_v$ takes time $\scO(1)$. Thus, each iteration of the first loop takes time $\scO(\eta^{-2} \lg n)$.

Consider now the second loop; we claim that each iteration takes time $\scO(\eta^{-2}k^2 \lg k)$. To see this, let us describe the BFS in more detail. We start by pushing $v$ in the queue, and we maintain the invariant that the queue holds only vertices of $G(v)$. To this end, when we pop a generic vertex $u$, we examine every edge $\{u,z\} \in E(G)$, and push $z$ only if $z \succ v$ and $z$ was not pushed before. Note that checking whether $z \succ v$ takes time $\scO(1)$. Now we bound the number of neighbors $z$ of $u$ that are examined. First, this number is obviously at most $d_u$. Recall that $3 \eta d_u \le s_{n+1}(u)$ by construction of the algorithm. Moreover, since $u \in G_n(v)$, then $u \succ_{n+1} v$, which by Observation~\ref{ob:order} implies $s_{n+1}(u) \le s_{n+1}(v) \le d_v \le \frac{k}{\eta}$. Therefore, $d_u \le \frac{k}{3\eta^2}$. Hence, the number of neighbors $z$ of $u$ that are examined is at most $\frac{k}{3\eta^2}$. Since we push at most $k$ vertices before stopping, the total number of vertices/edges examined by each BFS is in $\scO(\eta^{-2} k^2)$. To store the set of pushed vertices we use a dictionary with logarithmic insertion and lookup time. Hence, each BFS will take time $\scO(\eta^{-2} k^2 \lg k)$. Finally, computing $d(v|G(v))$ also takes time $d_v \le \frac{k}{3\eta^2}$. Thus each iteration of the second loop runs in time $\scO(\eta^{-2} k^2 \lg k)$.

As each loops makes at most $n$ iterations, the total running time of \GraphSort$(G, \beta)$ is:
\begin{align}
    \scO(n \lg n) + \scO\big(n \eta^{-2} \lg n\big) + \scO\big(n \eta^{-2} k^2 \lg k\big) = \scO\big(\eta^{-2} k^2 n \lg n\big)
\end{align}
Replacing $\eta = \scO\big(\beta^{\frac{1}{k-1}} k^{-2}\big)$ shows that the running time is in $\scO\big(\beta^{-\frac{2}{k-1}} k^6 n \lg n \big)$, as claimed.
\end{proof}

We can conclude the preprocessing phase of \EpsilonAlgo. We set $\beta=\frac{\epsilon}{2}$, and run $(\prec,\ba)=\GraphSort(G, \beta)$. Then, for all $v$ we let $p(v)=\frac{b_v}{\sum_{u}b_u}$; we also set a few other variables. The running time is dominated by \GraphSort$(G, \beta)$, which by Lemma~\ref{lem:approx_order_time} takes time $\scO\!\left( \epsilon^{-\frac{2}{k-1}} k^6 n \lg n \right)$. This proves the preprocessing time bound of Theorem~\ref{thm:epsuniform} %Again by Lemma~\ref{lem:approx_order}, with high probability we obtain an $(\alpha,\frac{\epsilon}{2})$-DD order $(\prec,\ba)$ for $G$, with $\alpha=\Theta(\epsilon^{\frac{1}{k-1}} k^{-3})$.
and completes the description of the preprocessing phase.

\subsection{The sampling phase: A coupling of algorithms}
Recall that, by Lemma~\ref{lem:approx_order}, with high probability the preprocessing phase yields an $(\alpha,\frac{\epsilon}{2})$-DD order $(\prec,\ba)$ for $G$, with $\alpha=\Theta\big(\epsilon^{\frac{1}{k-1}} k^{-3}\big)$. From now on we assume this holds. Then, by Definition~\ref{def:ab_order}, $\cup_{v:b_v>0}\Va(v)$ contains a fraction $1-\frac{\epsilon}{2}$ of all graphlets. Hence, our goal becomes sampling $\frac{\epsilon}{2}$-uniformly from $\cup_{v:b_v>0}\Va(v)$. By the triangle inequality, this will give an $\epsilon$-uniform distribution over $\vk_k$. To achieve $\frac{\epsilon}{2}$-uniformity over $\cup_{v:b_v>0}\Va(v)$, we modify \UniformAlgo\ step by step.
To begin, we consider what would happen if we sorted $G$ according to $\prec$ and ran the sampling phase of \UniformAlgo\ using the bucket size estimates $\ba$. We show that, by mildly reducing the acceptance probability, we could make the output graphlet distribution uniform over $\cup_{v:b_v>0}\Va(v)$. The resulting algorithm, \ComparisonAlgo$(G, \epsilon)$, is given below.
%Note that the routine \Sample$(\,)$ is the one of \UniformAlgo.
Note that \ComparisonAlgo$(G, \epsilon)$ is just for analysis purposes; we use it as a comparison term, to establish the $\epsilon$-uniformity of our algorithm. 

\begin{algorithm}
\caption{\ComparisonAlgo$(G,\epsilon)$}
\label{alg:uniform_apx}
\begin{algorithmic}[1]
\State let $C_1 \assign$ a large enough universal constant
\State let $\beta \assign \frac{\epsilon}{2}$
\State $(\prec, \ba) \assign$ \GraphSort$(G, \alpha,\beta)$ 
\State let $Z \assign \sum_{v \in V} b_v$, and for each $v \in V$ let $p(v) \assign \frac{b_v}{Z}$ 
\State sort the adjacency lists of $G$ according to $\prec$
\State

\Function{\Sample}{\,}
\While{true}
\State draw $v$ from the distribution $p$ \label{line:comp_sample_1}
\State $S \assign$ \sampleS$(G, v)$ \label{line:comp_sample_2}
\State $p_v(S) \assign$ \computeP$(G, S)$ \label{line:comp_sample_3}
\State with probability $\frac{1}{p(v)\, p_v(S)}\frac{\beta}{Z} k^{-C_1 k}$ \Return $S$ \label{line:comp_sample_4}
\EndWhile
\EndFunction
\end{algorithmic}
\end{algorithm}

\begin{lemma}
\label{lem:comparison_algo}
In \ComparisonAlgo$(G, \epsilon)$, suppose \GraphSort$(G, \alpha,\beta)$ succeeds (Lemma~\ref{lem:approx_order}), and let $p_{\text{acc}}(v,S) = \frac{1}{p(v)\, p(S)} \frac{\beta}{Z} k^{-C_1 k}$ be the expression computed by \Sample$(\,)$ at line~\ref{line:comp_sample_4}. Then $p_{\text{acc}}(v,S) \in [\epsilon^2 \kokm, 1]$, and moreover, the distribution of the graphlets returned by \Sample$(\,)$ is uniform over $\cup_{v:b_v>0}\Va(v)$.
\end{lemma}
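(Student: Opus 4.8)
The plan is to mirror the analysis of \UniformAlgo's sampling phase, carrying along the extra slack that the approximate order introduces. Throughout I would condition on the success event of \GraphSort\ in Lemma~\ref{lem:approx_order}, so that $(\prec,\ba)$ is an $(\alpha,\beta)$-DD order with $\beta=\frac{\epsilon}{2}$ and $\alpha=\beta^{1/(k-1)}\frac{1}{6k^{3}}$, so that $\alpha^{k-1}=\beta\,\kokm$ and $\alpha^{-(k-1)}=\frac{1}{\beta}\kok$, and so that every $v$ with $b_v>0$ has $\Va(v)\ne\emptyset$ (the last fact is one of the high-probability conclusions established inside the proof of that lemma).

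The first, and main, step is to re-examine \sampleS\ and \computeP\ under this weaker order. Fix $v$ with $b_v>0$ and write $d^*_v=\dg{v}{G(v)}$, where $G(v)$ and the sorted adjacency lists are the ones \ComparisonAlgo\ installs. By property~3 of Definition~\ref{def:ab_order}, $\dg{u}{G(v)}\le d_v\le d^*_v/\alpha$ for every $u\succ v$, i.e.\ $v$ still has degree within a factor $\alpha$ of the maximum in $G(v)$. I would then rerun the proof of Lemma~\ref{lem:cuti}: its lower bound never used the $1$-DD property, while its upper bound, which invoked $\dg{u}{G(v)}\le\dg{v}{G(v)}$, now reads $\cut_i\le\sum_{u\in S_i}\dg{u}{G(v)}\le i\,d^*_v/\alpha$; hence $\frac{1}{i}\,d^*_v\le|\Cut(S_i,G(v)\setminus S_i)|\le\frac{i}{\alpha}\,d^*_v$ for all $i$. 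Feeding this into the proof of Lemma~\ref{lem:subgraph_sample} — the upper bound on the generation probability is unaffected (it used only the lower bound on the cut), while the lower bound gains a factor $\alpha^{k-1}$ from the enlarged cut — yields, for every $g=G[S]\in\Va(v)$,
\begin{align}
\frac{\alpha^{k-1}}{(k-1)!}\,(d^*_v)^{-(k-1)}\ \le\ p_v(g)\ \le\ (k-1)!^{3}\,(d^*_v)^{-(k-1)},
\end{align}
where $p_v(g)$ is the probability that \sampleS$(G,v)$ returns $S$; in particular $p_v(g)>0$ for every $g\in\Va(v)$. The correctness argument for \computeP\ (Lemma~\ref{lem:prob_compute}) used neither the $1$-DD property nor anything beyond the sorted adjacency lists, so \computeP$(G,S)$ still outputs $p_v(g)$ exactly.

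Next I would bound the probability $p(g)=p(v)\,p_v(g)=\frac{b_v}{Z}\,p_v(g)$ that one iteration of \Sample\ generates a fixed $g\in\Va(v)$ with $b_v>0$. Here it is essential to bound $b_v$ by its literal value in \GraphSort: $b_v$ is either $(d_v)^{k-1}$ (first loop) or $(d^*_v)^{k-1}$ (second loop), so $(d^*_v)^{k-1}\le b_v\le(d_v)^{k-1}\le(d^*_v/\alpha)^{k-1}$; going through property~2 of Definition~\ref{def:ab_order} instead would cost an extra power of $\beta$ and spoil the next bound. Combining this with the displayed bounds on $p_v(g)$ and substituting $\alpha^{k-1}=\beta\kokm$, $\alpha^{-(k-1)}=\frac{1}{\beta}\kok$ gives
\begin{align}
\frac{\kokm\,\beta}{Z}\ \le\ p(g)\ \le\ \frac{\kok}{\beta\,Z}.
\end{align}
The remainder is mechanical. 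The quantity flipped at line~\ref{line:comp_sample_4} is $p_{\text{acc}}(v,S)=\frac{\beta k^{-C_1 k}}{Z\,p(v)\,p_v(S)}=\frac{\beta k^{-C_1 k}}{Z\,p(g)}$ for $g=G[S]$; the upper bound on $p(g)$ gives $p_{\text{acc}}(v,S)\ge\beta^{2}\kokm=\kokm\epsilon^{2}$, and the lower bound gives $p_{\text{acc}}(v,S)\le k^{-C_1 k}\kok\le 1$ once $C_1$ exceeds the universal constant in the exponent of the $\kokm$ above; hence $p_{\text{acc}}(v,S)\in[\kokm\epsilon^{2},1]$ and line~\ref{line:comp_sample_4} is a valid coin flip. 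For uniformity, $p(v)>0$ iff $b_v>0$, and \sampleS$(G,v)$ returns, each with positive probability, exactly the graphlets of $\Va(v)$, so the graphlets \Sample\ can output are precisely those of $\bigcup_{v:b_v>0}\Va(v)$; for any such $g$, the probability it is returned in a given iteration is $p(g)\,p_{\text{acc}}(v,S)=\frac{\beta k^{-C_1 k}}{Z}$, the same for every $g$, so conditioning on the iteration in which \Sample\ halts yields the uniform distribution over $\bigcup_{v:b_v>0}\Va(v)$.

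The step I expect to be the crux is the second paragraph: certifying that \sampleS\ and \computeP, analysed in Section~\ref{sec:ugs} for an exact $1$-DD order, degrade only by the controlled factor $\alpha^{\pm(k-1)}$ under the approximate order. Closely related is the discipline in the third paragraph of bounding $b_v$ through its definition in \GraphSort\ rather than through Definition~\ref{def:ab_order}, which is what keeps the acceptance-probability slack at $\epsilon^{2}$ instead of a higher power of $\epsilon$; everything else (the final rejection and uniformity computation) is a routine adaptation of the corresponding step for \UniformAlgo.
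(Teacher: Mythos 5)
Your proposal is correct and follows essentially the same route as the paper's proof: it rewrites $p_{\text{acc}}(v,S)$ as $\beta k^{-C_1 k}/(b_v\,p(S))$, re-derives the cut bounds of Lemma~\ref{lem:cuti} under the $(\alpha,\beta)$-DD order to get $p(S)\in\big[\frac{\alpha^{k-1}}{(k-1)!}, (k-1)!^{3}\big]\cdot \dg{v}{G(v)}^{-(k-1)}$, bounds $b_v$ directly via its value in \GraphSort\ together with $d_v\le \frac{1}{\alpha}\dg{v}{G(v)}$, and then substitutes $\alpha^{k-1}=\beta\,\kokm$ to land the acceptance probability in $[\epsilon^{2}\kokm,1]$, with uniformity following because the accepted-and-sampled probability is constant over $\bigcup_{v:b_v>0}\Va(v)$. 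The observation you flag as the crux — that only the upper bound on the cut degrades (by $1/\alpha$) while the lower bound and the \computeP\ correctness are untouched — is exactly the observation the paper makes.
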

\begin{proof}
Rewrite:
\begin{align}
p_{\text{acc}}(v,S) = \frac{1}{p(v)\, p(S)} \frac{\beta}{Z} k^{-C_1 k} 
= \frac{1}{\frac{b_v}{Z} p(S)} \frac{\beta}{Z} k^{-C_1 k} 
= \frac{\beta k^{-C_1 k}}{b_v \, p(S)}\label{eq:pacc_vS}
\end{align}
If \GraphSort$(G, \alpha,\beta)$ succeeds then $(\prec,\ba)$ is an $(\alpha,\beta)$-order with $\alpha=\beta^{\frac{1}{k-1}}\frac{1}{6k^3}$; we will show that, if this is the case, then the last expression in~\eqref{eq:pacc_vS} is in $[\epsilon^2 \kokm, 1]$.
%We will prove that $\frac{1}{p(v)\, p(S)} \frac{\beta}{Z} k^{-C_1 k} \in [\epsilon^2 \kokm, 1]$.
This implies that $p_{\text{acc}}(v,S)$ is a well-defined probability; the uniformity of the returned graphlets then follows immediately from the fact that the sampling routine is the one of \UniformAlgo.

\paragraph{Upper bound.} We bound $b_v \, p(S)$ from below. First, since $v$ was chosen, then $p_v > 0$ and thus $b_v > 0$, in which case by construction \GraphSort$(G, \alpha,\beta)$ sets:
\begin{align}
    b_v = \min\left(d_v, d(v|G(v))\right)^{k-1} \ge d(v|G(v))^{k-1} \label{eq:lb_bv}
\end{align}
Now we adapt the lower bound on $p(S)$ of Lemma~\ref{lem:subgraph_sample} by modifying~\eqref{eq:LB_prob_Si}. By Definition~\ref{def:ab_order}, $b_v > 0$ implies $|\Va(v)|>0$, so the hypotheses of Lemma~\ref{lem:subgraph_sample} are satisfied. Since $G[S]$ is connected, then at least one sequence $v,u_2,\ldots,u_k$ exists such that $\cut_i \ge 1$ for all $i=1,\ldots,k-1$, and $p(S)$ is at least the probability that \sampleS\ follows that sequence. By Lemma~\ref{lem:approx_order}, all $u \succ v$ satisfy $\dg{u}{G(v)} \le \frac{1}{\alpha} \cdot \dg{v}{G(v)}$; this holds for $u=v$  as well, since $\alpha \le 1$. It follows that $\cut_i \le \frac{i}{\alpha}\,\dg{v}{G(v)}$ for all $i=1,\ldots,k-1$. Hence, for all $i=1,\ldots,k-1$,~\eqref{eq:LB_prob_Si} becomes:
\begin{align}
    \Pr(u_{i+1}=u') \ge \frac{\alpha}{i\, \dg{v}{G(v)}}
\end{align}
Thus the probability that the algorithm follows $v,u_2,\ldots,u_k$ is at least
\begin{align}
    p(S) \ge \prod_{i=1}^{k-1} \frac{\alpha}{i\, \dg{v}{G(v)}} = \frac{\alpha^{k-1}}{(k-1)! \, \dg{v}{G(v)}^{k-1}} \label{eq:lb_pS}
\end{align}
Combining~\eqref{eq:lb_bv} and~\eqref{eq:lb_pS}, we conclude that for some absolute constant $C_2$:
\begin{align}
    b_v \cdot p(S) \ge \dg{v}{G(v)}^{k-1} \cdot \frac{\alpha^{k-1}}{(k-1)! \, \dg{v}{G(v)}^{k-1}} \ge \frac{\alpha^{k-1}}{(k-1)!}
\end{align}
and therefore
\begin{align}
\frac{\beta k^{-C_1 k}}{b_v \, p(S)}
\le \frac{\beta k^{-C_1 k}}{ \frac{\alpha^{k-1}}{(k-1)!} } 
\le \frac{\beta k^{-(C_1-C_2)k}}{\alpha^{k-1}}
\end{align}
Since $\alpha=\beta^{\frac{1}{k-1}}\frac{1}{6k^3}$, we have:
\begin{align}
\frac{\beta k^{-(C_1-C_2)k}}{\alpha^{k-1}}
= \frac{\beta k^{-(C_1-C_2)k} (6k^3)^{k-1}}{\beta}
\le k^{-(C_1-C_3)k}
\end{align}
for some constant $C_3$. Choosing $C_1 \ge C_3$, the acceptance probability is in $[0,1]$.

\paragraph{Lower bound.} We bound $b_v \, p(S)$ from above. On the one hand, note that the upper bound on $p(S)$ of Lemma~\ref{lem:subgraph_sample} applies even for an $(\alpha,\beta)$-order. Indeed, that bound is based on the lower bound of Lemma~\ref{lem:cuti} whose proof uses only $d(v|G(v))$ but not $d(u|G(v))$ for any $u \succ v$. Thus,
\begin{align}
    p(S) \le \frac{k^{C_4 k}}{\dg{v}{G(v)}^{k-1}}
\end{align}
for some constant $C_4$. On the other hand, $b_v \le d_v^{k-1}$ by construction of \GraphSort. Moreover, since $b_v>0$, by Definition~\ref{def:ab_order} we have $d_v \le \frac{1}{\alpha} \dg{v}{G(v)}$. Therefore,
\begin{align}
    b_v \le \frac{1}{\alpha^{k-1}} \dg{v}{G(v)}^{k-1}
\end{align}
We conclude that:
\begin{align}
    b_v \cdot p(S) \le \frac{1}{\alpha^{k-1}} \dg{v}{G(v)}^{k-1} \frac{k^{C_4 k}}{\dg{v}{G(v)}^{k-1}}
    = \frac{k^{C_4 k}}{\alpha^{k-1}}
\end{align}
Since $\alpha=\beta^{\frac{1}{k-1}}\frac{1}{6k^3}$, we have:
\begin{align}
    b_v \cdot p(S) \le \frac{k^{C_4 k} (6k^3)^{k-1}}{\beta} \le \frac{k^{C_5 k}}{\beta}
\end{align}
for some constant $C_5$. Hence,
\begin{align}
    \frac{\beta k^{-C_1 k}}{b_v \, p(S)} \ge \beta k^{-C_1 k} \frac{\beta}{k^{C_5 k}} = \beta^2 c^{-(C_5+C_1) k}
\end{align}
Replacing $\beta=\frac{\epsilon}{2}$ shows that the acceptance probability is at least $\epsilon^2\kokm$, as claimed.
\end{proof}

\ComparisonAlgo\ is now our baseline. Our goal is building an algorithm whose output distribution is $\frac{\epsilon}{2}$-close to that of \ComparisonAlgo, \emph{without using the sorted adjacency lists}. To this end we will carefully re-design the routines of \UniformAlgo\ and use several coupling arguments. In what follows we assume that $V(G)$ is sorted by $\prec$, and we fix some $v \in V(G)$ with $b_v > 0$.

\subsubsection{Approximating the cuts}
First, we show how to estimate efficiently the size of the cuts encountered by the random growing process. We will use these estimates to approximate the random growing process itself, as well as the computation of $p_v(g)$. The quality of our estimates and the cost of computing them are both based on the properties of $(\alpha,\beta)$-DD orders.
\begin{algorithm}[h]
\caption{\CutEstim$(G,v,U,\alpha,\beta,\delta)$}
\begin{algorithmic}[1]
\State let $\ell \assign \frac{1}{k \delta \alpha^2}$,~~$h \assign \Theta\big(\ell^2 \lg{\frac{k}{\beta}}\big)$
\State let $\hat{\cut}(U) = 0$
\For{each $u \in U$}
\State sample $h$ neighbors $x_1,\ldots,x_h$ of $u$ i.i.d.\ u.a.r.
\State let $X \assign \sum_{j=1}^h \Ind{x_j \succ v \,\wedge\, x_j \notin U}$
\IfThenElse{$X \ge \ell$}{let $\hat{\cut}(u) \assign \frac{d_u}{h}X$ }{let $\hat{\cut}(u) \assign 0$} \label{line:assign}
\EndFor
\State \Return $\{\hat{\cut}(u)\}_{u \in S}$
\end{algorithmic}
\end{algorithm}

\begin{lemma}
\label{lem:cutestim_time}
\CutEstim$(G,v,U,\alpha,\beta)$ runs in time $\scO\left(|U|^2 \frac{1}{k \delta^2\alpha^4} \lg\frac{1}{\beta}\right)$.
\end{lemma}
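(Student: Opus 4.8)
The plan is a straightforward line-by-line running-time analysis of \CutEstim, the only subtlety being that \CutEstim\ must realize every operation using only the constant-time graph-access queries together with the stored array of $s$-values — it does \emph{not} get to use sorted adjacency lists. Write $\ell = \frac{1}{k\delta\alpha^2}$ as in the algorithm, so that $h = \Theta\!\big(\ell^2\lg\tfrac{k}{\beta}\big) = \Theta\!\big(\tfrac{1}{k^2\delta^2\alpha^4}\lg\tfrac{k}{\beta}\big)$; this value is computed once at the start in $\scO(1)$ time, and the $\hat{\cut}(\cdot)$ are initialized in $\scO(|U|)$ time.

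First I would bound the cost of a single iteration of the main loop, for a fixed $u \in U$. Drawing one uniform random neighbor of $u$ costs $\scO(1)$: a degree query returns $d_u$, we pick $i$ uniformly in $\{1,\dots,d_u\}$, and a neighbor query returns the $i$-th neighbor of $u$. Hence sampling $x_1,\dots,x_h$ costs $\scO(h)$. Next, to compute $X=\sum_{j=1}^h \Ind{x_j \succ v \,\wedge\, x_j \notin U}$, observe that for each $j$ the test $x_j \succ v$ is evaluated in $\scO(1)$ by comparing $(s_{x_j},x_j)$ with $(s_v,v)$ — exactly as in the proof of Lemma~\ref{lem:approx_order_time}, the order $\prec$ is never materialized, only the $s$-values are stored — while the test $x_j \notin U$ is done by scanning the explicit set $U$, in $\scO(|U|)$ time (a hash table for $U$ would make this $\scO(1)$ expected and only improve the bound). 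Thus forming $X$ costs $\scO(h\,|U|)$, and the conditional assignment of $\hat{\cut}(u)$ at line~\ref{line:assign} costs $\scO(1)$. So one iteration costs $\scO(h\,|U|)$.

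Summing over the $|U|$ iterations of the loop gives total time $\scO(|U|^2 h) = \scO\!\big(|U|^2\,\tfrac{1}{k^2\delta^2\alpha^4}\lg\tfrac{k}{\beta}\big)$. Since $\beta = \tfrac{\epsilon}{2} \le \tfrac12$ and $k \ge 2$, we have $\lg\tfrac{k}{\beta} = \lg k + \lg\tfrac1\beta \le k\lg\tfrac1\beta$, hence $\tfrac{1}{k^2}\lg\tfrac{k}{\beta} \le \tfrac1k\lg\tfrac1\beta$, and the running time is $\scO\!\big(|U|^2\,\tfrac{1}{k\delta^2\alpha^4}\lg\tfrac1\beta\big)$, as stated. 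There is no genuine obstacle here — the proof is bookkeeping — but the one point that truly requires care is that all three primitives used inside the loop (uniform neighbor sampling, the $\succ$-test, and the $\notin U$ test) are implementable within this budget \emph{without} sorted adjacency lists, which is precisely the reason \CutEstim\ is introduced in place of the exact cut computations of \UniformAlgo.
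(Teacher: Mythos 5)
Your proof is correct and follows essentially the same route as the paper's: bound the per-sample cost of the indicator test by $\scO(|U|)$, multiply by $h=\Theta\big(\tfrac{1}{k^2\delta^2\alpha^4}\lg\tfrac{k}{\beta}\big)$ samples and $|U|$ iterations, and absorb $\lg\tfrac{k}{\beta}$ into $k\lg\tfrac{1}{\beta}$. Your added remarks on implementing neighbor sampling and the $\succ$-test without sorted adjacency lists are correct and only make the bookkeeping more explicit than the paper's version.
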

\begin{proof}
%The algorithm makes $|U|$ iterations.
At each iteration \CutEstim$(G,v,U,\alpha,\beta)$ draws $h=\scO\left(\frac{1}{k^2\delta^2\alpha^4}\right) \lg \frac{k}{\beta}$ samples, which is in $\scO\left(\frac{1}{k \delta^2\alpha^4} \lg\frac{1}{\beta}\right)$ as $\lg \frac{k}{\beta}=\scO\left(k\lg\frac{1}{\beta}\right)$. For each sample, computing $\Ind{x_j \succ v \,\wedge\, x_j \notin U}$ takes time $\scO(|U|)$ via edge queries. Summing over all iterations gives a bound of $\scO(|U|) \cdot \scO\left(\frac{1}{k \delta^2\alpha^4} \lg\frac{1}{\beta}\right) \cdot \scO(|U|) = \scO\left(|U|^2\frac{1}{k \delta^2\alpha^4} \lg\frac{1}{\beta}\right)$.
%Thus, the algorithm runs in time $|U|\poly(k) \scO(\frac{1}{\alpha^4\beta^2} \lg\nicefrac{1}{\beta})$.
\end{proof}

\begin{lemma}
\label{lem:cutestim}
%Suppose that, after running \GraphSort$(G, \epsilon_0)$, the high-probability claims of Lemma~\ref{lem:approx_order} hold.
%Let $V(G)$ be sorted by an $(\alpha,\beta)$-DD order $(\prec,\ba)$.
Let $G[U]$ be a connected subgraph of $G(v)$ on $i < k$ vertices containing $v$.
%Let $U \subseteq V(G[v])$ be such that $v \in U$, $|U|=i\le k-1$, and $G[U]$ is connected.
With probability $1-\poly \frac{\beta}{k}$, the output of \CutEstim$(G,v,U,\alpha,\beta,\delta)$ satisfies:
\begin{align}
&\left|\hat{\cut}(u) - \cut(u)\right| \le \delta\, d(v|G(v))
\;\; \forall u \in U
%\\ &|\hat{\cut}(U)-\cut(U)| \le \scO\left(\frac{\beta^2\alpha^2}{k^2} \cut(U)\right)
\end{align}
where $\cut(u) = |\Cut(u, G(v) \setminus U)|$. % and $\cut(U)=\sum_{u \in U}\cut(u)=|\Cut(U, G(v)\setminus U)|$.
In this case, then $|\hat{\cut}(U) - \cut(U)| \le |U| \delta k \cut(U)$ too, where $\hat{\cut}(U)=\sum_{u \in U}\hat{\cut}(u)$ and $\cut(U)=\sum_{u \in U}\cut(u)=|\Cut(U, G(v)\setminus U)|$.
\end{lemma}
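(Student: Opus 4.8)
The plan is to view $\CutEstim(G,v,U,\alpha,\beta,\delta)$ as running an independent estimation experiment at each $u\in U$, to prove a per‑vertex additive guarantee via Hoeffding's inequality, to union‑bound over the at most $k$ vertices of $U$, and finally to convert the additive bound on $\cut(U)$ into a multiplicative one by invoking the (order‑agnostic) lower bound of Lemma~\ref{lem:cuti}.

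First I would set up the per‑vertex estimator. Fix $u\in U$. Since $v\in U$, a uniform random neighbor $x_j$ of $u$ satisfies $\Ind{x_j\succ v \,\wedge\, x_j\notin U}=\Ind{x_j\in\Cut(u,G(v)\setminus U)}$ (the case $x_j=v$ is ruled out by $x_j\notin U$, so $\succ$ and $\succeq$ coincide here), hence $X$ is a sum of $h$ i.i.d.\ Bernoulli variables with mean $\cut(u)/d_u$ and $\E X=\tfrac{h}{d_u}\cut(u)$, making $\tfrac{d_u}{h}X$ unbiased for $\cut(u)$. A key point is that $\hat{\cut}(u)$ is the \emph{truncated} estimator of line~\ref{line:assign}, so the two regimes $X\ge\ell$ and $X<\ell$ must be argued separately. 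The quantitative leverage comes from the ambient hypothesis $b_v>0$: properties (3) and (4) of Definition~\ref{def:ab_order} give, for every $u\in U$ (all of which satisfy $u\succeq v$), the bound $\dg{v}{G(v)}\ge\alpha d_v\ge\alpha(3k\alpha d_u)=3k\alpha^2 d_u$, which is exactly what makes the target error $\delta\,\dg{v}{G(v)}$ large compared with $h/\ell$.

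Next comes the concentration. Setting $t=\tfrac{\delta}{2}\tfrac{h}{d_u}\dg{v}{G(v)}$, this degree bound gives $t\ge\tfrac{3}{2}k\alpha^2\delta h=\tfrac{3h}{2\ell}$ since $\ell=\tfrac{1}{k\delta\alpha^2}$, hence $t^2/h=\Omega(h/\ell^2)=\Omega(\lg\tfrac{k}{\beta})$ by the choice $h=\Theta(\ell^2\lg\tfrac{k}{\beta})$; Hoeffding's inequality then bounds $\Pr(|X-\E X|>t)$ by $2e^{-2t^2/h}=\poly\tfrac{\beta}{k}$, with the degree of $\poly\tfrac{\beta}{k}$ tunable through the hidden constant in $h$. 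Conditioning on $|X-\E X|\le t$ I would split cases: if $X\ge\ell$ then $|\hat{\cut}(u)-\cut(u)|=\tfrac{d_u}{h}|X-\E X|\le\tfrac{d_u}{h}t=\tfrac{\delta}{2}\dg{v}{G(v)}$; if $X<\ell$ then $\hat{\cut}(u)=0$ and $\cut(u)=\tfrac{d_u}{h}\E X\le\tfrac{d_u}{h}(\ell+t)=\tfrac{d_u\ell}{h}+\tfrac{\delta}{2}\dg{v}{G(v)}$, where $\tfrac{d_u\ell}{h}\le\tfrac{\delta}{2}\dg{v}{G(v)}$ follows from $d_u\le\dg{v}{G(v)}/(3k\alpha^2)$ together with the choice of $h$; so in both cases $|\hat{\cut}(u)-\cut(u)|\le\delta\,\dg{v}{G(v)}$. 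Since $|U|=i<k$, a union bound over $u\in U$ still leaves probability $1-\poly\tfrac{\beta}{k}$ that this holds for all $u\in U$, whence $|\hat{\cut}(U)-\cut(U)|\le\sum_{u\in U}|\hat{\cut}(u)-\cut(u)|\le|U|\,\delta\,\dg{v}{G(v)}$. For the multiplicative form I would order $U$ as $v=s_1,\dots,s_i$ with every prefix connected (a BFS order of $G[U]$ rooted at $v$) and apply the lower bound of Lemma~\ref{lem:cuti}, whose proof uses only $\dg{v}{G(v)}$ and hence survives for $(\alpha,\beta)$-DD orders: this gives $\cut(U)\ge\dg{v}{G(v)}/i>\dg{v}{G(v)}/k$, i.e.\ $\dg{v}{G(v)}\le k\,\cut(U)$, and substituting yields $|\hat{\cut}(U)-\cut(U)|\le|U|\,\delta\,k\,\cut(U)$.

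The main obstacle is the truncation handled in the per‑vertex step: because $\hat{\cut}(u)$ is zeroed whenever $X<\ell$, one cannot argue by ``unbiasedness plus concentration'' alone, and the delicate part is calibrating $\ell$ small and $h$ large — both mediated by the $(\alpha,\beta)$-DD degree inequality $\dg{v}{G(v)}\ge 3k\alpha^2 d_u$ — so that a true cut exceeding $\delta\,\dg{v}{G(v)}$ provably forces $X\ge\ell$ (keeping an accurate estimate) while a zeroed estimate certifies a genuinely small true cut. The remaining pieces (the Bernoulli setup, the union bound over $|U|<k$ vertices, and the additive‑to‑multiplicative conversion through Lemma~\ref{lem:cuti}) are routine.
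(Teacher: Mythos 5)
Your proposal is correct and follows essentially the same route as the paper's proof: an unbiased Bernoulli estimator per vertex, the degree inequality $d(v|G(v))\ge 3k\alpha^2 d_u$ from properties (3) and (4) of Definition~\ref{def:ab_order}, Hoeffding plus a separate treatment of the truncation at $\ell$, a union bound over $U$, and the lower bound of Lemma~\ref{lem:cuti} for the additive-to-multiplicative conversion. The only (harmless) difference is that you handle the $X<\ell$ branch by conditioning on a single concentration event with threshold $\delta/2$, whereas the paper runs a second, separate large-deviation bound showing $X<\ell$ is itself unlikely when $\cut(u)$ is large; both calibrations of $\ell$ versus $h$ are equivalent.
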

\begin{proof}
Fix any $u \in U$, and consider the iteration where the edges of $u$ are sampled. For each $j\in[h]$ let $X_{u,j}=\Ind{x_j \succ v \,\wedge\, x_j \notin U}$. Clearly, $\E[X_{u,j}]=\frac{h \cut(u)}{d_u}$. Let $X_u=\sum_{j=1}^h X_{u,j}$; this is the value of $X$ tested by the algorithm at $u$'s round. To begin, we note that:
\begin{align}
    \E\left[\frac{d_u}{h}X_u\right] = \frac{d_u}{h}\E\left[\sum_{j=1}^h X_{u,j}\right] = \cut(u)
\end{align}
%\paragraph*{Bounds for $\hat{c}(u)$.}
Define $\gamma(u)=\frac{d_u}{h}X_u$. Clearly,
\begin{align}
\Pr\Big(|\gamma(u) - \cut(u)| > \delta d(v|G(v)) \Big) &= \Pr\left(|X_u - \E X_u| > h\, \delta\, \frac{d(v|G(v))}{d_u}\right)
\end{align}
Note that the algorithm can set $\hat{\cut}(u)=\gamma(u)$ or $\hat{\cut}(u)=0$. First, we show that $\gamma(u)$ is concentrated around $\cut(u)$. Then, we deal with the value of $\hat{\cut}(u)$ set by the algorithm.

Clearly $X$ is the sum of $h$ i.i.d.\ indicator random variables.
By Hoeffding's inequality, for any $\delta>0$, we have $\Pr(|X_u- \E X_u| > t) < 2e^{-2\frac{t^2}{h}}$.
With $t=\frac{\delta\, d(v|G(v))\, h}{d_u}$, we obtain:
\begin{align}
\Pr\left(|X_u - \E X_u| >  h\, \delta\, \frac{d(v|G(v))}{d_u}\right)
&< 2 \exp\left(-2 h\, \delta^2 \left(\frac{d(v|G(v))}{d_u}\right)^2\right)
\end{align}

Now, as $b_v > 0$ and $u \succ v$, by Definition~\ref{def:ab_order} we have $d(v|G(v)) \ge \alpha d_v$ and $d_v \ge 3 k \alpha d_u$.
Hence, $d_u < \frac{1}{3k\alpha^2} d(v|G(v))$, so $\frac{d(v|G(v))}{d_u} > 3k\alpha^2$.
%Four our purposes here it is enough to use $d_u \le \alpha^{-2} d(v|G(v))$, which yields:
Therefore:
\begin{align}
    h\, \delta^2 \left(\frac{d(v|G(v))}{d_u}\right)^2 \ge 9 h\, \delta^2 k^2 \alpha^4
%    \\&= 2 \exp\left(-\Theta\left(k^2 \lg \frac{k}{\beta} \right)\right)
%    = 2 \exp\left(-2 \delta^2 \alpha^4 h\right)
    %= 2 \exp\left(-2 \delta^2 \alpha^4 \Theta\left(\frac{1}{\delta^2\alpha^4} \lg\frac{k}{\beta}\right) \right)
    %= 2 \exp\left(-\Theta\left( \lg\frac{k}{\beta}) \right)\right)
%    \\&=2 \exp\left(- \frac{\beta^2 h \eta^4}{8 k^4}\right)
\end{align}
However, note that $h=\Theta\left(\frac{1}{\delta^2k^2\alpha^4} \lg \frac{k}{\beta}\right)$.
Thus, $\Pr\big(|\gamma(u) - \cut(u)| > \delta\, d(v|G(v)) \big) \le \poly\frac{\beta}{k}$.

Now, the algorithm fails if it either sets $\hat{\cut}(u)=\gamma(u)$ and $|\gamma(u) - \cut(u)| > \delta\, d(v|G(v))$, or if it sets $\hat{\cut}(u)=0$ and $|0 - \cut(u)| > \delta\, d(v|G(v))$.
The probability of the first event is at most the probability that $|\gamma(u) - \cut(u)| > \delta\, d(v|G(v))$, which is $\poly\frac{\beta}{k}$ as shown above. So we must bound the probability that $\hat{\cut}(u)=0$ and $|0 - \cut(u)| > \delta d(v|G(v))$; this second condition is just $\cut(u) > \delta\, d(v|G(v))$. Recalling that $\E X_u = h \frac{\cut(u)}{d_u}$ and that $\frac{d(v|G(v))}{d_u} \ge 3k\alpha^2$ and $\delta=\frac{\beta}{4k^2}$, we obtain:
\begin{align}
\E X_u = h \frac{\cut(u)}{d_u} > h \frac{\delta\, d(v|G(v))}{d_u}  \ge 3k h \alpha^2 \delta %= h \frac{\alpha^2 \beta}{4k^2} = \frac{h}{4\ell}
\end{align}
Note that $h \alpha^2 \delta = \frac{h}{\ell} \in \Omega\left(\sqrt{h\lg \frac{k}{\beta}}\right)$.
Also note that $\ell = o\big(\frac{h}{\ell}\big)$.
This implies:
\begin{align}
    \E X_u - \ell
    %&\ge \frac{h}{4\ell} - \ell
    = \Omega\left(\sqrt{h\lg \frac{k}{\beta}}\right)
\end{align}
Now, $\hat{\cut}(u)=0$ by construction of the algorithm implies $X_u < \ell$, which can be rewritten as $X_u < \E X_u -t$ with $t = \E X_u - \ell = \Omega\left(\sqrt{h\lg \frac{k}{\beta}}\right)$.
Since $X_u$ is the sum of $h$ i.i.d.\ indicator random variables, Hoeffding's inequality gives:
\begin{align}
    \Pr\left(X_u - \E X_u < t \right) < e^{-\frac{2t^2}{h}} &= 
    %\exp\!\left(\!-\frac{\E X_u - \ell}{h}\right)
     \exp\left(-\frac{\Omega\big(\sqrt{h\lg \frac{k}{\beta}}\big)^2}{h}\right)
    = \exp\left(-\Omega\left(\frac{k}{\beta}\right)\right)
\end{align}
Hence this event has probability at most $\poly \frac{\beta}{k}$, too. We conclude that $\left|\hat{\cut}(u) - \cut(u)\right| \le \delta\, d(v|G(v))$ with probability at least $1-\poly \frac{\beta}{k}$. By a union bound over $u \in S$, this proves the claim for the $\hat{\cut}(u)$.

For $\hat{\cut}(U)$, note that:
\begin{align}
    |\hat{\cut}(U) - \cut(U)| &= \left|\sum_{u \in U}\hat{\cut}(u) - \sum_{u \in U}\cut(u)\right|
     \le \sum_{u \in U}\left|\hat{\cut}(u) - \cut(u)\right|
    \le |U| \,\delta\, d(v|G(v))
\end{align}
By Lemma~\ref{lem:cuti} $d(v|G(v)) \le k \cut(U)$, concluding the proof.
\end{proof}

\subsubsection{Approximating the random growing process}
Using \CutEstim\ we now run an approximate random growing process as follows. Start with $S_1=\{v\}$, and at each step $i=1,\ldots,k-1$, run \CutEstim\ with $U=S_i$. This gives estimates of $|\Cut(u, G(v) \setminus S_i)|$ for all $u \in S_i$. Using these estimates, sample a random edge near-uniformly from $\Cut(S_i, G(v) \setminus S_i)$. The result is the following routine whose output distribution is close to \sampleS.

\begin{algorithm}[h]
\caption{\EpsilonSampleS$(G,v,\alpha,\beta,\gamma)$}
\begin{algorithmic}[1]
\State $S_1 \assign \{v\}$
\For{$i = 1,\ldots,k-1$} \label{line:eps_sample:for_i}
\State $(\hat{\cut_i}(u))_{u \in S_i}$ = \CutEstim$\big(G,v,S_i,\alpha,\beta,\scO\big(\gamma k^{-4}\big)\big)$
\State $\hat{\cut}_i = \sum_{u \in S_i} \hat{\cut_i}(u)$
\State draw $u$ with probability $\frac{\hat{\cut_i}(u)}{\hat{\cut}_i}$ (if all $\hat{\cut_i}(u)=0$ then \textsc{fail}) \label{line:u_apx}
\Repeat \label{line:trial_start}
\State draw $u'$ u.a.r.\ from the adjacency list of $u$
\label{line:u1_apx}
\Until{$u' \in G(v) \setminus S_i$}\label{line:trial_end} %\Comment{check that $u' \succ v \wedge u' \notin S_{i}$} \label{line:trial_end}
\State let $S_{i+1} \assign S_{i} \cup \{u'\}$ 
\EndFor
\State \Return $S_k$
\end{algorithmic}
\end{algorithm}

\begin{lemma}%[See Appendix~\ref{apx:epsSample}]
\label{lem:apx_sample}
Let $p$ be the output distribution of \sampleS$(G, v)$ and $q$ the output distribution of \EpsilonSampleS$(G,v,\alpha,\beta,\gamma)$. Then, $\tvd{p}{q}\le \gamma +\poly\frac{\beta}{k}$. %, where the degree of $\poly \frac{\beta}{k}$ can be made arbitrarily large by adjusting the constants.
\end{lemma}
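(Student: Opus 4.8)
The plan is a step-by-step coupling of a run of \sampleS$(G,v)$ with a run of \EpsilonSampleS$(G,v,\alpha,\beta,\gamma)$, in which we keep the partially-grown subgraph $S_i$ identical in the two runs for as long as possible. First I would dispose of the trivial case $\gamma \ge 1$ (where the bound is vacuous) and assume $\gamma < 1$. Then I would introduce the \emph{good event} $\mathcal{E}$ that all $k-1$ invocations of \CutEstim\ inside \EpsilonSampleS\ (one per iteration $i=1,\ldots,k-1$) succeed in the sense of Lemma~\ref{lem:cutestim}. Here I must check that the preconditions of Lemma~\ref{lem:cutestim} genuinely hold at every step: the set $U=S_i$ passed to \CutEstim\ contains $v$, has $i<k$ vertices, and induces a connected subgraph of $G(v)$ --- which follows by induction on the growing process, since $S_{i+1}=S_i\cup\{u'\}$ with $u'$ adjacent to some vertex of $S_i$ and $u'\in G(v)\setminus S_i$. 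Each call then fails with probability $\poly\frac{\beta}{k}$, and a union bound gives $\Pr(\overline{\mathcal{E}})\le\poly\frac{\beta}{k}$. On $\mathcal{E}$ I record that for all $i$ and all $u\in S_i$ we have $|\hat{\cut}_i(u)-\cut_i(u)|\le\delta\,\dg{v}{G(v)}$ (hence $|\hat{\cut}_i-\cut_i|\le k\delta\,\dg{v}{G(v)}$), where $\delta=\scO(\gamma k^{-4})$ is the last argument that \EpsilonSampleS\ passes to \CutEstim, and $\cut_i(u)=|\Cut(u,G(v)\setminus S_i)|$, $\cut_i=\sum_{u\in S_i}\cut_i(u)$.

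The quantitative core is a per-step total-variation bound. Fix $i$ and condition on $\mathcal{E}$ and on the common value of $S_i$. In \sampleS\ the next pivot $u$ is drawn from $\mu_i(u)=\cut_i(u)/\cut_i$, while in \EpsilonSampleS\ it is drawn from $\nu_i(u)=\hat{\cut}_i(u)/\hat{\cut}_i$. I would use Lemma~\ref{lem:cuti} to get $\cut_i\ge \dg{v}{G(v)}/i\ge\dg{v}{G(v)}/k$, which (for the constant hidden in $\delta$ small enough that $\delta\le 1/(2k^2)$) also gives $\hat{\cut}_i\ge\dg{v}{G(v)}/(2k)>0$; in particular \EpsilonSampleS\ never reaches its \textsc{fail} branch on $\mathcal{E}$, and moreover $\hat{\cut}_i(u)>0$ can occur only if some sampled neighbour of $u$ lies in $G(v)\setminus S_i$, so $\cut_i(u)\ge 1$ and the rejection loop at lines~\ref{line:trial_start}--\ref{line:trial_end} terminates almost surely. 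Writing $D=\dg{v}{G(v)}$, a short computation expanding $|\mu_i(u)-\nu_i(u)|=|\cut_i(u)\hat{\cut}_i-\hat{\cut}_i(u)\cut_i|/(\cut_i\hat{\cut}_i)$, bounding the numerator by the two additive errors and the denominator from below, gives $|\mu_i(u)-\nu_i(u)|\le 2k\delta\,(1+k\,\cut_i(u)/\cut_i)$, and summing over $u\in S_i$ (using $|S_i|=i\le k$ and $\sum_u\cut_i(u)=\cut_i$) yields $\sum_{u}|\mu_i(u)-\nu_i(u)|\le 2k\delta(i+k)\le 4k^2\delta$, i.e.\ $\tvd{\mu_i}{\nu_i}\le 2k^2\delta$. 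The last observation I would make is that, once the pivot $u$ is fixed, \emph{both} algorithms draw the new vertex $u'$ from the same law: the uniform distribution over the neighbours of $u$ in $G(v)\setminus S_i$ (in \EpsilonSampleS\ this is exactly what the rejection loop realizes).

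With these facts I would finish by exhibiting the coupling. Run the two processes jointly, maintaining a common $S_i$; at step $i$ draw the two copies of the pivot from the optimal coupling of $\mu_i$ and $\nu_i$, and if they agree draw the common $u'$ from the identical conditional law so that $S_{i+1}$ stays common, while if they disagree let the processes evolve independently afterwards. On $\mathcal{E}$ the chance that the coupled runs ever disagree during the $k-1$ steps is at most $\sum_{i=1}^{k-1}\tvd{\mu_i}{\nu_i}\le(k-1)\cdot 2k^2\delta\le\gamma$, where the final inequality pins down the constant hidden in $\delta=\scO(\gamma k^{-4})$; and when the runs never disagree they return the same $S_k$. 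Combining this with $\Pr(\overline{\mathcal{E}})\le\poly\frac{\beta}{k}$ through the coupling inequality gives $\tvd{p}{q}\le\gamma+\poly\frac{\beta}{k}$, as claimed.

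The main obstacle is the per-step total-variation bound of the second paragraph. It works only because the normalizers $\cut_i$ and $\hat{\cut}_i$ are bounded below by $\Theta(\dg{v}{G(v)}/k)$ --- a fact that rests on the $1$-DD--like structure of $\prec$ via Lemma~\ref{lem:cuti} --- and because \CutEstim\ delivers an \emph{additive} error of $\delta\,\dg{v}{G(v)}$, which is therefore automatically an $\scO(k\delta)$ \emph{relative} error of the quantities that actually drive the pivot distribution. A secondary point requiring care is the bookkeeping: the failure of $\mathcal{E}$, the possible \textsc{fail} of \EpsilonSampleS, and the divergence of the coupling all have to be folded into the single estimate $\tvd{p}{q}$, and one must verify that Lemma~\ref{lem:cutestim}'s connectivity and containment hypotheses are met at every step of the growing process --- which they are, precisely because each new vertex is drawn adjacent to $S_i$ and inside $G(v)$.
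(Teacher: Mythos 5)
Your proposal is correct and follows essentially the same route as the paper's proof: a step-by-step coupling in which the per-step disagreement probability is the total variation distance between the exact pivot distribution $\cut_i(u)/\cut_i$ and the estimated one $\hat{\cut}_i(u)/\hat{\cut}_i$, bounded via the additive guarantees of Lemma~\ref{lem:cutestim} together with the lower bound $\cut_i \ge \dg{v}{G(v)}/i$ from Lemma~\ref{lem:cuti}, plus the observation that conditioned on the same pivot both algorithms draw $u'$ from the identical uniform law. Your treatment is in fact slightly more careful than the paper's on two side points (ruling out the \textsc{fail} branch and the termination of the rejection loop on the good event), but the core argument and the resulting bound $\gamma + \poly\frac{\beta}{k}$ coincide.
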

\begin{proof}
We establish a coupling between the two algorithms. For $i=1$, both algorithms set $S_1=\{v\}$. Now suppose that both algorithms agree on $S_1,\ldots,S_i$ and they are about to choose $S_{i+1}$. We show that with probability $1-\frac{\gamma}{Ck}$ they agree on the next edge drawn, and therefore on $S_{i+1}$. Here, $C$ is a constant that we can make arbitrarily large by appropriately choosing the constants used along the algorithm and in \CutEstim.

For $u \in S_i$ let $\cut_i(u) = |\Cut(u, G(v) \setminus S_i)|$, and let $\cut_i = \sum_{u \in S_i} \cut_i(u)$.
For each $u \in S_i$, let $p_i(u)= \frac{\cut_i(u)}{\cut_i}$ and $q_i(u)=\frac{\hat{\cut_i}(u)}{\hat{\cut}(S_i)}$. So $p_i(u)$ is the probability that \sampleS\ draws $u$ at line~\ref{line:u}, and $q_i(u)$ the probability that \EpsilonSampleS\ draws $u$ at line~\ref{line:u_apx}.

Now, if \EpsilonSampleS\ and \sampleS\ both choose $u$, then we can couple them so that they choose the same edge. This holds since both algorithms draw $u'$ uniformly from all neighbors of $u$ in $G(v) \setminus S_i$. So the probability that the two algorithms choose a different edge is at most the probability that they choose $u$ differently, that is, by $\tvd{q_i}{p_i} = \frac{1}{2}\|q_i-p_i\|_1$. Therefore:
\begin{align}
    \tvd{q_i}{p_i} &= \frac{1}{2} \sum_{u \in S_i} |q_i(u) - p_i(u)|
    = \sum_{u \in T} (q_i(u) - p_i(u)) 
    = \sum_{u \in T} \left(\frac{\hat{\cut_i}(u)}{\hat{\cut}(S_i)}  - \frac{{\cut_i(u)}}{\cut_i}\right)
\end{align}
where $T = \{u \in S_i \,:\, q_i(u) > p_i(u)\}$.
Now, by Lemma~\ref{lem:cutestim}, with probability $1-\poly\frac{\gamma}{k}$, we have $|\hat{\cut}(u_i)-\cut(u_i)| \le \delta \, d(c|G(v))$ for all $u \in S_i$, and $|\hat{\cut}(S_i) - \cut_i| \le k^2 \delta \cut_i$.
In this case,
\begin{align}
    \frac{\hat{\cut_i}(u)}{\hat{\cut}(S_i)}  - \frac{{\cut_i(u)}}{\cut_i} 
&\le \frac{\cut_i(u) + \delta\,d(c|G(v))}{\cut_i(1-k^2 \delta)}  - \frac{{\cut_i(u)}}{\cut_i} 
\\&= \frac{\cut_i(u) + \delta\,d(c|G(v)) - \cut_i(u)(1-k^2 \delta)}{\cut_i(1-k^2 \delta)} 
\\&= \frac{\delta\,d(c|G(v)) +k^2 \delta \cut_i(u)}{\cut_i(1-k^2 \delta)} 
\end{align}
Clearly, $\cut_i(u) \le \cut_i$, and by Lemma~\ref{lem:cuti}, $d(c|G(v)) \le k \cut_i$.
Therefore:
\begin{align}
\frac{\delta\,d(c|G(v)) +k^2 \delta \cut_i(u)}{\cut_i(1-k^2 \delta)}
& \le\frac{\delta\,k\, \cut_i + k^2 \delta \cut_i}{\cut_i(1-k^2 \delta)} 
\\&= \frac{\delta\,k\, + k^2 \delta}{1-k^2 \delta} 
\\&\le k \cdot \frac{2 \delta\,k^2}{1-k^2 \delta}
\\&= \frac{2 \delta\,k^3}{1-k^2 \delta}
\end{align}
For any $\delta = \scO\big(\frac{\gamma}{k^4}\big)$, this is in $\scO\big(\frac{\gamma}{k^2}\big)$. Taking the sum over $u \in T$, we obtain $\tvd{q_i}{p_i} = \scO\big(\frac{\gamma}{k}\big)$.

Thus, the two algorithms will disagree on $S_{i+1}$ with probability at most $\poly\frac{\beta}{k} + \scO\big(\frac{\gamma}{k}\big)$. By a union bound on all $i$, the algorithms disagree on $S_k$ with probability $\poly\frac{\beta}{k} + \scO(\gamma)$. The $\scO(\gamma)$ part can made smaller than $\gamma$ by choosing $\delta = \scO\big(\frac{\gamma}{k^4}\big)$ small enough.
\end{proof}

\begin{lemma}
\label{lem:apx_sample_cost}
\EpsilonSampleS$(G,v,\alpha,\beta,\gamma)$ has expected running time $\scO\left(\frac{k^9}{\gamma^2 \alpha^4} \lg \frac{1}{\beta}\right)$.
\end{lemma}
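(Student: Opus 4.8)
The plan is to split the work performed in the $k-1$ iterations of the main loop (line~\ref{line:eps_sample:for_i}) into (a)~the call to \CutEstim, (b)~the rejection loop that draws $u'$ (lines~\ref{line:trial_start}--\ref{line:trial_end}), and (c)~everything else --- forming $\hat{\cut}_i$, drawing $u$, and updating $S_{i+1}$ --- which costs only $\scO(k)$ per iteration and is negligible. For (a), I would simply invoke Lemma~\ref{lem:cutestim_time} with $U=S_i$, using $|S_i|<k$ and the accuracy parameter $\delta=\scO(\gamma k^{-4})$ passed to \CutEstim; plugging these in and summing over the $k-1$ iterations gives a total cost of $\scO\!\big(\tfrac{k^9}{\gamma^2\alpha^4}\lg\tfrac1\beta\big)$, which will turn out to dominate. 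So the crux is bounding (b).

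For (b), fix a step $i$ and condition on the set $S_i$ produced so far; for $u\in S_i$ write $\cut_i(u)=|\Cut(u,G(v)\setminus S_i)|$ and $p_u=\cut_i(u)/d_u$. Two observations drive the argument. First, the rejection loop is entered with the vertex $u$ drawn at line~\ref{line:u_apx}, and since each trial picks a uniform neighbour of $u$ with fresh randomness and succeeds exactly when the neighbour lies in $G(v)\setminus S_i$, the expected number of trials given that $u$ is drawn is exactly $1/p_u$ (and $p_u>0$ whenever $u$ can be drawn, since $u$ is drawn only if $\hat{\cut_i}(u)>0$, which forces at least one sampled neighbour into $G(v)\setminus S_i$). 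Second, by line~\ref{line:assign} of \CutEstim the event $\hat{\cut_i}(u)>0$ requires $X_u\ge\ell$, where $X_u$ is a sum of $h$ i.i.d.\ indicators each equal to $1$ with probability $p_u$; hence $\Pr[u\text{ drawn}]\le\Pr[\mathrm{Bin}(h,p_u)\ge\ell]$. Combining these, the expected number of trials contributed by $u$ is at most $\Pr[\mathrm{Bin}(h,p_u)\ge\ell]\cdot\tfrac{1}{p_u}$, and I claim this is $\scO(h/\ell)$ for \emph{every} value of $p_u\in(0,1]$. Indeed, if $p_u\ge\ell/(2h)$ this is trivial since $1/p_u\le 2h/\ell$; and if $p_u<\ell/(2h)$, so that $\mu:=hp_u<\ell/2$, the Chernoff bound $\Pr[\mathrm{Bin}(h,p_u)\ge\ell]\le e^{\ell-\mu}(\mu/\ell)^\ell$ gives, with $x:=\mu/\ell\in(0,\tfrac12)$, the estimate $e^{\ell-\mu}(\mu/\ell)^\ell\cdot\tfrac{h}{\mu}=\tfrac{h}{\ell}\,x^{\ell-1}e^{(1-x)\ell}\le e\cdot\tfrac{h}{\ell}$, a routine one-variable bound (treat $x\le 1/e$ and $x>1/e$ separately, using $x e^{1-x}\le 1$). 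Summing over the $\le k$ vertices of $S_i$ and the $k-1$ steps, the expected total number of rejection trials is $\scO(k^2h/\ell)$, and since each trial costs $\scO(k)$ (a neighbour query, plus tests $u'\succ v$ and $u'\notin S_i$), the total cost of (b) is $\scO(k^3 h/\ell)$.

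It remains to assemble the pieces. With $\ell=\tfrac{1}{k\delta\alpha^2}=\Theta\!\big(\tfrac{k^3}{\gamma\alpha^2}\big)$ and $h=\Theta(\ell^2\lg\tfrac{k}{\beta})$ we have $h/\ell=\Theta\!\big(\tfrac{k^3}{\gamma\alpha^2}\lg\tfrac{k}{\beta}\big)$, so (b) costs $\scO\!\big(\tfrac{k^6}{\gamma\alpha^2}\lg\tfrac{k}{\beta}\big)$, which is dominated by the $\scO\!\big(\tfrac{k^9}{\gamma^2\alpha^4}\lg\tfrac1\beta\big)$ cost of (a); together with the $\scO(k^2)$ cost of (c), the expected running time of \EpsilonSampleS\ is $\scO\!\big(\tfrac{k^9}{\gamma^2\alpha^4}\lg\tfrac1\beta\big)$, as claimed. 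The main obstacle is exactly the analysis of (b): a priori a vertex $u$ with $d_u$ enormous but $\cut_i(u)$ a constant would force a single rejection loop to run for $\sim d_u$ trials, and because the preprocessing guarantees of Lemma~\ref{lem:approx_order} hold only with high probability, one cannot simply condition on a ``good'' event and bound the bad one away --- a worst-case $\Theta(n)$ trial count times a non-vanishing failure probability is not negligible. The resolution, made precise above, is that such a $u$ is selected only with probability $\Pr[\mathrm{Bin}(h,p_u)\ge\ell]$, which decays fast enough in $\ell$ precisely in the regime where $1/p_u$ is large, so the product $\Pr[u\text{ drawn}]\cdot\tfrac1{p_u}$ is $\scO(h/\ell)$ unconditionally, and no high-probability event enters the running-time bound at all.
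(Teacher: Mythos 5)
Your proof is correct and follows essentially the same route as the paper: the same three-way decomposition, the same invocation of Lemma~\ref{lem:cutestim_time} for the dominant term, and the same key observation that the expected rejection-loop cost contributed by $u$ is $\Pr[\hat{\cut_i}(u)>0]\cdot d_u/\cut_i(u)$. The only difference is that the paper bounds $\Pr[X_u\ge\ell]\le \E X_u/\ell = p_u h/\ell$ by Markov's inequality, which yields the product bound $h/\ell$ in one line without your Chernoff case analysis.
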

\begin{proof}
At each iteration, since $|S_i|<k$, by Lemma~\ref{lem:cutestim_time}, obtaining the cut estimates takes time $\scO\left(\frac{k}{\delta^2\alpha^4} \lg \frac{1}{\beta}\right)$. As $\delta = \scO\big(\frac{\gamma}{k^4}\big)$, this gives a bound of $\scO\left(\frac{k k^8}{\gamma^2\alpha^4} \lg \frac{1}{\beta} \right) = \scO\left(\frac{k^9}{\gamma^2\alpha^4} \lg \frac{1}{\beta}\right)$. We show that this dominates the expected time of the trials at lines~\ref{line:trial_start}--\ref{line:trial_end} as well.

Let $T$ be the random variable giving the number of times lines~\ref{line:trial_start}--\ref{line:trial_end} are executed. Let $\ev_u$ be the event that $u \in S_i$ is chosen at line~\ref{line:u_apx}. Clearly $\ev_u$ implies $\hat{\cut_i}(u) > 0$. Thus, $\Pr(\ev_u) \le \Pr\big(\hat{\cut_i}(u) > 0\big)$. Moreover, conditioned on $\ev_u$, the algorithm returns after $\frac{d_u}{\cut_i(u)}$ trials in expectation. Therefore:
\begin{align}
\E T &= \sum_{u \in S_i} \Pr(\ev_u) \, \E[T \,|\, \ev_u] \le \sum_{u \in S_i} \Pr\big(\hat{\cut_i}(u) > 0\big) \frac{d_u}{\cut_i(u)}
\end{align}
Now recall \CutEstim. By construction $\hat{\cut_i}(u) > 0$ implies $X_u \ge \ell$, where $\E X_u = \frac{\cut_i(u)}{d_u} h$. By Markov's inequality:
\begin{align}
\Pr\big(\hat{\cut_i}(u) > 0\big) = \Pr(X \ge \ell) \le \frac{\E X}{\ell} = \frac{\cut_i(u)}{d_u}\frac{h}{\ell} = \frac{\cut_i(u)}{d_u} \ell \lg \frac{k}{\beta}
\end{align} 
Therefore:
\begin{align}
\E T &\le \sum_{u \in S_i} \frac{\cut_i(u)}{d_u} \ell \lg \frac{k}{\beta} \cdot \frac{d_u}{\cut_i(u)} \le k \ell \lg \frac{k}{\beta} = \frac{1}{\delta \alpha^2 }\lg \frac{k}{\beta} = \frac{k}{\delta \alpha^2 }\lg \frac{1}{\beta}
\end{align}
Finally, note that each single trial takes time $\scO(k)$ via edge queries. The resulting time bound is in $\scO\big(\frac{k^2}{\delta \alpha^2 }\lg \frac{1}{\beta}\big)$, which is dominated by the sampling running time, see above.
\end{proof}

\subsubsection{Approximating the acceptance probability}
Next, we compute an acceptance probability. For any $g \in \Va(v)$ let $q_v(g)$ be the probability that \EpsilonSampleS\ returns $g$. If we could compute $q_v(g)$, we would be done. However, computing $q_v(g)$ requires computing the exact sizes of the cuts, which takes time $\Omega(\dmax)$ in the worst case. Fortunately, we can show that a good approximation of $p_v(g)$, the probability that \sampleS\ returns $g$, is sufficient. By the properties of $(\alpha,\beta)$-DD orders, we can compute such an approximation efficiently.

\begin{algorithm}
\caption{\epsProbCompute$(G, S=\{v,u_2,\ldots,u_k\}, \alpha, \beta, \rho)$}
\label{algo:apx_compute_p}
\begin{algorithmic}[1]
\State $\hat{p} \assign 0$
\For{each permutation $\sigma=(\sigma_2,\ldots,\sigma_{k})$ of $u_2,\ldots,u_k$} \label{line:for_sigma_apx}
\State $\hat{p}_{\sigma} \assign 1$
\For{each $i = 1,\ldots,k-1$}
\State $S_i \assign \{v,\sigma_2,\ldots,\sigma_i\}$
\State $n_{i} \assign |\Cut(S_i,\sigma_{i+1})|$
\State $(\hat{\cut_i}(u))_{u \in S_i}$ = \CutEstim$\big(G,v,S_i,\alpha,\frac{\beta}{\kok}, \scO\big(\frac{\rho}{k^2}\big) \big)$
\State $\hat{\cut}_i = \sum_{u \in S_i} \hat{\cut_i}(u)$
\State $\hat{p}_{\sigma} \assign \hat{p}_{\sigma} \cdot \frac{n_i}{\hat{\cut_i}}$
\EndFor
\State $\hat{p} \assign \hat{p} + \hat{p}_{\sigma}$
\EndFor
\State \Return $\hat{p}$
\end{algorithmic}
\end{algorithm}

\begin{lemma}
\label{lem:eps_prob}
For any $g = G[S] \in \Va(v)$ and $\rho > 0$, \epsProbCompute$(G, S, \alpha, \beta, \rho)$ runs in time $\kok \frac{1}{\rho^2 \alpha^4} \lg \frac{1}{\beta}$, and with probability $1 - \poly \frac{\beta}{k}$ returns a multiplicative $(1 \pm \rho)$-approx\-i\-mati\-on $\hat{p}_v(g)$ of $p_v(g)$. % where the degree of $\poly \frac{\beta}{k}$ can be made arbitrarily large by adjusting the constants.
\end{lemma}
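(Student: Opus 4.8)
The plan is to mirror the analysis of \computeP\ (Lemma~\ref{lem:prob_compute}) step by step, with \CutEstim\ taking the place of the exact cut computations. Recall that the probability that \sampleS$(G,v)$ returns the subset $S$ equals
\begin{align}
p_v(g) \;=\; \sum_{\sigma} \;\prod_{i=1}^{k-1} \frac{n_i(\sigma)}{\cut_i(\sigma)},
\end{align}
where $\sigma=(\sigma_2,\dots,\sigma_k)$ ranges over the orderings of $V(S)\setminus\{v\}$, $S_i(\sigma)=\{v,\sigma_2,\dots,\sigma_i\}$, $n_i(\sigma)$ is the number of neighbors of $\sigma_{i+1}$ in $S_i(\sigma)$, and $\cut_i(\sigma)=|\Cut(S_i(\sigma),G(v)\setminus S_i(\sigma))|$. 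The routine \epsProbCompute\ evaluates exactly this sum, except that it replaces each $\cut_i$ with the estimate $\hat{\cut}_i=\sum_{u\in S_i}\hat{\cut}_i(u)$ produced by \CutEstim, while still computing each $n_i$ exactly with $\scO(k)$ edge queries. The first observation is that an ordering $\sigma$ not producing a connected prefix at every level contributes $0$ to \emph{both} $p_v(g)$ and $\hat{p}_v(g)$: it has $n_i(\sigma)=0$ at some level, and we may assume the inner loop aborts with $\hat{p}_\sigma=0$ as soon as that happens. So it is enough to control the sum over the nonempty set $\Sigma$ of ``good'' orderings, for which each $S_i$ with $i<k$ is connected, contains $v$, and (as $G[S]$ is connected and $S_i\subsetneq S$) has $\cut_i\ge1$.

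For correctness, fix $g=G[S]\in\Va(v)$, so $b_v>0$ and Definition~\ref{def:ab_order} supplies the degree-domination that \CutEstim\ relies on. For every $\sigma\in\Sigma$ and $i\in\{1,\dots,k-1\}$, the call to \CutEstim\ on $U=S_i$, with confidence parameter $\beta/\kok$ and accuracy $\delta$ polynomially small in $\rho/k$, satisfies the hypotheses of Lemma~\ref{lem:cutestim} ($G(v)[S_i]$ connected, on $i<k$ vertices, containing $v$); hence with probability $1-\poly(\beta/\kok)$ it returns $\hat{\cut}_i$ with $|\hat{\cut}_i-\cut_i|\le |S_i|\,\delta\,k\,\cut_i$, which for the chosen $\delta$ is within a multiplicative $(1\pm\tfrac{\rho}{2(k-1)})$ factor of $\cut_i$. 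A union bound over the at most $(k-1)!\,(k-1)$ such calls makes all of these estimates this accurate with probability $1-\poly(\beta/k)$. On that event, for each $\sigma\in\Sigma$ the product $\prod_{i=1}^{k-1}\tfrac{n_i}{\hat{\cut}_i}$ equals $\big(\prod_{i=1}^{k-1}\tfrac{\cut_i}{\hat{\cut}_i}\big)\prod_{i=1}^{k-1}\tfrac{n_i}{\cut_i}$, and since each of the $k-1$ factors $\cut_i/\hat{\cut}_i$ lies in $1\pm\tfrac{\rho}{2(k-1)}$, their product lies in $[(1+\rho)^{-1},1+\rho]\subseteq(1\pm\rho)$ whenever $\rho\le1$ (the case $\rho>1$ being trivial, e.g.\ by running with accuracy $1$). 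Summing over $\sigma\in\Sigma$ --- all summands positive and each scaled by the same interval --- gives $\hat{p}_v(g)\in(1\pm\rho)\,p_v(g)$, which is the claimed approximation.

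For the running time, \epsProbCompute\ performs $(k-1)!$ outer iterations, each with $k-1$ inner iterations; each inner iteration computes some $n_i$ in $\scO(k)$ time and invokes \CutEstim\ on a set of size $<k$, with accuracy $\delta$ polynomial in $\rho/k$ and confidence $\beta/\kok$. By Lemma~\ref{lem:cutestim_time} one such call costs $\scO\!\big(k^2\cdot\tfrac{1}{k\delta^2\alpha^4}\cdot\lg\tfrac{\kok}{\beta}\big)$, which is $\kok\,\tfrac{1}{\rho^2\alpha^4}\lg\tfrac{1}{\beta}$ after absorbing the $\poly(k)$ and $\lg\kok$ terms into $\kok$, and this dominates the $\scO(k)$ spent on $n_i$; multiplying by the $(k-1)(k-1)!$ iterations leaves the bound unchanged. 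The only point that needs care --- and it is a mild one --- is the compounding of the $k-1$ per-level multiplicative errors: this is exactly why \CutEstim\ is called with accuracy scaled down by a $\poly(k)$ factor, so that the product of the $k-1$ slightly perturbed factors stays within $1\pm\rho$ instead of blowing up geometrically in $k$. Everything else is routine bookkeeping on top of Lemmas~\ref{lem:cutestim} and~\ref{lem:cutestim_time} and the formula for $p_v(g)$ from Lemma~\ref{lem:prob_compute}.
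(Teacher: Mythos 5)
Your proposal is correct and follows essentially the same route as the paper's proof: decompose $p_v(g)$ as the sum over permutations of products of $n_i/\cut_i$, use Lemma~\ref{lem:cutestim} with accuracy scaled down by $\poly(k)$ to get per-level multiplicative control of each $\hat{\cut}_i$, union bound over the calls, and observe that the product of $k-1$ factors each within $1\pm\tfrac{\rho}{2(k-1)}$ stays within $1\pm\rho$. If anything, your write-up is slightly more careful than the paper's sketch (e.g., in handling orderings with disconnected prefixes and in passing from per-term to whole-sum control).
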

\begin{proof}[Proof sketch.]
The running time analysis is straightforward. For the correctness, let $\Sigma$ be the set of all permutations $\sigma=(\sigma_1,\ldots,\sigma_{k})$ of $v,u_2,\ldots,u_k$ such that $\sigma_1=v$. For each $\sigma \in \Sigma$, let $S^{\sigma}_i$ be the first $i$ nodes in $S$ as given by $\sigma$. Note that \epsProbCompute\ returns:
\begin{align}
    \hat{p} = \sum_{\sigma \in \Sigma} \hat{p}_{\sigma} = \sum_{\sigma \in \Sigma} \prod_{i=1}^{k-1} \frac{n(S^{\sigma}_i)}{\hat{\cut}(S^{\sigma}_i)}
\end{align}
where $n(S^{\sigma}_i)$ is the size of the cut between $S^{\sigma}_i$ and $\sigma_{i+1}$, and $\hat{\cut}(S^{\sigma}_i)$ is the value of $\hat{\cut_i}$ used by the \epsProbCompute.
Instead, \computeP($G, S$) returns:
\begin{align}
    p = \sum_{\sigma \in \Sigma} p_{\sigma} = \sum_{\sigma \in \Sigma} \prod_{i=1}^{k-1} \frac{n(S^{\sigma}_i)}{\cut(S^{\sigma}_i)}
\end{align}
Therefore,
\begin{align}
    \frac{\hat{p}}{p} =  \frac{\sum_{\sigma \in \Sigma} \prod_{i=1}^{k-1} \cut(S^{\sigma}_i)}{\sum_{\sigma \in \Sigma} \prod_{i=1}^{k-1} \hat{\cut}(S^{\sigma}_i)}
\end{align}
Look at a single term $\sigma$.
Note that $\hat{\cut}(S^{\sigma}_i)$ is estimated as in \CutEstim$(G,v,U,\alpha,\beta,\delta)$, but with $k$ times as many samples.
Therefore, the guarantees of Lemma~\ref{lem:cutestim} apply, but the deviation probability shrinks by a factor $2^{-k}$.
Since there are at most $2^k$ different subsets $S^{\sigma}_i$, by a union bound, with probability $1-\poly\frac{\beta}{k}$ we have $\hat{\cut}(S^{\sigma}_i) \in \cut(S^{\sigma}_i) \cdot (1 \pm \delta\,k^2)\cut(S^{\sigma}_i)$ for all $S^{\sigma}_i$ simultaneously, where we used $|S|\le k$.
Thus,
\begin{align}
    \frac{\prod_{i=1}^{k-1}\cut(S^{\sigma}_i)}{\prod_{i=1}^{k-1}\hat{\cut}(S^{\sigma}_i)}
    = \prod_{i=1}^{k-1}\frac{\cut(S^{\sigma}_i)}{\hat{\cut}(S^{\sigma}_i)}
    \in \left(\frac{1}{(1 \pm \delta\,k^2)}\right)^{k-1}
\end{align}
For $\delta=\scO\left(\frac{\rho}{k^3}\right)$ small enough, the right-hand side is in $(1 \pm \rho)$. This gives $\frac{\hat{p}}{p} \in (1 \pm \rho)$, as claimed.
\end{proof}

\subsubsection{Coupling the algorithms}
We conclude the sampling phase of \EpsilonAlgo. After drawing $v$, we invoke \EpsilonSampleS$(G, v, \alpha, \beta, \gamma)$ and \epsProbCompute$(G, S, \alpha, \beta, \rho)$ with $\gamma=\rho=\epsilon^3 k^{-C_2 k}$, where $S$ is the set of vertices returned by \EpsilonSampleS. Hence we have a random graphlet $g=G[S]$ together with a probability estimate $\hat{p}_v(g)$. We then accept $g$ with probability inversely proportional to $\hat{p}_v(g)$. For reference, see the code below.

\begin{algorithm}[H]
\caption{\EpsilonAlgo$(G, \epsilon)$}
\label{alg:apx_uniform}
\begin{algorithmic}[1]
\State let $C_1, C_2 \assign$ large enough universal constants
\State let $\beta \assign \frac{\epsilon}{2}$ and let $(\prec, \ba) \assign$ \GraphSort$(G, \alpha,\beta)$ %and Lemma~\ref{lem:approx_order})
\State let $Z \assign \sum_{v \in V} b_v$, and for each $v \in V$ let $p(v) \assign \frac{b_v}{Z}$ 
\State sort $V(G)$ according to $\prec$
\State
\Function{\epsSampler}{\,}
\While{true}
\State draw $v$ from the distribution $p$ \label{line:sample_1}
\State $S \assign$ \EpsilonSampleS$(G, v, \alpha, \beta, \epsilon^3 k^{-C_2 k})$ \label{line:sample_2}
\State $\hat{p}(S) \assign$ \epsProbCompute$(G, S, \alpha, \beta, \epsilon^3 k^{-C_2 k})$ \label{line:sample_3}
\State with probability $\min\!\left(1,\frac{1}{p(v)\, \hat{p}(S)}\frac{\beta}{Z} k^{-C_1 k}\right)$ \Return $S$ \label{line:sample_4}
\EndWhile
\EndFunction
\end{algorithmic}
\end{algorithm}

% This is the same probability of Lemma~\ref{lem:1}, except that we must threshold it at $1$, since $\hat{p}_v(g)$ could be small.

%We argue that the distribution of \emph{accepted} graphlets is $\frac{\epsilon}{2}$-close to that given by the sampling phase of \ComparisonAlgo. First, by Lemma~\ref{lem:apx_sample}, $g$ comes from a distribution $\kokm \epsilon^3$-close to $p_v$. Thus, by a coupling argument, we can assume that $g$ coincides with the graphlet sampled by \sampleS\ with probability $1-\kokm \epsilon^3$. Furthermore, by Lemma~\ref{lem:eps_prob}, with probability $1 - \poly \frac{\beta}{k}$ the estimate $\hat{p}_v(g)$ is a multiplicative $(1 \pm \kokm \epsilon^{3})$-approximation of $p_v(g)$. Thus, by another coupling argument, we can assume that \EpsilonAlgo\ and \ComparisonAlgo\ take the same decision (accept/reject $g$) with probability $1-\kokm \epsilon^{3}$. Moreover, if $\hat{p}_v(g)$ satisfies the guarantees, then we also know that the acceptance probability is at least $\kokm \epsilon^2$. By taking the probability that the two algorithms disagree on the sampled graphlets \emph{conditional on acceptance}, we obtain:
%\begin{align}
%    \Pr(\text{the two algorithms' output differ}) \le \frac{k^{-C_2 k} \epsilon^3}{k^{-C_1 k} \epsilon^2}
%\end{align}
%which we can make smaller than $\frac{\epsilon}{2}$ by adjusting $C_1$ and $C_2$.

%Clearly, this argument above is not very formal. The next two lemmas give a formal argument of the

The next two lemmas show that \EpsilonAlgo\ satisfies the claims of Theorem~\ref{thm:epsuniform}.

\begin{lemma}
\label{lem:wrap_sample_dist}
Suppose that the preprocessing of \EpsilonAlgo$(G,\epsilon)$ succeeds (Lemma~\ref{lem:approx_order}). Then, each invocation of \epsSampler$(\,)$ returns a graphlet independently and $\epsilon$-uniformly at random from $G$.
\end{lemma}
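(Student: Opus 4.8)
The plan is to couple, invocation by invocation, the sampling routine \epsSampler$(\,)$ of \EpsilonAlgo\ with the sampling routine \Sample$(\,)$ of \ComparisonAlgo, and then to conclude via Lemma~\ref{lem:comparison_algo} and the triangle inequality. First, since by hypothesis the preprocessing succeeds, Lemma~\ref{lem:approx_order} guarantees that $(\prec,\ba)$ is an $(\alpha,\beta)$-DD order for $G$ with $\beta=\tfrac{\epsilon}{2}$, so Lemma~\ref{lem:comparison_algo} applies: \Sample$(\,)$ returns a graphlet uniformly at random from $A := \bigcup_{v:b_v>0}\Va(v)$, and the acceptance probability it uses at line~\ref{line:comp_sample_4} always lies in $[\,q_{\min},1\,]$ with $q_{\min} := \epsilon^2\kokm$. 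By property~1 of Definition~\ref{def:ab_order}, $|A|\ge (1-\tfrac{\epsilon}{2})|\vk_k|$, and since the total variation distance between the uniform distributions on $A$ and on $\vk_k$ equals $1-|A|/|\vk_k|$, the output of \Sample$(\,)$ is $\tfrac{\epsilon}{2}$-close to uniform over $\vk_k$. It therefore suffices to show that the output distribution of \epsSampler$(\,)$ is within $\tfrac{\epsilon}{2}$ of that of \Sample$(\,)$.

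To bound this, I would run both routines jointly, with the same fixed preprocessing data, and couple them trial by trial. In a single trial: (i) make both draw the same $v\sim p$; (ii) apply Lemma~\ref{lem:apx_sample} with $\gamma=\epsilon^3 k^{-C_2 k}$ to couple \EpsilonSampleS$(G,v,\alpha,\beta,\gamma)$ with \sampleS$(G,v)$ so that they output the same set $S$ except with probability $\gamma+\poly(\tfrac{\beta}{k})$; (iii) conditioned on the same $S$, apply Lemma~\ref{lem:eps_prob} with $\rho=\epsilon^3 k^{-C_2 k}$ so that the estimate $\hat p(S)$ of \epsProbCompute\ is within a factor $(1\pm\rho)$ of the value $p(S)$ used by \computeP, except with probability $\poly(\tfrac{\beta}{k})$; (iv) on these events, the acceptance probabilities at line~\ref{line:sample_4} and line~\ref{line:comp_sample_4} differ by at most a factor $(1\pm\rho)$ — the $\min(1,\cdot)$ in \EpsilonAlgo\ is harmless because the value in \ComparisonAlgo\ is already $\le 1$ by Lemma~\ref{lem:comparison_algo} — so the two probabilities differ by at most $\tfrac{\rho}{1-\rho}\le 2\rho$ in absolute value, and a shared uniform acceptance threshold makes the two accept/reject decisions agree except with probability $\le 2\rho$. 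Consequently, conditioned on any history in which all earlier trials agreed (hence were rejected by both runs), the runs disagree in the current trial with probability at most $\delta_0 := \gamma+2\rho+\poly(\tfrac{\beta}{k})$.

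Now let $\mathcal{G}$ be the event that no trial ever disagrees; on $\mathcal{G}$ both runs accept at the same trial and return the same graphlet, so the total variation distance between the output of \epsSampler$(\,)$ and of \Sample$(\,)$ is at most $\Pr[\neg\mathcal{G}]$. Since in every trial \ComparisonAlgo\ accepts with probability at least $q_{\min}$, the probability that all of the first $t-1$ trials agreed (hence were rejected) is at most $(1-q_{\min})^{t-1}$, so $\Pr[\text{first disagreement at trial }t]\le (1-q_{\min})^{t-1}\delta_0$ and hence $\Pr[\neg\mathcal{G}]\le\sum_{t\ge1}(1-q_{\min})^{t-1}\delta_0=\delta_0/q_{\min}$. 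Writing $q_{\min}=\epsilon^2 k^{-C_0 k}$ for the universal constant $C_0$ hidden in $\kokm$, and plugging in $\gamma=\rho=\epsilon^3 k^{-C_2 k}$, we get $\delta_0/q_{\min}= O\!\big(\epsilon\, k^{(C_0-C_2)k}\big) + \poly(\tfrac{\beta}{k})\,\epsilon^{-2}k^{C_0 k}$; taking $C_2$ large enough kills the first term and choosing the adjustable constants in \CutEstim\ large enough kills the second, so $\Pr[\neg\mathcal{G}]\le\tfrac{\epsilon}{2}$, and the triangle inequality gives $\epsilon$-uniformity over $\vk_k$. Finally, consecutive invocations of \epsSampler\ use fresh independent randomness and leave the preprocessing data unchanged, so the returned graphlets are mutually independent. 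The main obstacle is step (iv) combined with the loop bound: one must ensure that the per-trial approximation errors $\gamma,\rho$ and the \CutEstim\ failure probabilities are small enough that, after being inflated by the $1/q_{\min}=\epsilon^{-2}\kok$ expected number of rejection trials, they still fit inside the remaining $\tfrac{\epsilon}{2}$ error budget — which is exactly why \EpsilonAlgo\ sets $\gamma=\rho=\epsilon^3 k^{-C_2 k}$ and why $C_1$ must be taken large.
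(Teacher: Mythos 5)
Your proof is correct, and it follows the paper's high-level strategy: condition on the preprocessing succeeding, peel off $\tfrac{\epsilon}{2}$ via the triangle inequality through the uniform distribution on $\bigcup_{v:b_v>0}\Va(v)$, and then couple \epsSampler\ with \ComparisonAlgo's \Sample\ using Lemma~\ref{lem:apx_sample}, Lemma~\ref{lem:eps_prob}, and the acceptance-probability lower bound of Lemma~\ref{lem:comparison_algo}. Where you genuinely diverge is in how the rejection loop is handled. The paper never reasons about the loop: it compares the \emph{single-trial} distributions conditioned on acceptance, $\scD_P(\cdot\mid X_P=1)$ versus $\scD_Q(\cdot\mid X_Q=1)$ (which are exactly the two output distributions), inserting the auxiliary event $X_{\vee}=\max(X_P,X_Q)=1$ and a three-term triangle inequality; the division by $\Pr(X_{\vee}=1)\ge\epsilon^2\kokm$ plays the role of your $1/q_{\min}$. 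You instead run the two loops in lockstep, bound the per-trial disagreement probability by $\delta_0=\gamma+2\rho+\poly(\tfrac{\beta}{k})$, and sum a geometric series to get $\Pr[\neg\mathcal{G}]\le\delta_0/q_{\min}$. Both routes yield the same quantitative bound and both need $\gamma=\rho=\Theta(\epsilon^3\kokm)$ for exactly the reason you give. Your version is more explicit about \emph{why} the error budget must absorb the factor $1/q_{\min}$ (the expected number of trials) and spells out two points the paper leaves implicit, namely $\tvd{\scU_H}{\scU_V}=1-|A|/|\vk_k|$ and the fact that a shared uniform threshold makes the accept decisions disagree with probability exactly $|\min(1,\hat{\nu})-\nu|\le 2\rho$; the paper's conditional-distribution argument, in exchange, confines all the analysis to one iteration. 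Your final constant-chasing (absorbing the $\poly(\tfrac{\beta}{k})$ failure probabilities after inflation by $\epsilon^{-2}\kok$) is at the same level of informality as the paper's own treatment, so I do not count it as a gap.
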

\begin{proof}
By Lemma~\ref{lem:approx_order}, \GraphSort$(G, \alpha,\beta)$ with high probability returns an $(\alpha,\beta)$-DD order for $G$.
The rest of the proof is conditioned on this event. We will couple the sampling phases of \EpsilonAlgo$(G, \epsilon)$ and \ComparisonAlgo$(G, \epsilon)$.
Note that the the preprocessing phases of the two algorithms are identical (save for the fact that \ComparisonAlgo\ also sorts the adjacency lists).
In particular, they use the same order $\prec$ over $V(G)$, which induces the same bucketing $\{\Va(v)\}_{v \in V}$, as well as the same bucket size estimates $\{b_v\}_{v \in V}$, and therefore also the same distribution $p$ over $V$.

Let $H=\{v \in V \,:\,b_v > 0\}$. Let $\scU_V$ and $\scU_H$ be the uniform distributions respectively over $\cup_{v \in V}\Va(v)$ and $\cup_{v \in H} \Va(v)$, and let $q$ be the output distribution of \EpsilonAlgo::\epsSampler. Our claim is that $\tvd{q}{\scU_V} \le \epsilon$. By the triangle inequality, $\tvd{q}{\scU_V} \le \tvd{\scU_H}{\scU_V} + \tvd{q}{\scU_{H}}$, and by Definition~\ref{def:ab_order}, the buckets indexed by $H$ hold a fraction at least $1-\beta=1-\frac{\epsilon}{2}$ of all graphlets, hence $\tvd{\scU_H}{\scU_V} \le \frac{\epsilon}{2}$. Therefore, to prove that $\tvd{q}{\scU_V} \le \epsilon$ we need only to prove that $\tvd{q}{\scU_{H}} \le \frac{\epsilon}{2}$, which we do in the remainder.

First, by Lemma~\ref{lem:comparison_algo}, $\scU_H$ is precisely the output distribution of \ComparisonAlgo::\Sample. Thus, we will couple \ComparisonAlgo::\Sample\ and \EpsilonAlgo::\epsSampler; under this coupling they will return the same graphlet with probability at least $1-\frac{\epsilon}{2}$, establishing that $\tvd{q}{\scU_{H}} \le \frac{\epsilon}{2}$.

To begin, since the \ComparisonAlgo::\Sample\ and \EpsilonAlgo::\epsSampler\ use the same distribution $p$ over the buckets, we can couple them so that they choose the same bucket $\Va(v)$. Now let $S_P$ denote the random set of nodes drawn by \ComparisonAlgo::\Sample\ at line~\ref{line:comp_sample_2}, and by $S_Q$ the one drawn by \EpsilonAlgo::\epsSampler\ at line~\ref{line:sample_2}. As the two algorithms invoke respectively \sampleS$(G, v)$ and \EpsilonSampleS$(G, v, \alpha, \frac{\epsilon}{2}, \epsilon^3 k^{-C_2 k})$, Lemma~\ref{lem:apx_sample} yields:
\begin{align}
    \tvd{S_P}{S_Q}\le \epsilon^3 k^{-C_2 k} + \poly\frac{\epsilon}{k}
\end{align}
Hence, we can couple the two algorithms so that $\Pr(S_Q \ne S_P) \le \epsilon^3 k^{-C_2 k}$.

Now let $X_P$ be the indicator random variable of the event that \ComparisonAlgo::\Sample\ accepts $S_P$ (line~\ref{line:comp_sample_4} of \ComparisonAlgo), and $X_Q$ the indicator random variable of the event that \EpsilonAlgo::\epsSampler\ accepts $S_Q$ (line~\ref{line:sample_4} of \EpsilonAlgo). The outcome of \ComparisonAlgo::\Sample\ is the pair $(S_P,X_P)$, and that of \EpsilonAlgo::\epsSampler\ is the pair $(S_Q,X_Q)$. Let $\scD_P$ and $\scD_Q$ be the distributions of respectively $(S_P,X_P)$ and $(S_Q,X_Q)$. Note that $\scD_P(\cdot | X_P=1)$ and $\scD_Q(\cdot | X_Q=1)$ are the distributions of the graphlets returned by respectively \ComparisonAlgo::\Sample\ and \EpsilonAlgo::\epsSampler. Thus, our goal is to show:
%We want to show that the two distributions are $\nicefrac{\epsilon}{2}$-close conditioned on the graphlets being accepted, that is, we want:
\begin{align}
\tvd{\scD_P(\cdot | X_P=1)}{\scD_Q(\cdot | X_Q=1)} \le \frac{\epsilon}{2}
\end{align}
%Ignoring for a moment the $k^{\scO(-k)}$ factors, the strategy is the following.
%The probability that the algorithms accept the sampled graphlet is $\Omega(\epsilon^2)$.
%Therefore, to make the conditional distributions $\scO(\epsilon)$-close, we can make the unconditional distributions $\scO(\epsilon^3)$-close.
%Obviously we have to take care of the fact that the two algorithms can disagree on both the sampled graphlet and its acceptance.

Let $X_{\vee} = \max(X_P,X_Q)$ be the indicator random variable of the event that at least one algorithm accepts its graphlet.
Clearly $\Pr(X_{\vee}=1) \ge \Pr(X_P=1)$, and by Lemma~\ref{lem:comparison_algo}, $P(X_P=1) \ge \epsilon^2 \kokm$.
By the triangle inequality:
\begin{align}
\tvd{\scD_P(\cdot | X_{P}=1)}{\scD_Q(\cdot | X_{Q}=1)} \label{eqn:tvd_P_Q_2}
\le
\tvd{\scD_P(\cdot | X_{P}=1)}{\scD_P(\cdot | X_{\vee}=1)}
\\+
\tvd{\scD_P(\cdot | X_{\vee}=1)}{\scD_Q(\cdot | X_{\vee}=1)}\nonumber
\\+
\tvd{\scD_Q(\cdot | X_{Q}=1)}{\scD_Q(\cdot | X_{\vee}=1)}\nonumber
\end{align}
Let us start by bounding the middle term.
We have:
\begin{align}
\tvd{\scD_P(\cdot | X_{\vee}=1)}{\scD_Q(\cdot | X_{\vee}=1)}
& \le \Pr(S_Q \ne S_P \,|\, X_{\vee}=1) && \text{by the coupling}
\\ &\le \frac{\Pr(S_Q \ne S_P)}{\Pr(X_{\vee}=1)}
\\ &\le \frac{\Pr(S_Q \ne S_P)}{\Pr(X_P=1)} && X_{\vee}=\max(X_P,X_Q)
\\ &\le \frac{\Pr(S_Q \ne S_P)}{\epsilon^2 \kokm} && \text{Lemma~\ref{lem:comparison_algo}}
\\ &\le \frac{\epsilon^3 k^{-C_2 k}}{\epsilon^2 \kokm} && \text{see above}
\\ &= \epsilon k^{-(C_2+C_3) k} 
\label{eqn:p_SQ_ne_SP}
\end{align}
for some $C_3$ independent of $C_2$.  %% CHECK!

We bound similarly the sum of the other two terms. 
For the first term note that:
\begin{align}
\tvd{\scD_P(\cdot | X_{P}=1)}{\scD_P(\cdot | X_{\vee}=1)}
\le
\Pr(X_P = 0 \,|\, X_{\vee}=1)
\end{align}
This is true since $\scD_P(\cdot | X_{P}=1)$ is just $\scD_P(\cdot | X_{\vee}=1)$ conditioned on $X_P=1$, an event which has probability $1-\Pr(X_P = 0 \,|\, X_{\vee}=1)$.
Symmetrically, for the last term 
\begin{align}
    \tvd{\scD_Q(\cdot | X_{Q}=1)}{\scD_Q(\cdot | X_{\vee}=1)} \le \Pr(X_Q = 0 \,|\, X_{\vee}=1)
\end{align}
Thus:
\begin{align}
    & \tvd{\scD_P(\cdot | X_{P}=1)}{\scD_P(\cdot | X_{\vee}=1)} + \tvd{\scD_Q(\cdot | X_{Q}=1)}{\scD_Q(\cdot | X_{\vee}=1)}
    \\ & \quad \le \Pr(X_P = 0 \,|\, X_{\vee}=1) + \Pr(X_Q = 0 \,|\, X_{\vee}=1)
\end{align}
Now,
\begin{align}
\Pr(X_P = 0 \,|\, X_{\vee}=1) + \Pr(X_Q = 0 \,|\, X_{\vee}=1) 
&= \Pr(X_P \ne X_Q \,|\, X_{\vee}=1) 
\\&\le \frac{\Pr(X_P \ne X_Q )}{\Pr(X_{\vee}=1)}
\\&\le \frac{\Pr(X_P \ne X_Q )}{\epsilon^2 k^{-\scO(k)}} && \text{see above} \label{eq:PrXPXQ}
\end{align}
For the numerator,
\begin{align}
\label{eqn:p_S_X}
\Pr(X_Q \ne X_P) &\le \Pr(S_Q \ne S_P) + \Pr(X_Q \ne X_P \,|\, S_Q = S_P)
\\& \le \epsilon^3 k^{-C_2 k} + \Pr(X_Q \ne X_P \,|\, S_Q = S_P) && \text{see above}
\end{align}
As said, $\Pr(S_Q \ne S_P) \le \epsilon_1$.
As $X_Q$ and $X_P$ are binary, our coupling yields:
\begin{align}
\Pr(X_Q \ne X_P \,|\, S_Q = S_P) &= \big|\Pr(X_P=1\,|\, S_Q = S_P) - \Pr(X_Q=1\,|\, S_Q = S_P) \big|
\\
&\le \left|\frac{\Pr(X_P=1\,|\, S_Q = S_P)}{\Pr(X_Q=1\,|\, S_Q = S_P)} - 1\right|
\end{align}
Now, let $S$ be any realization of $S_Q$ and $S_P$. 
By construction of the algorithms, $\Pr(X_P=1\,|\, S_P = S) = \nu$, and $\Pr(X_Q=1\,|\, S_Q = S)=\min(1,\hat{\nu})$, where:
\begin{align}
\nu &= \frac{1}{p(v)\, p(S)}\frac{\beta}{Z} k^{-C_1 k}, \qquad \hat{\nu} = \frac{1}{p(v)\, \hat{p}(S)}\frac{\beta}{Z} k^{-C_1 k}
\end{align}
Therefore:
\begin{align}
\label{eq:XQXP}
\Pr(X_Q \ne X_P \,|\, S_Q = S_P)
&\le \left|\frac{\nu}{\min(1,\hat{\nu})} - 1 \right|
\le \left|\frac{\nu}{\hat{\nu}} - 1 \right|
= \left|\frac{\hat{p}(S)}{p(S)} - 1 \right|
\end{align}
where the second inequality holds since, if $\hat{\nu} > 1$, then
\begin{align}
    \left|\frac{\nu}{\min(1,\hat{\nu})} - 1 \right|
    =
    \left|\frac{\nu}{1} - 1 \right|
    < 
    \left|\frac{\nu}{\hat{\nu}} - 1 \right|
\end{align}

Now, by Lemma~\ref{lem:eps_prob}, with probability $1-\poly\frac{\epsilon}{k}$ we have $\hat{p}(S) \in (1 \pm \epsilon^3 k^{-C_2 k})p(S)$. So, if this event holds, we have $\Pr(X_Q \ne X_P \,|\, S_Q = S_P) \le \epsilon^3 k^{-C_2 k}$. If if fails, we still have the trivial bound $\Pr(X_Q \ne X_P \,|\, S_Q = S_P) \le 1$. By the law of total probability,
\begin{align}
\Pr(X_Q \ne X_P \,|\, S_Q = S_P) &\le \left(1-\poly\frac{\epsilon}{k}\right) \epsilon^3 k^{-C_2 k} + \poly\frac{\epsilon}{k} = \scO(\epsilon^3 k^{-C_2k})
\end{align}
Applying these two bounds to the right-hand side of~\eqref{eqn:p_S_X}, we obtain:
\begin{align}
\label{eqn:pSQ}
\Pr(X_Q \ne X_P) &= \scO(\epsilon^3 k^{-C_2 k})
\end{align}
Going back to~\eqref{eq:PrXPXQ}, we obtain:
\begin{align}
    \Pr(X_P = 0 \,|\, X_{\vee}=1) + \Pr(X_Q = 0 \,|\, X_{\vee}=1) = \scO\left(\frac{\epsilon^3 k^{-C_2 k}}{\epsilon^2 k^{-\scO(k)}}\right) = \scO(\epsilon)
\end{align}
By taking this bound together with~\eqref{eqn:p_SQ_ne_SP}, we conclude that:
\begin{align}
    \tvd{\scD_P(\cdot | X_{P}=1)}{\scD_Q(\cdot | X_{Q}=1)} = \scO(\epsilon)
\end{align}
which we can bring below $\frac{\epsilon}{2}$ by adjusting the constants.
This concludes the proof.
\end{proof}

\begin{lemma}
\label{lem:wrap_sample_time}
Suppose that the preprocessing of \EpsilonAlgo$(G,\epsilon)$ succeeds (Lemma~\ref{lem:approx_order}). Then, each invocation of \epsSampler$(\,)$
%Suppose after running \GraphSort$(G, \nicefrac{\epsilon}{2})$ the properties of Lemma~\ref{lem:approx_order} hold.
has expected running time $k^{\scO(k)} \epsilon^{-8-\frac{4}{k-1}} \lg \frac{1}{\epsilon}$.
\end{lemma}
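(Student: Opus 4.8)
The plan is to write the expected running time of \epsSampler\ as the product of the expected number of iterations of its sampling loop and the expected cost of a single iteration, bound each factor using the lemmas already established, and combine them via a standard stopping-time argument. Throughout we condition on the event (assumed in the hypothesis) that the preprocessing succeeds, so that $(\prec,\ba)$ is an $(\alpha,\beta)$-DD order for $G$ with $\beta=\frac\epsilon2$ and $\alpha=\Theta(\epsilon^{\frac1{k-1}}k^{-3})$, and the bucket distribution $p$ is fixed.

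First I would bound the expected cost of one iteration. Drawing $v\sim p$ costs $\scO(1)$ using the alias structure built during preprocessing. The call \EpsilonSampleS$(G,v,\alpha,\beta,\gamma)$ with $\gamma=\epsilon^3 k^{-C_2 k}$ has expected cost $\scO(k^9\gamma^{-2}\alpha^{-4}\lg\frac1\beta)$ by Lemma~\ref{lem:apx_sample_cost}; substituting $\gamma^{-2}=\kok\epsilon^{-6}$, $\alpha^{-4}=\scO(k^{12}\epsilon^{-\frac4{k-1}})$, and $\lg\frac1\beta=\scO(\lg\frac1\epsilon)$ gives $\kok\epsilon^{-6-\frac4{k-1}}\lg\frac1\epsilon$. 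The call \epsProbCompute$(G,S,\alpha,\beta,\rho)$ with $\rho=\epsilon^3 k^{-C_2 k}$ costs $\kok\rho^{-2}\alpha^{-4}\lg\frac1\beta$ by Lemma~\ref{lem:eps_prob}, which is again $\kok\epsilon^{-6-\frac4{k-1}}\lg\frac1\epsilon$ (the $\lg\frac k\beta$ arising inside \CutEstim\ is $\scO(k\lg\frac1\epsilon)$ since $\beta\le\frac12$, and the extra $k$ is absorbed into $\kok$). Evaluating the acceptance test at line~\ref{line:sample_4} costs $\scO(1)$ once $\hat p(S)$ is known. Hence one iteration costs $\kok\epsilon^{-6-\frac4{k-1}}\lg\frac1\epsilon$ in expectation; the event that \EpsilonSampleS\ fails (all cut estimates zero) has probability $\poly\frac\epsilon k$ and contributes only a lower-order term, which we fold into the per-iteration bound.

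Next I would show that each iteration returns a sample with probability $\Omega(\epsilon^2\kokm)$. Since \EpsilonSampleS\ grows a connected subgraph of $G(v)$ rooted at $v$, every $S$ it produces satisfies $G[S]\in\Va(v)$ with $b_v>0$, so Lemma~\ref{lem:comparison_algo} applies: $\frac1{p(v)\,p_v(S)}\frac\beta Z k^{-C_1 k}\in[\epsilon^2\kokm,1]$, where $p_v(S)$ is the probability that \sampleS$(G,v)$ returns $S$. By Lemma~\ref{lem:eps_prob}, with probability $1-\poly\frac\epsilon k$ the estimate satisfies $\hat p(S)\in(1\pm\rho)p_v(S)$, and since $\rho<\frac12$ this gives $\hat p(S)\le 2p_v(S)$; on this event the quantity $\frac1{p(v)\,\hat p(S)}\frac\beta Z k^{-C_1 k}$ is at least $\frac12\cdot\frac1{p(v)\,p_v(S)}\frac\beta Z k^{-C_1 k}\ge\frac12\epsilon^2\kokm$, and since $\frac12\epsilon^2\kokm\le 1$ the truncation $\min(1,\cdot)$ cannot lower it below $\frac12\epsilon^2\kokm$. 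Thus the marginal probability of acceptance in each iteration is at least $(1-\poly\frac\epsilon k)\cdot\frac12\epsilon^2\kokm=\Omega(\epsilon^2\kokm)$, so the number of iterations is stochastically dominated by a geometric variable of mean $\kok\epsilon^{-2}$.

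Finally I would combine the two bounds. Letting $J$ be the index of the accepting iteration and $c_j$ the running time of iteration $j$, the total time is at most $\sum_{j\ge1}\Ind{J\ge j}\,c_j$, and the event $\{J\ge j\}$ (no acceptance in iterations $1,\dots,j-1$) is independent of the fresh randomness of iteration $j$ that generates $c_j$; hence $\E[\text{time}]=\E[J]\cdot\E[c_1]=\kok\epsilon^{-2}\cdot\kok\epsilon^{-6-\frac4{k-1}}\lg\frac1\epsilon=\kok\epsilon^{-8-\frac4{k-1}}\lg\frac1\epsilon$, which is the claimed bound. The main obstacle is the per-iteration acceptance lower bound: it must route the estimate $\hat p(S)$ through Lemma~\ref{lem:eps_prob} back to the \ComparisonAlgo\ guarantee of Lemma~\ref{lem:comparison_algo} while checking that the $\min(1,\cdot)$ truncation is harmless. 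The remainder is parameter arithmetic in $\alpha,\beta,\gamma,\rho$ together with the elementary stopping-time step.
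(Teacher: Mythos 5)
Your proposal is correct and follows the same decomposition as the paper: expected number of rejection rounds times expected cost per round, with the per-round cost coming from Lemma~\ref{lem:apx_sample_cost} and Lemma~\ref{lem:eps_prob} and the same parameter substitutions $\gamma=\rho=\epsilon^3 k^{-C_2k}$, $\alpha=\Theta(\epsilon^{1/(k-1)}k^{-3})$, $\beta=\epsilon/2$.

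The one place where you genuinely deviate is the lower bound on the per-round acceptance probability. The paper reuses the coupling built in the proof of Lemma~\ref{lem:wrap_sample_dist}: it writes $\Pr(X_Q=1)\ge\Pr(X_P=1)-\Pr(X_Q\ne X_P)$, takes $\Pr(X_P=1)\ge\epsilon^2\kokm$ from Lemma~\ref{lem:comparison_algo} and $\Pr(X_Q\ne X_P)=\scO(\epsilon^3 k^{-C_2k})$ from the coupling. You instead argue directly: any $S$ output by \EpsilonSampleS\ lies in $\Va(v)$ with $b_v>0$, so the deterministic range $[\epsilon^2\kokm,1]$ for $\frac{1}{p(v)p_v(S)}\frac{\beta}{Z}k^{-C_1k}$ from Lemma~\ref{lem:comparison_algo} applies, and the multiplicative $(1\pm\rho)$ guarantee on $\hat p(S)$ from Lemma~\ref{lem:eps_prob} transfers it to the quantity the algorithm actually computes, with the $\min(1,\cdot)$ truncation checked explicitly. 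Your route is more self-contained (it does not lean on the coupling machinery of the previous lemma) and handles the truncation more carefully; the paper's route is shorter given that the coupling has already been set up. One small point worth making explicit: the failure branch of \EpsilonSampleS\ (all cut estimates zero) reduces the acceptance probability as well as affecting cost, but since its probability is $\poly\frac{\epsilon}{k}$ with arbitrarily high degree, it is dominated by $\epsilon^2\kokm$ and does not affect the conclusion. Your final Wald-type step is also slightly more careful than the paper's bare product of expectations, and is valid because the cost of round $j$ is generated by randomness independent of the event $\{J\ge j\}$.
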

\begin{proof}
First, we bound the expected number of rounds of \EpsilonAlgo::\EpsilonSampleS. Recall $X_P$ and $X_Q$ from the proof of Lemma~\ref{lem:wrap_sample_dist}. Note that $\Pr(X_Q = 1) \ge \Pr(X_P = 1) - \Pr(X_Q \ne X_P)$. Moreover, the proof of Lemma~\ref{lem:wrap_sample_dist} showed $\Pr(X_Q \ne X_P) = \scO(\epsilon^3 k^{-C_2 k})$. Therefore, $\Pr(X_Q = 1) \ge \Pr(X_P = 1) - \scO(\epsilon^3 k^{-C_2 k})$. However, by Lemma~\ref{lem:comparison_algo}, $\Pr(X_P=1) \ge k^{-C_3} \epsilon^2$ for some constant $C_3$. Therefore, $\Pr(X_Q = 1) \ge \epsilon^2 k^{-C_3 k} - \epsilon^3 k^{-C_2 k} = \kokm \epsilon^2$. So the expected number of round performed by \EpsilonAlgo::\EpsilonSampleS\ is bounded by $\kok \epsilon^{-2}$.

Now we bound the expected time spent in each round. By Lemma~\ref{lem:apx_sample_cost}, and as $\gamma=\epsilon^3 k^{-C_2 k}$ and $\alpha=\beta^{\frac{1}{k-1}}\frac{1}{6k^3}$ and $\beta=\frac{\epsilon}{2}$, \EpsilonSampleS$(G, v, \alpha, \beta, \gamma)$ has expected running time at most:
\begin{align}
    \scO\left(\frac{k^9}{\gamma^2 \alpha^4} \lg \frac{1}{\gamma}\right)
    &=
    \kok \epsilon^{-6-\frac{4}{k-1}} \lg \frac{1}{\epsilon}
\end{align}
Note that the bound holds at each round, regardless of past events. Using Lemma~\ref{lem:eps_prob}, one can show the same bound holds for the running time \epsProbCompute$(G, S,  \alpha, \beta, \rho)$, where $\rho=\epsilon^3 k^{-C_2 k}$.

Therefore, the total expected running time satisfies: 
\begin{align}
\E[T] &\le \kok \epsilon^{-2} \cdot \kok \epsilon^{-6-\frac{4}{k-1}} \lg \frac{1}{\epsilon}
= \kok \epsilon^{-8-\frac{4}{k-1}} \lg \frac{1}{\epsilon}
\end{align}
which concludes the proof.
\end{proof}

\section{Conclusions}
We have shown that, starting from just sorting a graph in linear time, one can overcome the usual inefficiency of rejection sampling of graphlets. This idea yields the first efficient uniform and $\epsilon$-uniform graphlet sampling algorithms, with preprocessing times $\scO(|G|)$ and $\scO(|V(G)| \lg |V(G)|)$. These are the first algorithms with strong theoretical guarantees for these problems in a long line of research that spans the last decade.
Due to their simplicity, we believe that our algorithms are amenable to being ported in parallel, distributed, or dynamic settings; these are all directions for future research.
We also leave open the problem of determining whether $\Omega(|G|)$ operations are necessary for uniform graphlet sampling when $|G| = \omega(|V(G)|)$; a positive answer would imply the optimality of our uniform sampling algorithm.

\section*{Acknowledgements}
Part of this work was done while the author was at the Sapienza University of Rome. The author was partially supported by Google under the Focused Award ``Algorithms and Learning for AI'' (ALL4AI), by the Bertinoro International Center for Informatics (BICI), by the European Research Council under the Starting Grant DMAP 680153, and by the Department of Computer Science of the Sapienza University of Rome under the grant Dipartimenti di Eccellenza 2018-2022''.

\nocite{Bressan19IPEC,Bressan21Algo} 
\bibliographystyle{abbrv}
\bibliography{biblio.bib}

\appendix
\appendix

\section{Ancillary results}
\begin{lemma}
\label{lem:count_bound}
Let $G=(V,E)$ be any graph, and for any $v \in V$ let $N_v$ be the number of $k$-graphlets of $G$ containing $v$.
If $N_v > 0$, then:
\begin{align}
N_v \ge \frac{(d_v)^{k-1}}{(k-1)^{k-1}} = (d_v)^{k-1} k^{-\scO(k)}
\end{align}
Moreover, if $d_u \le \Delta$ for all $u \in G$, then:
\begin{align}
N_v \le (k-1)!(\Delta)^{k-1} = (\Delta)^{k-1} k^{\scO(k)}
\end{align}
\end{lemma}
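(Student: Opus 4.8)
The plan is to prove the two inequalities separately, each by an elementary counting argument.

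\medskip
\noindent\textbf{Upper bound.} I would count ``connected orderings'' of the vertex sets of the $k$-graphlets through $v$. Fix any $k$-graphlet $g$ with $v \in V(g)$. Since $g$ is connected, a BFS (or DFS) of $g$ rooted at $v$ produces an ordering $v = w_1, w_2, \ldots, w_k$ of $V(g)$ in which every $w_{i+1}$ is adjacent in $G$ to some $w_j$ with $j \le i$. I then bound the number of sequences $(w_1,\dots,w_k)$ that can arise this way, over all graphlets through $v$: the first element is forced to be $v$, and once $w_1,\dots,w_i$ are fixed, $w_{i+1}$ must lie in $\bigcup_{j\le i} N(w_j)$, a set of size at most $i\Delta$ because each $w_j$ has at most $\Delta$ neighbours in $G$. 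Hence there are at most $\prod_{i=1}^{k-1} i\Delta = (k-1)!\,\Delta^{k-1}$ such sequences, and since each $k$-graphlet through $v$ gives rise to at least one of them, $N_v \le (k-1)!\,\Delta^{k-1}$. Using $(k-1)! \le k^k$ this is $(\Delta)^{k-1}k^{\scO(k)}$, as claimed. Note we do not care that a single graphlet may correspond to several orderings: overcounting only strengthens an upper bound.

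\medskip
\noindent\textbf{Lower bound.} Here I would distinguish two cases according to $d_v$. If $d_v < k-1$, then $(d_v)^{k-1}/(k-1)^{k-1} < 1$; since by hypothesis $N_v$ is a positive integer, $N_v \ge 1 > (d_v)^{k-1}/(k-1)^{k-1}$ and the bound holds. If $d_v \ge k-1$, I produce graphlets explicitly: for each $(k-1)$-element subset $W$ of $N(v)$, the induced subgraph $G[\{v\}\cup W]$ has $k$ vertices and is connected (since $v$ is adjacent to every vertex of $W$), hence is a $k$-graphlet containing $v$; distinct subsets $W$ yield distinct graphlets, so $N_v \ge \binom{d_v}{k-1}$. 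I then invoke the elementary bound $\binom{n}{r} \ge (n/r)^r$ for integers $1 \le r \le n$, which follows term by term from $\binom{n}{r} = \prod_{j=0}^{r-1}\frac{n-j}{r-j}$ together with $\frac{n-j}{r-j} \ge \frac{n}{r}$ (equivalent to $n \ge r$). With $n = d_v$ and $r = k-1$ this gives $N_v \ge \big(d_v/(k-1)\big)^{k-1} = (d_v)^{k-1}(k-1)^{-(k-1)} = (d_v)^{k-1}k^{-\scO(k)}$.

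\medskip
\noindent\textbf{Where the care is needed.} The statement is essentially a bookkeeping exercise, and the only genuinely delicate points are: (i) ensuring the connected-ordering enumeration is a valid \emph{upper} bound, i.e.\ that each graphlet is realised by at least one ordering (BFS from $v$ does the job), while we simply tolerate possible overcounting; and (ii) not attempting the star construction when $d_v < k-1$, where it is vacuous, but instead exploiting integrality of $N_v$ under the hypothesis $N_v > 0$. The asymptotic rewritings $(k-1)!,\,(k-1)^{k-1} = k^{\scO(k)}$ and $(k-1)^{-(k-1)} = k^{-\scO(k)}$ are immediate from $(k-1)^{k-1} \le k^k$.
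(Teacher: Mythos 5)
Your proof is correct and follows essentially the same route as the paper's: the lower bound via the star construction $N_v \ge \binom{d_v}{k-1} \ge (d_v/(k-1))^{k-1}$ (with the same degenerate-case split using $N_v \ge 1$), and the upper bound by enumerating connected growth sequences rooted at $v$, giving at most $\prod_{i=1}^{k-1} i\Delta$ choices. No gaps; the only cosmetic difference is that you phrase the upper-bound enumeration via BFS orderings rather than the paper's incremental set-growing process.
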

\begin{proof}
For the lower bound, if $d_v \le k-1$ then $\frac{(d_v)^{k-1}}{(k-1)^{k-1}} \le 1$, so if $N_v \ge 1$ then $N_v \ge \frac{(d_v)^{k-1}}{(k-1)^{k-1}}$.
If instead $d_v > k-1$, then $N_v \ge {d_v \choose k-1}$ since any set of nodes formed by $v$ and $k-1$ of its neighbors is connected.
However $N_v \ge {d_v \choose k-1} \ge \frac{(d_v)^{k-1}}{(k-1)^{k-1}}$ since ${a \choose b} \ge \frac{a^b}{b^b}$ for all $a \ge 1$ and all $b \in [a]$.

For the upper bound, note that we can construct a connected subgraph on $k$ nodes containing $v$ by starting with $S_1=\{v\}$ and at every step $i=1,\ldots,k-1$ choosing a neighbor of $S_i$ in $G \setminus S_i$.
Since each $u \in G$ has degree at most $\Delta$, then $S_i$ has at most $i \Delta$ neighbors.
Thus the total number of choices is at most $\prod_{i=1}^{k-1}i \Delta = (k-1)! (\Delta)^{k-1} $.
\end{proof}

\section{Proof of Theorem~\ref{thm:counting_1}}
\label{apx:counting}
We start by running the preprocessing phase of \UniformAlgo. Let $N_k = \sum_{v \in V} |B(v)|$ be the total number of $k$-graphlet occurrences in $G$. We compute an estimate $\hat N_k$ of $N_k$ such that $|\hat{N}-N_k| \le \epsilon_0 N$ with probability at least $1-\frac{\delta}{2}$. To this end, for each $v \in V$ such that $\Va(v) \ne \emptyset$, we estimate $|\Va(v)|$ up to a multiplicative error $(1 \pm \epsilon_0)$ with probability $1-\frac{\delta}{2n}$, as detailed below. By a union bound, setting $\hat N_k$ to the sum of all those estimates will satisfy the bound above.

To estimate $|\Va(v)|$, we run the sampling routine of \UniformAlgo\ over bucket $\Va(v)$. However, after $S$ is sampled, instead of rejecting it randomly, we return the probability $p(S)$ computed by \computeP(G,S). By Lemma~\ref{lem:prob_compute}, $p(S)$ is exactly the probability that $S$ is sampled. Thus, if $X$ is the random variable denoting the output value of this modified routine, we have:
\begin{align}
    \E[X] = \sum_{S \in \Va(v)} p(S) \cdot \frac{1}{p(S)} = |\Va(v)|
\end{align}
It remains to apply concentration bounds.
To this end, note that $X \le \kok \E[X]$ by Lemma~\ref{lem:subgraph_sample}.
Thus, $X \in [0, \kok \E[X]]$.
Therefore, by averaging over $\ell$ independent samples of $X$, we obtain:
\begin{align}
\Pr\left(\left|\frac{1}{\ell}\sum_{i=1}^{\ell} X_i \,-\, \E[X]\right| > \epsilon_0 \E[X] \right)
< 2\exp\left(-\frac{(\epsilon_0 \E[X])^2 \ell}{(\kok \E[X])^2}\right)
= 2\exp\left(-\frac{\epsilon_0^2 \ell}{\kok}\right)
\end{align}
Therefore, our guarantees are achieved by setting $\ell = \kok \epsilon_0^{-2} \ln \frac{2n}{\delta}$.
Since we have at most $n$ nonempty buckets, to estimate $\hat N_k$ we use a total of $\kok n \epsilon_0^{-2} \ln \frac{2n}{\delta}$ samples.

Next, we estimate the graphlet frequencies via the sampling routine of \UniformAlgo.
For every distinct (up to isomorphism) $k$-node simple connected graph $H$, let $N_H$ be the number of distinct $k$-graphlet occurrences of $H$ in $G$.
Clearly, $\sum_{H}N_H = N_k$.
Let $f_H=\frac{N_H}{N_k}$ be the relative frequency of $H$.
Now, we take $\poly(k)\, \frac{4}{\epsilon_1^2} \ln \frac{1}{\delta}$ independent uniform samples.
By standard concentration bounds, we obtain an estimate $\hat{f}_H$ of $f_H$ such that $|\hat{f}_H-f_H| \le \frac{\epsilon_1}{2}$ with probability at least $1-\delta_1^{-\poly(k)}$.
Since there are $2^{\poly(k)}$ distinct $k$-node (connected) graphs, by a union bound we obtain such an estimate $\hat{f}_H$ for all $H$ simultaneously with probability $1-\frac{\delta}{2}$.

Now, for all $H$, we set $\hat{N}_H = \hat{N_k} \hat{f}_H$.
By a union bound, with probability at least $1-\delta$ we have simultaneously for all $H$:
\begin{align}
    \hat{N}_H - N_H &= \hat{N_k} \hat{f}_H - N_H
    \\ &\le N_k(1+\epsilon_0) \, \left(f_H+\frac{\epsilon_1}{2}\right) - N_H
    \\ &= (1+\epsilon_0) N_H + N_k (1+\epsilon_0)\frac{\epsilon_1}{2} - N_H
    \\ &\le \epsilon_0 N_H + \epsilon_1 N_k
\end{align}
on the one hand, and similarly, $\hat{N}_H - N_H \ge -\epsilon_0 N_H - \epsilon_1 N_k$ on the other hand.
Therefore $|\hat{N}_H - N_k| \le \epsilon_0 N_H + \epsilon_1 N_k$ with probability at least $1-\delta$ for all $H$ simultaneously, as desired.
 
The running time is given by (i) the preprocessing phase, which takes time $\scO(k^2+m)$; (ii) $\kok n \epsilon_0^{-2} \ln \frac{2n}{\delta} + \poly(k)\, \frac{4}{\epsilon_1^2} \ln \frac{1}{\delta}$ samples, each one taking time $\kok \log \Delta$ as per Theorem~\ref{thm:uniform}.
This gives a total running time of:
\begin{align}
    \scO(k^2n+m) + \left(\kok n \epsilon_0^{-2} \ln \frac{2n}{\delta} + \poly(k)\, \frac{4}{\epsilon_1^2} \ln \frac{1}{\delta}\right) \kok \log \Delta
\end{align}
which is in $\scO(m) + \kok \left( n \epsilon_0^{-2} \ln \frac{n}{\delta} + \epsilon_1^{-2} \ln \frac{1}{\delta} \right) \log \Delta$. The proof is complete.
\section{Epsilon-uniform sampling via color coding}
\label{apx:cc}
We show how to use the color coding algorithm of~\cite{Bressan&2017} in a black-box fashion to perform $\epsilon$-uniform sampling from $G$.
The overhead in the running time and space is $2^{\scO(k)}\scO\big(\lg\frac{1}{\epsilon}\big)$, and the overhead in the sampling time is $2^{\scO(k)}\scO\big(\big(\lg\frac{1}{\epsilon}\big)^2\big)$.

First, we perform $\ell=\scO\big(e^k\lg\frac{1}{\epsilon}\big)$ independent runs of the preprocessing phase of the algorithm of~\cite{Bressan&2017}, storing all their output count tables. This gives a time-and-space $\scO\big(e^k\lg\frac{1}{\epsilon}\big)$ overhead with respect to~\cite{Bressan&2017}. In each run, any graphlet $g$ has probability $\frac{k^k}{k!} \ge e^{-k}$ of becoming colorful. Thus, with $\scO\big(e^k\lg\frac{1}{\epsilon}\big)$ independent runs, $g$ is colorful with probability $1-\poly \epsilon$ in at least one run, and appears in the respective count table. As shown in~\cite{Bressan&2017}, for each run $i=1,\ldots,\ell$ one can estimate, within a multiplicative $(1\pm\epsilon)$ factor, the number of colorful graphlets $N_i$, using $\scO\big(\frac{\kok}{\epsilon^2}\big)$ samples. %(This requires estimating the average number of spanning trees per-graphlet in that run, and multiply by the number of detected trees, which is known).
In time $\scO\big(\frac{\kok}{\epsilon^2}\lg \frac{1}{\epsilon}\big)$, we can do so for all runs with probability $1-\poly \epsilon$. This concludes the preprocessing phase.

For sampling, we choose a random run $i \in [\ell]$ with probability proportional to the estimate of $N_i$. Then, we draw a graphlet from that run uniformly at random using the sampling phase of~\cite{Bressan&2017}. This yields a graphlet uniformly at random from the union of all colorful graphlets in all runs. Thus, the probability that a specific graphlet $g$ is sampled is proportional to the number of runs $\ell(g)$ where $g$ is colorful, which we can compute by looking at the colors assigned to $g$ by every run in time $\ell=\scO\big(e^k\lg\frac{1}{\epsilon}\big)$. Then, we accept $g$ with probability $\frac{1}{\ell(g)} \ge \frac{1}{\ell}$. Therefore we need at most $\ell = \scO\big(e^k\lg\frac{1}{\epsilon}\big)$ trials in expectation before a graphlet is accepted. This gives an overhead of $\scO\big(e^k\lg\frac{1}{\epsilon}\big)^2$ in the sampling phase. This construction can be derandomized using an $(n,k)$-family of perfect hash functions %--- informally, a family of colorings $C_1,\ldots,C_{\ell}$ over $V$ such that every $k$-node subset in $V$ is colorful under some $C_i$.
of size $\ell = 2^{\scO(k)} \lg n$, see \cite{Alon&1995}.
% and the $\lg n$ factor is unavoidable by a pigeonhole principle.
This derandomization would increase the time and space of the preprocessing by a factor $\lg n$, but we would still need to estimate the number of graphlets in each run, so the final distribution would still be non-uniform.

\end{document}